\def\a{\alpha}
\def\b{\beta}
\def\d{\delta}
\def\e{\eta}
\def\eps{\ve}
\renewcommand{\epsilon}{\ve}
\def\ve{\varepsilon}
\def\g{\gamma}
\def\m{\mu}
\def\p{\pi}
\def\P{\Pi}
\def\S{\Sigma}
\def\to{\rightarrow}
\def\R{\mathbb{R}}
\def\co{{\cal O}}
\def\cc{{\cal C}}
\def\cF{\mathcal{F}}
\def\cP{\mathcal{P}}
\def\cX{\mathcal{X}}
\def\tr{\mathrm{tr}}
\def\mutilde{\widetilde{\mu}}
\def\yhat{\widehat{y}}
\newcommand{\Shat}{\widehat{\Sigma}}
\newcommand{\Stilde}{\widetilde{\Sigma}}
\newcommand{\muhat}{\widehat{\mu}}
\newcommand{\Var}{\mathop{\mbox{\bf Var}}}
\newcommand{\erf}{\mbox{\text erf}}
\newcommand{\pr}[2][]{\mbox{Pr}\ifthenelse{\not\equal{}{#1}}{_{#1}}{}\!\left[#2\right]}
\newcommand{\normal}{\mathcal{N}}
\newcommand{\dtv}{d_{\mathrm {TV}}}
\newcommand{\E}{\mathop{\mbox{\bf E}}}
\newcommand{\Ssym}{\mathcal{S}_{\mathrm{sym}}}
\newcommand\numberthis{\addtocounter{equation}{1}\tag{\theequation}}
\newtheorem{theorem}{Theorem}
\newtheorem{remark}{Remark}
\newtheorem{fact}{Fact}
\newtheorem{lemma}{Lemma}
\newtheorem{claim}{Claim}
\newtheorem{corollary}{Corollary}
\newtheorem{definition}{Definition}
\newcommand{\ignore}[1]{}
\providecommand{\poly}{\operatorname*{poly}}
\newcommand{\bg}[1]{\medskip\noindent{\bf #1}}
\definecolor{Red}{rgb}{1,0,0}
\newcommand{\oldbound}[1]{{}}
\newcommand{\algmargin}{\the\ALG@thistlm}
\algnewcommand{\parState}[1]{\State%
  \parbox[t]{\dimexpr\linewidth-\algmargin}{\strut #1\strut}}
\title{Robustly Learning a Gaussian: Getting Optimal Error, Efficiently}
\author {
Ilias Diakonikolas\thanks{Supported by NSF  CAREER Award CCF-1652862, a Sloan Research Fellowship, and a Google Faculty Research Award.} \\
CS, USC \\
\tt{diakonik@usc.edu}
\and
Gautam Kamath\thanks{Supported by NSF CCF-1551875, CCF-1617730, CCF-1650733, and ONR N00014-12-1-0999.} \\
EECS \& CSAIL, MIT \\
\tt{g@csail.mit.edu}
\and
Daniel M. Kane\thanks{Supported by NSF  CAREER Award CCF-1553288 and a Sloan Research Fellowship.} \\
CSE \& Math, UCSD \\
\tt{dakane@cs.ucsd.edu}
\and
Jerry Li \thanks{Supported by NSF CAREER Award CCF-1453261, CCF-1565235, a Google Faculty Research Award, and an NSF Graduate Research Fellowship.}\\
EECS \& CSAIL, MIT \\
\tt{jerryzli@mit.edu}
\and
Ankur Moitra\thanks{Supported by NSF CAREER Award CCF-1453261, CCF-1565235, a Packard Fellowship, a Sloan Research Fellowship, a grant from the MIT NEC Corporation, and a Google Faculty Research Award.} \\
Math \& CSAIL, MIT \\
\tt{moitra@mit.edu}
\and
Alistair Stewart\thanks{Supported by a USC startup grant.}\\
CS, USC \\
\tt{alistais@usc.edu}
}
\begin{document}
\maketitle
\begin{abstract}
\normalsize
We study the fundamental problem of learning the parameters of a high-dimensional Gaussian in the presence of noise \---- where an $\epsilon$-fraction of our samples were chosen by an adversary. We give robust estimators that achieve estimation error $O(\epsilon)$ in the total variation distance, which is optimal up to a universal constant that is independent of the dimension. 

In the case where just the mean is unknown, our robustness guarantee is optimal up to a factor of $\sqrt{2}$ and the running time is polynomial in $d$ and $1/\epsilon$. When both the mean and covariance are unknown, the running time is polynomial in $d$ and quasipolynomial in $1/\epsilon$. Moreover all of our algorithms require only a polynomial number of samples. Our work shows that the same sorts of error guarantees that were established over fifty years ago in the one-dimensional setting can also be achieved by efficient algorithms in high-dimensional settings. 
\end{abstract}
\thispagestyle{empty}

\newpage

\setcounter{page}{1}
\section{Introduction} 

\subsection{Background}

The most popular and widely used modeling assumption is that data is approximately Gaussian. 
This is a convenient simplification to make when modeling velocities of particles in an ideal gas~\cite{Goodell15}, 
measuring physical characteristics across a population (after controlling for gender), 
and even modeling fluctuations in a stock price on a logarithmic scale. However, 
real data is not actually Gaussian and is at best crudely approximated by a Gaussian (e.g., with heavier tails). 
What's worse is that estimators designed under this assumption can perform poorly in practice 
and be heavily biased by just a few errant samples that do not fit the model.  

For over fifty years, the field of robust statistics~\cite{Huber09, HampelEtalBook86, rousseeuw2005robust} has studied exactly 
this phenomenon \---- the sensitivity or insensitivity of estimators to small deviations in the model. Unsurprisingly, 
one of the central questions that shaped its development was the problem of learning the parameters 
of a one-dimensional Gaussian distribution when a small fraction of the samples are arbitrarily corrupted. 
More precisely, in 1964, Huber \cite{huber1964} introduced the following model:

\begin{definition}
In {\em Huber's contamination model}, we are given samples from a distribution 
$$\mathcal{D} = (1-\epsilon)\mathcal{N}(\mu, \sigma^2) + \epsilon \mathcal{Z} \;,$$ 
where $\mathcal{N}(\mu, \sigma^2)$ is a Gaussian of mean $\mu$ and variance $\sigma^2$, 
and $\mathcal{Z}$ is an arbitrary distribution chosen by an adversary. 
\end{definition}

Intuitively, among our samples, about a $(1-\epsilon)$ fraction will have been generated from a Gaussian 
and are called {\em inliers}, and the rest are called {\em outliers} or {\em gross corruptions}. 
We will work with an even more challenging\footnote{None of the results in our paper were previously 
known in Huber's contamination model either. The reason we work with this stronger model is because 
we can \---- nothing in our analysis relies on the inliers and outliers being independent.} model 
\---- called the {\em strong contamination model} (Definition~\ref{def:strongcontam}) \---- where 
the adversary is allowed to look at the inliers and then decide on the outliers. The literature on robust statistics 
has given numerous explanations and empirical investigations \cite{gelman2014bayesian, hampel2001robust} 
into how such outliers might arise as the result of equipment failure, data being entered incorrectly, 
or even from a subpopulation that was not accounted for in a medical study. 
These types of errors are erratic and difficult to model, so instead our goal is to design a procedure 
that accurately estimates $\mu$ and $\sigma^2$ without making any assumptions about them.

 
In one dimension, the median and median absolute deviation are well-known robust estimators for the mean and variance respectively. 
In particular, given samples $X_1, X_2, \ldots, X_n$, we can compute 
$$\widehat{\mu}= \mbox{median}(X_1, X_2, \ldots, X_n) \mbox{ and } \widehat{\sigma} = \frac{\mbox{median}(|X_i - \widehat{\mu}|)}{\Phi^{-1}(3/4)} \;,$$
where $\Phi$ is the cumulative distribution of the standard Gaussian. 
(This scaling constant is needed to ensure that $\widehat{\sigma}$ is an unbiased estimator when there is no noise.) 
If $n \geq C \frac{\log 1/\delta}{\epsilon^2}$, then with probability at least $1-\delta$ we have that 
$d_{TV}(\mathcal{N}(\mu, \sigma^2), \mathcal{N}(\widehat{\mu}, \widehat{\sigma}^2)) \leq C \epsilon$. 
In Huber's contamination model, this is the strongest type of error guarantee we could hope for\footnote{See Lemma~\ref{lem:gaussian-huber-LB}.} 
and captures both the task of learning the underlying parameters $\mu$ and $\sigma^2$, 
and finding the approximately best fit to the observed distribution within the family of one-dimensional Gaussians. 
In fact there are plentifully many other estimators \---- such as the {\em trimmed mean}, {\em winsorized mean}, 
{\em Tukey's biweight function}, and the {\em interquartile range} \---- that achieve the same sorts of error guarantees, up to constant factors. 
The design of robust estimators for {\em location} (e.g., estimating $\mu$) and {\em scale} (e.g., estimating $\sigma^2$) is guided 
by certain overarching principles, such as the notion of the influence curve \cite{HampelEtalBook86} or the notion 
of breakdown point \cite{rousseeuw2005robust}. In some cases, it is even possible to design robust estimators 
that are minimax optimal \cite{huber1964}. 

These days, much of modern data analysis revolves around high-dimensional data \---- for example, when we model 
documents \cite{lda}, images \cite{olshausenfield}, and genomes \cite{novembre} as vectors in a very high-dimensional space. 
The need for robust estimators is even more pressing in these applications, since it is infeasible to remove obvious outliers by inspection. 
However, adapting robust statistics to high-dimensional settings is fraught with challenges. The principles that guided the design 
of robust estimators in one dimension seem to inherently lead to high-dimensional estimators that are hard to compute \cite{Bernholt, HardtM13}. 

In this paper, we focus on the central problem of learning the parameters of a multivariate Gaussian $\mathcal{N}(\mu, \Sigma)$ in the strong contamination model. 
The textbook estimators for the mean and covariance -- such as the {\em Tukey median}~\cite{Tukey75} 
and {\em minimum volume enclosing ellipsoid}~\cite{Rous85} --  essentially 
search for directions where the projection of $\mathcal{D}$ is suitably non-Gaussian. 
However, trying to find a direction where the projection 
is non-Gaussian can be like looking for a needle in an exponentially-large haystack -- these statistics are not efficiently computable, in general.
Furthermore, a random projection will look Gaussian 
with high probability \cite{klartag}.

In this paper, our main result is an efficiently computable estimator for a high-dimensional Gaussian 
that achieves error $$d_{TV}(\mathcal{N}(\mu, \Sigma), \mathcal{N}(\widehat{\mu}, \widehat{\Sigma})) \leq C \epsilon$$ 
in the strong contamination model, for a universal constant $C$ that is independent of the dimension. 
For a Gaussian distribution, we consider estimation in terms of total variation distance, which is equivalent to estimating the parameters under the natural measures.
Our main idea is to use various regularity conditions satisfied by the inliers to make the problem of searching 
for non-Gaussian projections easier. When just the mean $\mu$ is unknown, our algorithm runs 
in time polynomial in the dimension $d$ and $1/\epsilon$. When both the mean and covariance are unknown, 
our algorithm runs in time polynomial in $d$ and quasi-polynomial in $1/\epsilon$. 
All of our algorithms achieve polynomial sample complexity. 
  
Prior to our work, the best known algorithm of Diakonikolas et al. \cite{DKKLMS} 
achieved estimation error $O(\epsilon \log 1/\epsilon)$ for this problem\footnote{
We note that, as stated, the results in~\cite{DKKLMS} give estimation error $O(\ve \log^{3/2} 1/\ve)$. 
However, combining the techniques in~\cite{DKKLMS} with the arguments in Section~\ref{sec:full} of this paper gives the stated bound. 
This argument will be included in the full version of~\cite{DKKLMS}.}, 
again with respect to total variation distance. Concurrently, Lai, Rao and Vempala \cite{LaiRV16} 
gave an algorithm which achieves estimation error roughly $O(\epsilon^{1/2} \log^{1/2} d)$. 
In fact, the algorithm of Diakonikolas et al. \cite{DKKLMS} works in a stronger model 
than what we consider here, where an adversary gets to look at the samples and then decides 
on an $\epsilon$-fraction to move arbitrarily. Such errors are both additive and subtractive (because inliers are removed). 
Interestingly, Diakonikolas, Kane and Stewart \cite{DiakonikolasKS16c} proved that any Statistical Query learning algorithm 
that works in such an additive and subtractive model and achieves an error guarantee asymptotically better 
than $O(\epsilon \log^{1/2} 1/\epsilon)$ must make a super-polynomial number of statistical queries. 
Our work shows a natural conclusion that in an additive only model it is possible to algorithmically 
achieve the same error guarantees as are possible in the one-dimensional case, up to a universal constant. 

\subsection{Our Results and Techniques}

In what follows, we will explain both our work as well as prior work through the following lens: 
\begin{quote}
{\em At the core of any robust estimator is some procedure to certify that the estimates 
have not been moved too far away from the true parameters by a small number of corruptions.}
\end{quote}

First, we consider the subproblem where the covariance $\Sigma = I$ is known and only the mean $\mu$ is unknown. 
In the terminology of robust statistics, this is called {\em robust estimation of location}. 
If we could compute the Tukey median, we would have an estimate that satisfies 
$d_{TV}(\mathcal{N}(\mu, I), \mathcal{N}(\widehat{\mu}, I)) \leq C \epsilon$.
The way that the Tukey median guarantees that it is close to the true mean is that along every direction $u$ 
it is close to the median of the projection of the samples. More precisely, at least a $\frac{1-\epsilon}{2}$ fraction 
of the samples satisfy $u^T X_i \geq u^T \widehat{\mu}$, and at least a $\frac{1-\epsilon}{2}$ fraction of the samples 
satisfy $ u^T \widehat{\mu} \geq u^T X_i $. However, if we have a candidate $\widehat{\mu}$, 
finding a direction $u$ that violates this condition is again like searching for a needle in an exponentially large haystack. 
  
The approach of Diakonikolas et al. \cite{DKKLMS} was essentially a data-dependent way 
to search for appropriate directions $u$, by looking for directions where the empirical variance 
is larger than it should be (if there were no corruptions). However, because their approach considers 
only a single direction at a time, it naturally gets stuck at  error $\Theta(\epsilon \log^{1/2} 1/\epsilon)$. 
This is because along the direction $u$, only when a point is $\Omega(\log^{1/2} 1/\epsilon)$ away 
from most of the rest of the samples can we be relatively confident that it is an outlier. 
Thus, an adversary could safely place all the corruptions in the tails and move 
the mean by as much as $\Theta(\epsilon \log^{1/2} 1/\epsilon)$. This would not affect the Tukey median by as much, 
but would affect an estimate based on the empirical mean (because the algorithm could find no other outliers to remove) 
by considerably more. 
 
Our approach is to consider logarithmically many directions at once. 
Even though an inlier can be logarithmically many standard deviations away from the mean along a single direction $u$ with reasonable probability, 
it is unlikely to be that many standard deviations away simultaneously across many orthogonal directions. 
Essentially, this allows us to remove the influence of outliers on all but a logarithmic dimensional subspace. 
Combining this with an algorithm for robustly learning the mean in time exponential in the dimension 
(but polynomial in the number of samples), we obtain our first main result:
 
 \begin{theorem}
\label{thm:mean-intro}
Suppose we are given a set of $n = \poly(d, 1/\epsilon)$ samples from the strong contamination model, 
where the underlying $d$-dimensional Gaussian is $\mathcal{N}(\mu, I)$.   Let $\epsilon \leq \epsilon_0$, 
where $\epsilon_0$ is a positive universal constant. For any $\beta > 0$, there is an algorithm to learn an estimate 
$\mathcal{N}(\widehat \mu, I)$ that with high probability satisfies
$$d_{TV}(\mathcal{N}(\mu, I), \mathcal{N}(\widehat \mu, I)) \leq  \left( \frac{1}{\sqrt{2}} + O\left(\frac{1}{\sqrt{\beta}} + \epsilon^2 \right) \right) \eps \;.$$
Moreover, the algorithm runs in time $\poly(n, (1/\epsilon)^\beta)$. 
\end{theorem}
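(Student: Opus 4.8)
The plan is to reduce the $d$-dimensional instance to an $O(\beta)$-dimensional one, solve the low-dimensional instance by brute force, and argue that the component of $\mu$ orthogonal to the relevant subspace is recovered essentially for free. I would use two ingredients as black boxes. First, once $n=\poly(d,1/\epsilon)$ is large enough the uncorrupted samples $G$ satisfy, with high probability, a list of deterministic regularity conditions: $\|\widehat{\Sigma}_G-I\|=O(\epsilon^2)$, $\tr(\widehat{\Sigma}_G)=d\pm O(\epsilon)$, sub-Gaussian tails for every one-dimensional projection, concentration of $\|X_i-\mu\|$ in a shell of radius $\sqrt d$ and width $O(\sqrt{\log(1/\epsilon)})$, and a uniform bound $\frac1n\sum_{i\in T}(u^\top(X_i-\mu))^2=O(\epsilon\log(1/\epsilon))$ over all $|T|\le\epsilon n$ and all unit $u$; from here on everything is deterministic given these. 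Second, there is an algorithm that, given an $\epsilon$-corrupted sample from $\mathcal{N}(\nu,I)$ in $\R^k$, runs in time $\poly(n)\cdot(1/\epsilon)^{O(k)}$ and returns $\widehat\nu$ with $d_{TV}(\mathcal{N}(\nu,I),\mathcal{N}(\widehat\nu,I))\le(\tfrac1{\sqrt2}+O(\epsilon^2))\epsilon$ — the exponential-time near-optimal estimator referred to in the introduction, whose analysis does not interact with the reduction.

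The heart of the proof is the reduction. Run a covariance-based filter on the corrupted points, working with several directions at once: discard points whose norm falls outside the shell, and discard points that are simultaneously extreme along every direction of some low-dimensional subspace in which the empirical second moment is too large; iterate until no such subspace remains. At termination one has a reweighted set $S'$ that is still $(1+o(1))\epsilon$-close to $G$, and whose empirical covariance $\widehat{\Sigma}'$ has the property that, up to operator-norm error $O(\epsilon/\beta)$, $\widehat{\Sigma}'-I$ is supported on a subspace $V$ of dimension only $O(\beta)$ — crucially independent of $d$. The reason is the phenomenon emphasized in the introduction: to move the empirical mean by $\|b\|$ along a unit direction $w$, the adversary must inflate $w^\top\widehat{\Sigma}' w$ by $\Omega(\|b\|^2/\epsilon)$ (Cauchy--Schwarz applied to the contribution of the corruptions), so spreading the shift across many orthogonal directions inflates the sum of the top eigenvalues of $\widehat{\Sigma}'-I$ in proportion; and the halting condition of the multi-directional filter, together with the regularity conditions, forces that sum to be dimension-free, because after the filter every surviving point lies within $O(\sqrt{\log(1/\epsilon)})$ standard deviations of $\mu$ simultaneously in every direction — which is exactly the statement that an inlier is not many standard deviations out across many directions at once. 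I then take $V$ to be the span of the $O(\beta)$ eigenvectors of $\widehat{\Sigma}'$ whose eigenvalue exceeds $1+\epsilon/\beta$, and set $\Pi:=\Pi_{V^\perp}$.

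To finish: on $V^\perp$ the empirical covariance is $I\pm O(\epsilon/\beta)$ in operator norm, so the usual bookkeeping — the corruptions contribute at most $\sqrt{\epsilon\,\|\Pi\widehat{\Sigma}'\Pi-I_{V^\perp}\|}$ to the empirical mean by Cauchy--Schwarz, and the missing inliers contribute $O(\epsilon^2)$ by the regularity conditions — shows the trimmed empirical mean of $\{\Pi X_i\}$ estimates $\Pi\mu$ to within $O(\epsilon/\sqrt\beta)$ in $\ell_2$. On $V$, the projections $\{\Pi_V X_i\}$ form an $\epsilon$-corrupted sample from $\mathcal{N}(\Pi_V\mu,I_V)$ in dimension $O(\beta)$, so the black-box estimator returns $\widehat\mu_V$ with $\|\widehat\mu_V-\Pi_V\mu\|\le\sqrt{2\pi}\,(\tfrac1{\sqrt2}+O(\epsilon^2))\,\epsilon$ (using $d_{TV}(\mathcal{N}(\mu,I),\mathcal{N}(\nu,I))=2\Phi(\|\mu-\nu\|/2)-1$) in time $\poly(n)\cdot(1/\epsilon)^{O(\beta)}=\poly(n,(1/\epsilon)^\beta)$. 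Outputting $\widehat\mu=\widehat\mu_V+\widehat\mu_{V^\perp}$, combining the two orthogonal error components via $\|\widehat\mu-\mu\|^2=\|\widehat\mu_V-\Pi_V\mu\|^2+\|\widehat\mu_{V^\perp}-\Pi\mu\|^2$, and translating back to total variation yields the claimed bound $\big(\tfrac1{\sqrt2}+O(\tfrac1{\sqrt\beta}+\epsilon^2)\big)\epsilon$.

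The main obstacle is precisely the structural step in the second paragraph: showing that the corruptions' influence on $\widehat{\Sigma}'$, and hence on $\widehat\mu$, can be confined to a subspace whose dimension depends only on $\beta$ and not on $d$. This requires both a filter whose stopping rule simultaneously controls all directions — the multi-directional generalization of the classical one-dimensional filter — and a careful count of how many inliers it may discard, so that the top-eigenvalue sum of $\widehat{\Sigma}'-I$ is genuinely dimension-free; once this is in hand, the constant $\tfrac1{\sqrt2}$ and the $O(\epsilon^2)$ slack are inherited verbatim from the black-box low-dimensional estimator, and $O(1/\sqrt\beta)$ is exactly the price of truncating $V$ at dimension $O(\beta)$.
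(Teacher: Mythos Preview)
Your high-level plan---filter until the empirical covariance is close to identity outside a low-dimensional subspace $V$, run a brute-force estimator on $V$, and take the empirical mean on $V^\perp$---is exactly the paper's strategy, and your error accounting on $V^\perp$ (Cauchy--Schwarz gives $O(\epsilon/\sqrt{\beta})$ once all eigenvalues of $\widehat\Sigma'-I$ on $V^\perp$ are at most $\epsilon/\beta$) matches Lemma~\ref{errorCovLem}. The gap is in the dimension of $V$ and, correspondingly, in the running time of the low-dimensional black box.

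You assert that after filtering, at most $O(\beta)$ eigenvalues of $\widehat\Sigma'-I$ exceed $\epsilon/\beta$, justified by ``every surviving point lies within $O(\sqrt{\log(1/\epsilon)})$ standard deviations of $\mu$ simultaneously in every direction.'' Taken literally this says $\|X-\mu\|_2=O(\sqrt{\log(1/\epsilon)})$, which is false for \emph{every} inlier (they concentrate at $\|X-\mu\|_2\approx\sqrt d$). The shell filter bounds $\tr(\widehat\Sigma'-I)$, not the number of large positive eigenvalues (negative eigenvalues can cancel). More fundamentally, any filter based on the quadratic $p(x)=\|\Pi_{V'}(x-\tilde\mu)\|_2^2-\dim(V')$ must set its threshold at $T=\Omega(\log(1/\epsilon))$: by Hanson--Wright an $\Omega(\epsilon)$-fraction of inliers have $p(X)$ this large, and the $\Delta$-invariant forbids discarding them. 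On the other side, the bad points contribute on average $\E_E[p]\approx \dim(V')/\beta$ to this polynomial. For the filter to make progress one needs $\dim(V')/\beta\gtrsim T=\Omega(\log(1/\epsilon))$, i.e.\ $\dim(V')=\Omega(\beta\log(1/\epsilon))$. So the honest stopping dimension is $\Theta(\beta\log(1/\epsilon))$, not $\Theta(\beta)$; this is precisely the content of Theorem~\ref{thm:meanManyEig}.

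The paper recovers the $(1/\epsilon)^{O(\beta)}$ running time not by shrinking $\dim(V)$ but by using a low-dimensional estimator whose cost is $(1/\rho)^{O(\dim V)}$ with $\rho$ a \emph{constant} net parameter rather than $\epsilon$: one takes a $\rho$-net of directions, computes univariate medians, intersects the resulting slabs, and returns the center of Jung's smallest circumscribing ball. The $\rho$-dependence enters only the leading constant in the error, so fixing $\rho=\Theta(1)$ still yields $\ell_2$ error $(\sqrt\pi+o(1))\epsilon$, and the running time becomes $c^{O(\beta\log(1/\epsilon))}=(1/\epsilon)^{O(\beta)}$. Your black box, with cost $(1/\epsilon)^{O(k)}$, plugged into the correct $k=\Theta(\beta\log(1/\epsilon))$ would give $(1/\epsilon)^{O(\beta\log(1/\epsilon))}$, which is quasipolynomial. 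So two pieces are missing: the correct dimension bound $\dim(V)=O(\beta\log(1/\epsilon))$, and a low-dimensional estimator whose exponential base is a constant rather than $1/\epsilon$.
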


We prove an almost matching lower bound of $\frac{\epsilon}{2} + \Omega(\epsilon^2)$ on the estimation error. 
Thus, our robustness guarantee is optimal up to a factor of $\sqrt{2}$, 
even among computationally inefficient robust estimators. Interestingly, 
our extra factor of $\sqrt{2}$ comes from the following geometric fact which we make crucial use of: 
Any convex body of diameter $D$ in any dimension can be covered by a ball of radius $ D/\sqrt{2}$, 
and moreover such a ball can be (approximately) found in time exponential in the dimension. 
Suppose that along some direction $u$ we have an estimate $p$ that is guaranteed to be within 
$\epsilon/2$ of the projection of the true mean $\mu$. We can now confine $\mu$ to a slab of 
width $\epsilon$, and by taking the intersection of all such slabs we get a convex body 
that contains $\mu$ and has diameter of at most $\epsilon$. By covering the body with a ball of radius 
$ \epsilon/ \sqrt{2}$, we are guaranteed that the center of the ball is within $ \epsilon / \sqrt{2}$ of the true mean. 
This gives us a general way to combine one-dimensional robust estimates along a net of directions. 

We note that, for general isotropic sub-Gaussian distributions, the bound of $O(\ve \log^{1/2} 1/\ve)$ of~\cite{DiakonikolasKKLMS17} is optimal for robust mean estimation, even in one dimension.
See Section~\ref{sec:app-lb} for a proof of this fact.
However, our results can be seen to hold more generally than stated above -- indeed, the same arguments work for a class of symmetric isotropic sub-Gaussian distributions which are sufficiently smooth near their mean.
More precisely, we require that along any univariate projection, the mean is robustly estimated by the median.

We next consider the subproblem where the mean $\mu = 0$ is known 
and only the covariance $\Sigma$ is unknown. In the terminology of robust statistics, 
this is called {\em robust estimation of scale}. In this case, we want to compute an estimate $\widehat{\Sigma}$ 
that satisfies\footnote{More precisely, to obtain $O(\eps)$ error guarantee 
with respect to the total variation distance, we need to robustly approximate $\Sigma$ within $O(\eps)$ in Mahalanobis distance, 
which is a stronger metric than the Frobenius norm. As part of our approach, we are able to efficiently reduce to the case that 
$\Sigma$ is close to the identity matrix, in which case the Frobenius error suffices.} $\| \Sigma - \widehat{\Sigma}\|_F \leq C \epsilon$. 
When $\widehat{\Sigma}$ does not satisfy this condition, it can be shown (in Section~\ref{sec:poly-gaussian}) that there is a degree-two polynomial $p(X)$, where
$$\E_{X \sim \mathcal{N}(0, \Sigma)} [ p(X)] =1 \mbox{ and } \E_{X \sim \mathcal{N}(0, \widehat{\Sigma})} [ p(X)] =1 + C' \epsilon \;.$$ 
It turns out that, even given the polynomial $p(X)$, deciding whether or not the above conditions approximately hold is challenging. 
Given $p(X)$ and $\widehat{\Sigma}$, we can certainly compute $\E_{X \sim \mathcal{N}(0, \widehat{\Sigma})} [ p(X)]$. 
But given only contaminated samples from $\mathcal{N}(0, \Sigma)$ and without knowing what $\Sigma$ is, 
can we estimate $\E_{X \sim \mathcal{N}(0, \Sigma)} [ p(X)]$?

Often, univariate robust estimation problems are considered easy, with a simple recipe:
Construct an unbiased estimator for the statistic for which each sample point has low influence.
However, in our setting, it is highly non-trivial to construct such an estimator.
The naive attempt in this case would be the median -- this immediately fails since the distribution of $p(X)$ is asymmetric.
Even if there were no noise, that would not necessarily be an unbiased estimator. 
So how can we dampen the influence of outliers, if there is no natural symmetry in the distribution? 
We construct a robust estimator crucially using the fact that $p(X)$ is the weighted sum of chi-squared 
random variables when there is no noise. The key structural fact we exploit is the following: Given two sums of chi-squared random variables, 
if the random variables are far in total variation distance, 
most of their difference must lie close to their means. 
We use this fact to show how, given a weak estimate of the mean (i.e., one which is only accurate up $\omega(\ve)$), one can improve the estimate by a constant factor.
Our result follows by an iterative application of this technique.

However, there is still a major complication in utilizing our low-dimensional estimator 
to obtain a high-dimensional estimator. In the unknown mean case, we knew the higher-order moments 
(since we assumed that the covariance is the identity). Here, we do not have control over the higher-order 
moments of the unknown Gaussian. Overcoming this difficulty requires several new techniques, 
which are quite complicated, and we defer the full details to Section \ref{sec:covHighDim}. 
Our second main result is:
  
 \begin{theorem}
Suppose we are given a set of $n = \poly(d, 1/\epsilon)$ samples from the strong contamination model, 
where the underlying $d$-dimensional Gaussian is $\mathcal{N}(0, \Sigma)$. 
There is an algorithm to learn an estimate $\normal (0, \widehat \S)$ that runs in time $\poly(n, (1/\epsilon)^{O(\log^4 1/\epsilon)})$
and with high probability satisfies
$$d_{TV}(\mathcal{N}(0, \Sigma), \mathcal{N}(0, \widehat \Sigma)) \leq C\epsilon \;,$$
for a universal constant $C$ that is independent of the dimension. 
\end{theorem}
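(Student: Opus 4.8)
The plan is to run the same kind of iterative ``find-a-certificate-then-filter-or-update'' loop as in the unknown-mean case, but with (normalized) degree-two polynomials playing the role that directions $u$ played there. First I would invoke an off-the-shelf robust covariance routine such as that of~\cite{DKKLMS} to obtain a coarse estimate $\Sigma_0$ with, say, $\|\Sigma_0^{-1/2}\Sigma\,\Sigma_0^{-1/2} - I\|_F = O(\epsilon \log^{3/2} 1/\epsilon)$, and then apply the linear map $x \mapsto \Sigma_0^{-1/2} x$, so that henceforth the unknown $\Sigma$ is promised to lie within $O(\epsilon\,\poly(\log 1/\epsilon))$ of $I$ in Frobenius (equivalently Mahalanobis) distance. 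As noted in the excerpt, in this near-isotropic regime it suffices to return $\widehat\Sigma$ with $\|\widehat\Sigma - \Sigma\|_F = O(\epsilon)$, and we may keep $\widehat\Sigma$ itself $O(\epsilon\,\poly(\log 1/\epsilon))$-close to $I$ at all times. The algorithm then maintains a candidate $\widehat\Sigma$ together with a weight function on the samples, and in each round looks for a degree-two polynomial $p$ (normalized so that $\E_{\mathcal{N}(0,\widehat\Sigma)}[p]=1$) on which the reweighted empirical expectation differs from $1$ by more than a tiny multiple of $\epsilon$; by the structural result of Section~\ref{sec:poly-gaussian}, the absence of such a $p$ already certifies $\|\widehat\Sigma - \Sigma\|_F = O(\epsilon)$, while its presence either exposes a corrupted tail event (letting us downweight outliers) or tells us in which direction to move $\widehat\Sigma$.

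The core of a single round is a robust \emph{univariate} estimator: given corrupted samples of $p(X)$ with $X \sim \mathcal{N}(0,\Sigma)$, estimate $\E[p(X)]$ to additive error $O(\epsilon)$, even though $\Sigma$ is unknown. Writing $p(X) = X^\top A X + c$ and diagonalizing $A$, the clean distribution of $p(X)$ is a shifted weighted sum of chi-squared variables, hence asymmetric, so the median is biased and even noiseless unbiasedness fails. Instead I would use the highlighted structural fact — two weighted sums of chi-squareds that are $\Omega(\delta)$ apart in total variation distance differ by an $\Omega(\delta)$ amount of probability mass concentrated near their means — to build a windowed/truncated linear statistic around a running guess of the mean that is insensitive to $\epsilon$-mass placed in the tails. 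This gives a ``boost'': from an estimate accurate to $\delta = \omega(\epsilon)$ one produces an estimate accurate to a constant factor smaller, and iterating $O(\log 1/\epsilon)$ times drives the accuracy down to $O(\epsilon)$. A potential/martingale argument over the outer loop then shows each round either removes $\Omega(\epsilon)$-worth of corruption weight or shrinks $\|\widehat\Sigma - \Sigma\|_F$ by a constant factor down to its $O(\epsilon)$ floor, so the process terminates, and a VC/polynomial-concentration bound over the relevant class of degree-two test functions yields the $\poly(d,1/\epsilon)$ sample complexity.

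The genuinely hard part, and where I expect the bulk of the technical effort (and the quasi-polynomial runtime) to go, is the high-dimensional lift. The space of degree-two polynomials is $\Theta(d^2)$-dimensional, so we cannot search it directly; worse, unlike the unknown-mean case we do not know the fourth-moment tensor of $\mathcal{N}(0,\Sigma)$, and that tensor is exactly what governs the variance of a candidate test polynomial $p$ (e.g.\ $\Var_{\mathcal{N}(0,\Sigma)}[X^\top A X] = 2\,\tr((A\Sigma)^2)$), which is precisely the quantity against which we must threshold when filtering. My plan is a dimension-reduction argument in the spirit of ``logarithmically many directions at once'': using approximate isotropy of $\Sigma$, show that any violating polynomial can be replaced, at the cost of only $O(\epsilon)$ in its discrepancy, by one supported on an $O(\poly(\log 1/\epsilon))$-dimensional subspace — spanned by the dominant eigenspaces of $\widehat\Sigma - I$ together with a net of auxiliary directions — thereby reducing each round to a robust covariance problem in $k = O(\poly(\log 1/\epsilon))$ dimensions. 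Within that subspace we can afford a near-brute-force search over an $\epsilon$-net of candidate $k\times k$ covariances and test polynomials, costing $(1/\epsilon)^{O(k^2)} = (1/\epsilon)^{O(\log^4 1/\epsilon)}$, while simultaneously bootstrapping the unknown fourth moments by co-estimating $\Sigma$ and the variances of the $O(k^2)$ relevant polynomials in the same iteration. Controlling this coupling between the covariance estimate and the moment estimates that define the filter, together with bounding the net granularity and the number of filtering rounds, is the main obstacle, and it is what produces both the correctness guarantee and the stated $\poly(n,(1/\epsilon)^{O(\log^4 1/\epsilon)})$ running time.
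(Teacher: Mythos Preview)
Your proposal captures several of the right ingredients --- the reduction to near-isotropic $\Sigma$ via a coarse estimator, a truncated-statistic estimator for $\E[p(X)]$ when $p(X)$ is a weighted sum of chi-squareds (this is essentially the paper's \textsc{LearnMeanChiSquared}), and the observation that the unknown fourth moments are the central obstruction. But the high-dimensional lift you sketch has a real gap: you propose to replace any violating polynomial by one supported on a low-dimensional subspace $V \subseteq \R^d$ and then brute-force learn $\Sigma$ restricted to $V$. That only recovers the block $\Sigma_V$; it tells you nothing about $\Sigma_{V^\perp}$ and, crucially, nothing about the cross term $A$ in
\[
\Sigma = \begin{pmatrix} \Sigma_V & A^\top \\ A & \Sigma_{V^\perp} \end{pmatrix}.
\]
Polynomials of the form $(v^\top x)(w^\top x)$ with $v \in V$, $w \in V^\perp$ are not supported on $V$, yet they detect $A$. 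The paper devotes an entire subsection to this \emph{stitching} problem: after learning $\Sigma_V$ and $\Sigma_{V^\perp}$, it recovers $A$ by rejection-sampling on the $V$-component so as to simulate noisy identity-covariance samples whose \emph{mean} encodes $Av$, and then invokes the robust mean estimator. This step is simply absent from your outline, and without it the iteration cannot drive $\|\widehat\Sigma-\Sigma\|_F$ down.

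There is a second gap hidden in your dimension-reduction claim. You assert that any violating polynomial can be pushed onto the dominant eigenspace of $\widehat\Sigma - I$, but the empirical top eigenspace need not align with the top eigenspace of $\Sigma - I$ (the adversary can corrupt it), so this replacement need not exist. The paper does \emph{not} prove such a reduction. Instead, after the degree-2 filter isolates a small subspace $V \subset \R^d$, it runs a separate \emph{degree-four} filter on $V^\perp$: it looks for many orthogonal degree-2 polynomials whose empirical variance is too large, and if found, sums their fourth-harmonic parts to build a degree-4 polynomial that filters. Only once this degree-4 filter passes does one get a certificate that the empirical covariance matches $\Sigma$ on $V^\perp$ except on a subspace $U_1$ of \emph{polynomials} (not of $\R^d$) of dimension $O(\log^4 1/\epsilon)$, and it is the brute-force over $U_1$ that produces the $(1/\epsilon)^{O(\log^4 1/\epsilon)}$ runtime. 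Your phrase ``bootstrapping the unknown fourth moments by co-estimating'' is pointing at the right difficulty, but the actual resolution --- degree-4 filtering to control fourth moments on $V^\perp$, low-dimensional polynomial learning on $U_1$, and stitching to recover the cross term --- involves structurally different and more delicate arguments than what you have written.
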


A key technical problem arises when we attempt to combine estimates for the covariance 
restricted to a subspace and its orthogonal complement. We refer to this as a {\em stitching} problem, 
where if we write $\Sigma$ as
$$
\Sigma = \left[ \begin{matrix} \Sigma_V & A^T \\ A & \Sigma_{V^\perp} \end{matrix}\right] \;,
$$
and have accurate estimates for $\Sigma_V$ and $\Sigma_{V^\perp}$, 
we still need to accurately estimate $A$. Our algorithm utilizes an unexpected connection to the unknown mean case: 
We show that, under a carefully chosen projection scheme, we can simulate noisy samples from a Gaussian 
with identity covariance, where the mean of this distribution encodes the information needed to recover $A$. 
We defer the full details to Section~\ref{sec:stitching}. 
  
It turns out that we can solve the general case when both $\mu$ and $\Sigma$ are unknown, 
by directly reducing to the previous subproblems, exactly as was done in \cite{DKKLMS} (with some caveats, addressed in Section~\ref{sec:inexact-cov}). 
Since all of our error guarantees are optimal up to constant factors, there is only a constant factor loss in this reduction. 
Finally, we obtain the following corollary:
 
\begin{corollary}
Suppose we are given a set of $n = \poly(d, 1/\epsilon)$ samples from the strong contamination model, 
where the underlying $d$-dimensional Gaussian is $\mathcal{N}(\mu, \Sigma)$. There is an algorithm 
to learn an estimate $\normal (\widehat \m, \widehat \S)$ that runs in time $\poly(n, (1/\epsilon)^{O(\log^4 1/\epsilon)})$ and 
with high probability satisfies
$$d_{TV}(\mathcal{N}(\mu, \Sigma), \mathcal{N}(\widehat \mu, \widehat \Sigma)) \leq C\epsilon \;,$$
for a universal constant $C$ that is independent of the dimension. 
\end{corollary}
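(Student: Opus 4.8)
The plan is to reduce the general problem, where both $\mu$ and $\Sigma$ are unknown, to the two special cases already solved: robust estimation of an unknown covariance with known (zero) mean, and robust estimation of an unknown mean with known (identity) covariance. This is the same two-stage reduction used in~\cite{DKKLMS}, and since each of our building blocks has error that is optimal up to a constant factor, composing them loses only a constant factor overall.

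\emph{Step 1 (estimate $\Sigma$).} Since the mean is unknown, I cannot feed the samples directly to the known-mean covariance algorithm; instead I pass to pairwise differences along a perfect matching of the sample set. If $X$ and $X'$ are uncorrupted samples then $(X-X')/\sqrt{2}\sim\mathcal{N}(0,\Sigma)$, so running our second main result on these $n/2$ differences produces an estimate $\widehat\Sigma$ with $\|\Sigma^{-1/2}\widehat\Sigma\,\Sigma^{-1/2}-I\|_F = O(\epsilon)$, which is exactly the Mahalanobis-type guarantee needed for $O(\epsilon)$ total variation error. Two caveats, treated in Section~\ref{sec:inexact-cov}, must be checked: (i) the fraction of corrupted differences roughly doubles (a pair is bad if either endpoint is an outlier), but since $\epsilon \le \epsilon_0$ we may simply invoke the covariance estimator at the inflated noise rate; and (ii) the differences along a matching remain mutually independent, so the sample-complexity and concentration arguments carry over unchanged.

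\emph{Step 2 (estimate $\mu$).} Apply $\widehat\Sigma^{-1/2}$ to the original (corrupted) samples. Up to the $O(\epsilon)$ error in $\widehat\Sigma$, this transforms the inlier distribution into $\mathcal{N}(\widehat\Sigma^{-1/2}\mu,\,\Sigma')$ with $\|\Sigma'-I\|_F = O(\epsilon)$. I then run the unknown-mean algorithm of Theorem~\ref{thm:mean-intro} (with $\beta$ a sufficiently large constant) on these whitened samples to obtain an estimate $\widehat\nu$ of $\widehat\Sigma^{-1/2}\mu$, and set $\widehat\mu = \widehat\Sigma^{1/2}\widehat\nu$. The point to verify is that Theorem~\ref{thm:mean-intro}, whose analysis assumes the inlier covariance is exactly $I$, degrades gracefully when it is only within $O(\epsilon)$ of $I$ in Frobenius norm: every univariate projection then has variance $1 \pm O(\epsilon)$, so the one-dimensional median estimates, the slab widths, and the covering-by-a-ball step each incur only an additional $O(\epsilon)$ error. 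This is precisely the ``inexact covariance'' analysis of Section~\ref{sec:inexact-cov}.

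\emph{Step 3 (combine).} Since $\|\Sigma^{-1/2}\widehat\Sigma\,\Sigma^{-1/2}-I\|_F = O(\epsilon)$ and $\|\widehat\Sigma^{-1/2}(\mu-\widehat\mu)\|_2 = O(\epsilon)$ (the latter being the guarantee on $\widehat\nu$ pushed forward by $\widehat\Sigma^{1/2}$), the standard bound on the total variation distance between two Gaussians in terms of the Mahalanobis distance of the means and the Frobenius distance of the whitened covariances yields $d_{TV}(\mathcal{N}(\mu,\Sigma),\mathcal{N}(\widehat\mu,\widehat\Sigma)) = O(\epsilon)$. The running time is dominated by Step 1, namely $\poly(n,(1/\epsilon)^{O(\log^4 1/\epsilon)})$, since Step 2 costs only $\poly(n,(1/\epsilon)^{O(1)})$, and the sample complexity stays $\poly(d,1/\epsilon)$ throughout. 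I expect the main obstacle to be the inexact-covariance bookkeeping of Step 2 --- ensuring that replacing the exact identity by an $O(\epsilon)$-approximation does not interact badly with the delicate $\tfrac{1}{\sqrt{2}}$-type slab/covering argument underlying Theorem~\ref{thm:mean-intro} --- rather than anything in the high-level reduction, which is routine.
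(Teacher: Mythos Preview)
Your proposal is correct and follows essentially the same reduction as the paper's Section~\ref{sec:full}: form pairwise differences to kill the mean and invoke the covariance estimator (Corollary~\ref{cor:cov-main}), then whiten by $\widehat\Sigma^{-1/2}$ and invoke the noisy-covariance mean estimator of Section~\ref{sec:inexact-cov} (Theorem~\ref{thm:mean-main2}, i.e.\ \textsc{RecoverMeanNoisy}), combining via the standard TV bound. The one small discrepancy is that the paper draws a \emph{fresh} batch of $\eps$-corrupted samples for the mean step, whereas you reuse the original samples; this avoids the (mild) dependence between $\widehat\Sigma$ and the whitened data, but is a bookkeeping convenience rather than a substantive difference.
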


This essentially settles the complexity of robustly learning a high-dimensional Gaussian. 
The sample complexity of our algorithm depends polynomially on $d$ and $1/\epsilon$, 
and the running time depends polynomially on $d$ and quasi-polynomially on $1/\epsilon$. 
Up to a constant factor, ours is the first high-dimensional algorithm 
that achieves the same error guarantees as in the one-dimensional case, 
where results were known for more than fifty years! It is an interesting open problem 
to reduce the running time to polynomial in $1/\epsilon$ (while still being polynomial in $d$). 
As we explain in Section~\ref{sec:quasibarrier}, this seems to require fundamentally new ideas.

\subsubsection*{More Related Work}  
In addition to the works mentioned above, there has been an exciting flurry of recent work 
on robust high-dimensional estimation. This includes studying graphical models in the presence of noise~\cite{DiakonikolasKS16b}, 
tolerating much more noise by allowing the algorithm to output a list of candidate hypotheses~\cite{CharikarSV16}, 
formulating general conditions under which robust estimation is possible~\cite{SteinhardtCV17}, developing robust algorithms 
under sparsity assumptions~\cite{Li17, DBS17, BalakrishnanDLS17} where the number of samples is sublinear in the dimension, 
and leveraging theoretical insights to give practical algorithms that can be applied to genomic data~\cite{DiakonikolasKKLMS17}. 
We note that, in comparison to all these other works, ours is the only to efficiently achieve the information-theoretically optimal error guarantee (up to constant factors).
Despite all of this rapid progress, there are still many interesting theoretical and practical questions left to explore. 

\subsection{Organization}
In Section~\ref{sec:prelims}, we go over preliminaries and notation that we will use throughout the paper.
In Section~\ref{sec:meanLowD}, we describe an algorithm for robustly estimating the mean of a Gaussian in low-dimensional settings, and crucially apply it in the design of an algorithm for mean-estimation in high dimensions, described in Section~\ref{sec:mean}.
Similarly, in Section~\ref{sec:covLowD}, we give an algorithm for robustly estimating the mean of degree-two polynomials in certain settings, which is applied in the context of our covariance-estimation algorithm in Section~\ref{sec:covHighDim}.
Finally, we put these tools together and describe our general algorithm for robustly estimating a Gaussian in Section~\ref{sec:full}.

\section{Preliminaries}
\label{sec:prelims}
In this section, we give various definitions and lemmata we will require throughout the paper. 
First, given a distribution $F$, we let $\E_F [f(X)] = \E_{X \sim F} [f(X)]$ denote the expectation of $f(X)$ under $F$.
If $S$ is a finite set, we let $\E_S [f(X)] = \E_{X \sim \mathrm{unif} (S)} [f(X)]$ denote the expectation of $f(X)$ 
under the uniform distribution over points in $S$ (i.e., the empirical mean of $f$ under $S$).
Given any subspace $V \subseteq \R^d$, we let $\Pi_V: \R^d \to \R^d$ be the projection operator onto $V$.
If $V = \mathrm{span} (v)$ is 1-dimensional, we will denote this projection as $\Pi_v$.

\subsection{The Strong Contamination Model}

Here we formally define the {\em strong contamination model}. 

\begin{definition}\label{def:strongcontam}
Fix $\eps > 0$.
We say a set of samples $X_1, \ldots, X_n$ was generated from the {\em strong contamination model} 
on a distribution $F$, if it was generated via the following process:
\begin{enumerate}
\item
We produce $(1 - \eps) n$ i.i.d. samples $G$ from $F$.
\item
An adversary is allowed to observe these samples and \emph{add} $\eps n$ points $E$ arbitrarily.
\end{enumerate}
We are then given the set of samples $G \cup E$ in random order. 
Also, we will say that the samples $X_1, \ldots, X_n$ are $\eps$-corrupted. 
Moreover given an $\eps$-corrupted set of samples $S$, we will write $S = (G, E)$ where $G$ is the set of \emph{uncorrupted} points and $E$ is the set of \emph{corrupted} points.
Moreover, given a subset $S' \subset S$, we will also write $S' = (G', E')$, where $G' = S' \cap G$ and $E' = S' \cap E$ denote the set of uncorrupted points and corrupted points remaining in $S'$.
$L$ will denote $G \setminus G'$, which is the set of ``lost'' uncorrupted points.
\end{definition}

Given a contaminated set $S' = (G', E')$ and a set $G$ so that $G' \subseteq G$, define the following quantities 
\begin{equation}
\label{eq:Delta}
\phi (S', G) = \frac{|G \setminus G'|}{|S'|} \;,~~~~  \psi (S', G) = \frac{|E'|}{|S'|} \;,~~~~ \Delta (S', G) = \psi (S', G) + \phi (S', G) \log \frac{1}{\phi (S', G)} \; .
\end{equation}
\noindent
In particular, observe that if $\Delta (S', G) < O(\eps)$, then a simple calculation implies that $\phi(S', G) \leq O(\eps/\log 1/\eps)$.
Equivalently, we have removed at most an $O(\eps / \log 1 / \eps)$ fraction of good points from $G$.
This is crucial, as if we throw out an $\eps$-fraction of good points then we essentially put ourselves in the subtractive model, and there our guarantees no longer hold.

There are two differences between the strong contamination model and Huber's contamination model. First, the number of corrupted points is fixed to be $\epsilon n$ instead of being a random variable. However, this difference is negligible. It follows from basic Chernoff bounds that $n$ samples from Huber's contamination model with parameter $\eps$ (for $n$ sufficiently large) can be simulated by a $(1 + o(1)) \eps$-corrupted set of samples, except with negligible failure probability.
Hence, we lose only an additive $o( \eps)$ term when translating from Huber's contamination model to the strong contamination model, which will not change any of the guarantees in our paper. The second difference is that the adversary is allowed to inspect the uncorrupted points before deciding on the corrupted points. This makes the model genuinely stronger since the samples we are given are no longer completely independent of each other. 

\subsection{Deterministic Regularity Conditions}

In analyzing our algorithms, we only need certain deterministic regularity conditions to hold on the uncorrupted points. In this subsection, we formally state what these conditions are. It follows from known concentration bounds that these conditions all hold with high probability given a polynomial number of samples. Now with these regularity conditions defined once and for all, we will be able to streamline our proofs in the sense that each step in the analysis will only ever use one of these fixed set of conditions and will not use the randomness in the sampling procedure. We remark that some subroutines in our algorithm only need a subset of these conditions to hold, so we could improve the sample complexity by changing the regularity conditions we need at each step. However, since we will not be concerned with optimizing the sample complexity beyond showing that it is polynomial, we choose not to complicate our proofs in this manner.

\subsubsection{Regularity Conditions for Unknown Mean}
In the unknown mean case, we will require the following condition:
\begin{definition}
Let $G$ be a multiset of points in $\R^d$ and $\eta, \d >0$. 
We say that $G$ is $(\eta, \d)$-good with respect to $\normal (\mu, I)$ if the following hold:

\begin{itemize}
\item For all $x \in G$ we have $\|x-\mu\|_2 \leq O(\sqrt{d \log (|G|/\delta)})$.
\item For every affine function $L:\R^d \to \R$ we have
$|\Pr_{G}(L(X) \ge 0) - \Pr_{\normal (\mu, I)}(L(X) \ge 0)| \leq \eta/(d \log(d/\eta\d)) \;.$
\item We have that $\|\E_G [X] - \E_{\normal (\m, I)} [X] \|_2\leq \eta.$
\item We have that $\left\|\mathrm{Cov}_G [X] - I \right\|_2 \leq \eta/d.$
\item For any even degree-$2$ polynomial $p:\R^d\rightarrow \R$ we have that
\begin{align*}
\left|\E_{G}[p(X)] - \E_{\normal (\m, I)}[p(X)]\right| &\leq \eta \E_{\normal (\m, I)}[p^2(X)]^{1/2} ,\;  \\
\left|\E_{G}[p^2(X)] - \E_{\normal (\m, I)}[p^2(X)]\right| &\leq \eta \E_{\normal (\m, I)}[p^2(X)] ,\; \mathrm{and} \\
\Pr_{G} [p(X) \geq 0] &\leq  \Pr_{\normal (\mu, I)} [p(X)\geq 0] + \frac{\eta}{d \log(|G|/\d)} \; .
\end{align*}
\end{itemize}
\end{definition}
\noindent It is easy to show (see Lemma \ref{lem:GoodSamplesOpt}) that given enough samples from $\normal (\mu, I)$, the empirical data set will satisfy these conditions with high probability.

\subsubsection{Regularity Conditions for Unknown Covariance}
\label{sec:cov-reg}
In the unknown covariance case, we will require the following condition:
\begin{definition}
Let $G$ be a set of $n$ points of $\R^d$, and $\eta, \delta>0$. We say that $G$ is $(\eta, \delta)$-good with respect to $\normal (0, \Sigma)$ if the following hold:
\begin{itemize}
\item For all $x \in G$ we have that $x^T \Sigma^{-1} x = O( d \log(|G| / \delta)).$
\item For any even degree-$2$ polynomial $p: \R^d \to \R$ we have
\begin{align*}
\left| \E_{G} [p(X)] - \E_{\normal (0, \Sigma)} [p(X)] \right| &\leq \eta \E_{\normal (0, \Sigma)} [p^2(X)]^{1/2}\; , \\
\left| \E_{G} [p^2 (X)] - \E_{\normal (0, \Sigma)} [p^2(X)] \right| &\leq \eta \E_{\normal (0, \Sigma)} [p^2 (X)] \; \mbox{and} \\
\Pr_{X \sim G} [p(X) \geq 0] &\leq  \Pr_{\normal (0, \Sigma)} [p(X)\geq 0] + \frac{\eta^2}{d \log(|G| / \delta)} \; .
\end{align*}
\item
For any even degree-$4$ polynomial $p: \R^d \to \R$ we have
\begin{align*}
\left| \E_{G} [p(X)] - \E_{\normal (0, \Sigma)} [p(X)] \right| &\leq \eta \Var_{\normal (0, \Sigma)} [p(X)]^{1/2}\; , \\
\Pr_{G} [p(X) \geq 0] &\leq  \Pr_{\normal (\mu, I)} [p(X)\geq 0] + \frac{\eta^2}{2 \log (1 / \eps) (d \log(|G| / \delta))^2} \; .
\end{align*}
\end{itemize}
\end{definition}
\noindent As before, it is easy to show (see Lemma \ref{lem:GoodSamplesOptCov}) that given enough samples from $\normal (0, \Sigma)$, the empirical data set will satisfy these conditions with high probability.

\subsection{Bounds on the Total Variation Distance}

We will require some simple bounds on the total variation distance between two Gaussians. 
These bounds are well-known.
Roughly speaking, they say that the total variation distance between two Gaussians with identity covariance is governed by the $\ell_2$ norm between their means, and the total variation distance between two Gaussians with mean zero is governed by the Frobenius norm between their covariance matrices, provided that the matrices are close to the identity.

\begin{lemma}
\label{lem:mubound}
Let $\mu_1, \mu_2 \in \R^d$ be such that $\| \mu_1 - \mu_2 \|_2 = \eps$ for $\ve < 1$.
Then
\[
\dtv\left(\normal(\m_1, I), \normal(\m_2, I)\right) = \left( \frac{1}{\sqrt{2 \pi}} + o(1)\right) \eps \; .
\]
\end{lemma}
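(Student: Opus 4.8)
\medskip
\noindent\textbf{Proof proposal.}
The plan is to reduce the claim to a one-dimensional statement using the invariance of total variation distance under affine bijections, evaluate the resulting univariate distance in closed form, and then take asymptotics as $\eps \to 0$.

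First I would recall that $\dtv$ is invariant under affine bijections of $\R^d$: for any invertible affine map $T$ we have $\dtv(T_* P, T_* Q) = \dtv(P, Q)$, since such a $T$ induces a measure-preserving correspondence. Applying this with $T$ the composition of the translation $x \mapsto x - \mu_2$ and a rotation $R$ chosen so that $R(\mu_1 - \mu_2) = \eps e_1$, and using that rotations and translations carry identity-covariance Gaussians to identity-covariance Gaussians with the correspondingly transformed mean, we obtain
\[
\dtv\big(\normal(\mu_1, I), \normal(\mu_2, I)\big) = \dtv\big(\normal(\eps e_1, I), \normal(0, I)\big).
\]
Both measures on the right are products over the $d$ coordinates and agree in coordinates $2, \dots, d$; since $\dtv(P_1 \otimes Q, P_2 \otimes Q) = \dtv(P_1, P_2)$, this equals $\dtv(\normal(\eps, 1), \normal(0, 1))$, a purely one-dimensional quantity.

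Next I would evaluate this univariate distance exactly. Writing $\phi$ for the standard Gaussian density and $\Phi$ for its CDF, the densities $\phi(x)$ of $\normal(0,1)$ and $\phi(x - \eps)$ of $\normal(\eps, 1)$ satisfy $\phi(x) \geq \phi(x-\eps)$ precisely when $x \leq \eps/2$. Hence
\[
\dtv\big(\normal(0,1), \normal(\eps, 1)\big) = \int_{-\infty}^{\eps/2} \big(\phi(x) - \phi(x - \eps)\big)\, dx = \Phi(\eps/2) - \Phi(-\eps/2) = 2\Phi(\eps/2) - 1.
\]
Finally, since $\phi$ is continuous with $\phi(0) = 1/\sqrt{2\pi}$, we have $2\Phi(\eps/2) - 1 = 2 \int_0^{\eps/2} \phi(t)\, dt = \eps\, \phi(\xi)$ for some $\xi \in [0, \eps/2]$, and $\phi(\xi) = \frac{1}{\sqrt{2\pi}} e^{-\xi^2/2} = \frac{1}{\sqrt{2\pi}}\big(1 + O(\eps^2)\big)$, which yields the claimed estimate $\big(\frac{1}{\sqrt{2\pi}} + o(1)\big)\eps$.

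The argument is entirely elementary, so there is no genuine obstacle; the only points that call for a moment of care are correctly identifying the crossover point $x = \eps/2$ of the two densities (and the direction of the induced inequality, so that the integral representation of $\dtv$ is set up with the right sign), justifying the reduction to one dimension cleanly via the product-measure identity, and noting that the $o(1)$ term is to be read as $\eps \to 0$.
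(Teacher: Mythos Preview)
Your proof is correct and follows essentially the same approach as the paper: reduce to one dimension via affine invariance and the product structure of identity-covariance Gaussians, identify the crossover point of the two densities, express the total variation distance as $\Phi(\eps/2)-\Phi(-\eps/2)$, and Taylor-expand. The only cosmetic difference is that the paper centers the two means symmetrically at $\pm\eps/2$ (so the crossover is at $0$), whereas you place them at $0$ and $\eps$ (crossover at $\eps/2$); the resulting computation is identical.
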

\noindent For clarity of exposition we defer this calculation to the Appendix.

We also need to bound the total variation distance between two Gaussians with zero mean and different covariance matrices. The natural norm to use is the Mahalanobis distance. But in our setting, we will be able to use the more convenient Frobenius norm instead (because we effectively reduce to the case that 
the covariance matrices will be close to the identity): 

\begin{lemma}[Cor. 2.14 in \cite{DKKLMS}]
\label{lem:frobBound}
Let $\Sigma, \Shat$ be such that $\| \Sigma - I \|_F \leq O(\eps \log 1 / \eps)$, and $\| \Sigma - \Shat \|_F \leq C \eps$.
Then $\dtv (\normal (0, \Sigma), \normal (0, \Shat)) \leq O(\eps)$.
\end{lemma}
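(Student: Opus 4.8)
The plan is to reduce to a comparison against the standard Gaussian and then apply a standard KL/Pinsker argument. By the affine invariance of total variation distance, applying the linear map $x \mapsto \Sigma^{-1/2} x$ gives
\[
\dtv(\normal(0,\Sigma), \normal(0,\Shat)) = \dtv\big(\normal(0, I), \normal(0, M)\big), \qquad M := \Sigma^{-1/2}\Shat\Sigma^{-1/2},
\]
where $\Sigma^{-1/2}$ is the symmetric square root (for $\eps$ below a universal constant the hypotheses force $\Sigma \succ 0$ and $\Shat \succ 0$, so both sides are well defined and $M$ is symmetric positive definite). It therefore suffices to (i) show $\|M - I\|_F = O(\eps)$, and (ii) bound $\dtv(\normal(0,I), \normal(0,M))$ by $O(\|M-I\|_F)$ when $\|M - I\|_F$ is small.

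For (i), write $M - I = \Sigma^{-1/2}(\Shat - \Sigma)\Sigma^{-1/2}$, so by submultiplicativity of the operator norm against the Frobenius norm,
\[
\|M - I\|_F \;\leq\; \|\Sigma^{-1/2}\|_2^{\,2}\,\|\Shat - \Sigma\|_F \;=\; \|\Sigma^{-1}\|_2\,\|\Shat - \Sigma\|_F .
\]
Since $\|\Sigma - I\|_2 \leq \|\Sigma - I\|_F \leq O(\eps\log 1/\eps) = o(1)$, every eigenvalue of $\Sigma$ lies in $[1 - o(1),\, 1 + o(1)]$, hence $\|\Sigma^{-1}\|_2 \leq 1 + o(1) \leq 2$. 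Combined with $\|\Shat - \Sigma\|_F \leq C\eps$ this gives $\|M - I\|_F \leq 2C\eps = O(\eps)$; in particular $\|M - I\|_2 = O(\eps)$, so the eigenvalues of $M$ can be written $1 + \lambda_i$ with $\max_i|\lambda_i| = O(\eps)$ and $\sum_i \lambda_i^2 = O(\eps^2)$.

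For (ii), I would use the exact formula $\dkl\big(\normal(0,I)\,\|\,\normal(0,M)\big) = \tfrac12\big(\tr(M^{-1}) - d + \ln\det M\big) = \tfrac12\sum_i\big(\tfrac{1}{1+\lambda_i} - 1 + \ln(1+\lambda_i)\big)$. The key point is that the terms linear in $\lambda_i$ cancel: $\tfrac{1}{1+\lambda_i} - 1 + \ln(1+\lambda_i) = \tfrac12\lambda_i^2 + O(|\lambda_i|^3)$, valid since $\max_i|\lambda_i|$ is small. Summing, $\dkl = \tfrac14\sum_i\lambda_i^2 + O(\sum_i|\lambda_i|^3) = \tfrac14\|M-I\|_F^2\,(1+o(1)) = O(\eps^2)$, and Pinsker's inequality $\dtv \leq \sqrt{\dkl/2}$ yields $\dtv(\normal(0,I),\normal(0,M)) = O(\eps)$, which by the reduction is the claim. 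The only delicate point in the whole argument is this first-order cancellation in the KL expansion; without it one is left with a bound of order $d\cdot\max_i|\lambda_i|$, which is vacuous. It is exactly this cancellation that makes the Frobenius norm of $\Sigma - \Shat$ (rather than, say, its operator norm) the right quantity to control, and also the reason the hypothesis $\|\Sigma - I\|_F = o(1)$ is needed only to keep $\Sigma$ well conditioned. As an alternative to the Taylor expansion, one can instead invoke the closed-form expression for the squared Hellinger distance between two mean-zero Gaussians together with $\dtv \leq \sqrt{2}\,H$, which gives the same conclusion after an analogous determinant computation.
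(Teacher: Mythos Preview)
Your proof is correct. The paper does not actually prove this lemma---it is quoted without proof as Corollary~2.14 of \cite{DKKLMS}---so there is no in-paper argument to compare against; your self-contained route (affine invariance to reduce to $\normal(0,I)$ vs.\ $\normal(0,M)$, then the KL formula with the first-order cancellation $\tfrac{1}{1+\lambda}-1+\ln(1+\lambda)=\tfrac12\lambda^2+O(\lambda^3)$, then Pinsker) is exactly the standard proof and is what the cited reference ultimately relies on as well.
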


These lemmata show that parameter estimation and approximation in total variation distance are essentially equivalent.
Indeed, in this paper, we achieve both guarantees, but state our results in terms of total variation estimation.


\section{Robustly Learning the Mean in Low Dimensions}
\label{sec:meanLowD}

This section is dedicated to the proof of the following theorem:
\begin{theorem}
\label{lowDimLearningLem}
Fix $\mu \in \R^d$, and let $\eps, \gamma, \delta > 0$.
Let $S_0 = (G_0, E_0)$ be such that $G_0$ is a $(\gamma \eps, \delta)$-good set with respect to $\normal (\mu, I)$, and $|E_0| / |S_0| \leq \eps$.
Let $S = (G, E)$ be another set with $\Delta (S, S_0) < \eps$.
Let $V \subseteq \R^d$ be a subspace.
For all $0 < \rho < 1$, the algorithm $\textsc{LearnMeanLowD} (V, \gamma, \eps, \delta, S, \rho)$ runs in time $\poly (d, |S|, (1 / \rho)^{O(\dim(V))}, \log (\rho \eps / (1 - \rho)), \log (1 / \rho))$ and returns a $\mutilde$ so that 
\[
\| \Pi_V( \mu - \mutilde) \|_2 = \frac{1 + 2 \rho}{1 - \rho} \left(\sqrt{\pi} +  O\left( \frac{\gamma}{d}\right) \right) \eps \; .\]
\end{theorem}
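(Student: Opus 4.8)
The plan is to reduce the $\dim(V)$-dimensional robust mean estimation problem to a collection of one-dimensional robust estimates along a fine net of directions, and then stitch these together using the convex-body-covering trick advertised in the introduction. First I would restrict attention to the subspace $V$: project all samples via $\Pi_V$, so we are working with a $(\gamma\eps,\delta)$-good set in $\dim(V)$ dimensions (goodness is preserved under projection, since the regularity conditions are closed under taking affine functions and even degree-$2$ polynomials on $V$). Let $k = \dim(V)$. The algorithm builds a $\rho$-net $\mathcal{N}$ of the unit sphere in $V$, of size $(1/\rho)^{O(k)}$, which accounts for the $(1/\rho)^{O(\dim(V))}$ factor in the running time.

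For each direction $u \in \mathcal{N}$, consider the one-dimensional projected data $u^T X_i$. Here I would invoke the one-dimensional robust location estimator — essentially the median, as discussed around Lemma~\ref{lem:mubound} — to obtain a value $p_u$ with the guarantee that $|u^T\mu - p_u| \le (\sqrt{\pi}/2 + O(\gamma/d))\eps$ or so, up to the loss factor from having thrown away at most an $O(\eps/\log(1/\eps))$ fraction of good points (this is exactly what $\Delta(S,S_0) < \eps$ buys us via the $\phi(S',G) \le O(\eps/\log(1/\eps))$ bound noted after Eq.~\eqref{eq:Delta}). The key point is that in one dimension the good set, being close to $\normal(u^T\mu, 1)$ in the sense of the affine-function condition and the mean/variance conditions, forces the sample median to be within roughly $\eps$ of $u^T\mu$: a Gaussian has constant density near its median, so shifting the median by $\delta'$ requires moving a $\Theta(\delta')$ fraction of mass, which an $\eps$-fraction of corruptions (plus lost good points) cannot do for $\delta'$ much larger than $\eps$. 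The precise constant $\sqrt{\pi}$ comes from $1/(\sqrt{2}\,\Phi'(0)\cdot\text{(something)})$-type bookkeeping; I will not grind this out here.

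Next, form the convex body $K = \bigcap_{u \in \mathcal{N}} \{x \in V : |u^T x - p_u| \le (\text{one-dim bound})\}$. Every net direction gives a slab of width $2\cdot(\sqrt{\pi}/2+\cdots)\eps$ containing $\mu$, so $\mu \in K$ and, because $\mathcal{N}$ is a $\rho$-net, the diameter of $K$ is at most $\frac{1}{1-O(\rho)}$ times the slab width, i.e. $\le \frac{2}{1-\rho}(\sqrt{\pi}/2 + O(\gamma/d))\eps$ — the $\rho$-net only controls an arbitrary direction up to a $(1-\rho)^{-1}$ factor. Then apply the geometric fact from the introduction: any convex body of diameter $D$ sits inside a ball of radius $D/\sqrt{2}$, and an (approximate) such enclosing ball can be found in time exponential in $k$ (e.g.\ by the Jung/minimum-enclosing-ball machinery, or a crude search over the net). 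Output $\mutilde = $ center of this ball. Since $\mu \in K \subseteq$ this ball, $\|\Pi_V(\mu - \mutilde)\|_2 \le D/\sqrt{2} \le \frac{1}{\sqrt{2}(1-\rho)}(\sqrt{2\pi}/2 + O(\gamma/d))\eps = \frac{1}{1-\rho}(\sqrt{\pi}/2 + O(\gamma/d))\eps$, and absorbing the approximate-minimum-ball slack and net granularity into the factor $\frac{1+2\rho}{1-\rho}$ gives the stated bound. (The additive $\log(\rho\eps/(1-\rho))$ and $\log(1/\rho)$ terms in the running time come from the bit-precision needed in the ball-finding and net-construction steps.)

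The main obstacle I anticipate is getting the one-dimensional constant exactly right and propagating it cleanly through the net discretization and the $D/\sqrt{2}$ covering, so that the final constant is precisely $\sqrt{\pi}$ with only a $\frac{1+2\rho}{1-\rho}$ multiplicative overhead rather than something larger: one has to be careful that the $\rho$-net error enters multiplicatively on the diameter (hence through $(1-\rho)^{-1}$ on each slab and an extra $(1-\rho)^{-1}$ when reading off an arbitrary direction from net directions) and that the approximation in the minimum-enclosing-ball routine contributes only the extra $2\rho$ in the numerator. A secondary subtlety is verifying that projecting onto $V$ genuinely preserves the $(\gamma\eps,\delta)$-goodness with the same parameters (in particular that the affine-function and degree-$2$ conditions, stated with $d$ in the denominator, still suffice when restricted to a $k$-dimensional subspace with $k \le d$) and that the condition $\Delta(S,S_0) < \eps$ survives the projection — both should follow since $\Pi_V$ maps affine/polynomial test functions on $V$ to affine/polynomial test functions on $\R^d$, but this needs to be stated carefully.
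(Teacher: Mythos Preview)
Your proposal is essentially the paper's proof: form a $\rho$-net of the unit sphere in $V$, take the sample median along each net direction (Lemma~\ref{lem:median}), intersect the resulting slabs into a convex body, and apply Jung's theorem together with an approximate minimum-enclosing-ball routine (Theorem~\ref{thm:circumscribe1}, Corollary~\ref{cor:circumscribe2}) to output the center. The only correction needed is the one-dimensional constant: since the standard Gaussian density at $0$ is $1/\sqrt{2\pi}$, an $\eps$-fraction of corruptions shifts the median by at most $\sqrt{\pi/2}\,\eps$ (not $(\sqrt{\pi}/2)\eps$), so the slab half-width is $\beta=(\sqrt{\pi/2}+O(\gamma/d))\eps$, the diameter is $2\beta/(1-\rho)$, Jung gives radius $\sqrt{2}\,\beta/(1-\rho)=(\sqrt{\pi}+O(\gamma/d))\eps/(1-\rho)$, and the $(1+2\rho)$ factor from the approximate enclosing ball then yields exactly the stated bound.
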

\noindent In particular, as we let $\rho, \gamma \to 0$, the parameter estimation error approaches $\sqrt{\pi} \eps$ (corresponding to a total variation approximation of $\ve/\sqrt{2}$). In Lemma \ref{lem:gaussian-huber-LB} in the Appendix we show that no algorithm can achieve parameter estimation error better than $\sqrt{\frac{\pi}{2}} \eps$. Thus, we achieve a $\sqrt{2}$ approximation to the optimal error. 

For simplicity, in the rest of this section, we will let $V = \R^d$, that is, we assume there is no projection.
It should be clear that this can be done without loss of generality. Our algorithm proceeds as follows:
First, we show that in one dimension, the median produces an estimate which is optimal, up to lower order terms, if the sample set is $(\gamma \eps, \delta)$-good with respect to the underlying Gaussian.
Then, we show that by using a net argument, we can produce a convex body in $\R^d$ with diameter at most $2 R = 2 (\sqrt{\frac{\pi}{2}} + o(1)) \eps$ which must contain the true mean. Finally, we use an old result of Jung \cite{Jung01} that such a set can be circumscribed by a ball of radius $\sqrt{2} R$ (see \cite{BW41} for an English language version of the result). We use the center of the ball as our estimate $\mutilde$. 

\subsection{Robustness of the Median}
First we show that if we project onto one dimension, then the median of the corrupted data differs from the true mean by at most $\sqrt{\frac{\pi}{2}} \eps + o(\eps)$. Our proof will rely only on the notion of a $(\gamma \epsilon, \delta)$-good set with respect to $\mathcal{N}(\mu, I)$ and thus it works even in the strong contamination model. 
Formally, we show:

\begin{lemma}
\label{lem:median}
Fix any $v \in \R^d$.
Fix $\mu \in \R^d$, and let $\delta > 0$.
Let $S_0 = (G_0, E_0)$ be so that $G_0$ be a $(\gamma \eps, \delta)$-good set with respect to $\normal (\mu, I)$, and $|E_0| / |S_0| \leq \eps$.
Let $S = (G, E)$ be another set with $\Delta (S, S_0) < \eps$.
Let $b$ be the median of $S$ when projected onto $v$.
Then, $|b - \Pi_v \mu | \leq \left(\sqrt{\frac{\pi}{2}} +  O\left( \frac{\gamma}{d}\right) \right) \eps$.
\end{lemma}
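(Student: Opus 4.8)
After normalizing so that $\|v\|_2 = 1$, write $m = \Pi_v\mu$, so that under $\normal(\mu,I)$ the projection $v^TX$ is $\normal(m,1)$, and let $\Phi^c(t) = \Pr_{Z\sim\normal(0,1)}[Z>t] = 1-\Phi(t)$ be the standard Gaussian tail. The plan is a one-sided quantile argument: I will show $b \le m + t$ for $t = \bigl(\sqrt{\pi/2} + O(\gamma/d)\bigr)\eps$ (up to lower-order terms), and $b \ge m - t$ follows verbatim after replacing $v$ by $-v$ and using that the distribution of $v^TX$ is symmetric about $m$. Throughout, recall from $\Delta(S,S_0) = \psi + \phi\log(1/\phi) < \eps$ that $\psi := \psi(S,S_0) \le \eps$ and $\phi := \phi(S,S_0) \le O(\eps/\log(1/\eps))$; and, since $S$ is obtained from $S_0$ only by deletions, write $S = (G,E)$ with $G = S\cap G_0$, $E = S\cap E_0$, so that $|S| = |G| + |E| = (|G_0| - |L|) + |E|$ with $|L| = \phi|S|$ and $|E| = \psi|S|$, whence $|G_0| = (1 + \phi - \psi)|S| \le (1 + O(\eps/\log(1/\eps)))|S|$.

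Suppose for contradiction that $b > m+t$. By definition of the median, at least $|S|/2$ of the points of $S$ have $v$-projection $\ge b$, hence $> m+t$. The corrupted points contribute at most $|E| = \psi|S|$ of these. For the good points, $G \subseteq G_0$, and applying the affine-function clause of $(\gamma\eps,\delta)$-goodness to $L(x) = v^Tx - (m+t)$ gives $\Pr_{G_0}[v^TX > m+t] \le \Phi^c(t) + \gamma\eps/(d\log(d/\gamma\eps\delta)) = \Phi^c(t) + O(\gamma\eps/d)$, so the good points contribute at most $(\Phi^c(t) + O(\gamma\eps/d))|G_0|$. Adding these, dividing by $|S|$, and plugging in $|G_0| = (1+\phi-\psi)|S|$,
\[
\tfrac12 \;\le\; \bigl(\Phi^c(t) + O(\gamma\eps/d)\bigr)(1+\phi-\psi) + \psi \;\le\; \Phi^c(t) + O(\gamma\eps/d) + \phi + \psi\,\Phi(t)\;,
\]
where the last step uses $\Phi^c(t) + O(\gamma\eps/d) \le 1$ and $1 - \Phi^c(t) = \Phi(t)$. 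It remains to convert this into a bound on $t$ via $\Phi^c(t) = \tfrac12 - t/\sqrt{2\pi} + O(t^3)$. Since every error term on the right is $O(\eps)$ and $\Phi(t) \le 1$, the inequality first forces $\Phi^c(t) \ge \tfrac12 - O(\eps)$, and since $\Phi^c$ is smooth and strictly decreasing with $(\Phi^c)'(0) = -1/\sqrt{2\pi}$, this yields $t = O(\eps)$. Feeding this back in, $\Phi(t) = \tfrac12 + O(\eps)$, so $\psi\,\Phi(t) \le \tfrac{\eps}{2} + O(\eps^2)$, and rearranging gives $t/\sqrt{2\pi} \le \tfrac{\eps}{2} + \phi + O(\gamma\eps/d) + O(\eps^2)$, i.e. $t \le \sqrt{\pi/2}\,\eps + O(\gamma\eps/d) + O(\eps/\log(1/\eps))$, which is the claimed bound (the $O(\eps/\log(1/\eps))$ contribution from $\phi$ being absorbed into lower-order terms). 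This contradicts $b > m+t$, so $b \le m+t$, and by symmetry $|b - m| \le t$.

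The argument is a short counting-plus-Taylor estimate, so the real work — and essentially the only place where something can go wrong — is in doing the bookkeeping precisely enough to pin down the constant $\sqrt{\pi/2}$, which is tight up to a $1+o(1)$ factor. In particular one must be careful to use the \emph{one-sided} tail bound from affine goodness rather than a two-sided one, to count the good points against $|G_0|$ rather than $|S|$ (these differ by the factor $1+\phi-\psi$ coming from deleted good points), and to bootstrap $t$ down to the $O(\eps)$ regime before the Taylor expansion of $\Phi^c$ at $0$ is legitimate; getting any of these slightly wrong degrades the leading constant.
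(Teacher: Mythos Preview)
Your proof is correct and follows essentially the same approach as the paper's: decompose the empirical tail $\Pr_S[\langle v,X\rangle > m+t]$ into contributions from $E$ and $G\subseteq G_0$, invoke the affine-function clause of $(\gamma\eps,\delta)$-goodness to pass from $G_0$ to $\normal(\mu,I)$, and Taylor-expand $\Phi^c$ near $0$ to extract the constant $\sqrt{\pi/2}$. Your contradiction/quantile framing and explicit tracking of $|G_0|/|S| = 1+\phi-\psi$ are, if anything, slightly cleaner bookkeeping than the paper's version, but the underlying argument is the same.
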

\begin{proof}
For any $a \in \R$, we have
\begin{equation*}
\Pr_{X \sim S} [\langle v, X \rangle > a] = \frac{|G|}{|S|} \Pr_{X \sim G} [\langle v, X \rangle > a] + \frac{|E|}{|S|} \Pr_{X \sim E} [\langle v, X \rangle > a] \; .
\end{equation*}
Observe that we have $\left| \frac{|E|}{|S|} \Pr_{X \sim E} [\langle v, X \rangle > a] \right| \leq \psi (S, G)$.
Moreover, by simple calculation we have
\begin{align*}
\left| \Pr_{X \sim G} [\langle v, X \rangle > a] - \Pr_{X \sim G_0} [\langle v, X \rangle > a] \right| &\leq 2 \phi (S, G) \; .
\end{align*}
Hence, we have
\[
\left| \Pr_{X \sim S} [\langle v, X \rangle > a] - \frac{|G|}{|S|} \Pr_{X \sim G_0} [\langle v, X \rangle > a] \right| \leq \psi (S, G) + o(\eps) \; ,
\]
since by assumption $\Delta (S, G) \leq \eps$.
Similarly, we have that for all $a \in \R$, 
\[
\left| \Pr_{X \sim S} [\langle v, X \rangle < a] - \frac{|G|}{|S|} \Pr_{X \sim G_0} [\langle v, X \rangle < a] \right| \leq \psi (S, G) + o(\eps) \; .
\]
For $|a| = O(\eps)$ we have that $\Pr_{\normal (0, I)} [X > a] = \frac{1}{2} - \frac{1}{\sqrt{2 \pi}} a + O(\eps^3)$.
Thus, by $(\gamma \eps, \delta)$-goodness of $G_0$, this implies that for $|a| = O(\eps)$, we have
\begin{equation}
\left| \Pr_{X \sim S} [\langle v, X \rangle > \Pi_y \mu + a] - \frac{|G|}{|S|} \left( \frac{1}{2} - \frac{1}{\sqrt{2 \pi}} a \right) \right| \leq \psi (S, G) + O\left( \frac{\gamma \eps}{d}\right) \; .
\end{equation}
Thus, for $|a| = O(\eps)$ we have
\begin{align*}
 \Pr_{X \sim S} [\langle v, X \rangle > \Pi_y \mu + \sqrt{\frac{\pi}{2}} a] &\leq \frac{|G|}{|S|} \left( \frac{1}{2} -\frac{1}{\sqrt{2 \pi}} a \right) + \frac{|E|}{|S|} + O\left( \frac{\gamma \eps}{d}\right) + o(\eps) \\
 &\leq \frac{1}{2} - \frac{|G|}{|S|} \frac{1}{\sqrt{2 \pi}} a + \frac{|E|}{2 |S|} +  O\left( \frac{\gamma \eps}{d}\right) + o(\eps) \; .
\end{align*}
In particular, we see that if $a > \sqrt{\frac{\pi}{2}} \eps +  O\left( \frac{\gamma \eps}{d}\right) + o(\eps)$, then $\Pr_{X \sim S} \left[ \langle v, X \rangle > \Pi_y \mu + \sqrt{\frac{\pi}{2}} \eps \right] < 1/2$.
By symmetric logic, we also have that $\Pr_{X \sim S} \left[ \langle v, X \rangle > \Pi_y \mu - \sqrt{\frac{\pi}{2}} \eps \right] > 1/2$.
Thus, the median in direction $v$ differs from $\Pi_y \mu$ by at most $\sqrt{\frac{\pi}{2}} \eps +  O\left( \frac{\gamma \eps}{d}\right) +  o(\eps)$.
\end{proof}

\subsection{Finding a Minimum Radius Circumscribing Ball}
For any $x \in \R^d$ and $r > 0$, let $B(x, r) = \{y \in \R^d: \| x - y \|_2 \leq r \}$ denote the closed ball of radius $r$ centered at $x$.
The following classical result of Gale gives a bound on the radius of the circumscribing ball of any convex set in terms of its diameter:
\begin{theorem}[see \cite{Jung01, BW41}]
\label{thm:circumscribe1}
Fix $R> 0$.
Let $\cc \subseteq \R^d$ be a convex body so that for all $x, y \in \cc$, we have $\| x - y \|_2 \leq 2 R$.
Then $\cc$ is contained within a ball of radius $R \sqrt{2}$.
\end{theorem}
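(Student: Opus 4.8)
The plan is to prove this via the classical route: exhibit the smallest enclosing ball of $\cc$, characterize its center as a convex combination of ``contact points'' lying on the boundary sphere, and then run a one-line second-moment identity. For the existence step, since $\cc$ is a convex body (hence compact), the function $x \mapsto \max_{y \in \cc} \|x - y\|_2$ is continuous and tends to infinity as $\|x\|_2 \to \infty$, so it attains its minimum at some point $c \in \R^d$; setting $r$ to be this minimum value gives $\cc \subseteq B(c, r)$ with $r$ as small as possible. (Uniqueness of $c$ is not needed, though it follows from strict convexity of $x \mapsto \max_y \|x-y\|_2^2$.)

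The crux of the argument — and what I expect to be the only real obstacle — is the structural claim that $c$ lies in the convex hull of the contact set $P := \{ x \in \cc : \|x - c\|_2 = r \}$. I would argue by contradiction: if $c \notin \mathrm{conv}(P)$, then since $\mathrm{conv}(P)$ is compact and convex there is a unit vector $u$ and a $t > 0$ with $\langle u,\, p - c \rangle \geq t$ for all $p \in P$. Perturbing the center to $c + \eta u$ for small $\eta > 0$ strictly decreases the distance to every point of $P$; meanwhile, by compactness, every point of $\cc$ lying outside a fixed neighbourhood of $P$ already has distance strictly less than $r$ from $c$, so it remains within radius $r$ for $\eta$ small enough. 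Hence $\cc$ is contained in a ball of radius $r' < r$, contradicting minimality. Therefore $c \in \mathrm{conv}(P)$, and Carath\'eodory's theorem lets me write $c = \sum_{i=1}^{k} \lambda_i p_i$ with $p_i \in P$, $\lambda_i > 0$, $\sum_i \lambda_i = 1$, and $k \le d+1$.

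Finally, I would expand, for any fixed index $j$,
\[
\sum_{i=1}^{k} \lambda_i \|p_i - p_j\|_2^2 = \sum_i \lambda_i\!\left( \|p_i - c\|_2^2 + \|c - p_j\|_2^2 + 2\langle p_i - c,\, c - p_j \rangle \right) = 2r^2 + 2\Big\langle \textstyle\sum_i \lambda_i p_i - c,\ c - p_j \Big\rangle = 2r^2,
\]
using $\|p_i - c\|_2 = \|c - p_j\|_2 = r$ and $\sum_i \lambda_i p_i = c$. Since all $p_i \in \cc$ and $\cc$ has diameter at most $2R$, the term with $i = j$ vanishes and each term with $i \neq j$ is at most $4R^2$, so $2r^2 \le 4R^2 \sum_{i \ne j} \lambda_i \le 4R^2$, i.e.\ $r \le R\sqrt{2}$. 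Thus $\cc \subseteq B(c, R\sqrt{2})$, which is the claim. (Weighting the displayed identity by $\lambda_j$ and summing over $j$ would improve this to the sharp Jung bound $r \le R\sqrt{2d/(d+1)}$, but the dimension-free bound $R\sqrt 2$ is all that is used in the sequel.)
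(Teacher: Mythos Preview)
The paper does not prove this theorem; it is invoked as a classical result (attributed to Jung) and only cited. Your argument is the standard smallest-enclosing-ball proof and is correct.

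One small remark on your contradiction step. As written, the perturbation argument handles points of $P$ (distance strictly decreases) and points of $\cc$ outside a fixed neighbourhood of $P$ (distance was already $< r$ and stays $< r$), but says nothing about points in that neighbourhood that are not in $P$. The fix is to choose the ``neighbourhood'' using the separating hyperplane itself: take $N = \{ x \in \cc : \langle u, x - c \rangle > t/2 \}$. Then $N$ is a relatively open set containing $P$, and for every $x \in N$ the same expansion gives $\|x - (c + \eta u)\|_2^2 \le \|x - c\|_2^2 - \eta t + \eta^2 < r^2$ for small $\eta$, while $\cc \setminus N$ is compact with $\max_{x \in \cc \setminus N} \|x - c\|_2 < r$. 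With this choice the dichotomy is exhaustive and the contradiction goes through. Your second-moment identity and the final bound $r \le R\sqrt{2}$ are fine, as is the parenthetical sharpening to $R\sqrt{2d/(d+1)}$.
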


The bound is asymptotically achieved for the standard simplex as we increase its dimension. The goal of this subsection is to show that the (approximately) minimum radius circumscribing ball can be found efficiently.
We will assume we are given an \emph{approximate projection} oracle for the convex body that given a point $y \in \R^d$, outputs a point which is almost the closest point in $\cc$ to $x$:

\begin{definition}
A \emph{$\rho$-projection oracle} for a convex body $\cc$ is a function $\mathcal{O}: \R^d \to \R^d$, which, given a point $y \in \R^d$, outputs a point $x \in \cc$ so that $\| x - y \|_2 \leq \inf_{x' \in \cc} \| x' - y \|_2 + \rho $.
\end{definition}

Our first step is to use such an oracle to construct a net for $\cc$. First, we need the following well-known bound on the size of the net. 
\begin{claim}
\label{claim:eps-net}
Fix $r > 0$.
Then, for any $\beta > 0$, there is a $\beta$-net $\cF$ for the sphere of radius $r$ around $0$ in $\R^d$ of size $(r / \beta)^{O(d)}$.
Moreover, this net can be constructed in time $\poly (d, |\cF|)$.
\end{claim}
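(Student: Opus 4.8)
We may assume without loss of generality that $\beta < r$; if $\beta \ge r$ the sphere has diameter at most $2r \le 2\beta$, so a single point (or, safely, any constant number of points read off a coarse grid) already forms a $\beta$-net, and this is consistent with the bound $(r/\beta)^{O(d)}$. For the cardinality estimate — the ``well-known bound'' the claim refers to — I would use the standard volumetric packing argument: let $\cF$ be any maximal subset of the radius-$r$ sphere whose points are pairwise at Euclidean distance strictly greater than $\beta$. Maximality immediately forces $\cF$ to be a $\beta$-net (any uncovered point could be added). Since the open balls $B(f,\beta/2)$, $f\in\cF$, are pairwise disjoint and all contained in $B(0,r+\beta/2)$, comparing volumes gives $|\cF|\le ((2r+\beta)/\beta)^d=(r/\beta)^{O(d)}$. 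This argument, however, is non-constructive, since one cannot enumerate the sphere.

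To get an explicit net I would discretize and then radially project. Fix the scaled lattice $\Lambda=(\beta/\sqrt{d})\,\mathbb{Z}^d$, whose cells have $\ell_2$-diameter $\beta$, so every point of $\R^d$ lies within $\beta/2$ of a point of $\Lambda$. Let $\cF$ be the image, under the radial map $p\mapsto r\,p/\|p\|_2$, of the lattice points $p\in\Lambda$ lying in the shell $\{x:\,|\,\|x\|_2-r\,|\le \beta/2\}$. A one-line triangle-inequality check shows $\cF$ is a $\beta$-net for the sphere: for $x$ on the sphere its nearest lattice point $p$ satisfies $\|x-p\|_2\le\beta/2$, hence $p$ lies in the shell and $\|x-r\,p/\|p\|\|_2\le \|x-p\|_2+|\,\|p\|_2-r\,|\le\beta$, while $r\,p/\|p\|_2$ is on the sphere. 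For the count, assign to each such $p$ its lattice cell; these cells are disjoint and contained in $B(0,r+\beta)\setminus B(0,r-\beta)$, so $|\cF|\le \mathrm{vol}(B(0,r+\beta))/(\beta/\sqrt{d})^d\le \mathrm{vol}(B^d_1)\,(2r)^d(\sqrt{d}/\beta)^d$. The crucial point is that the standard estimate $\mathrm{vol}(B^d_1)\le (2\pi e/d)^{d/2}$ exactly cancels the $d^{d/2}$ introduced by the lattice spacing, leaving $|\cF|\le (O(r/\beta))^d=(r/\beta)^{O(d)}$, as required.

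The step I expect to be the only mildly delicate one is arranging the construction to run in time $\poly(d,|\cF|)$ rather than merely producing a set of the right size: naively scanning the bounding box $[-2r,2r]^d\cap\Lambda$ costs $(O(r\sqrt{d}/\beta))^d$, which carries a spurious $d^{\Theta(d)}$ factor over $|\cF|$ (and this really would break the ``polynomial in $1/\eps$'' running time of the downstream mean-estimation theorem, where $d$ scales like $\log 1/\eps$). I would instead enumerate the lattice points of the \emph{ball} $B(0,2r)$ directly, by lexicographic branch-and-bound over coordinates: having fixed a prefix $p_1,\dots,p_k$, recurse only on the interval of feasible values of $p_{k+1}$ determined by $\sum_{i\le k}p_i^2\le 4r^2$. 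Each node is $O(d)$ arithmetic, the number of nodes is $\poly(d,r/\beta)$ times the number of lattice points actually in $B(0,2r)$, and that number is again $(r/\beta)^{O(d)}$ by the same unit-ball-volume estimate; filtering to the shell and radially projecting is $O(d)$ per point. (If one also wants $\cF$ to be $\beta$-separated one can greedily thin it afterward in time $\poly(d,|\cF|)$; this is not needed here.) Assembling these pieces yields the claim.
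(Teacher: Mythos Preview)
The paper states this claim as well-known and gives no proof whatsoever; it simply invokes it and moves on. Your argument is correct and more than the paper supplies: the maximal-packing volume bound handles existence, and your lattice-plus-radial-projection construction with branch-and-bound enumeration over coordinates gives an explicit algorithm. The point you flag as ``mildly delicate''---that the unit-ball volume $\mathrm{vol}(B_1^d)\le(2\pi e/d)^{d/2}$ cancels the $d^{d/2}$ coming from the lattice spacing $\beta/\sqrt d$, and that enumerating lattice points of the \emph{ball} rather than the bounding cube avoids a spurious $d^{\Theta(d)}$ factor---is exactly what is needed to justify $\poly(d,|\cF|)$ running time, and the paper takes all of this for granted. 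One small bookkeeping remark: your enumeration visits all $(r/\beta)^{O(d)}$ lattice points of $B(0,2r)$, not just the shell, so to conclude the time is $\poly(d,|\cF|)$ you are implicitly using that your output $\cF$ itself has size $(r/\beta)^{\Theta(d)}$, which it does; this is worth saying explicitly.
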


With this, we can show:
\begin{lemma}
\label{lem:circum-net}
Fix $R, \cc$ as in Theorem \ref{thm:circumscribe1}, and let $1 > \rho > 0$.
Let $x \in \cc$ be arbitrary.
Let $\co$ be a $(\rho R / 3)$-projection oracle for $\cc$.
Suppose a call to $\co$ runs in time $T$.
Then, there is an algorithm $\textsc{CircumscribeNet}(R, \rho, \co, x)$ 
which runs in time $\poly ((R / \rho)^{O(d)}, T)$ and 
outputs a set $\cX \subseteq \R^d$ so that $\cX$ is a $(\rho R)$-net for $\cc$, 
and moreover, $|\cX| \leq (R / \rho)^{O(d)}$.
\end{lemma}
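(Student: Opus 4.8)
\textbf{Proof proposal for Lemma~\ref{lem:circum-net}.}

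The plan is to combine the $\rho$-projection oracle $\co$ with the net construction of Claim~\ref{claim:eps-net} to build an explicit finite set that is guaranteed to be a $(\rho R)$-net for $\cc$. First I would exploit the fact, supplied by Theorem~\ref{thm:circumscribe1}, that $\cc$ is contained in a ball $B(c^*, R\sqrt{2})$ for some (unknown) center $c^*$; since we are also handed a point $x \in \cc$ and $\mathrm{diam}(\cc) \le 2R$, we certainly have $\cc \subseteq B(x, 2R)$. So the ambient region we must cover has radius $O(R)$ and is centered at a point we know. Apply Claim~\ref{claim:eps-net} with $r = 2R$ and $\beta = \rho R / 3$ to obtain, in time $\poly(d, (R/\rho)^{O(d)})$, a $(\rho R/3)$-net $\cF$ of the sphere of radius $2R$ about $x$; taking $\cF' = \{x\} \cup \cF$ (or, if one prefers, a net of the whole ball, which has the same asymptotic size) gives a $(\rho R/3)$-net of $B(x, 2R) \supseteq \cc$ of size $(R/\rho)^{O(d)}$.

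Next I would run the projection oracle on every point of $\cF'$: set $\cX = \{ \co(y) : y \in \cF' \}$. Each call costs $T$, so the total running time is $\poly((R/\rho)^{O(d)}, T)$ as claimed, and $|\cX| \le |\cF'| = (R/\rho)^{O(d)}$. By definition of a $\rho$-projection oracle (with parameter $\rho R/3$), each $\co(y)$ lies \emph{in} $\cc$, so $\cX \subseteq \cc$. It remains to verify the covering property: given any $z \in \cc \subseteq B(x, 2R)$, there is $y \in \cF'$ with $\|z - y\|_2 \le \rho R/3$. Since $z \in \cc$, the true projection of $y$ onto $\cc$ is within $\|z-y\|_2 \le \rho R/3$ of $y$, so the oracle's output satisfies $\|\co(y) - y\|_2 \le \rho R/3 + \rho R/3 = 2\rho R/3$. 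The triangle inequality then gives $\|z - \co(y)\|_2 \le \|z - y\|_2 + \|y - \co(y)\|_2 \le \rho R/3 + 2\rho R/3 = \rho R$, so $\co(y) \in \cX$ witnesses that $z$ is $(\rho R)$-close to $\cX$. Hence $\cX$ is a $(\rho R)$-net for $\cc$.

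The only genuinely delicate point is bookkeeping the constants in the net radius: one must choose the net fineness $\beta$ for $\cF$ small enough (here $\rho R/3$) and the oracle precision small enough (also $\rho R/3$) so that the two sources of error plus the triangle-inequality slack add up to at most $\rho R$; this is why the statement specifies a $(\rho R/3)$-projection oracle rather than a $(\rho R)$-one. Everything else is routine: the size and running-time bounds are immediate from Claim~\ref{claim:eps-net} and the fact that we make $|\cF'|$ oracle calls, and the containment $\cX \subseteq \cc$ is built into the oracle's definition. I do not anticipate any real obstacle; the lemma is essentially a clean reduction of "net for a convex body" to "net for a ball" plus one round of projections.
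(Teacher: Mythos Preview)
Your proposal is correct and follows essentially the same route as the paper: net the enclosing ball $B(x,2R)$ at scale $\rho R/3$, project each net point with the $(\rho R/3)$-oracle, and use the triangle inequality to conclude the resulting set is a $(\rho R)$-net for $\cc$. The only cosmetic difference is that the paper's pseudocode additionally discards $\co(v)$ whenever $\|v-\co(v)\|_2 > 2\rho R/3$, but this filter is not needed for correctness (the oracle already returns points of $\cc$), so your unfiltered version is equally valid and arguably cleaner.
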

The algorithm is fairly straightforward.
First, we observe that $\cc$ is contained within $B(x, 2R)$.
We then form a $(\rho R) / 3$-net of $B(x, 2R)$ using Claim \ref{claim:eps-net}.
We then iterate over every element $v$ of this net, and use our projection oracle to (approximately) find the closest point in $\cc$ to $v$.
If this point is too far away, we throw it out, otherwise, we add this projected point into the net.
The formal pseudocode for $\textsc{CircumscribeNet}$ is given in Algorithm \ref{alg:circumscribe-net}.

\begin{algorithm}[htb]
\begin{algorithmic}[1]
\Function{CircumscribeNet}{$R, \rho, \co, x$}
\State Form an $\rho / 3$-net $\cF'$ of the sphere of radius $2$ of size $(1 / \rho)^{O(d)}$ as in Claim \ref{claim:eps-net}.
\State Let $\cF = R \cdot \cF' + x$.
\State Let $\cX \gets \emptyset$
\For{each $v \in \cF$}
	\State Let $u_v \gets \co(v)$
	\If{$\| v - u_v \|_2 \leq 2 \rho R / 3 $} \label{line:cir-net-1}
		\State Add $u_v$ to $\cX$
	\EndIf
\EndFor
\State \textbf{return} $\cX$
\EndFunction
\end{algorithmic}
\caption{Generating a net of $\cc$}
\label{alg:circumscribe-net}
\end{algorithm}

\begin{proof}
The runtime bound follows from Claim \ref{claim:eps-net}.
We now turn our attention to correctness.
By Claim \ref{claim:eps-net}, and rescaling and shifting, the set $\cF$ is clearly a $(\rho R) / 3$-net for a ball $B$ of radius $2R$ containing $\cc$.
We now claim that the set $\cX$ is indeed a $(\rho R) / 3$-net for $\cc$.
Fix $y \in \cc$.
Since $\cc \subseteq B$, this implies there is some $v \in \cF$ so that $\| y - v \|_2 \leq \rho R / 3$.
Thus, in Line \ref{line:cir-net-1}, when processing $v$, we must find some $u_v \in \cc$ so that $\| u_v - v \|_2 \leq 2 \rho R / 3$.
The claim then follows from the triangle inequality.
\end{proof}

With this, we obtain:
\begin{corollary}
\label{cor:circumscribe2}
Fix $R, \cc, \rho, \co, x$ as in Lemma \ref{lem:circum-net}.
Suppose a call to $\co$ runs in time $T$.
Then, there is an algorithm $\textsc{Circumscribe}(R, \rho, \co, x)$ 
which runs in time $\poly ((R / \rho)^{O(d)}, T)$ and returns 
a point $\yhat$ so that $\cc$ is contained within a ball of radius $\sqrt{2} (1 + 2 \rho) R$.
\end{corollary}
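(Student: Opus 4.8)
The plan is to reduce the problem to an ordinary (finite) minimum-enclosing-ball computation: first replace the implicitly-described body $\cc$ by a finite $(\rho R)$-net of it using \textsc{CircumscribeNet}, then find a point that almost minimizes the maximum distance to that net, and finally argue that the circumscribing radius coming from Theorem~\ref{thm:circumscribe1} plus the net slack is still at most $\sqrt{2}(1+2\rho)R$.

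Concretely, I would proceed as follows. \emph{Step 1 (discretize).} Call $\textsc{CircumscribeNet}(R,\rho,\co,x)$; by Lemma~\ref{lem:circum-net} this runs in time $\poly((R/\rho)^{O(d)},T)$ and returns a set $\cX\subseteq\cc$ that is a $(\rho R)$-net for $\cc$ with $|\cX|\le (R/\rho)^{O(d)}$. \emph{Step 2 (approximate center of $\cX$).} Let $f(y)=\max_{p\in\cX}\|y-p\|_2$, a convex function (pointwise maximum of convex functions), and let $r^\ast=\min_{y\in\R^d} f(y)$ with minimizer $y^\ast$. Picking any $p_0\in\cX$ and using $\|y^\ast-p_0\|_2\le r^\ast$ together with the bound on $r^\ast$ below, one gets $y^\ast\in B(x,(2+\sqrt2)R)$. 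Form a $(\rho R/2)$-net $\cN$ of $B(x,(2+\sqrt2)R)$ via Claim~\ref{claim:eps-net} (size $(R/\rho)^{O(d)}$), and set $\yhat=\arg\min_{w\in\cN} f(w)$; evaluating $f$ at each of the $|\cN|$ net points costs $O(|\cN|\,|\cX|\,d)$ total, which is within the claimed running time. (Alternatively one could solve the smallest-enclosing-ball SOCP exactly in polynomial time; the net search is self-contained with the paper's existing tools.)

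\emph{Step 3 (radius bound).} Since $\cX\subseteq\cc$ and $\cc$ has diameter at most $2R$, the finite set $\cX$ has diameter at most $2R$, so by Theorem~\ref{thm:circumscribe1} it is contained in some ball of radius $\sqrt2\,R$; hence $r^\ast\le\sqrt2\,R$. Because $y^\ast\in B(x,(2+\sqrt2)R)$, there is $w_0\in\cN$ with $\|w_0-y^\ast\|_2\le\rho R/2$, so $f(\yhat)\le f(w_0)\le r^\ast+\rho R/2\le \sqrt2\,R+\rho R/2$. \emph{Step 4 (conclude).} For any $y\in\cc$ the net property gives $p\in\cX$ with $\|y-p\|_2\le\rho R$, and therefore
\[
\|y-\yhat\|_2 \;\le\; \|y-p\|_2 + f(\yhat) \;\le\; \rho R + \sqrt2\,R + \tfrac{\rho R}{2} \;\le\; \sqrt2\,R + 2\rho R \;\le\; \sqrt2\,(1+2\rho)R,
\]
using $\sqrt2\ge1$ in the last inequality. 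Thus $\cc\subseteq B(\yhat,\sqrt2(1+2\rho)R)$, and the algorithm outputs $\yhat$.

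The only genuinely non-bookkeeping point is Step 2: we need an efficient routine that produces a near-optimal center for a finite point set and we must track how its additive error combines with the $(\rho R)$-net error from Step 1, so that the two sources of slack together add only the multiplicative factor $(1+2\rho)$ on top of the $\sqrt2$ from Theorem~\ref{thm:circumscribe1}; everything else is triangle-inequality arithmetic and the running-time accounting inherited from Lemma~\ref{lem:circum-net} and Claim~\ref{claim:eps-net}.
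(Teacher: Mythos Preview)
Your proof is correct and follows essentially the same approach as the paper: build the $(\rho R)$-net $\cX$ of $\cc$ via \textsc{CircumscribeNet}, search over a net of a bounding ball for a point close to all of $\cX$, and conclude via Jung's theorem plus the triangle inequality. The only cosmetic differences are that the paper uses a threshold test (find any $v$ with $\max_{u\in\cX}\|u-v\|_2\le\sqrt2(1+\rho)R$) rather than your $\arg\min$, and confines the search to $B(x,2R)$ by appealing directly to Theorem~\ref{thm:circumscribe1} for $\cc$ itself rather than bounding the location of the optimal center of $\cX$.
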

The algorithm at this point is very simple.
Using the output of $\textsc{CircumscribeNet}$, we iterate over all points in a net over $B(x, 2R)$, 
find an $x$ in this net so that the distance to all points in the net is at most $\sqrt{2} (1 + \rho ) R$, and output any such point.
The formal pseudocode for $\textsc{Circumscribe}$ is given in Algorithm \ref{alg:circumscribe}.

\begin{algorithm}[htb]
\begin{algorithmic}[1]
\Function{Circumscribe}{$R, \rho, \co, x$}
\State Form an $\rho / 3$-net $\cF'$ of $B(0, 2)$ of size $(1 / \rho)^{O(d)}$ as in Claim \ref{claim:eps-net}.
\State Let $\cF = R \cdot \cF' + x$.
\State Let $\cX \gets \textsc{CircumscribeNet} (R, \rho, \co, x)$.
\For{each $v \in \cF$}
	\If{for all $u \in \cX$, we have $\| u - v \|_2 \leq \sqrt{2} (1 + \rho) R$}
		\State \textbf{return} $u$
	\EndIf
\EndFor
\EndFunction
\end{algorithmic}
\caption{Finding a circumscribing ball of small radius}
\label{alg:circumscribe}
\end{algorithm}

\begin{proof}
The runtime bound is immediate.
By Theorem \ref{thm:circumscribe1}, there is some $y \in B(x, 2R)$ so that $\cc \subseteq B(y, R \sqrt{2})$.
Thus, by the triangle inequality, there is some $y' \in \cF$ so that $\cc \subseteq B(y, \sqrt{2}( 1 + \rho) R)$.
Thus, the algorithm will output some point $y'' \in \cF$.
By an additional application of the triangle inequality, 
since $\cX$ is a $\rho R$-net for $\cc$, this implies that $\cc \subseteq B(y'', \sqrt{2} (1 + 2 \rho) R)$, as claimed.
\end{proof}

\subsection{The Full Low-Dimensional Algorithm}
We now have all the tools to describe the full algorithm in low-dimensions.
Let $S$ be our corrupted dataset as in Theorem \ref{lowDimLearningLem}.
Fix $\rho > 0$.
We first produce a $\rho$-net $\cF$ over the unit sphere in $\R^d$.
By (a slight modification of) Claim \ref{claim:eps-net}, this net has size $(1 / \rho)^{O(d)}$ and can be constructed in time $\poly (d, |\cF|)$.
For each $v \in \cF$, we project all points in $S$ onto $v$, and take the median of these points to produce $b_v$.
We then construct the following set:
\begin{equation}
\label{eq:low-dim}
\cc = \bigcap_{v \in \cF} \{y \in \R^d: \langle v, y \rangle \in \left[ b_v - \beta, b_v + \beta \right] \} \;,
\end{equation}
where $\beta = \sqrt{\frac{\pi}{2}} \eps +  O\left( \frac{\gamma \eps}{d}\right) + o(\eps)$ is as in Lemma \ref{lem:median}.
We now show two properties of this set, which in conjunction with the machinery above, allows us to prove Theorem \ref{lowDimLearningLem}.
The first shows that $\cc$ has small diameter:
\begin{claim}
\label{claim:radius}
For all $x, y \in \cc$, we have $\| x - y \|_2 \leq 2 \beta / (1 - \rho)$.
\end{claim}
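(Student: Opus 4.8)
The plan is to show that the convex body $\cc$, being an intersection of slabs, inherits a diameter bound from the fact that the normal directions of those slabs form a $\rho$-net of the sphere. The key intuition is: if $x, y \in \cc$ and $w = x - y$, then $w$ has some nonzero length $\ell = \|w\|_2$; pick the unit vector $u = w / \ell$, and find $v \in \cF$ with $\|u - v\|_2 \leq \rho$. Since both $x$ and $y$ lie in the slab $\{z : \langle v, z\rangle \in [b_v - \beta, b_v + \beta]\}$, we get $|\langle v, w\rangle| \leq 2\beta$. On the other hand, $\langle v, w\rangle = \langle u, w\rangle + \langle v - u, w\rangle = \ell + \langle v - u, w\rangle$, and $|\langle v - u, w\rangle| \leq \|v - u\|_2 \|w\|_2 \leq \rho \ell$. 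Hence $\ell - \rho \ell \leq |\langle v, w\rangle| \leq 2\beta$, which rearranges to $\ell \leq 2\beta / (1 - \rho)$, exactly the claimed bound.

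The steps in order: first, fix $x, y \in \cc$ and reduce to bounding $\ell = \|x - y\|_2$ (the case $\ell = 0$ is trivial, so assume $\ell > 0$). Second, set $u = (x-y)/\ell$ and invoke the net property of $\cF$ to obtain $v \in \cF$ with $\|u - v\|_2 \leq \rho$. Third, use membership of $x$ and $y$ in the $v$-slab of the definition \eqref{eq:low-dim} of $\cc$ to conclude $|\langle v, x - y\rangle| \leq 2\beta$. Fourth, expand $\langle v, x-y\rangle = \ell\langle v, u\rangle$ and bound $\langle v, u\rangle \geq 1 - \rho$ via Cauchy--Schwarz applied to $\langle v - u, u\rangle$ (since $\langle u,u\rangle = 1$). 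Fifth, combine: $\ell(1 - \rho) \leq |\langle v, x-y\rangle| \leq 2\beta$, giving $\ell \leq 2\beta/(1-\rho)$.

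I do not anticipate a genuine obstacle here; this is a short geometric argument. The one point requiring a small amount of care is the sign issue in step four: $\langle v, u\rangle$ could in principle be made small only if $v$ is nearly orthogonal to $u$, but the net guarantee $\|u - v\|_2 \leq \rho < 1$ forces $\langle v, u\rangle = 1 - \tfrac12\|u-v\|_2^2 \geq 1 - \tfrac12\rho^2 \geq 1 - \rho$ (using that $v$ is also a unit vector, since $\cF$ is a net of the unit sphere), so the inner product is bounded below by $1 - \rho > 0$ and the direction of the inequality is preserved. Everything else is a direct application of the definition of $\cc$ and the triangle/Cauchy--Schwarz inequalities.
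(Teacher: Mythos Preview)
Your proof is correct and follows essentially the same approach as the paper: both use the $\rho$-net property to approximate the direction $(x-y)/\|x-y\|_2$ by some $v\in\cF$, apply the slab constraint $|\langle v, x-y\rangle|\le 2\beta$, and absorb the approximation error $\rho\|x-y\|_2$ to solve for $\|x-y\|_2$. The only cosmetic difference is that the paper phrases it as bounding $|\langle u, x-y\rangle|$ for an arbitrary unit $u$ and then taking the supremum, whereas you go directly to the maximizing direction $u=(x-y)/\|x-y\|_2$.
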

\begin{proof}
Fix any $x, y \in \cc$.
By definition of $\cc$, it follows that for all $v \in \cF$, we have $|\langle x - y, v \rangle| \leq 2 \beta$.
For any $u$ with $\| u \|_2 = 1$, there is some $v \in \cF$ with $\| u - v \|_2 \leq \eps$, and so we have
\begin{align*}
|\langle x - y, u \rangle| &\leq | \langle x - y, v \rangle| + | \langle x - y, u - v \rangle| \\
&\leq 2 \beta + \rho \| x - y \|_2 \; .
\end{align*}
Taking the supremum over all unit vectors $u$ and simplifying yields that $\| x - y \|_2 \leq 2 \beta / (1 - \rho)$, as claimed.
\end{proof}

The second property shows that we may find an $\alpha$-projection oracle for $\cc$ efficiently.

\begin{claim}
Fix $\rho' > 0$.
There is a $\rho'$-projection oracle $\textsc{ProjOracle}(y, \rho', \cc)$ for $\cc$ 
which runs in time $\poly ((1 / \rho)^{O(d)}, \log (\gamma \eps / (1 - \rho)), \log (1 / \rho'))$.
\end{claim}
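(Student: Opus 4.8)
The plan is to recognize the projection problem as a linearly constrained convex quadratic program and dispatch it with a standard polynomial-time convex-programming routine. Computing a closest point of $\cc$ to a query $y$ is exactly the problem of minimizing the strongly convex quadratic $f(x) = \|x - y\|_2^2$ over the polytope $\cc$, which by~(\ref{eq:low-dim}) is the intersection of $N := 2|\cF| = (1/\rho)^{O(d)}$ explicit halfspaces $\{x : \langle v, x \rangle \le b_v + \beta\}$ and $\{x : \langle v, x \rangle \ge b_v - \beta\}$, $v \in \cF$. This body has a trivial separation oracle: given $z \in \R^d$, test all $N$ constraints in time $\poly(d, N)$ and either certify $z \in \cc$ or return a violated constraint as a separating hyperplane.

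Before optimizing I would localize $\cc$ so that the ellipsoid method starts from a small ball. First, $\cc$ is nonempty: under the $(\gamma\eps,\delta)$-goodness hypothesis, Lemma~\ref{lem:median} gives $|\langle v, \mu\rangle - b_v| \le \beta$ for every $v \in \cF$, hence $\mu \in \cc$. Next, pick $v_1, \dots, v_d \in \cF$ with $\|v_i - e_i\|_2 \le \rho$ (possible since $\cF$ is a $\rho$-net of the unit sphere), and, assuming $\rho$ is small enough that the matrix $M$ with rows $v_i^\top$ is well conditioned (the regime of all our applications), let $x_0$ solve $M x_0 = (b_{v_1}, \dots, b_{v_d})$. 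Every $x \in \cc$ satisfies $\|M(x - x_0)\|_\infty \le \beta$, hence $\|x - x_0\|_2 = O\!\big(\poly(d)\,\beta/(1-\rho)\big)$; so $\cc$ lies in a ball of this radius centered at the computable point $x_0$. Since $\beta = \Theta(\eps + \gamma\eps/d)$, this radius is what produces the $\log(\gamma\eps/(1-\rho))$-type dependence below. (A coarser bound using $\max_v|b_v|$ also yields a ball containing $\cc$ and suffices for a running time polynomial in all the natural parameters, at the price of extra logarithmic factors.)

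I would then run the ellipsoid method — equivalently, any interior-point method for linearly constrained quadratic programs — with the separation oracle above, restricted to this ball, to minimize $f$ over $\cc$ to additive accuracy $(\rho')^2$ in objective value; this takes $\poly\!\big(N, d, \log(1/\rho'), \log(\gamma\eps/(1-\rho))\big)$ time (the target accuracy $(\rho')^2$ contributes $O(\log(1/\rho'))$ iterations), which is of the claimed form. Finally I would convert this objective guarantee into the required distance guarantee via strong convexity: if $x^* = \arg\min_{x\in\cc}\|x - y\|_2$ and the routine returns a feasible $x \in \cc$ with $f(x) \le f(x^*) + (\rho')^2$, then the first-order optimality condition $\langle y - x^*, x - x^*\rangle \le 0$ gives $f(x) = \|x - x^*\|_2^2 + f(x^*) + 2\langle x - x^*, x^* - y\rangle \ge \|x - x^*\|_2^2 + f(x^*)$, so $\|x - x^*\|_2 \le \rho'$ and therefore $\|x - y\|_2 \le \inf_{x'\in\cc}\|x' - y\|_2 + \rho'$.

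The one genuine subtlety — more bookkeeping than mathematics — is making the polynomial-time convex-programming step rigorously applicable when $\cc$ may be lower dimensional (so the usual full-dimensionality assumption of the ellipsoid method has to be handled, e.g.\ by a standard infinitesimal perturbation of the constraint system, or by invoking an interior-point method that returns a certified near-optimal \emph{feasible} point) and when the numbers $b_v$ are known only to finite precision. Both are dealt with by off-the-shelf techniques and affect only the logarithmic factors in the running time.
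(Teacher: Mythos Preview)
Your proposal is correct and follows essentially the same approach as the paper: formulate the projection as a convex minimization problem over a polytope with $|\cF| = (1/\rho)^{O(d)}$ linear constraints, build the obvious separation oracle by scanning all constraints, and invoke a black-box convex programming routine (ellipsoid/GLS). The paper is terser --- it bounds the ``size'' of $\cc$ by directly citing the diameter bound of Claim~\ref{claim:radius} (which gives $\mathrm{vol}(\cc) \le (2\beta/(1-\rho))^{O(d)}$) rather than your separate localization via near-standard-basis vectors in $\cF$, and it omits the strong-convexity conversion from objective error to distance error that you spell out --- but the substance is the same.
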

\begin{proof}
The projection problem may be stated as 
\begin{align*}
\min \| x - y \|_2 &~\mbox{s.t.} \langle v, y \rangle \in [b_v - \beta, b_v + \beta], ~\forall v \in \cF \; .
\end{align*}
This is a convex minimization problem with linear constraints.
By the classical theory of optimization \cite{GLS:88}, finding a $\rho$-approximate $y$ can be done in $\poly (d, \log (\mathrm{vol}(\cc) / \rho' ))$ queries to a separation oracle for $\cc$.
Since the separation oracle must only consider the constraints induced by $\cF$, this can be done in time $(1 / \rho)^{O(d)}$.
Since by Claim \ref{claim:radius} we have $\mathrm{vol}(\cc) \leq (2 \beta / (1 - \rho))^{O(d)}$, the desired runtime follows immediately.
\end{proof}

We now finally describe \textsc{LearnMeanLowD}.
Using convex optimization, we first find an arbitrary $x \in \cc$.
By Lemma \ref{lem:median} we know $\mu \in \cc$ and so this step succeeds.
After constructing $\cc$, we run $\textsc{Circumscribe}$ with appropriate parameters, and return the outputted point.
The formal pseudocode for \textsc{LearnMeanLowD} is given in Algorithm \ref{alg:learn-mean-low-d}.

\begin{algorithm}[htb]
\begin{algorithmic}[1]
\Function{LearnMeanLowD}{$\eps, \delta, S, \rho$}
\State Form a $\rho$-net $\cF$ of $B(0, 1)$ of size $(1 / \rho)^{O(d)}$ as in Claim \ref{claim:eps-net}.
\For {each $v \in \cF$}
	\State Let $b_v$ be the median of $S$ projected onto $v$.
\EndFor
\State Form $\cc$ as in Equation (\ref{eq:low-dim}).
\State Find an $x \in \cc$ using convex optimization.
\State Let $\beta = \sqrt{\frac{\pi}{2}} \eps +  O\left( \frac{\gamma \eps}{d}\right) + o(\eps)$
\State Let $R =  \beta / (1 - \rho)$
\State Let $\co (\cdot) =  \textsc{ProjOracle}(\cdot, (\rho R) / 3, \cc)$
\State \textbf{return} the output of $\textsc{Circumscribe} (R, \rho, \co, x)$
\EndFunction
\end{algorithmic}
\caption{Finding a circumscribing ball of small radius}
\label{alg:learn-mean-low-d}
\end{algorithm}

\begin{proof}
The runtime claim follows from the runtime claims for \textsc{Circumscribe} and \textsc{ProjOracle}.
Thus, it suffices to prove correctness of this algorithm.
By Lemma \ref{lem:median}, we know that $\mu \in \cc$.
By Claim \ref{claim:radius} and Corollary \ref{cor:circumscribe2}, the output $y$ satisfies $B(y, \sqrt{2} \frac{1 + 2 \rho}{1 - \rho} \beta)$.
Thus, we have $\| \mu - y \|_2 \leq   \sqrt{2} \frac{1 + 2 \rho}{1 - \rho} \beta$, as claimed.
\end{proof}

\section{Robustly Learning the Mean in High Dimensions}
\label{sec:mean}
In this section, we prove the following theorem, which is our first main result:

\begin{theorem}
\label{thm:mean-main}
Fix $\eps, \gamma, \delta > 0$, and let $X_1, \ldots, X_n$ be an $\eps$-corrupted set of points from $\normal (\mu, I)$, where $\| \mu \|_2 \leq O(\eps \log 1 / \eps)$, and where 
\[
n = \Omega\left( \frac{(d\log(d/\gamma \ve\d))^6}{\gamma^2 \ve^2} \right) \; .
\]
Then, for every $\alpha, \b >0$, there is an algorithm $\textsc{RecoverMean} (X_1, \ldots, X_n, \eps, \delta, \gamma, \a, \b)$ which 
runs in time $\poly(d, 1/\g, 1/\ve^\b,1/\a, \log 1/\d)$ and
outputs a $\muhat$ so that with probability $1 - \delta$, we have 
$\| \muhat - \mu \|_2 \leq \left(\frac{\sqrt{\p} + O(\gamma)}{1-\a} + \frac{1}{\sqrt{\b}}\right) \eps$.
\end{theorem}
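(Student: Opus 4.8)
The plan is to use the low-dimensional learner from Theorem~\ref{lowDimLearningLem} as a black box, but first reduce the effective dimension. The obstruction to applying \textsc{LearnMeanLowD} directly to $\R^d$ is the $(1/\rho)^{O(\dim V)}$ factor in its running time: running it on the full space would cost $2^{\Omega(d)}$. Instead, I would first run a ``cheap'' filtering/removal procedure — along the lines of \cite{DKKLMS,DiakonikolasKKLMS17} — to produce an intermediate weighted (or re-pruned) sample set $S' = (G', E')$ on which the empirical covariance has bounded spectral norm, say $\|\mathrm{Cov}_{S'}[X] - I\|_2 \le O(\eps \log 1/\eps)$, while only removing an $O(\eps/\log 1/\eps)$ fraction of good points, so that $\Delta(S', S_0) < \eps$ is maintained (here $S_0$ is the original $(\gamma\eps,\delta)$-good set). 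This is the standard ``certify that the empirical variance is not too large in any direction'' step; it runs in $\poly(n,d)$ time and loses only a constant factor in the final error.

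Next comes the key dimension-reduction idea flagged in the introduction: although a single inlier can be $\Theta(\sqrt{\log 1/\eps})$ standard deviations from $\mu$ along one direction, it is extremely unlikely to be that far along many orthogonal directions simultaneously. Concretely, I would take $\muhat_{\mathrm{emp}}$ to be a rough estimate (e.g., the coordinatewise-filtered empirical mean of $S'$), compute the empirical covariance $\Shat = \mathrm{Cov}_{S'}[X]$, and let $V$ be the span of the top $k = O(\log 1/\eps^{\b})$ — roughly $\beta \log 1/\eps$ — eigenvectors of $\Shat - I$. On the orthogonal complement $V^\perp$, one argues that the filtered empirical mean is already within $O(\eps/\sqrt{\beta})$ of $\Pi_{V^\perp}\mu$ in $\ell_2$: the remaining corruptions cannot move the mean more along the directions where the empirical variance is provably close to $1$, since the error along a unit direction $u$ scales like (fraction of corruptions)$\times$(how far they are), and points that are not far in any of these directions contribute negligibly, while moving the mean by $\delta$ along $u$ would force an empirical variance excess of $\Omega(\delta^2/\eps)$ in direction $u$ — contradicting $\|\Shat - I\|_2 \le O(\eps\log 1/\eps)$ once $\delta = \omega(\eps\sqrt{\log 1/\eps})$; a more careful accounting on the complement of the top-$k$ subspace gives the cleaner $O(\eps/\sqrt\beta)$ bound. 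Then I would invoke $\textsc{LearnMeanLowD}(V, \gamma, \eps, \delta, S', \rho)$ with $\dim V = O(\log 1/\eps^\b)$, whose running time $(1/\rho)^{O(\dim V)} = \poly((1/\eps)^{\b}, 1/\rho)$ is acceptable; it returns $\mutilde$ with $\|\Pi_V(\mu - \mutilde)\|_2 \le \frac{1+2\rho}{1-\rho}(\sqrt\pi + O(\gamma/d))\eps$. Finally I would set $\muhat = \mutilde$ on $V$ (i.e.\ $\Pi_V \muhat = \Pi_V \mutilde$) and $\muhat = \Pi_{V^\perp}\muhat_{\mathrm{emp}}$ on $V^\perp$, and combine by Pythagoras:
\[
\| \muhat - \mu \|_2^2 = \| \Pi_V(\mutilde - \mu)\|_2^2 + \|\Pi_{V^\perp}(\muhat_{\mathrm{emp}} - \mu)\|_2^2 \le \left( \tfrac{1+2\rho}{1-\rho}(\sqrt\pi + O(\gamma))\eps\right)^2 + O(\eps/\sqrt\beta)^2 ,
\]
which after choosing $\rho$ a suitable small multiple of $\alpha$ and using $\sqrt{a^2+b^2}\le a + b$ gives the claimed $\left(\frac{\sqrt\pi + O(\gamma)}{1-\alpha} + \frac{1}{\sqrt\beta}\right)\eps$ bound; the sample complexity $n = \Omega((d\log(d/\gamma\eps\delta))^6/(\gamma^2\eps^2))$ is exactly what is needed for $G_0$ to be $(\gamma\eps,\delta)$-good with high probability via Lemma~\ref{lem:GoodSamplesOpt}, and all the regularity conditions used above are then deterministic consequences.

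The main obstacle I anticipate is the complement bound: showing rigorously that after the covariance-based filtering, the empirical mean restricted to $V^\perp$ (the span of eigenvalues of $\Shat$ close to $1$) is within $O(\eps/\sqrt\beta)$ of $\Pi_{V^\perp}\mu$. This requires carefully relating the displacement of the mean along a direction $u \in V^\perp$ to a quadratic form of the corruptions, controlling the mass of corruptions that sit ``close'' to the bulk (which move the mean but barely affect the variance) versus those in the tails (which are few and are caught by the covariance bound), and verifying that summing the squared per-direction errors over an orthonormal basis of $V^\perp$ does not blow up the dimension factor. This is precisely the ``consider logarithmically many directions at once'' phenomenon, and making the $1/\sqrt\beta$ dependence tight — rather than, say, $\log^{1/2}1/\eps$ — is where the real work lies. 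The circumscribing-ball argument and the robustness of the median are already fully handled by Section~\ref{sec:meanLowD}, so they contribute only the stated $\sqrt\pi\,\eps$ (equivalently $\eps/\sqrt2$ in total variation) term and require no new ideas here.
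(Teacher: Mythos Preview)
Your high-level decomposition --- run the low-dimensional learner on a subspace $V$ of dimension $O(\beta\log(1/\eps))$ and take the empirical mean on $V^\perp$, then combine by Pythagoras --- is exactly what the paper does, and your invocation of Lemma~\ref{lem:GoodSamplesOpt} for the sample complexity is correct. The gap is in how you produce $V$. You propose running a standard \cite{DKKLMS}-style filter to force $\|\Shat-I\|_2\le O(\eps\log 1/\eps)$ and then letting $V$ be the span of the top $k=O(\beta\log 1/\eps)$ eigenvectors of $\Shat-I$. But after that filter there is no reason for the $(k+1)$-th eigenvalue of $\Shat-I$ to be $\le \eps/\beta$; it could still be $\Theta(\eps\log 1/\eps)$, and then your own variance-to-mean argument (essentially Lemma~\ref{errorCovLem}) only yields $\|\Pi_{V^\perp}(\muhat_{\mathrm{emp}}-\mu)\|_2\le O(\eps\sqrt{\log 1/\eps})$, not $O(\eps/\sqrt\beta)$. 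The phrase ``a more careful accounting on the complement of the top-$k$ subspace'' hides exactly the step that fails: taking top-$k$ does not push the remaining eigenvalues down.

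The paper supplies the missing mechanism via a \emph{new} iterative filter (\textsc{FilterMeanManyEig}, Theorem~\ref{thm:meanManyEig}). Whenever there are at least $C_1\beta\log(1/\eps)$ eigenvectors of $\Shat-I$ with eigenvalue exceeding $\eps/\beta$, it restricts to any $C_1\beta\log(1/\eps)$-dimensional subspace $V'$ of those directions, uses \textsc{LearnMeanLowD} to get a rough center $\tilde\mu$ on $V'$, and thresholds on the degree-$2$ polynomial $p(x)=\|\Pi_{V'}(x)-\tilde\mu\|_2^2-\dim(V')$. The point is that $\E_S[p(X)]\ge C_1\eps\log(1/\eps)$ (each of the $C_1\beta\log(1/\eps)$ directions contributes $\ge\eps/\beta$), while by Hanson--Wright the good points satisfy $p(X)\le O(\log 1/\eps)$ except with probability $\poly(\eps)$; this gap lets one filter out far more bad points than good per round, decreasing $\Delta$. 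Iterating until fewer than $C_1\beta\log(1/\eps)$ eigenvalues exceed $\eps/\beta$ is what makes the complement bound $\eps/\sqrt\beta$ legitimate. You correctly flagged this as ``where the real work lies'', but the concrete filter --- and in particular the use of a \emph{sum} of $\Theta(\beta\log 1/\eps)$ squared projections rather than a single direction --- is the idea you are missing.
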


\noindent In particular, observe that Theorem \ref{thm:mean-main}, in conjunction with Lemma \ref{lem:mubound}, gives us Theorem \ref{thm:mean-intro}, if we set $\gamma = o(1)$.  With this, we may state our primary algorithmic contribution:

\begin{theorem}\label{optMeanThm}
Fix $\eps, \gamma, \alpha, \delta, \b > 0$, and let $S_0 = (G_0, E_0)$ be an $\eps$-corrupted 
set of samples of size $n$ from $\normal (\mu, I)$, where $\| \mu \|_2 \leq O(\eps \log 1 / \eps)$, and where $n = \poly (d, 1 / (\gamma \eps), \log 1 / \delta)$.
Suppose that $G_0$ is $(\gamma \eps, \d)$-good with respect to $\normal (\mu, I)$.
Let $S \subseteq S_0$ be a set so that $\Delta (S, G_0) \leq \ve$.
Then, there exists an algorithm \textsc{FilterMeanOpt} that given $S, \eps, \gamma, \a, \b$ outputs one of two possible outcomes:
\begin{enumerate}[(i)]
\item
A $\muhat$, so that $\| \muhat - \mu \|_2 \leq  \left(\frac{\sqrt{\p} + O(\gamma)}{1-\a} + \frac{1}{\sqrt{\b}}\right) \eps$.
\item
A set $S' \subset S$ so that $\Delta (S', G_0) < \Delta (S, G_0)$.
\end{enumerate}
Moreover, $\textsc{FilterMeanOpt}$ runs in time $\poly (d, 1 /\g, 1/\eps^\b,  1 / \alpha, \log 1 / \delta)$.
\end{theorem}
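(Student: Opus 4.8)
The plan is to realize \textsc{FilterMeanOpt} as a single ``inspect the spectrum of the empirical covariance, then either output an estimate or filter'' step (the outer routine then iterates it; since $\Delta(S,G_0)$ starts at $\psi(S_0,G_0)\le\eps$ and each filtering call strictly decreases it while shrinking $S$, only polynomially many calls occur). First I would run two cheap preliminary passes. A naive prune: by $(\gamma\eps,\delta)$-goodness every $x\in G_0$ has $\|x-\mu\|_2=O(\sqrt{d\log(|S|/\delta)})$ and $\|\mu\|_2=O(\eps\log1/\eps)$, so any point far from the bulk of $S$ must lie in $E_0$; deleting such points removes only corrupted points, so if any are removed we return outcome (ii). Then a coarse \cite{DKKLMS}-style filter that forces $\|\mathrm{Cov}_S[X]-I\|_2=O(\eps\log1/\eps)$: either a one-dimensional filter along a high-variance direction removes more corrupted than uncorrupted points (return (ii)), or we reach this bound. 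Throughout I use that $\Delta(S,G_0)\le\eps$ forces $\phi(S,G_0)=O(\eps/\log1/\eps)$, that the goodness conditions — in particular the affine-function condition, which controls, \emph{uniformly over directions}, the effect of the lost uncorrupted points $L=G_0\setminus S$ — give $\|\E_G[X]-\mu\|_2=o(\eps)+O(\gamma\eps)$ and $\|\mathrm{Cov}_G[X]-I\|_2=O(\eps)$ for $G=S\cap G_0$, that the additive model gives $\mathrm{Cov}_S[X]\succeq(1-O(\eps))I$, and that the coarse filter gives $\|\E_S[X]-\mu\|_2=O(\eps\sqrt{\log1/\eps})$.

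Now diagonalize $M=\mathrm{Cov}_S[X]$ and let $V$ be the span of its top $K=\Theta(\beta\log1/\eps)$ eigenvectors, with threshold $\tau=1+c\eps/\beta$ for an absolute constant $c$. \emph{Case A: the $(K{+}1)$-st largest eigenvalue of $M$ is at most $\tau$.} Then every unit $v\in V^\perp$ has $v^T(M-I)v\le c\eps/\beta$, and the variance decomposition together with $\mathrm{Cov}_G[X]\approx I$ gives $\psi\,\langle v,\E_E[X]-\E_S[X]\rangle^2\le v^T(M-I)v+o(1)$, so the corruption moves the empirical mean along $v$ by at most $O(\sqrt{\psi\cdot\eps/\beta})=O(\eps/\sqrt\beta)$; with $\|\E_G[X]-\mu\|_2=o(\eps)+O(\gamma\eps)$ this yields $\|\Pi_{V^\perp}(\E_S[X]-\mu)\|_2=O(\eps/\sqrt\beta)$. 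For the $V$-component, project the data onto $V$ — the projected uncorrupted set is still good (with a no-worse parameter) and $\phi,\psi$ are unchanged — and run \textsc{LearnMeanLowD} on the $K$-dimensional subspace $V$ with $\rho=\Theta(\alpha)$; since $K=\Theta(\beta\log1/\eps)$, its running time is $(1/\rho)^{O(K)}=(1/\eps)^{O(\beta)}$, within budget, and by Theorem~\ref{lowDimLearningLem} it returns $\mutilde$ with $\|\Pi_V(\mutilde-\mu)\|_2\le\frac{\sqrt{\pi}+O(\gamma)}{1-\alpha}\eps$. Output $\muhat=\Pi_V\mutilde+\Pi_{V^\perp}\E_S[X]$; the two error components are orthogonal, so $\|\muhat-\mu\|_2\le\sqrt{\big(\tfrac{\sqrt{\pi}+O(\gamma)}{1-\alpha}\big)^2+O(1/\beta)}\;\eps\le\big(\tfrac{\sqrt{\pi}+O(\gamma)}{1-\alpha}+\tfrac{1}{\sqrt\beta}\big)\eps$, which is outcome (i).

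\emph{Case B: the $(K{+}1)$-st largest eigenvalue exceeds $\tau$.} Let $W$ be the span of the top $K{+}1$ eigenvectors, so $\mathrm{tr}(\Pi_W(M-I)\Pi_W)>(K{+}1)c\eps/\beta=\Theta(\eps\log1/\eps)$. The key point is that the uncorrupted points cannot account for this: from $\mathrm{Cov}_G[X]\preceq\tfrac{|G_0|}{|G|}\mathrm{Cov}_{G_0}[X]$ with $\tfrac{|G_0|}{|G|}=1+O(\phi)$ and $\|\mathrm{Cov}_{G_0}[X]-I\|_2\le\gamma\eps/d$ we get $\mathrm{tr}(\Pi_W(\mathrm{Cov}_G[X]-I)\Pi_W)=O(\phi\dim W)+o(\eps)=O(\beta\eps)+o(\eps)$, and the re-centering term $\|\Pi_W(\E_G[X]-\E_S[X])\|_2^2$ is $O(\eps^2\log1/\eps)$; for $\eps\le\eps_0(\beta)$ all of these are $o(\eps\log1/\eps)$. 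Hence the corrupted points carry the excess: $\E_E[\|\Pi_W(X-\E_S[X])\|_2^2]\gtrsim(\eps\log1/\eps)/\psi=\Omega(\log1/\eps)$, which for $\beta=O(1)$ is far larger than $\dim W=\Theta(\beta\log1/\eps)$, the value around which $\|\Pi_W(X-\mu)\|_2^2$ concentrates for uncorrupted points (by the sub-exponential $\chi^2_{\dim W}$-type tail built into the regularity conditions). So I filter on the one-dimensional statistic $X\mapsto\|\Pi_W(X-\E_S[X])\|_2^2$: remove points whose value exceeds a cutoff placed just above the uncorrupted tail, and argue — comparing the observed tail to the uncorrupted tail exactly as in the \cite{DKKLMS} filter lemma — that strictly more corrupted than uncorrupted points are removed. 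Writing $S'$ for the result, $\psi(S',G_0)$ strictly decreases while $\phi(S',G_0)$ grows by only an allowed amount, so $\Delta(S',G_0)<\Delta(S,G_0)$, which is outcome (ii). (If $\beta$ is large, the target term $\eps/\sqrt\beta$ is already negligible and one simply enlarges $K$ and skips the fine analysis.)

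The main obstacle is the filtering of Case B: turning ``the corrupted points carry an $\Omega(\eps\log1/\eps)$ excess of $\|\Pi_W(X-\E_S[X])\|_2^2$'' into ``$\Delta$ strictly decreases'' requires matching the observed tail of this statistic against the uncorrupted $\chi^2_{\dim W}$ tail with enough quantitative slack, which is precisely where the exact numerical form of the degree-two-polynomial and tail regularity conditions must be used, and it requires choosing $K$, the threshold $c\eps/\beta$, and the filter cutoff consistently so that Cases A and B are jointly exhaustive. A secondary difficulty is the uniform-over-directions control of the lost uncorrupted points: a crude bound would give location error $\Theta(\eps\sqrt{\log1/\eps})$ on $V^\perp$, and it is the affine-function goodness condition that lowers this to $o(\eps)$, which is what makes the final $\tfrac{1}{\sqrt\beta}\eps$ term (rather than an $\eps\sqrt{\log1/\eps}$ term) attainable.
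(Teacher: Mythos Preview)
Your approach is essentially the paper's: split on whether $\widehat\Sigma-I$ has at least $C_1\beta\log(1/\eps)$ eigenvalues exceeding $\eps/\beta$; if so, filter on $p(x)=\|\Pi_{V'}(x-\text{center})\|_2^2-\dim(V')$; otherwise, combine \textsc{LearnMeanLowD} on the large-eigenvalue subspace with the empirical mean on its complement (the latter is exactly Lemma~\ref{errorCovLem}, whose contrapositive is your Case~A variance argument). One genuine difference in Case~B: the paper centers the filter polynomial not at $\E_S[X]$ but at $\tilde\mu$ obtained by running \textsc{LearnMeanLowD} on $V'$ \emph{inside} the filter step (Algorithm~\ref{alg:filter-meanManyEigs}), which guarantees $\|\Pi_{V'}(\mu-\tilde\mu)\|_2=O(\eps)$ directly and hence $\E_{\normal(\mu,I)}[p(X)]=O(\eps^2)$, with no appeal to a preliminary coarse pass. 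Your centering at $\E_S[X]$ also works, but only because you have already forced $\|\E_S[X]-\mu\|_2=O(\eps\sqrt{\log(1/\eps)})$ via the coarse filter; the paper's choice trades that preprocessing for one extra call to \textsc{LearnMeanLowD}.

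On the obstacle you flag: you are right that this is the crux, but ``strictly more corrupted than uncorrupted points removed'' is not sufficient for $\Delta$ to decrease. Because $\Delta$ penalizes removal of uncorrupted points by a factor $\log(1/\phi)\asymp\log(1/\eps)$, the filter must remove $\Theta(\log(1/\eps))$ times as many corrupted as uncorrupted points. The paper obtains this (Claim~\ref{claim:meanDeltaBound}) by choosing the threshold $T\ge\Theta(\log(1/\eps))$ so that the observed tail $\Pr_S[p>T]$ exceeds roughly $\exp(-c_0 T/(2C_3))$, while Hanson--Wright (crucially with $\|A\|_2=1$, which is exactly why the explicit form of $p$ matters) gives an uncorrupted tail of order $\exp(-c_0 T/C_3)=\big(\exp(-c_0 T/(2C_3))\big)^2\le\eps\cdot\exp(-c_0 T/(2C_3))\ll\exp(-c_0 T/(2C_3))/\log(1/\eps)$. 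This squaring is precisely the ``quantitative slack'' you allude to, and it is what you must make explicit to conclude $\Delta(S',G_0)<\Delta(S,G_0)$.
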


By first running the algorithm of \cite{DKKLMS} to obtain an estimate of the mean to error $O(\eps \sqrt{\log 1 / \eps})$, then running \textsc{FilterMeanOpt} at most polynomially many times, we clearly recover the guarantee in Theorem \ref{thm:mean-main}.
Thus, the rest of the section is dedicated to the proof of Theorem \ref{optMeanThm}.

At a high level, the structure of the argument is as follows:
We first show that if there is a subspace of eigenvectors of dimension 
at least $O(\log 1 / \eps)$ of the empirical covariance matrix with large associated eigenvalues, 
then we can produce a filter using a degree-2 polynomial (Section \ref{sec:mean-manyEig}).
Otherwise, we know that there are at most  $O(\log 1 / \eps)$ eigenvectors of the empirical covariance with a large eigenvalue.
We can learn the mean in this small dimensional subspace using our learning algorithm from the previous section, and then we can argue that the empirical mean on the remaining subspace is close to the true mean (Section \ref{sec:mean-fewEig}).

This outline largely follows the structure of the filter arguments given in \cite{DKKLMS}, however, the filtering algorithm we use here requires a couple of crucial new ideas.
First, to produce the filter, instead of using a generic degree-2 polynomial over this subspace, we construct an explicit, structured, degree-2 polynomial which produces such a filter.
Crucially, we can exploit the structure of this polynomial to obtain very tight tail bounds, e.g., via the Hanson-Wright inequality.
This is critical to avoid a quasi-polynomial runtime.
If instead we used arbitrary degree-$2$ polynomials in this subspace, 
it would need to be of dimension $O(\log^2 1 / \eps)$ and the low-dimensional algorithm in the second step would take quasi-polynomial time.

Second, we must be careful to throw out far fewer good points than corrupted points.
In particular, by our definition of $\Delta$ (which gives an additional logarithmic penalty to discarding good points) and our guarantee that $\Delta$ decreases, 
our filter can only afford to throw out an $\eps / \log (1 / \eps)$ fraction of good points in total, since $\Delta$ is initially $\epsilon$.
This is critical, as if we threw away an $\eps$-fraction of good points, then proving that the problem remains efficiently solvable becomes problematic.
In particular, if these points were thrown away arbitrarily, then this becomes the full additive and subtractive model, for which a statistical query lower bound prevents us from getting an $O(\ve)$-approximate answer in polynomial time~\cite{DiakonikolasKS16c}.
To avoid discarding too many good points, we exploit tight exponential tail bounds of Gaussians, and observe that by slightly increasing the threshold at which we filter away points, we decrease the fraction of good points thrown away dramatically.

\subsection{Making Progress with Many Large Eigenvalues}
\label{sec:mean-manyEig}
We now give an algorithm for the case when there are many eigenvalues which are somewhat large.
Formally, we show:
\begin{theorem}
\label{thm:meanManyEig}
Fix $\eps, \gamma, \delta, \a, \b> 0$, and let $S_0 = (G_0, E_0)$ be an $\eps$-corrupted set of samples of size $n$ from $\normal (\mu, I)$, where $\| \mu \|_2 \leq O(\eps \log 1 / \eps)$, and where $n = \poly (d, 1 / (\gamma \eps), \log 1 / \delta)$.
Suppose that $G_0$ is $(\gamma \eps, \d)$-good with respect to $\normal (\mu, I)$.
Let $S \subseteq S_0$ be a set so that $\Delta (S, G_0) \leq \ve$.
Let $\Shat$ be the sample covariance of $S$, let $\muhat$ be the sample mean of $S$, 
and let $V$ be the subspace of all eigenvectors of $\Shat - I$ with eigenvalue more than $\frac{1}{\b} \ve$.
Then, there exists an algorithm \textsc{FilterMeanManyEig} that given $S, \eps, \gamma, \d, \a, \b$ outputs one of two possible outcomes:
\begin{enumerate}
\item
If $\dim (V) \geq C_1 \b \log (1 / \ve)$, then it outputs an $S'$ so that $\Delta (S', G_0) < \Delta (S, G_0)$.
\item
Otherwise, the algorithm outputs ``OK'', and outputs an orthonormal basis for $V$.
\end{enumerate}
\end{theorem}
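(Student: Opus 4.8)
The plan is to implement the standard filtering paradigm, but with the structured degree-2 polynomial advertised in the introduction to the section. First I would set up the following dichotomy. Let $\Shat$ be the empirical covariance of $S$ and let $V$ be the span of the eigenvectors of $\Shat - I$ whose eigenvalue exceeds $\frac{1}{\beta}\eps$. If $\dim(V) < C_1 \beta \log(1/\eps)$, we are in case (2): the algorithm outputs ``OK'' together with an orthonormal basis of $V$ (computed via an eigendecomposition of $\Shat$), and there is nothing to prove beyond the runtime. So the real content is case (1): assuming $\dim(V) \geq C_1 \beta \log(1/\eps)$, produce $S' \subset S$ with $\Delta(S', G_0) < \Delta(S, G_0)$.

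In case (1), the first step is to argue that $V$ witnesses a large discrepancy between the empirical second moment and the true one. Since each eigenvalue of $\Shat-I$ restricted to $V$ is at least $\frac1\beta\eps$ and $\dim(V)\ge C_1\beta\log(1/\eps)$, we get $\tr(\Pi_V(\Shat-I)\Pi_V) \geq C_1 \eps \log(1/\eps)$. Now I would define the explicit degree-2 polynomial $p(X) = (X-\muhat)^T \Pi_V (X-\muhat) - \tr(\Pi_V)$; this is (a centered, even) quadratic form supported on a subspace of dimension $k := \dim(V) = \Theta(\beta \log 1/\eps)$. Under $\normal(\mu,I)$ (and, via $(\gamma\eps,\delta)$-goodness of $G_0$, approximately under $G_0$), $p$ is essentially $\chi^2_k - k$ shifted by the small quantity $\|\Pi_V(\mu-\muhat)\|_2^2$; in particular $\E_{G_0}[p]\approx 0$ and $\Var_{G_0}[p] = \Theta(k)$, and $p(X)/\sqrt{k}$ is a sub-exponential random variable with the tail bounds furnished by Hanson--Wright. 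On the other hand $\E_S[p] = \tr(\Pi_V(\Shat-I)\Pi_V) + \|\Pi_V(\muhat - \mu)\|$-type terms $\geq \Omega(\eps\log 1/\eps) \gg \Var^{1/2}$. Decomposing $\E_S[p] = \frac{|G|}{|S|}\E_{G}[p] + \frac{|E|}{|S|}\E_E[p]$ and using that $|G|/|S|$ is close to $1$ and $\E_G[p]$ is close to $\E_{G_0}[p]$ up to $O(\phi)\cdot(\text{range})$ corrections, the large value of $\E_S[p]$ must be explained by the corrupted points $E$: we get $\frac{1}{|S|}\sum_{x\in E} p(x) \geq \Omega(\eps \log 1/\eps)$, so the $p$-values of the corrupted points are, on average, enormous compared to $\Var(p)^{1/2}$.

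The second step is the actual filter. Sort the points of $S$ by the value of $p$ and choose a threshold $T$; remove all points with $p(x) > T$. The key point — and this is where the exponential tail bound is essential and where the ``throw out far fewer good points than bad points'' requirement from the section preamble bites — is to set $T$ large enough (something like $T = C k + \Theta(\log(1/\eps))\cdot(\text{appropriate scale})$, equivalently a constant number of standard deviations plus a $\log(1/\eps)$ slack) that, by the Hanson--Wright / sub-exponential tail together with the $\Pr_{G_0}[p(X)\ge T] \le \Pr_{\normal}[p(X)\ge T] + \eta/(d\log(|G|/\delta))$ clause of $(\gamma\eps,\delta)$-goodness, the fraction of good points with $p(x) > T$ is at most $O(\eps/\log(1/\eps))$ (actually one wants it polynomially small, but $O(\eps/\log 1/\eps)$ suffices), while the total mass $\frac1{|S|}\sum_{x\in E} p(x) \ge \Omega(\eps\log 1/\eps)$ forces a constant fraction — in fact an $\Omega(\eps)$ fraction — of corrupted points to have $p(x) > T$. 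Hence if $S'$ denotes the filtered set, the number of corrupted points removed, $|E| - |E'|$, exceeds the number of good points removed, $|G| - |G'|$, by a large margin: concretely, $\psi$ drops by $\Omega(\eps/|S|\cdot|S|)=\Omega(\eps)$ while $\phi$ increases by at most $O(\eps/\log 1/\eps)$, and by the standard concavity estimate on $x\log(1/x)$ the increase in the $\phi\log(1/\phi)$ term of $\Delta$ is then $o(\eps)$, strictly less than the decrease in $\psi$. Therefore $\Delta(S', G_0) < \Delta(S, G_0)$, as required. (One also needs a ``cleanup'' check that $S'$ is nonempty and that the threshold exists; this follows since $\E_S[p]$ being positive and large, while $p$ is bounded below by $-k$, guarantees a gap in the sorted list of $p$-values of the type exploited in \cite{DKKLMS}.)

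The main obstacle I anticipate is the quantitative tension in choosing the threshold $T$: it must simultaneously be (a) small enough that an $\Omega(\eps)$ fraction of the corrupted mass $\sum_{x\in E}p(x) = \Omega(\eps\log 1/\eps \cdot |S|)$ sits above it — note a single outlier can have $p(x)$ as large as $\|x-\muhat\|_2^2 = \Theta(d\log(|S|/\delta))$ by the goodness norm bound, so one must rule out the case where almost all the corrupted mass is concentrated on a handful of extreme outliers, which is where a truncated/clipped version $\min(p(x), T')$ of the polynomial or a two-scale argument may be needed — and (b) large enough, scaling like $\mathrm{poly}(k) = \mathrm{poly}(\beta\log 1/\eps)$ plus a $\log(1/\eps)$ term, that the Gaussian tail $\Pr[\chi^2_k - k > T]$ beats $\eps/\log^2(1/\eps)$. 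The dimension bound $k = O(\beta \log 1/\eps)$ (as opposed to $O(\log^2 1/\eps)$) is exactly what makes the sub-exponential concentration tight enough for both (a) and (b) to hold with the same $T$, and getting these constants to line up — in particular verifying that $\Delta$ strictly decreases rather than merely not-increases — is the delicate part of the argument.
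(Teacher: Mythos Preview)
Your overall architecture---build an explicit quadratic on the large-eigenvalue subspace, lower-bound its empirical mean on $S$, upper-bound it on $G_0$, and threshold using Hanson--Wright---matches the paper. The real gap is the centering. You set $p(X)=(X-\muhat)^\top\Pi_V(X-\muhat)-k$ and assert the shift $\|\Pi_V(\mu-\muhat)\|_2^2$ is ``small,'' but $V$ is \emph{precisely} the subspace on which the sample second moment is corrupted, so nothing you have written controls $\muhat$ there. A Cauchy--Schwarz bound routed through $\psi\,\E_E[\|\Pi_V(X-\mu)\|_2^2]\le \tr(\Pi_V\Shat)+\|\Pi_V(\muhat-\mu)\|_2^2$ yields only $\|\Pi_V(\mu-\muhat)\|_2^2\lesssim \eps\,\tr(\Pi_V\Shat)\approx \eps k = C_1\beta\,\eps\log(1/\eps)$, which for $\beta\ge 1$ is the \emph{same} order as your lower bound $\E_S[p]\ge C_1\eps\log(1/\eps)$; both the excess you want to attribute to $E$ and the noncentral-$\chi^2$ tail you feed into Hanson--Wright are then not under control. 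The paper sidesteps this entirely: before defining $p$ it calls \textsc{LearnMeanLowD} on a chosen $C_1\beta\log(1/\eps)$-dimensional subspace $V'\subseteq V$ to produce $\tilde\mu$ with $\|\Pi_{V'}(\mu-\tilde\mu)\|_2\le 2\eps$ (cheap, since $\dim V'=O(\beta\log(1/\eps))$), and only then sets $p(x)=\|\Pi_{V'}(x)-\tilde\mu\|_2^2-\dim(V')$, pinning the shift to $O(\eps^2)$ and making the subsequent claims on $\E_{G_0}[p]$ and the tail go through with clean constants.

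On the threshold, the paper does not use a fixed $T$ or a clipped polynomial. It \emph{searches} for a $T$ satisfying either (a) $T>C_2 d\log(|S|/\delta)$ with some $x\in S$ having $p(x)>T$, or (b) $T>\Theta(\log(1/\eps))$ with $\Pr_S[p>T]>\exp(-c_0 T/(2C_3))+\gamma\eps/(d\log(|S|/\delta))$; existence of such a $T$ is proved by assuming none exists and integrating the tail condition to obtain an upper bound on $\psi\,\E_E[p]$ contradicting the lower bound already established. Your fixed-$T$ filter can fail in exactly the extreme-outlier scenario you anticipate, and case (a) of the paper's search is what handles it.
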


Our algorithm works as follows:
It finds all large eigenvalues of $\widehat \Sigma - I $, and if there are too many, 
produces an explicit degree-2 polynomial which, as we will argue, produces a valid filter.
The formal pseudocode for our algorithm is in Algorithm \ref{alg:filter-meanManyEigs}.

\begin{algorithm}[htb]
\begin{algorithmic}[1]
\Function{FilterMeanManyEig}{$S, \eps, \gamma, \delta, \a, \b$}
\State Let $C_1, C_2, C_3 > 0$ be sufficiently large constants.
\State Let $\muhat$ and $\Shat$ be the empirical mean and covariance of $S$, respectively.
\parState {Let $V$ be the subspace of $\R^d$ spanned by eigenvectors of $\Shat-I$ with eigenvalue more than $\frac{1}{\b}\ve$.}
\If{$\dim(V) \geq C_1\b\log(1/\ve)$}
\State Let $V'$ be a subspace of $V$ of dimension $C_1\b\log(1/\ve)$.
\parState {Let $\tilde \mu$ be an approximation to $\Pi_{V'}(\mu)$ with $\ell_2$-error $\frac{\sqrt{\p} + O(\gamma)}{1 - \a}\ve$, computed using $\textsc{LearnMeanLowD}(V, \gamma, \eps, \delta, S, \gamma)$.}
\State Let $p(x)$ be the quadratic polynomial
$
p(x) = \|\Pi_{V'}(x)-\tilde \mu\|_2^2 - \dim(V').
$
\State Find a value $T>0$ so that either: 
\begin{itemize}
\item[(a)] $T>C_2 d\log(|S| / \delta)$ and $p(x)>T$ for at least one $x\in S$, or 
\item[(b)] $T>2 C_3\log(1/\ve) / c_0$ and $\Pr_{S}(p(x)>T) > \exp(-c_0 T/(2 C_3)) + \g\ve/(d\log(|S| / \delta)).$
\end{itemize}
\State \textbf{return} $S' = \{x\in S: p(x)\leq T\}$
\Else
\State {\bf return} an orthonormal basis for $V$.
\EndIf
\EndFunction
\end{algorithmic}
\caption{Filter if there are many large eigenvalues of the covariance}
\label{alg:filter-meanManyEigs}
\end{algorithm}

For clarity of exposition, we defer the proof of Theorem \ref{thm:meanManyEig} to Appendix \ref{app:meanManyEig}.

\subsection{Returning an Estimate When There are Few Large Eigenvalues}
\label{sec:mean-fewEig}
At this point, we have run the filter of Algorithm \ref{alg:filter-meanManyEigs} until there are few large eigenvalues.
In the subspace with large eigenvalues, we again run the low dimensional algorithm to obtain an estimate for the mean in this subspace.
Recall that Lemma \ref{lowDimLearningLem} guarantees the accuracy of this estimator within this subspace.
In the complement of this subspace, where the empirical covariance is very close to the identity, Lemma \ref{errorCovLem} (stated below) shows that the empirical mean is close to the true mean.
This leads to a simple algorithm which outputs an estimate for the mean, described in Algorithm \ref{alg:filter-meanFewEigs}.
\begin{algorithm}[htb]
\begin{algorithmic}[1]
\Function{FilterMeanFewEig}{$S, \eps, \gamma, \delta, \a, \b, V$}
\parState {Let $\tilde \mu_V$ be an approximation to $\Pi_{V}(\mu)$ with $\ell_2$-error $\frac{\sqrt{\p} + O(\gamma)}{1 - \a}\ve$, computed using $\textsc{LearnMeanLowD}(V, \gamma, \eps, \delta, S, \gamma)$.}
\State Let $\tilde \m_{V^\perp}$ be the empirical mean on $V^\perp$, $\Pi_{V^\perp} \hat \m$.
\State \Return $\tilde \m_V + \tilde \m_{V^\perp}$.
\EndFunction
\end{algorithmic}
\caption{Return a mean if there are few large eigenvalues of the covariance}
\label{alg:filter-meanFewEigs}
\end{algorithm}

\begin{lemma}\label{errorCovLem}
Let $\mu, \eta, G_0, S$ be as in Theorem \ref{thm:meanManyEig}.
Let $\muhat$ be the sample mean of $S$, and let $v$ be a unit vector. 
Suppose that $\langle v, \mu-\muhat \rangle > \frac{\ve}{\b^{1/2}}$. Then $\Var_{S} [ \langle v, X \rangle ] > 1+\frac{\ve}{\b}$.
\end{lemma}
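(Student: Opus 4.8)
The plan is to argue by contradiction: suppose that $\langle v, \mu - \muhat \rangle > \ve/\b^{1/2}$ but also $\Var_S[\langle v, X\rangle] \leq 1 + \ve/\b$. The key observation is that the variance of a one-dimensional projection is minimized when we center at the empirical mean; centering at the true projected mean $\Pi_v \mu$ instead can only increase the second moment by exactly the squared distance $\langle v, \mu - \muhat\rangle^2$. Concretely, $\E_S[\langle v, X - \mu\rangle^2] = \Var_S[\langle v, X\rangle] + \langle v, \muhat - \mu\rangle^2$. Under our assumptions this would force $\E_S[\langle v, X - \mu\rangle^2] \leq 1 + \ve/\b + \ve/\b = 1 + 2\ve/\b$, which is \emph{not} yet a contradiction; so the real work is in the other direction — showing that the second moment of $S$ around $\mu$ along $v$ must in fact be \emph{large}, of size at least roughly $1 + \Omega(\ve/\b)$, whenever the empirical mean has moved that far.

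First I would pass from the corrupted set $S$ to the good set $G_0$. Using $\Delta(S, G_0) \leq \ve$ — so at most an $O(\ve/\log(1/\ve))$ fraction of $G_0$ has been lost and at most an $\ve$ fraction of $S$ is corrupted — together with the $(\gamma\ve,\delta)$-goodness conditions (which give $\|\mathrm{Cov}_{G_0}[X] - I\|_2 \leq \gamma\ve/d$ and $\|\E_{G_0}[X] - \mu\|_2 \leq \gamma\ve$), I would show that the empirical mean of the \emph{uncorrupted points remaining in $S$} is within $o(\ve)$ of $\mu$ along every direction. Hence, if $\langle v, \muhat - \mu\rangle$ is as large as $\ve/\b^{1/2}$, essentially all of that displacement is caused by the corrupted points $E$. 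But $|E|/|S| \leq \ve$, so to shift the mean of the whole set by $\ve/\b^{1/2}$ along $v$, the corrupted points must on average be displaced by $\Omega(1/\b^{1/2})$ along $v$ from where the good points sit. By a convexity/variance-decomposition argument (the variance of a mixture is at least the between-group variance, weighted by the mixing weights), this forces
\[
\Var_S[\langle v, X\rangle] \;\geq\; \frac{|G_0 \cap S|}{|S|}\,\frac{|E|}{|S|}\,\bigl(\text{gap between group means along }v\bigr)^2 \;+\; \frac{|G_0\cap S|}{|S|}\,\Var_{G_0\cap S}[\langle v, X\rangle].
\]
The second term contributes essentially $1 - o(\ve)$ (from goodness and the small fraction of lost points), and the first term contributes at least $(1-\ve)\cdot(\ve/\b^{1/2} \cdot (1/(1-\ve)) \cdot \text{something})^2$ — the point being that the gap between the corrupted-group mean and good-group mean along $v$ is at least $(1/\b^{1/2})$ up to lower-order terms, and the product of mixing weights is $(1-\ve)\ve$, yielding a contribution of order $\ve/\b$ with the right constant to beat $1 + \ve/\b$.

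The step I expect to be the main obstacle is getting the constants to line up: the crude bound $\E_S[\langle v, X-\mu\rangle^2] = \Var_S + \langle v, \muhat-\mu\rangle^2$ only gives $1 + 2\ve/\b$ on the right if we naively assume the conclusion fails, so I need the between-group variance term to push strictly above $1 + \ve/\b$, which requires tracking that the corrupted points, constrained to shift the overall mean by $\ve/\b^{1/2}$ while being only an $\ve$-fraction, are forced to sit a full $\Omega(\ve^{-1/2}\cdot\ve/\b^{1/2}) = \Omega(\ve^{1/2}/\b^{1/2})$... wait — more carefully, an $\ve$-fraction moving the mean by $\ve/\b^{1/2}$ means the corrupted group mean is displaced by $\approx (1/\b^{1/2})$ (independent of $\ve$) from the good mean, so the between-group variance is $\approx \ve(1-\ve)/\b$, and adding the within-good-group variance $\approx 1$ gives $\approx 1 + \ve/\b$, just barely contradicting $\Var_S \leq 1 + \ve/\b$ once we are careful that the inequality in the lemma statement is strict. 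Handling the borderline case cleanly — absorbing the $o(\ve)$ and $\gamma\ve/d$ error terms from goodness and from the lost points $L$ without eating into the $\ve/\b$ budget — is where the delicacy lies, and I would handle it by noting $\b$ is a parameter we may take moderately large and that the displacement being \emph{strictly} greater than $\ve/\b^{1/2}$ gives a bit of slack to absorb lower-order terms.
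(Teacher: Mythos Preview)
Your approach is essentially the same as the paper's: decompose $S$ into the surviving good points $G_0\setminus L$ and the corrupted points $E$, show via goodness and a tail-integral bound on $L$ that the good points' mean along $v$ is close to $\mu$, deduce that the corrupted group mean must be displaced by order $R/\psi$ (where $R=\langle v,\mu-\muhat\rangle$), and then lower-bound the second moment of $S$ around $\mu$ by roughly $1 + R^2/\psi \ge 1 + R^2/\ve$ before translating back to $\Var_S$. The paper carries this out by directly decomposing $\E_S[\langle v, X-\mu\rangle^2]$ over $G_0$, $L$, $E$ rather than invoking the between-group/within-group variance identity, but the two computations are equivalent.

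One small correction: your claim that the surviving good points' mean is within $o(\ve)$ of $\mu$ relies on the bound $\phi\,|\langle v,\mu_L-\mu\rangle|=o(\ve)$; the paper only uses the cruder bound $|\langle v,\mu_L-\mu\rangle|\le O(\log(1/\phi))$ together with $\phi\log(1/\phi)\le\ve$, which gives $O(\ve)$ rather than $o(\ve)$. This is harmless because the paper does not try to pin down the constant in front of $\ve/\b$ exactly --- it proves $\Var_S \ge 1+\Omega(R^2/\ve)-(\gamma+O(1/\b))\ve$ and absorbs the lower-order terms into the implicit constants (recall $\b$ is a free parameter that can be taken large). Your worry about the borderline case is thus the same worry the paper sweeps under the $\Omega(\cdot)$ and $O(\cdot)$ notation; there is no additional idea you are missing.
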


\noindent For clarity of exposition, we defer the proof of Lemma \ref{errorCovLem} to Appendix \ref{app:meanManyEig}.

\subsection{The Full High-Dimensional Algorithm}
We now have almost all the pieces needed to prove the full result.
The last ingredient is the fact that, given enough samples, the good set condition is satisfied by the samples from the true distribution.
Formally,
\begin{lemma}\label{lem:GoodSamplesOpt}
Fix $\eta, \delta >0$. 
Let $X_1, \ldots, X_n$ be independent samples from $\normal (\mu, I)$, where $n = \Omega((d\log(d/\e\d))^6/\e^2)$.
Then, $S = \{X_1, \ldots, X_n\}$ is $(\eta, \d)$-good with respect to $\normal (\mu, I)$ with probability at least $1 - \delta$.
\end{lemma}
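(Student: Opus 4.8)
The plan is to verify each of the bullet points in the definition of $(\eta,\delta)$-goodness separately, in each case applying a concentration inequality to control the deviation of an empirical quantity from its expectation over the family of test functions relevant to that bullet (affine functions, or degree-2 polynomials), and then take a union bound over all the (finitely many) failure events. Throughout, I would rescale so that $\mu = 0$, since all conditions are translation-invariant once phrased relative to $\normal(\mu,I)$; this lets me work with a standard Gaussian.

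First I would handle the norm bound $\|X_i\|_2 \le O(\sqrt{d\log(n/\delta)})$: this is a standard Gaussian norm tail bound, $\Pr[\|X\|_2 \ge t\sqrt d] \le e^{-\Omega(t^2 d)}$, combined with a union bound over the $n$ samples, which is where the $\log(n/\delta)$ inside the square root comes from. Next, the affine-function condition: for a fixed affine $L$, $\Pr_S(L(X)\ge 0)$ is an average of $n$ i.i.d.\ Bernoullis, so it concentrates around $\Pr_{\normal}(L(X)\ge 0)$ within $O(\sqrt{\log(1/\delta')/n})$ by Hoeffding; but we need this \emph{uniformly} over all affine $L$. The class of halfspaces in $\R^d$ has VC dimension $d+1$, so by the VC inequality (or an $\epsilon$-net / Vapnik--Chervonenkis uniform convergence bound), with probability $1-\delta$ all halfspaces are simultaneously estimated to within $O(\sqrt{(d+\log(1/\delta))/n})$. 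Choosing $n = \Omega((d\log(d/\eta\delta))^6/\eta^2)$ makes this far smaller than $\eta/(d\log(d/\eta\delta))$, as required. The mean bound $\|\E_S[X]-0\|_2 \le \eta$ follows from a vector Bernstein/Gaussian concentration bound on $\frac1n\sum X_i$ together with the norm tail bound controlling truncation; similarly $\|\mathrm{Cov}_S[X]-I\|_2 \le \eta/d$ is a matrix concentration statement (matrix Bernstein, or the standard sample-covariance bound for Gaussians), which needs $n = \widetilde\Omega(d/(\eta/d)^2) = \widetilde\Omega(d^3/\eta^2)$ samples — comfortably within our budget.

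The remaining conditions concern even degree-2 polynomials $p$. Here the key reduction is that an even degree-2 polynomial can be written as $p(X) = X^TAX + c$ for symmetric $A$, and after diagonalizing and using $\E_{\normal}[p(X)^2] = 2\|A\|_F^2 + (\tr A + c)^2$ (roughly), the quantity $\E_S[p(X)] - \E_{\normal}[p(X)]$ is a sum of i.i.d.\ centered random variables each of which is a quadratic form in a Gaussian; the Hanson--Wright inequality gives sub-exponential tails, and a union bound over an $\eta$-net of the (scaled) space of such polynomials — whose dimension is $O(d^2)$ — yields the uniform bound $|\E_S[p] - \E_{\normal}[p]| \le \eta\,\E_{\normal}[p^2]^{1/2}$. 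The bound on $|\E_S[p^2] - \E_{\normal}[p^2]|$ is analogous but one degree higher: $p^2$ is a degree-4 polynomial, so its empirical average concentrates via higher-moment (Hanson--Wright-type or hypercontractivity) bounds, again net-ed over the $O(d^2)$-dimensional parameter space; here the truncation to the event $\|X_i\|_2 \le O(\sqrt{d\log(n/\delta)})$ is used to make the summands bounded so that a Bernstein bound applies, which is exactly why the sample complexity carries the sixth power of $d\log(\cdots)$. Finally $\Pr_S[p(X)\ge 0] \le \Pr_{\normal}[p(X)\ge 0] + \eta/(d\log(n/\delta))$ is again a uniform convergence statement, now over the class of sublevel sets of degree-2 polynomials, which has VC dimension $O(d^2)$, handled by the VC inequality as before.

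The main obstacle is the uniform control over all even degree-2 polynomials for the expectation and squared-expectation conditions: unlike the finitely-many-or-low-VC cases, the naive net over the parameter space loses a factor polynomial in $d$ in the union bound, and one must be careful that the net is taken in a norm compatible with the right-hand side $\E_{\normal}[p^2]^{1/2}$ (i.e.\ essentially a Frobenius-type norm on the coefficient matrix) so that closeness of parameters translates to closeness of both the empirical and population functionals; getting the tail bound for a single $p$ sharp enough (sub-exponential, via Hanson--Wright, not merely sub-gamma from crude moment bounds) is what keeps the required $n$ polynomial rather than exponential in $1/\eta$. I expect the cleanest writeup to cite the relevant concentration inequalities (Hanson--Wright, matrix Bernstein, VC inequality) as black boxes and to absorb all polylogarithmic and polynomial-in-$d$ losses into the generous $n = \Omega((d\log(d/\eta\delta))^6/\eta^2)$ bound, so that no step is tight.
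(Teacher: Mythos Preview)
Your proposal is correct and lays out the standard concentration-and-net argument for each condition in the definition of $(\eta,\delta)$-goodness. The paper's own proof, however, is a one-line citation: ``This follows from Lemmas~8.3 and~8.16 of~\cite{DKKLMS}.'' So you have effectively reconstructed, in sketch form, the content of those external lemmas (VC uniform convergence for halfspaces and degree-$2$ sublevel sets, Gaussian norm tails, matrix concentration for the sample covariance, and Hanson--Wright plus a net over the $O(d^2)$-dimensional space of symmetric matrices for the polynomial conditions). Your version is more self-contained and more informative about where the sixth-power sample complexity comes from; the paper simply outsources the work.
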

\begin{proof}
This follows from Lemmas 8.3 and 8.16 of \cite{DKKLMS}.
\end{proof}

At this point, we conclude with the proof of Theorem \ref{thm:mean-main}.
Within the subspace $V$, Lemma \ref{lowDimLearningLem} guarantees that the mean is accurate up to $\ell_2$-error $\frac{\sqrt{\p} + O(\gamma)}{1-\a}\ve$.
Within the subspace $V^{\perp}$, the contrapositive of the statement of Lemma \ref{errorCovLem} guarantees the mean is accurate up to $\ell_2$-error $\frac{\ve}{\b^{1/2}}$.
The desired result follows from the Pythagorean theorem.

\subsection{An Extension, with Small Spectral Noise}
\label{sec:inexact-cov}
For learning of arbitrary Gaussians, we will need a simple extension that allows 
us to learn the mean even in the presence of some spectral norm error in the covariance matrix.
Since the algorithms and proofs are almost identical to the techniques above, we omit them for conciseness.
Formally, we require:
\begin{theorem}
\label{thm:mean-main2}
Fix $\chi, \eps, \delta > 0$, and let $X_1, \ldots, X_n$ be an $\eps$-corrupted set of points from $\normal (\mu, \Sigma)$, where $\| \Sigma - I \|_2 \leq O(\chi)$, $\| \mu \|_2 \leq O(\eps \log 1 / \eps)$, and where $n = \poly (d, 1 / \chi, 1 / \eps, \log 1 / \delta)$.
For any $\gamma > 0$, there is an algorithm $\textsc{RecoverMeanNoisy} (X_1, \ldots, X_n, \eps, \delta, \gamma, \chi)$ 
which runs in time $\poly (d, 1 / \chi, 1 / \eps, \log 1 / \delta)$ and
outputs a $\muhat$ so that with probability $1 - \delta$, we have $\| \muhat - \mu \|_2 \leq (C + \gamma) \eps + O(\chi)$.
\end{theorem}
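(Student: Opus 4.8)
\medskip
\noindent\textbf{Proof approach.} The plan is to rerun the pipeline of Sections~\ref{sec:meanLowD} and~\ref{sec:mean} essentially verbatim, after relaxing every regularity condition tied to $\normal(\mu,I)$ by an additive $O(\chi)$ slack to account for $\|\Sigma-I\|_2\le O(\chi)$. Concretely, I would work with a notion of a set being good \emph{with respect to} $\normal(\mu,\Sigma)$: the same bullets as in the unknown-mean definition, except the second-moment comparisons are made against $\normal(\mu,\Sigma)$ and the covariance bullet becomes $\|\mathrm{Cov}_G[X]-I\|_2\le \eta/d+O(\chi)$. The analogue of Lemma~\ref{lem:GoodSamplesOpt} then follows from the same concentration inequalities; this is where the $\poly(1/\chi)$ sample complexity is spent, so that the empirical covariance of the uncorrupted points lies within $O(\chi)$ of $\Sigma$ in spectral norm. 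Since $\chi$ need not be small compared to $\eps$, one cannot simply bound $\dtv(\normal(\mu,\Sigma),\normal(\mu,I))$ and pretend the data comes from the identity-covariance Gaussian; the $O(\chi)$ slack has to be carried explicitly through the three main lemmas.

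\medskip
\noindent\textbf{Low-dimensional estimator.} The crucial observation is that the univariate median is \emph{still} a robust estimator of the mean of a one-dimensional Gaussian of \emph{any} variance, because Gaussians are symmetric about their mean. The only change in the proof of Lemma~\ref{lem:median} is that the projection of the data onto a unit vector $v$ now has variance $v^T\Sigma v\in[1-O(\chi),1+O(\chi)]$, so the slope of its CDF near the median is $\tfrac{1}{\sqrt{2\pi}}(1\pm O(\chi))$ rather than $\tfrac{1}{\sqrt{2\pi}}$, yielding a one-dimensional error of $(\sqrt{\pi/2}+O(\chi)+O(\gamma/d))\eps$. Running the net/circumscribing-ball machinery of Section~\ref{sec:meanLowD} with a \emph{constant} $\rho$ over an $O(\log 1/\eps)$-dimensional subspace --- which takes time $(1/\rho)^{O(\log 1/\eps)}=\poly(1/\eps)$ --- then produces an estimate of $\Pi_V\mu$ with $\ell_2$ error $(C+O(\chi)+O(\gamma))\eps$.

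\medskip
\noindent\textbf{The high-dimensional filter.} Let $\Shat$ be the empirical covariance and let $V$ be the span of eigenvectors of $\Shat-I$ with eigenvalue exceeding $\tfrac1\beta\eps+C_0\chi$ for a large constant $C_0$ and a \emph{constant} $\beta$; the additive $C_0\chi$ is precisely what guarantees that an uncorrupted good set (whose empirical covariance lies within $O(\chi)$ of $\Sigma=I+O(\chi)$) contributes no eigenvectors to $V$. In the many-eigenvalues branch ($\dim V\ge C_1\beta\log 1/\eps$) I would use the same explicit quadratic $p(x)=\|\Pi_{V'}(x)-\mutilde\|_2^2-\dim(V')$ as in Algorithm~\ref{alg:filter-meanManyEigs}; here $\E_{\normal(\mu,\Sigma)}[p(X)]=\tr(\Pi_{V'}\Sigma\Pi_{V'})-\dim(V')+\|\Pi_{V'}\mu-\mutilde\|_2^2$ has magnitude $O(\chi\dim V')+O(\eps^2)$ (and goodness moves $\E_{G_0}[p]$ by a further $O(\gamma\eps\sqrt{\dim V'})=o(\eps\log 1/\eps)$), while $\E_S[p(X)]\ge(\tfrac1\beta\eps+C_0\chi)\dim(V')$, so choosing $C_0$ larger than the hidden constant in the first quantity yields a moment gap of $\Omega(\eps\dim V')=\Omega(\eps\log 1/\eps)$. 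Since $\|\Pi_{V'}\Sigma\Pi_{V'}\|_2=O(1)$ and $\|\Pi_{V'}\Sigma\Pi_{V'}\|_F=O(\sqrt{\dim V'})$, Hanson--Wright still gives the same exponential tail bound for $p$ under the clean distribution, so the threshold filter $\{x:p(x)\le T\}$ removes a strictly larger $\psi$-mass of corruptions than $\phi$-mass of good points and $\Delta$ strictly decreases, exactly as in Appendix~\ref{app:meanManyEig}. In the few-eigenvalues branch I would adapt Lemma~\ref{errorCovLem}: if $\langle v,\mu-\muhat\rangle=\theta$ for $v\in V^\perp$, then moving an $\eps$-fraction of points to shift the projected mean by $\theta$ inflates the projected variance quadratically, so $\Var_S[\langle v,X\rangle]\ge v^T\Sigma v+\Omega(\theta^2/\eps)-O(\eps)\ge 1-O(\chi)+\Omega(\theta^2/\eps)-O(\eps)$; combined with $\Var_S[\langle v,X\rangle]\le 1+\tfrac1\beta\eps+C_0\chi$ (as $v\in V^\perp$) this forces $\theta\le O(\sqrt{\eps\chi})+O(\eps)=O(\eps)+O(\chi)$, and taking $v$ along $(\muhat-\mu)_{V^\perp}$ bounds the $V^\perp$-error by $O(\eps)+O(\chi)$.

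\medskip
\noindent\textbf{Putting it together.} Iterating the filter from $S_0$ at most $\poly(d,|S|)$ times (each step strictly decreases $\Delta$, which starts at $\eps$, while the good-set bound carries the regularity conditions across iterations) reaches the few-eigenvalues case, and combining the $V$- and $V^\perp$-estimates by the Pythagorean theorem gives $\|\muhat-\mu\|_2\le(C+\gamma)\eps+O(\chi)$. The main obstacle --- the only place requiring genuine care rather than transcription --- is calibrating the eigenvalue threshold: the additive $C_0\chi$ must be large enough to swallow the spectral noise of an uncorrupted good set yet, relative to the $\Omega(\eps\log 1/\eps)$ moment gap and to the Hanson--Wright scale $\|\Pi_{V'}\Sigma\Pi_{V'}\|_F=\Theta(\sqrt{\log 1/\eps})$, small enough that the many-eigenvalues filter still provably makes progress. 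Once that calibration is fixed, the Gaussian concentration, the Hanson--Wright inequality, and the convex-geometry arguments of Sections~\ref{sec:meanLowD}--\ref{sec:mean} all carry over unchanged.
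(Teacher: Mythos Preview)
Your proposal is correct and follows the same approach as the paper. The paper's own ``proof'' of Theorem~\ref{thm:mean-main2} is in fact only the two-line sketch in Section~\ref{sec:inexact-cov}: (1) the median remains robust to covariance perturbation, picking up error $O(\eps+\chi)$, and (2) one raises the eigenvalue threshold in the filter to $1+O(\eps+\chi)$; you have carried out precisely these two modifications, with the eigenvalue-threshold calibration and the adaptation of Lemma~\ref{errorCovLem} supplying the details the paper omits.
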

This extension follows from two elementary observations:
\begin{enumerate}
\item
For the learning in low dimensions, observe that the median is naturally robust to error in the covariance, and in general, by the same calculation we did, the error of the median becomes $O(\eps + \alpha)$.
\item
For the filter, observe that we only need concentration of squares of linear functions, and whatever error we have in this concentration goes directly into our error guarantee.
Thus, by the same calculations that we had above, if we filtered for eigenvalues above $1 + O(\eps + \alpha)$, we would immediately get the desired bound.
\end{enumerate}

\section{Robustly Estimating the Mean of Degree Two Polynomials}
\label{sec:covLowD}

In this section, we give robust estimates of $\E [p^2 (X)]$ for degree-$2$ polynomials $p$ in subspaces of small dimension, which is an important prerequisite to learning the covariance in high-dimensions. A crucial ingredient in our algorithm is the following improvement theorem (stated and proved in the next section) which shows how to take any weak high-dimensional estimate for the covariance and use it to get an even better robust estimate for $\E [p^2 (X)]$. 

\subsection{Additional Preliminaries}
Here we give some additional preliminaries we require for the low-dimensional learning algorithm we present here.
We will need the following well-known tail bound for degree-$2$ polynomials:
\begin{lemma}[Hanson-Wright Inequality \cite{LM00, Vershynin}]
\label{lem:hanson-wright}
Let $X \sim \normal (0, I) \in \mathbb{R}^d$ and $A$ be a $d \times d$ matrix.
Then for some absolute constant $c_0$, for every $t \geq 0$,
$$\Pr\left(\left|X^TAX - \E[X^TAX]\right| > t\right) \leq 2 \exp\left(-c_0 \cdot \min \left(\frac{t^2}{\|A\|_F^2}, \frac{t}{\|A\|_2}\right)\right).$$
\end{lemma}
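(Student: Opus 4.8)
The plan is to reduce the quadratic form to a weighted sum of independent centered chi-squared random variables and then run a standard Chernoff (Bernstein-type) argument. First, since $X^T A X = X^T\bigl(\tfrac{1}{2}(A + A^T)\bigr)X$ and this replacement only decreases $\|A\|_F$ and $\|A\|_2$, we may assume $A$ is symmetric. Writing the spectral decomposition $A = U \Lambda U^T$ with $\Lambda = \mathrm{diag}(\lambda_1,\dots,\lambda_d)$, and using that $Y := U^T X \sim \normal(0,I)$ by rotational invariance of the standard Gaussian, we get
$$X^T A X - \E[X^T A X] = \sum_{i=1}^d \lambda_i\,(Y_i^2 - 1),$$
a sum of independent centered variables. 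The point of the diagonalization is that now $\|A\|_F^2 = \sum_i \lambda_i^2$ and $\|A\|_2 = \max_i |\lambda_i|$, so both quantities appearing in the bound are directly expressed through the $\lambda_i$.

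Next I would control the moment generating function of a single term. For $|s| < 1/2$ one has the exact identity $\log \E\bigl[e^{s(Y_i^2 - 1)}\bigr] = -s - \tfrac{1}{2}\log(1 - 2s)$, and a Taylor estimate gives an absolute constant $c_1$ with $-s - \tfrac{1}{2}\log(1-2s) \le c_1 s^2$ whenever $|s| \le 1/4$. Hence, for any $s$ with $|s| \le 1/(4\|A\|_2)$ (which forces $|s\lambda_i| \le 1/4$ for every $i$), independence yields
$$\log \E\Bigl[e^{s\left(X^T A X - \E[X^T A X]\right)}\Bigr] = \sum_{i=1}^d \Bigl(-s\lambda_i - \tfrac{1}{2}\log(1 - 2s\lambda_i)\Bigr) \le c_1 s^2 \sum_{i=1}^d \lambda_i^2 = c_1 s^2 \|A\|_F^2.$$

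Then I would apply the Chernoff bound $\Pr\bigl(X^T A X - \E[X^T A X] > t\bigr) \le \exp\bigl(c_1 s^2 \|A\|_F^2 - s t\bigr)$ and optimize over $s \in (0, 1/(4\|A\|_2)]$. The unconstrained minimizer is $s^\star = t/(2 c_1 \|A\|_F^2)$: if $s^\star$ lies in the feasible range, i.e.\ $t \lesssim \|A\|_F^2/\|A\|_2$, plugging it in gives a bound $\exp\bigl(-t^2/(4 c_1 \|A\|_F^2)\bigr)$; otherwise we set $s = 1/(4\|A\|_2)$ and, using that in this regime $t/\|A\|_2$ dominates $\|A\|_F^2/\|A\|_2^2$, we obtain a bound of the form $\exp\bigl(-c\, t/\|A\|_2\bigr)$. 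In both regimes the bound is at most $\exp\bigl(-c_0 \min(t^2/\|A\|_F^2,\, t/\|A\|_2)\bigr)$ for a suitable absolute constant $c_0$. Applying the same argument with $s$ replaced by $-s$ (equivalently, to $-A$) controls the lower tail, and a union bound over the two tails gives the factor of $2$.

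The only step requiring genuine care is the bookkeeping in the Chernoff optimization: one must split into the two regimes ($s^\star$ feasible versus saturating the constraint $|s| \le 1/(4\|A\|_2)$) and verify that the resulting exponents merge into a single clean $\min(\cdot,\cdot)$ with a uniform constant — the chi-squared reduction and the log-MGF estimate are routine. (For the general subgaussian statement in \cite{Vershynin} one cannot diagonalize and instead needs decoupling and symmetrization; but for the Gaussian case stated here the rotational-invariance shortcut above suffices.)
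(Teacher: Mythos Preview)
Your proof is correct. The paper does not prove this lemma at all; it simply states it with citations to \cite{LM00, Vershynin} and uses it as a black box. Your diagonalization-plus-Chernoff argument is exactly the standard proof of the Gaussian case of Hanson--Wright, and all the steps (symmetrization, rotational invariance, the log-MGF bound $-u - \tfrac12\log(1-2u) \le c_1 u^2$ for $|u|\le 1/4$, and the two-regime optimization in $s$) go through as you describe.
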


We will also require the following lemmata:
\begin{lemma}[H\"{o}lder's inequality for Schatten norms]
Let $A, B$ be matrices.
Then, for all $p, q$ so that $\frac{1}{p} + \frac{1}{q} = 1$, we have $\| A B \|_{S^1} \leq \| A \|_{S^p} \| B \|_{S^q}$.
\end{lemma}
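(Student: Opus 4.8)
The plan is to reduce this to the trace form of Hölder's inequality, using the dual description of the nuclear norm $\|\cdot\|_{S^1}$. Throughout, write $\sigma_1(M)\ge\sigma_2(M)\ge\cdots$ for the singular values of $M$, so that $\|M\|_{S^r}=\bigl(\sum_i\sigma_i(M)^r\bigr)^{1/r}$ and $\|M\|_{S^\infty}=\sigma_1(M)$ is the operator norm. The first ingredient I would record is the \emph{trace Hölder inequality}: for conjugate exponents $\tfrac1p+\tfrac1q=1$ and matrices $X,Y$ whose product is square,
\[
|\tr(XY)|\le\|X\|_{S^p}\,\|Y\|_{S^q}\,,
\]
which follows from von Neumann's trace inequality $|\tr(XY)|\le\sum_i\sigma_i(X)\sigma_i(Y)$ together with the scalar Hölder inequality applied to the nonincreasing sequences $(\sigma_i(X))_i$ and $(\sigma_i(Y))_i$.

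Next I would exhibit a matrix $Q$ with $\|Q\|_{S^\infty}\le1$ that ``tests'' $AB$ against its nuclear norm. Let $AB=U\Sigma V^*$ be a compact singular value decomposition, with $U,V$ having orthonormal columns and $\Sigma$ diagonal carrying the nonzero singular values of $AB$. Put $Q=VU^*$; then $\|Q\|_{S^\infty}\le1$ and $ABQ=U\Sigma V^*VU^*=U\Sigma U^*$, so
\[
\|AB\|_{S^1}=\tr(\Sigma)=\tr(U\Sigma U^*)=\tr(ABQ)\,.
\]
Associating as $ABQ=A(BQ)$ and applying the trace Hölder inequality with exponents $p,q$ gives $\|AB\|_{S^1}\le\|A\|_{S^p}\,\|BQ\|_{S^q}$; and since $\sigma_i(BQ)\le\sigma_i(B)\,\sigma_1(Q)\le\sigma_i(B)$ for every $i$, we have $\|BQ\|_{S^q}\le\|B\|_{S^q}$, which completes the argument.

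The only genuinely nontrivial step is von Neumann's trace inequality underlying the first display; it is classical and can simply be cited, or else proved in a few lines by optimizing $\mathrm{Re}\,\tr(XY)$ over unitaries via the singular value decompositions of $X$ and $Y$. I do not expect a real obstacle here — this is a textbook fact — the one point requiring care is keeping the (possibly rectangular) dimensions of $A$ and $B$ consistent when invoking the trace inequalities, which using the compact SVD handles automatically.
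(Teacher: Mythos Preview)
Your proof is correct. The paper does not actually prove this lemma; it is stated as a standard fact (``H\"older's inequality for Schatten norms'') and then invoked to obtain the subsequent corollary. Your route --- the duality representation $\|AB\|_{S^1}=\tr(ABQ)$ for a partial isometry $Q$ built from the SVD of $AB$, followed by the trace H\"older inequality $|\tr(XY)|\le\|X\|_{S^p}\|Y\|_{S^q}$ and the submultiplicativity bound $\sigma_i(BQ)\le\sigma_i(B)\,\|Q\|_{S^\infty}$ --- is one of the standard textbook arguments. The only step worth a footnote is your claim $\sigma_i(BQ)\le\sigma_i(B)\sigma_1(Q)$: this is the Courant--Fischer consequence that multiplying by a contraction cannot increase any singular value, and it holds for rectangular $B,Q$ as well. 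With that granted, nothing is missing.
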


\noindent This implies the following corollary:
\begin{corollary}
\label{cor:schatten1}
Let $\Sigma, \Shat, M$ be so that $\| \Sigma - \Shat \|_F \leq O(\delta)$, and so that $\| M \|_F = 1$.
Then, we have $\| \Sigma^{1/2} M \Sigma^{1/2} - \Shat^{1/2} M \Shat^{1/2} \|_{S^1} \leq 5 \delta$.
\end{corollary}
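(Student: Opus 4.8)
The plan is to write the difference $\Sigma^{1/2} M \Sigma^{1/2} - \Shat^{1/2} M \Shat^{1/2}$ as a telescoping sum and then apply H\"older's inequality for Schatten norms with the exponents $p = 2$, $q = 2$ to each term, converting Schatten-$1$ bounds into products of Frobenius norms. Concretely, I would add and subtract $\Sigma^{1/2} M \Shat^{1/2}$ to obtain
\[
\Sigma^{1/2} M \Sigma^{1/2} - \Shat^{1/2} M \Shat^{1/2} = \Sigma^{1/2} M (\Sigma^{1/2} - \Shat^{1/2}) + (\Sigma^{1/2} - \Shat^{1/2}) M \Shat^{1/2} \; .
\]
By the triangle inequality for the Schatten-$1$ norm and then H\"older with $\tfrac1p + \tfrac1q = \tfrac12 + \tfrac12 = 1$ (using $S^2 = $ Frobenius), each summand is bounded by $\| \Sigma^{1/2} M \|_F \, \| \Sigma^{1/2} - \Shat^{1/2}\|_{S^2} + \| \Sigma^{1/2} - \Shat^{1/2} \|_{S^2} \, \| M \Shat^{1/2} \|_F$. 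Since $\| M \|_F = 1$ and $\Sigma, \Shat$ are $O(\delta)$-close to each other (hence both have spectral norm $O(1)$ under the standing assumption, which forces $\Sigma$ near $I$), we get $\| \Sigma^{1/2} M \|_F \le \|\Sigma^{1/2}\|_2 \le 1 + o(1)$ and similarly for $\| M \Shat^{1/2}\|_F$.

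The only real content is controlling $\| \Sigma^{1/2} - \Shat^{1/2} \|_F$ in terms of $\| \Sigma - \Shat \|_F$. This is the standard operator-monotonicity / Powers--St\o rmer type inequality: for positive semidefinite $A, B$ one has $\| A^{1/2} - B^{1/2} \|_F^2 \le \| A - B \|_{S^1}$, or alternatively the cruder bound $\|A^{1/2} - B^{1/2}\|_F \le \frac{1}{2\sqrt{\lambda_{\min}}} \|A - B\|_F$ when the eigenvalues are bounded below by $\lambda_{\min}$. I would use the latter: since $\Sigma$ is close to the identity and $\Shat$ is close to $\Sigma$, both matrices have eigenvalues bounded below by a constant close to $1$, so $\| \Sigma^{1/2} - \Shat^{1/2}\|_F \le (\tfrac12 + o(1)) \| \Sigma - \Shat \|_F \le O(\delta)$. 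Plugging back in gives a bound of the form $(2 + o(1)) \cdot O(\delta)$ on the sum, and absorbing constants yields $\le 5\delta$ for $\delta$ small enough (or for a suitable choice of the hidden constant in the hypothesis $\| \Sigma - \Shat \|_F \le O(\delta)$).

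The main obstacle, such as it is, is being careful about the numerical constant $5$: one must track that the two telescoped terms each contribute roughly $\| \Sigma^{1/2} - \Shat^{1/2}\|_F$ times a factor that is $1 + o(1)$ rather than something larger, and that the square-root perturbation bound contributes a factor $\tfrac12$ rather than $1$. An honest accounting gives a leading constant of about $2$, comfortably below $5$, with the slack covering lower-order terms; I would note that if the ambient assumption only guarantees $\|\Sigma - I\|_F \le O(\eps \log 1/\eps)$ (as elsewhere in the paper) then for $\eps$ smaller than an absolute constant all the $1 + o(1)$ factors are genuinely at most, say, $11/10$, and the bound goes through.
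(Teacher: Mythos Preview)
Your proposal is correct and uses the same core ingredients as the paper---a telescoping identity followed by H\"older's inequality for Schatten norms with $p=q=2$---but the telescoping is arranged differently. You telescope directly through $\Sigma^{1/2} M \Shat^{1/2}$, which forces you to bound $\|\Sigma^{1/2} - \Shat^{1/2}\|_F$; since $\Sigma$ and $\Shat$ need not commute, this requires invoking a square-root perturbation bound (Powers--St\o rmer or the Sylvester-equation estimate you cite). The paper instead telescopes through the uncoated matrix $M = I \cdot M \cdot I$, bounding $\|\Sigma^{1/2} M \Sigma^{1/2} - M\|_{S^1}$ and $\|M - \Shat^{1/2} M \Shat^{1/2}\|_{S^1}$ separately, so it only ever needs $\|\Sigma^{1/2} - I\|_F$ and $\|\Shat^{1/2} - I\|_F$. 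Those are trivial to control because $\Sigma$ (resp.\ $\Shat$) commutes with $I$, so one can work eigenvalue by eigenvalue without any operator-theoretic perturbation lemma. In the paper's actual application one has $\Shat = I$ anyway, so the two routes collapse to the same computation; your version is a touch more general in spirit, while the paper's is slightly more self-contained.
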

\begin{proof}
We have
\begin{align*}
\| \Sigma^{1/2} M \Sigma^{1/2} - \Shat^{1/2} M \Shat^{1/2} \|_{S^1} &\leq \| \Sigma^{1/2} M \Sigma^{1/2} -  M  \|_{S^1} + \| M- \Shat^{1/2} M \Shat^{1/2} \|_{S^1} \; .
\end{align*}
We will bound the first term on the RHS by $5 \delta / 2$; the second term is bounded symmetrically.
We have
\begin{align*}
\| \Sigma^{1/2} M \Sigma^{1/2} -  M  \|_{S^1} &\leq \| \Sigma^{1/2} M \Sigma^{1/2} -  M  \Sigma^{1/2} \|_{S^1} + \|  M \Sigma^{1/2} -  M  \|_{S^1} \\
& \| \Sigma^{1/2} - I \|_{S^2} \| M \Sigma^{1/2} \|_{S^2} + \| \Sigma^{1/2} - I \|_{S^2} \| M \|_{S^2} \\
&\leq 5 \delta / 2 \; ,
\end{align*}
where the last line follows from H\"{o}lder's inequality for Schatten norms.
\end{proof}

\subsection{An Improvement Theorem}

Here we state and prove one of the main technical ingredients in our algorithm for robustly learning the covariance. 
\begin{theorem}
Fix $\eps, \delta, \tau > 0$. 
Let $\Sigma$ be so that $\| \Sigma - I \|_F \leq O(\eps \log 1 / \eps)$, and fix a $p \in \cP_2$, 
where $\cP_2$ denotes the set of even degree-$2$ polynomials in $d$ variables.
Let $G_0$ be an $( \eps, \delta)$-good set of samples from $\normal (0, \Sigma)$, and let
$S = \{ X_1, \ldots, X_n \}$ be so that $\Delta (S, G_0) \leq \eps$.
Then, for any $C > 0$ there is an algorithm \textsc{LearnMeanChiSquared} which, given $p, X_1, \ldots, X_n$, and $\eps$, outputs a $\muhat$ so that with probability $1 - \tau$ over the randomness of the algorithm,
\[
\left| \muhat - \E_{X \sim \normal (0,\Sigma)} [p(X)] \right| \leq \|\Sigma - I\|_F/C  + O(\log(C) \eps) \; .
\]
Moreover, the algorithm runs in time $O(|S| + \log (1 / \tau) / \eps^2 )$.
\end{theorem}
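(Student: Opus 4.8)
The plan is to exploit the structural fact that, for $X \sim \normal(0,\Sigma)$ with $\Sigma \approx I$, an even degree-$2$ polynomial $p(X)$ is (after centering) a weighted sum of independent $\chi^2$ random variables, namely $p(X) = \sum_i \lambda_i (Z_i^2 - 1) + \E[p(X)]$ where the $Z_i$ are i.i.d.\ standard Gaussians (diagonalize in the eigenbasis of $\Sigma^{1/2} A \Sigma^{1/2}$ if $p(x) = x^T A x$). The key observation to formalize is: if two such weighted $\chi^2$ sums are far in total variation, then most of the discrepancy in their CDFs is concentrated near the means; equivalently, for any threshold $T$ sufficiently far from $\E[p(X)]$, the tail probability $\Pr[p(X) > T]$ is \emph{robustly estimable} from the corrupted samples using the empirical tail, because (i) by $(\eps,\delta)$-goodness the empirical tail of $G_0$ matches the true Gaussian tail up to the small additive error in the definition, (ii) Hanson--Wright (Lemma~\ref{lem:hanson-wright}) gives that $\Pr[p(X) > T]$ decays exponentially once $T \gtrsim \log(1/\eps)$ standard deviations out, so only $O(\eps)$ mass lies that far, and (iii) $\Delta(S,G_0) \le \eps$ means only $O(\eps/\log(1/\eps))$ good points are lost and at most $\eps$ bad points added. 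Normalizing $p$ so that $\E_{\normal(0,\Sigma)}[p^2(X)] = \Theta(1)$ (using $(\eps,\delta)$-goodness to transfer this to the empirical second moment) makes these statements scale-free.

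The algorithm itself is an iterative halving of the confidence interval. First I would compute a rough estimate: since $\E_{\normal(0,\Sigma)}[p(X)]$ differs from $\E_{\normal(0,I)}[p(X)]$ (a known quantity, as it only depends on the trace of the coefficient matrix) by at most $\|\Sigma - I\|_F$ times a constant, and the empirical mean $\E_S[p(X)]$ is robust to within $O(\eps \sqrt{\E[p^2]}^{1/2})$ plus the contribution of outliers, we get an initial interval of width $O(\|\Sigma-I\|_F + 1)$ containing $\E_{\normal(0,\Sigma)}[p(X)]$. Then I would repeat the following improvement step $O(\log C)$ times: given a current interval of half-width $w$ (with $w = \omega(\eps)$) known to contain the true mean, place a threshold $T$ at distance roughly $w$ on each side of the interval's midpoint; by the $\chi^2$-tail structure the true tail mass past $T$ is a fixed constant fraction (say between $1/10$ and $1/3$) \emph{if and only if} the true mean lies on the far side, and this dichotomy survives $\eps$-corruption because $\eps \ll w$ and the relevant tail is not in the Hanson--Wright-dominated regime; using the empirical tail of $S$ past $T$ (with a small random perturbation of $T$ to avoid boundary pathologies, which is where the $\log(1/\tau)/\eps^2$ term and the probability-$1-\tau$ qualifier enter) we can decide which half to keep, shrinking $w$ by a constant factor. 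Iterating until $w = \Theta(\|\Sigma-I\|_F / C + \log(C)\eps)$ — the point at which the corruption level $O(\eps)$ per step, accumulated over $O(\log C)$ steps, becomes comparable to the target accuracy — yields the claimed bound $|\muhat - \E_{\normal(0,\Sigma)}[p(X)]| \le \|\Sigma-I\|_F/C + O(\log(C)\eps)$.

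The main obstacle I expect is establishing the precise quantitative form of the ``most of the difference lies near the mean'' lemma for weighted sums of $\chi^2$'s, uniformly over all coefficient matrices with bounded Frobenius norm: one needs that a constant-width window around the mean carries a constant fraction of the variation-distance mass, with the window width and the constant depending only on the normalization and not on the spectrum of the coefficients (which could be very flat, like a sum of many tiny $\chi^2$'s behaving Gaussian-like, or very spiky, like a single $\chi^2$). This requires a careful case analysis — splitting on whether the coefficient matrix has a dominant eigenvalue or is spread out, and in the spread-out case invoking a Berry--Esseen-type comparison to a Gaussian, while in the spiky case handling the asymmetric heavy tail explicitly via Hanson--Wright — and then checking that in every regime the threshold $T$ used in the improvement step falls in the ``informative but not exponentially-suppressed'' range so that the $\eps$-corruption genuinely cannot flip the decision. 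The bookkeeping to ensure the $\Delta(S,G_0) \le \eps$ accounting is never violated (we only ever delete points implicitly by thresholding, and never in a way that removes more than $O(\eps/\log(1/\eps))$ good points per the definition of $\Delta$) is routine by comparison but must be tracked.
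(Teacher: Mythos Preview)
Your bisection scheme has a genuine gap that blocks the $\|\Sigma-I\|_F/C$ term in the error bound. Each halving step compares an empirical tail probability $\Pr_S[p(X)>T]$ to some reference value that tells you which half of the interval the true mean occupies. But that reference must come from knowing the \emph{shape} of the law of $p(X)$ under $\normal(0,\Sigma)$, i.e.\ the eigenvalues $b_i$ of $\Sigma^{1/2}M\Sigma^{1/2}$, and those are unknown. If you calibrate instead against the known law under $\normal(0,I)$ (eigenvalues $a_i$), the two CDFs at a threshold near the mean can differ by $\Theta(\|\Sigma-I\|_F)$ --- not only because the means differ but because the whole eigenvalue profile shifts. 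Once your half-width $w$ drops below $\|\Sigma-I\|_F$, this calibration error swamps the signal and the left/right decision becomes unreliable; the iteration stalls at error $\Theta(\|\Sigma-I\|_F)$ with no mechanism to divide by $C$. The ``case analysis + Berry--Esseen'' you propose addresses density lower bounds (so that a CDF perturbation of size $\eps$ moves the inferred location by $O(\eps)$), but it does nothing about this $\|\Sigma-I\|_F$-sized systematic shape error.

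The paper's algorithm is not iterative at all, and the $1/C$ factor arises from a completely different mechanism. It fixes a single threshold $T=O(\log C)$, soft-clips via $f(x)=\mathrm{sgn}(x)\max(|x|-T,0)$, and splits
\[
\E_{\normal(0,\Sigma)}[p(X)] \;=\; \E_{\normal(0,\Sigma)}\!\bigl[p(X)-f(p(X))\bigr] \;+\; \E_{\normal(0,\Sigma)}\!\bigl[f(p(X))\bigr].
\]
The clipped term has per-sample influence at most $T$, so its corrupted empirical version is accurate to $O(T\eps)=O(\eps\log C)$. For the tail term the algorithm simply \emph{substitutes} $\E_{\normal(0,I)}[f(p(X'))]$, which it can simulate since $p$ is known. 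The whole proof reduces to one lemma: $|\E_{\normal(0,I)}[f(p)]-\E_{\normal(0,\Sigma)}[f(p)]|\le\|\Sigma-I\|_F/C$. The $1/C$ appears precisely because $f$ is supported only where $|p(X)|>T$, so the $\Sigma$-versus-$I$ discrepancy is damped by the exponentially small tail mass there. The lemma is proved by interpolating the eigenvalues one coordinate at a time from $a_i$ to $b_i$, bounding $\frac{d}{d\lambda}\E[f(Z_{i,\lambda})]$ via the explicit identity $\frac{d}{dc}P_{c\chi^2_1}(x)=\frac{1}{2c}(P_{c\chi^2_1}(x)-P_{c\chi^2_3}(x))$ together with Hanson--Wright; no case split on the spectrum, no Berry--Esseen, and no density lower bound is needed.
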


\noindent The way to think about how this result fits into the overall strategy is that 
robustly estimating the covariance is equivalent to robustly estimating the mean of every (normalized) degree-two polynomial $p$. 
The above theorem shows how a weak estimate in high-dimensions can be used to obtain stronger estimates in one dimension, 
which ultimately we will use to improve the high-dimensional estimate as well. The above theorem is the {\em workhorse} in our proof. 

Our algorithm itself is simple, however, its correctness is quite non-trivial.
We define some threshold $T$.
Given our corrupted set of samples from $\normal (0, \Sigma)$, 
we use our corrupted data set to estimate the mean of $p(X)$ conditioned on the event that $|p(X)| \leq T$.
Then, to estimate the contribution of the mean from points $X$ so that $|p(X)|  > T$, we estimate this by $\E_{X \sim \normal (0, I)} [p(X) 1_{|p(X)| > T}] $.
In other words, we are replacing the contribution of the true tail by an estimate of the contribution of $p(X)$ when $X \sim \normal (0, I)$ on this tail.
The formal pseudocode is given in Algorithm \ref{alg:LearnMeanChiSquared}.

Intuitively, this algorithm works because of two reasons.
First, it is not hard to show that the influence of points $p(X)$ within the threshold $T$ on the estimator are bounded by at most $T$.
Hence, the adversary cannot add corrupted points within this threshold and cause our estimator to deviate too much.
Secondly, because we know that $\| \Sigma - I \|_F$ is small, 
by carefully utilizing smoothness properties of sums of chi-squared random variables, 
we are able to show that our estimate for the contribution of the tail is not too large.
At a high level, this is because ``most'' of the distance between two chi-squared random variables must remain close to the means, so the difference in the tails is much smaller.
Proving that this holds in a formal sense is the majority of the technical work of this section.

\begin{proof} We know the distribution of $p(X')$ for $X' \sim N(0,I)$  explicitly and wish to use this to get a better estimate for the mean of $p(X)$ for $X \sim N(0,\Sigma)$ than might be given by the mean of the $\eps$-corrupted set of samples.

\begin{algorithm}[htb]
\begin{algorithmic}[1]
\Function{LearnMeanChiSquared}{$X_1,\dots, X_n, p(x), \eps,\tau $}
\State Let $T=O(\log C)$.
\State Let $f(x) = \begin{cases} x-T, & \text{for } x \geq T\\
   \quad 0, & \text{for } |x| \leq T \\
    x+T, & \text{for } x \leq -T
	\end{cases}$.
\State Compute $\alpha=\sum_{i=1}^n (p(X_i) - f(p(X_i)))/n$.
\State Simulate $m=O((\ln \tau)/\eps^2)$ samples $X'_1,\ldots, X'_m$ from $X' \sim N(0,I)$. 
\State Return $\muhat=\alpha + \sum_{i=1}^m f(p(X_i'))/n$.
\EndFunction
\end{algorithmic}
\caption{Approximating $\E[p(X)]$ for $X \sim N(0,\Sigma)$ with corrupted samples.}
\label{alg:LearnMeanChiSquared}
\end{algorithm}

It follows from $( \eps, \delta)$-goodness that 
$|\Pr_{X' \sim \normal(0,\Sigma)}[p(X')>t] - \#\{X_i:p(X_i)>t\}/n| \leq 2\eps$ for all $t$. 
We need to express the expectation in terms that we can use this to bound.
For $Z=p(X')$, we have that
\begin{align*}
\E[Z-f(Z)] & = \E[\max\{T,\min\{-T,Z\}\}] = \int_0^T  \Pr[Z > t] dt - \int_0^T \Pr[Z < -t] dt \;.
\end{align*}
Similarly, the samples have
$$\alpha=\sum_{i=1}^n (p(X_i) - f(p(X_i))/n  = \int_0^T \frac{\#\{X_i:p(X_i)>t\}}{n} dt - \int_0^t \frac{\#\{X_i:p(X_i)<-t\}}{n} dt \;.$$
Thus, we have $|\E[Z-f(Z)]-\alpha| \leq 2T \eps$.

Since $p \in \cP_2$, we have $\E[p(X')]=1$ for $X' \sim N(0,I)$. 
Thus, we have $\Var[f(p(X'))]\leq \E[f(p(X'))^2] \leq \E[p(X')^2] = 1$. 
It follows by standard concentration results that the empirical after taking $m=O(\ln(1-\tau)/\eps^2)$ samples has
$|\sum_{i=1}^m f(p(X_i)')/n - \E[p(X')]| \leq  \eps$
with probability $1-\tau$. When this holds, we have
$$\left| \muhat - \E_{X \sim \normal (0,\Sigma)} [p(X)] \right| \leq (2T+1) \eps + \left| \E_{X \sim \normal(0,I)}[f(p(X))] - \E_{X \sim \normal(0,\Sigma)}[f(p(X))] \right| \;.$$
To prove the correctness of the algorithm it remains to show that:

\begin{lemma}For any constant $C>0$, for $T=O(\log C)$, we obtain
$$\left|\E_{X \sim \normal(0,I)}[f(p(X))] - \E_{X' \sim \normal(0,\Sigma)}[f(p(X'))]\right| \leq \frac{\|\Sigma - I\|_F}{C} \;.$$
\end{lemma}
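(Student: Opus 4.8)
The plan is to reduce the claim to a statement about univariate sums of chi-squared random variables. First I would diagonalize: since $p\in\cP_2$ is an even degree-two polynomial with $\E_{\normal(0,I)}[p(X)]=1$, write $p(X) = X^T A X$ for a symmetric matrix $A$ with $\tr A = 1$. Then $p(X)$ for $X\sim\normal(0,I)$ has the law of $\sum_i \lambda_i Y_i$ where $\lambda_i$ are the eigenvalues of $A$ and $Y_i$ are i.i.d.\ $\chi^2_1$; similarly $p(X')$ for $X'\sim\normal(0,\Sigma)$ has the law of $\sum_i \lambda_i' Z_i$ where $\lambda_i'$ are the eigenvalues of $\Sigma^{1/2}A\Sigma^{1/2}$ (again $Z_i$ i.i.d.\ $\chi^2_1$). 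The relevant closeness parameter is controlled by $\|\Sigma^{1/2}A\Sigma^{1/2} - A\|_{S^1}$, which by Corollary~\ref{cor:schatten1} (applied with $M = A/\|A\|_F$, after rescaling) is $O(\|\Sigma - I\|_F)$. So both distributions are weighted sums of chi-squareds whose ``weight vectors'' differ by $O(\|\Sigma - I\|_F)$ in total.

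Next I would bound $\bigl|\E[f(Z)] - \E[f(Z')]\bigr|$ where $Z = p(X)$, $Z' = p(X')$, and $f$ is the soft-thresholding function that is $0$ on $[-T,T]$ and has slope $1$ outside. The key is that $f$ is $1$-Lipschitz and identically zero near the mean $1$, so the expectation only ``sees'' the tails beyond distance $T$ from the mean. Using the integral representation $\E[f(Z)] = \int_T^\infty \Pr[Z > t]\,dt - \int_T^\infty \Pr[Z < -t]\,dt$ (and likewise for $Z'$), it suffices to bound $\int_T^\infty \bigl|\Pr[Z>t] - \Pr[Z'>t]\bigr|\,dt$ and the analogous left-tail integral. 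Here I would invoke the structural fact advertised in the introduction: for two sums of chi-squared random variables that are $\eta$-close (in the appropriate weighted-$\ell_1$ sense), most of their total-variation-type discrepancy is concentrated near the mean, and the tail discrepancy decays; concretely, $\bigl|\Pr[Z>t]-\Pr[Z'>t]\bigr|$ should be bounded by something like $\|\Sigma-I\|_F$ times an exponentially decaying factor $e^{-c(t-1)}$ coming from Hanson--Wright-type tail control (Lemma~\ref{lem:hanson-wright}) on each of $Z$ and $Z'$, using that $\|A\|_2 \le 1 + \|\Sigma - I\|_F = O(1)$ and similarly for $\Sigma^{1/2}A\Sigma^{1/2}$. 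Integrating $\|\Sigma - I\|_F\, e^{-c(t-1)}$ from $t = T$ to $\infty$ gives $O(\|\Sigma-I\|_F\, e^{-cT})$, and choosing $T = O(\log C)$ with a large enough implied constant makes this at most $\|\Sigma - I\|_F / C$, as required.

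The main obstacle is making the ``most of the discrepancy lives near the mean'' step rigorous with the right dependence: naively, $\int_T^\infty |\Pr[Z>t] - \Pr[Z'>t]|\,dt$ is at most $\dtv(Z,Z')$ times nothing — one needs the tail \emph{profile}, not just the total mass. I expect the cleanest route is a coupling or a direct comparison: express $Z - Z'$ (in distribution) as a perturbation governed by the small matrix $\Sigma^{1/2}A\Sigma^{1/2} - A$, and show pointwise that both densities, and the difference of their tails, are dominated by a common exponentially decaying envelope, so that the tail contribution scales multiplicatively as $\|\Sigma-I\|_F \cdot e^{-cT}$ rather than additively. One has to be slightly careful that the Frobenius-to-Schatten-$1$ passage does not lose a dimension factor — but Corollary~\ref{cor:schatten1} is exactly designed to avoid that, since $\|M\|_F = 1$ there rather than $\|M\|_{S^1}=1$. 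Once the envelope bound is in hand, the rest is the elementary integration and the choice $T = \Theta(\log C)$.
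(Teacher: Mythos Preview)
Your setup is right: diagonalizing both $p(X)$ and $p(X')$ as weighted sums of $\chi^2_1$'s with eigenvalue vectors $(a_i)$ and $(b_i)$, and invoking Corollary~\ref{cor:schatten1} to get $\sum_i |a_i - b_i| = O(\|\Sigma - I\|_F)$, is exactly how the paper begins. But the heart of the argument---the step you flag as ``the main obstacle''---is where your proposal has a genuine gap. You assert a pointwise bound of the form $|\Pr[Z>t] - \Pr[Z'>t]| \lesssim \|\Sigma - I\|_F \cdot e^{-c(t-1)}$ and say you expect a ``coupling or direct comparison'' to yield it, but you give no mechanism for producing the $\|\Sigma - I\|_F$ factor \emph{multiplicatively} alongside the exponential tail. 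Hanson--Wright alone only bounds each tail separately; it says nothing about their difference being proportional to the perturbation size.

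The paper's device is quite different from a tail-integral comparison. It builds a continuous interpolation path $Z_{i,\lambda}$ between $p(X)$ and $p(X')$, changing the $i$th eigenvalue linearly from $a_i$ to $b_i$ one coordinate at a time, and differentiates $\E[f(Z_{i,\lambda})]$ in $\lambda$. The crucial identity is that for a scaled $\chi^2_1$ density $P_i$ with scale $c_i$, one has
\[
\frac{dP_i}{d\lambda} = \frac{b_i - a_i}{2c_i}\bigl(P_i - P_i^{(3)}\bigr),
\]
where $P_i^{(3)}$ is the density of a scaled $\chi^2_3$. This turns the derivative into $\frac{b_i - a_i}{2c_i}\,\E\bigl[f(Z_{i,\lambda}) - f(Z_{i,\lambda} + Z_i' + Z_i'')\bigr]$, and now the $1/c_i$ is harmless: Lipschitzness of $f$ gives the trivial bound $2c_i$, and the fact that $f$ vanishes on $[-T,T]$ means the inner expectation is essentially zero unless $|Z_{i,\lambda}| \gtrsim T/2$, which Hanson--Wright makes exponentially unlikely. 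Summing over $i$ and integrating in $\lambda$ then gives $\sum_i |b_i - a_i| \cdot e^{-\Omega(T)} \lesssim \|\Sigma - I\|_F / C$ for $T = \Theta(\log C)$. This interpolation-plus-density-identity is the missing idea in your sketch; without it, the step from ``both tails are small'' to ``their difference is small \emph{times} $\|\Sigma-I\|_F$'' does not go through.
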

\begin{proof}

Let $M$ be the symmetric matrix with $\|M\|_F=1$ such that $p(x)=x^T M x$ for $x \in \R^d$. We can write $p(X)$ with $X \sim \normal(0,I)$ as
$$p(X) + \tr(M)=X^TMX=X^TO^T D OX= Y^T D Y=\sum_{i=1}^d a_i Y_i^2 \;,$$
where $O$ is orthogonal, $D$ is diagonal, $a_i$ are the eigenvalues of $M$, 
and $Y \sim \normal(0,I)$ hence $Y_i$ are i.i.d. from $\normal(0,1)$. 
Since $\|M\|_F=1$, here we have $\sum a_i^2=1$. If instead we express  $p(X')$ in terms of $X' \sim N(0,\sigma^2 )$, we obtain:
$$p(X')+\tr(M)=Y'^T \Sigma^{1/2} M \Sigma^{1/2} Y'= Y'^T O'^T D' O' Y'= Y^T D' Y=\sum_{i=1}^d b_i Y_i^2 \;,$$
where $O'$ is orthogonal, $D'$ is diagonal, $b_i$ are the eigenvalues of $\Sigma^{1/2} M \Sigma^{1/2}$, 
and  $Y',Y \sim \normal(0,I)$ hence $Y_i$ are i.i.d. from $\normal(0,1)$. 

By Corollary \ref{cor:schatten1}, we have that $\sum_i |b_i-a_i| \leq (5/2)\|\Sigma - I\|_F$. Now consider the random variables 
$$Z_{i,\lambda} =  -\tr(M) + \sum_{j=1}^{d}  c_j (Y_i^2-1) \;,
$$
where
\[
c_j = \begin{cases} b_j, & \text{for } j < i \\
(1-\lambda) a_i + \lambda b_i  & \text{for } j=i\\
a_j & \text{for } j > i  \end{cases} \; ,
\]
for $1 \leq i \leq d$ and $0 \leq \lambda \leq 1$. Note that $Z_{i,1}=Z_{i+1,0}$, for $1 \leq i \leq d-1$. 
Note that, to prove the lemma, it suffices to show that
$$|\E[f(Z_{1,0})] - \E[f(Z_{d,1})]| \leq \sum_i 2|b_i-a_i|/5C \; .$$
To this end, consider how $\E[f(Z_{i,\lambda})]$ varies with $\lambda$. 
We can write $Z_{i,\lambda}=Z_{-i} + Z_i$, where $Z_{-i} = - \tr(M) + \sum_{j < i} b_j Y_j^2 + \sum_{j>i} a_j Y_j^2$;
and $Z_i=c_i Y_i^2$, where $c_i =((1-\lambda) a_i + \lambda b_i)$. 
We assume for now that $c_i \neq 0$. Since only $Z_i$ depends on $\lambda$, we have
\begin{align*}
\frac{d\E[f(Z_{i,\lambda})]}{d\lambda} & = \frac{d}{d\lambda} \E[f(Z_{-i} + Z_i)] \\
& = \E_{Z_{-i}} \left[ \frac{d}{d\lambda}\E_{Z_i}[f(Z_{-i} + Z_i)] \right] \\
& = \E_{Z_{-i}} \left[ \frac{d}{d\lambda} \int_{-\infty}^\infty f(Z_{-i} + x) P_i(x) dx \right] \\
& = \E_{Z_{-i}} \left[ \int_{-\infty}^\infty f(Z_{-i} + x) dP_i(x)/d\lambda dx \right] \;,
\end{align*}
where $P_i(x)$ is the probability density function 
of the random variable $Z_i$. Standard results about the $\chi^2$ distribution give that:
\begin{fact} Let $Y_1,Y_2,Y_3 \sim \normal(0,1)$. Then the probability density function of $Y_1^2$ is
$\frac{1}{\sqrt{2 \pi x}} e^{-x/2}$, of $Y_1^2+Y_2^2$ is $\frac{1}{2}e^{-x/2}$, and of $Y_1^2+Y_2^2+Y_3^2$ is
$\frac{\sqrt{x}}{\sqrt{2 \pi}} e^{-x/2}$. \end{fact}
This gives that $P_i(x)= \frac{1}{\sqrt{2 \pi xc_i}} e^{-x/2c_i}.$
Now consider the derivative:
\begin{align*}
dP_i(x)/d\lambda & = d\frac{1}{\sqrt{2 \pi x c_i}}/d\lambda \cdot e^{-x/2c_i} + \frac{1}{\sqrt{2 \pi x c_i}} \cdot de^{-x/2c_i}/d\lambda \\
& = - \frac{1}{2\sqrt{2 \pi x }c_i^{3/2}} \cdot \frac{dc_i}{d\lambda} \cdot e^{-x/2c_i} + \frac{1}{\sqrt{2 \pi x c_i}}\cdot \frac{x}{2c_i^2} \cdot e^{-x/2c_i} \cdot  \frac{dc_i}{d\lambda}  \\
& = - \frac{b_i-a_i}{2\sqrt{2 \pi x}c_i^{3/2}} e^{-x/2c_i} +  \frac{(b_i-a_i)\sqrt{x}}{\sqrt{2 \pi} c_i^{5/2}} e^{-x/2c_i}  \\
& = ((b_i-a_i)/2c_i) (P_i(x) -  P^{(3)}_i(x)) \;,
\end{align*}
where $P^{(3)}_i(x)$ is the distribution of $Z_i+Z'_i+Z''_i$, 
where $Z'_i$ and $Z''_i$ are i.i.d. copies of $Z_i$. We thus have
\begin{align*}
\frac{d\E[f(Z_{i,\lambda})]}{d\lambda}  & = \E_{Z_{-i}}\left[ \int_{-\infty}^\infty f(Z_{-i} + x) dP_i(x)/d\lambda dx \right] \\
& =((b_i-a_i)/2c_i) \E_{Z_{-i}}[\E[f(Z_{-i}+Z_i)]-\E[f(Z_{-i} + Z_i +Z'_i+Z''_i)]] \\
& = ((b_i-a_i)/2c_i) \E[f(Z_{i,\lambda}) - f(Z_{i,\lambda} +Z'_i+Z''_i)] \\
& = ((b_i-a_i)/2c_i) \E_{Z_{i,\lambda}}[f(Z_{i,\lambda})- \E_{Z'_i,Z''_i}[f(Z_{i,\lambda} +Z'_i+Z''_i)]] \;.
\end{align*}
Since $f$ has Lipschitz constant $1$, $|f(Z_{i,\lambda})- \E_{Z'_i,Z''_i}[f(Z_{i,\lambda} +Z'_i+Z''_i]| \leq \E[Z'_i+Z''_i] = 2c_i$ whatever value $Z_{i,\lambda}$ takes.

Using the probability distribution of $Z'_i+Z''_i$, in the case $c_i > 0$, we have for all $|z| \leq T$,
\begin{align*}
\E_{Z'_i,Z''_i}[f(z+Z'_i+Z''_i)] & = \int_{T-z}^\infty e^{-x/2c_i}/2c_i \cdot (z+x-T) dx \\
& = e^{-(T-z)/2c_i} (z-T + 2c_i + (T-z)) \\
& = 2c_i e^{-(T-z)/2c_i}  \;.
\end{align*}
Since $c_i \leq \max{a_i,b_i} \leq 1 + O(\eps \log(1/\eps)) \leq 2$, for large enough $T=O(\log C)$, we have  $e^{-(T/2)/2c_i} \leq 1/C^2 T$. We assume that this holds.

For $-T \leq  z  \leq T/2$, we have $2c_i e^{-(T-z)/2c_i} \leq 2c_i/C^2T$. 
A similar argument when $c_i < 0$ gives that for $-T/2 \leq  z  \leq T$, we have $2|c_i| e^{-(T-z)/2c_i} \leq 2|c_i|/C^2T$.
Now we have enough to show that
\begin{align*}
\left| \frac{d\E[f(Z_{i,\lambda})]}{d\lambda} \right| & \leq |(b_i-a_i)/2c_i| |\E_{Z_{i,\lambda}}[f(Z_{i,\lambda})- \E_{Z'_i,Z''_i}[f(Z_{i,\lambda}) +Z'_i+Z''_i)]| \\
& \leq \Pr_{Z_{i,\lambda}} [|Z_{i,\lambda} - \tr(M)| \geq T/2] |(b_i-a_i)|  + |(b_i-a_i)|/C^2 \;.
\end{align*}
By the Hanson-Wright inequality, we have that for any $x$,
$$\Pr_{Z_{i,\lambda}} [|Z_{i,\lambda}  - \E[Z_{i,\lambda}]| \geq 2 \|c\|_2 \sqrt{x} + \|c\|_\infty x] \leq 2 e^{-x} \;,$$ 
where $c=(b_1,\dots,b_i-1,c_i,a_{i+1},\dots, a_d)$. Note that $\|c\|_2 \leq 1 + O(\|\Sigma - I\|_F) \leq 2$. 
Also, $\E[Z_{i,\lambda}]+ \tr(M)$ is the sum of coordinates of $c$, 
and so $|\E[Z_{i,\lambda}]| \leq \sum_{j=1}^i |b_j-a_j| \leq O(\|\Sigma - I\|_F)  \leq 1$. 
Putting this together, we obtain
$$\Pr_{Z_{i,\lambda}} [|Z_{i,\lambda} - \tr(M)| \geq T/2] \leq 2 \exp(-((T-1)/8)^2) \leq 1/C^2 \;.$$
Finally, we have that, assuming $c_i \neq 0$,
$$\left| \frac{d\E[f(Z_{i,\lambda})]}{d\lambda} \right|  \leq 2|(b_i-a_i)|/C^2 \;.$$

Now we need to deal with the special case $c_i=0$. 
Note that since $f$ is Lipschitz, for any $\beta \in \R$, we have $|\E[f(Z_{-i} + \beta Y_i^2) - f(Z_{-i})]|/|\beta| \leq |\E[Y_i^2]| \leq 1$. 
By considering the limit as $\beta$ tends to zero from above or below, 
we get that when $c_i=0$, the derivative still exists and 
$\left|\frac{d\E[f(Z_{i,\lambda})]}{d\lambda}\right|  \leq 1$. 
Since the limit $2|(b_i-a_i)|/C^2$ only does not apply at one point 
where $\E[f(Z_{i,\lambda})]$ is still continuous as a function of $\lambda$, this is not an issue. We still obtain that
$$|\E[f(Z_{i,0})] - \E[f(Z_{i,1})]| \leq 2|(b_i-a_i)|/C^2 \;.$$
Recalling that $Z_{i,1}=Z_{i+1,0}$ with $Z_{1,0}=p(X')$ and $Z_{d,1}=p(X)$ for $X' \sim N(0,1)$ 
and $X \sim N(0,\Sigma)$, we have by the Mean Value Theorem, that
\begin{align*}
|\E[f(p(X'))] - \E[f(p(X)]| &\leq \sum_i |\E[f(Z_{i,0})] - \E[f(Z_{i,1})]| \\
& \leq 2|(b_i-a_i)|/C^2 \leq O(\|\Sigma - I\|_F/C^2) \\
& \leq \|\Sigma - I\|_F/C \;,
\end{align*}
as required.
\end{proof}
\noindent This completes the proof of the theorem.
\end{proof}

\subsection{Working in a Low-Dimensional Space of Degree-Two Polynomials}
We now show that via similar techniques as before, we can patch our estimates 
together to find a matrix which agrees with the ground truth on all degree-two polynomials 
in a fixed subspace of low dimension.
Formally, we show:
\begin{theorem}
\label{thm:covLowDDeg2}
Fix $\eps, \tau >0$. 
Let $\Sigma$ be so that $\| \Sigma - I \|_F \leq O(\eps \log 1 / \eps)$.
Let $G_0$ be an $( \eps, \delta)$-good set of samples from $\normal (0, \Sigma)$, and let
$S = \{ X_1, \ldots, X_n \}$ be so that $\Delta (S, G_0) \leq \eps$.
Let $W_1$ be a subspace of degree-$2$ polynomials, and let $W_2$ be an orthogonal subspace of degree-$2$ polynomials, 
so that we have a $\Shat$ so that $\left| \E_{X \sim \normal (0, \Sigma)} [p(X)] - \E_{X \sim \normal (0, \Shat)} [p(X)] \right| \leq \xi$ for all $p \in W_2$.
Then there is an algorithm \textsc{LearnMeanPolyLowD} which given $\eps, S, W_1, W_2, \Shat$ runs in time $\poly(d, |S|, 2^{O(\dim (W_1))}, \log 1 / \tau)$, and returns a $\Sigma'$ so that 
\[
\left| \E_{\normal (0, \Sigma)} \left[ p(X) \right] - \E_{\normal (0, \Sigma')} [p(X)] \right| \leq 4 \left( \|\Sigma - I\|_F/C  + O(\log(C) \eps \right) + \xi \; ,
\]
for all $p \in \mathrm{span}(W_1 \cup W_2)  \cap \cP_2$, with probability $1 - \tau$.
\end{theorem}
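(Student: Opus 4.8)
The plan is to imitate the net-and-circumscribe strategy of Section~\ref{sec:meanLowD}, but now in the space of degree-two polynomials rather than in $\R^d$. The key observation is that \textsc{LearnMeanChiSquared} (the improvement theorem) gives us, for \emph{any} fixed normalized $p \in \cP_2$, a robust estimate of $\E_{\normal(0,\Sigma)}[p(X)]$ with additive error $\|\Sigma - I\|_F / C + O(\log(C)\eps)$, provided we already know $\Sigma$ is close to the identity (which we do, since $\|\Sigma - I\|_F \leq O(\eps \log 1/\eps)$). So the first step is: form a fine net $\cN$ of the unit sphere of $W_1$ (viewed as a subspace of the inner-product space $\cP_2$ with the norm $\|p\| = \E_{\normal(0,I)}[p^2]^{1/2}$, say); since $\dim(W_1)$ is small, this net has size $2^{O(\dim W_1)}$. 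For each $p \in \cN$, run \textsc{LearnMeanChiSquared} to get an estimate $\mu_p$ of $\E_{\normal(0,\Sigma)}[p(X)]$, boosting the success probability by a union bound over the net so that all estimates are simultaneously accurate with probability $1 - \tau$.

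Next I would set up the feasibility region. The unknown quantity we are trying to pin down is the linear functional $p \mapsto \E_{\normal(0,\Sigma)}[p(X)]$ restricted to $\mathrm{span}(W_1 \cup W_2) \cap \cP_2$. On $W_2$ the hypothesis already gives us $\Shat$ with $|\E_{\normal(0,\Sigma)}[p] - \E_{\normal(0,\Shat)}[p]| \leq \xi$. On $W_1$, each net constraint $|\langle \text{(coefficient vector of candidate }\Sigma'\text{)}, p\rangle - \mu_p| \leq (\text{error})$ is a slab; intersecting over $p \in \cN$ carves out a convex body $\cc$ in the (low-dimensional) coefficient space of $W_1$-polynomials that contains the true answer. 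A net argument exactly analogous to Claim~\ref{claim:radius} shows $\cc$ has diameter $O(\text{per-coordinate error})/(1-\rho)$, i.e. $O(\|\Sigma - I\|_F/C + \log(C)\eps)$ after choosing the net fineness $\rho$ small. Then I would invoke Theorem~\ref{thm:circumscribe1} and Corollary~\ref{cor:circumscribe2} (the efficient circumscribing-ball routine, with a $\rho$-projection oracle obtained from the explicit linear constraints via \cite{GLS:88}, just as in Section~\ref{sec:meanLowD}) to find a single point $q$ within distance $\sqrt{2}(1 + 2\rho) \cdot O(\|\Sigma - I\|_F/C + \log(C)\eps)$ of the true functional on $W_1$. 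Finally, translate this back into a matrix $\Sigma'$: we need a single PSD matrix whose induced functional simultaneously matches $q$ on $W_1$ and matches $\Shat$ on $W_2$ up to $\xi$; since $W_1 \perp W_2$ as subspaces of $\cP_2$, one can build $\Sigma'$ by combining the $W_1$-component recovered from $q$ with the $W_2$-component from $\Shat$ (and projecting onto the PSD cone if needed, which only helps since the truth is PSD). The triangle inequality over the three sources of error — the circumscribing slack on $W_1$, the $\xi$ on $W_2$, and cross-terms when a general $p \in \mathrm{span}(W_1 \cup W_2)$ is split into its $W_1$ and $W_2$ parts — yields the claimed bound $4(\|\Sigma - I\|_F/C + O(\log(C)\eps)) + \xi$, with the factor $4$ absorbing the constant-factor losses from the net fineness, the $\sqrt 2$ from Jung's theorem, and the orthogonal decomposition of a general unit polynomial.

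The runtime is $\poly(d,|S|,2^{O(\dim W_1)}, \log 1/\tau)$: the net has $2^{O(\dim W_1)}$ elements, each call to \textsc{LearnMeanChiSquared} costs $O(|S| + \log(1/\tau)/\eps^2)$, and \textsc{Circumscribe} with its projection oracle over the finite constraint set costs $\poly$ in the net size and the dimension of the $W_1$ coefficient space, which is $O((\dim W_1)^2)$, hence still $2^{O(\dim W_1)}$.

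The step I expect to be the main obstacle is the last one: going from ``a functional on $W_1$ that is close to the truth in the natural norm'' back to ``an honest matrix $\Sigma'$ realizing that functional on $W_1 \cup W_2$ simultaneously,'' while controlling how the error on an arbitrary $p \in \mathrm{span}(W_1 \cup W_2) \cap \cP_2$ decomposes. One must be careful that the norm on $\cP_2$ used for the net (so that \textsc{LearnMeanChiSquared}'s per-polynomial guarantee applies with the right normalization) is compatible with the Frobenius-type norm implicit in $\|\Sigma - I\|_F$, and that decomposing a unit-norm $p$ into its $W_1$ and $W_2$ parts does not blow up the norm — this is where orthogonality of $W_1$ and $W_2$ is essential and where the constant $4$ comes from. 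Everything else is a routine adaptation of the machinery already developed in Sections~\ref{sec:meanLowD} and~\ref{sec:covLowD}.
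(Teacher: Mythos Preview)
Your proposal is correct, but it takes a more elaborate route than the paper. The paper does not invoke the circumscribing-ball machinery of Section~\ref{sec:meanLowD} at all here: instead it writes down a single convex feasibility program directly in the matrix variable $\Sigma'$. The constraints are (i) for each net point $p$ in a $1/2$-cover $\mathcal{C}$ of $W_1 \cap \cP_2$, the linear constraint $|\E_{\normal(0,\Sigma')}[p] - m_p| \leq \beta$ (linear in $\Sigma'$ by Lemma~\ref{lem:poly-props}), and (ii) the convex constraint $|\E_{\normal(0,\Sigma')}[p] - \E_{\normal(0,\Shat)}[p]| \leq \xi$ for all $p \in W_2$, handled via an explicit separation oracle. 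Since the true $\Sigma$ is itself feasible, the program returns some $\Sigma'$; triangle inequality gives error $2\beta$ at each net point, the standard $1/2$-net doubling pushes this to $4\beta$ on all of $W_1 \cap \cP_2$, triangle inequality through $\Shat$ gives the bound on $W_2$, and the orthogonal decomposition of a general $p$ finishes. Your route via Jung's theorem would also succeed and could in principle sharpen the constant on $W_1$, but it incurs exactly the extra translation step you flagged as the obstacle --- recovering an honest matrix from the circumscribed functional and reconciling it with $\Shat$ on $W_2$. The paper sidesteps this entirely by never leaving matrix space, explicitly remarking that it is not attempting to optimize the constant here and so opts for the simpler ``naive LP-based'' feasibility approach.
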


In particular, this implies:
\begin{corollary}
\label{cor:covLowD}
Fix $\eps, \tau >0$. 
Let $\Sigma$ be so that $\| \Sigma - I \|_F \leq O(\eps \log 1 / \eps)$.
Let $G_0$ be an $( \eps, \delta)$-good set of samples from $\normal (0, \Sigma)$.
Let $S = \{ X_1, \ldots, X_n \}$ be so that $\Delta (S, G_0) \leq \eps$.
Let $V$ be a subspace of $\R^d$.
Then there is an algorithm \textsc{LearnCovLowDim} 
which given $S, \eps, \xi, \tau, V$ runs in time $\poly(|S|, 2^{O(\dim (V)^2)}, \log 1 / \tau)$ and returns a $\Sigma'$ so that 
\[
\| \Pi_V \left( \Sigma - \Sigma' \right) \Pi_V \|_2 \leq 4 \left( \|\Sigma - I\|_F/C  + O(\log(C) \eps \right) \;,
\]
with probability $1 - \tau$.
\end{corollary}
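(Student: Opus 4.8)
The plan is to obtain the corollary as a direct consequence of Theorem~\ref{thm:covLowDDeg2}: the algorithm \textsc{LearnCovLowDim} simply invokes \textsc{LearnMeanPolyLowD} with a suitable choice of polynomial subspaces. Let $W_1 \subseteq \cP_2$ be the space of even degree-$2$ polynomials of the form $x \mapsto x^T M x$ with $M$ a symmetric matrix supported on $V$ (that is, $\Pi_V M \Pi_V = M$); this space has dimension $\binom{\dim(V)+1}{2} = O(\dim(V)^2)$, and a basis for it is easy to write down from an orthonormal basis of $V$. Take $W_2 = \{0\}$, $\Shat = I$, and $\xi = 0$, so that the hypothesis of Theorem~\ref{thm:covLowDDeg2} on $W_2$ is vacuous (alternatively, if an external estimate accurate to $\xi$ on a complementary family of polynomials is supplied, pass it through here — it does not affect the bound we need). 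Running \textsc{LearnMeanPolyLowD}$(\eps, S, W_1, W_2, \Shat)$ produces, with probability $1-\tau$, a matrix $\Sigma'$ with $\left| \E_{\normal(0,\Sigma)}[p(X)] - \E_{\normal(0,\Sigma')}[p(X)] \right| \leq 4\left( \|\Sigma - I\|_F / C + O(\log(C)\eps) \right)$ simultaneously for every $p \in W_1$.

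The remaining step is to read off the operator-norm bound. For a unit vector $u \in V$, set $M_u = u u^T$ and $p_u(x) = x^T M_u x = \langle u, x\rangle^2$ (centered if the normalization of $\cP_2$ requires it — this only shifts $p_u$ by a constant and does not change any expectation gap). Then $p_u \in W_1$, and $\|M_u\|_F = \|u\|_2^2 = 1$, so $p_u$ is one of the unit-Frobenius-norm quadratics for which Theorem~\ref{thm:covLowDDeg2} gives its (normalized) error bound. Since $\E_{\normal(0,A)}[\langle u, X\rangle^2] = u^T A u$ for any PSD matrix $A$, the expectation gap of $p_u$ between $\Sigma$ and $\Sigma'$ equals $u^T(\Sigma - \Sigma')u$. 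As $\Sigma - \Sigma'$ is symmetric and the restricted quadratic form controls the restricted operator norm, $\| \Pi_V(\Sigma - \Sigma')\Pi_V \|_2 = \sup_{u \in V, \|u\|_2 = 1} |u^T(\Sigma - \Sigma')u|$, and by the previous paragraph this supremum is at most $4(\|\Sigma - I\|_F/C + O(\log(C)\eps))$. This is exactly the claimed bound, and it holds with probability $1-\tau$ because the guarantee of Theorem~\ref{thm:covLowDDeg2} is already uniform over all of $W_1$. The runtime follows by substitution: Theorem~\ref{thm:covLowDDeg2} runs in time $\poly(d, |S|, 2^{O(\dim(W_1))}, \log 1/\tau)$, and $\dim(W_1) = O(\dim(V)^2)$ yields $\poly(|S|, 2^{O(\dim(V)^2)}, \log 1/\tau)$.

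In terms of difficulty, this argument is essentially bookkeeping; all the substantive work — robustly estimating $\E[p(X)]$ for a single normalized degree-$2$ polynomial via \textsc{LearnMeanChiSquared}, and stitching those one-dimensional estimates together over a net of the $O(\dim(V)^2)$-dimensional polynomial space — is already contained in Theorem~\ref{thm:covLowDDeg2} and the improvement theorem it rests on. The only point that needs a moment's care is the identification of the operator norm of $\Sigma-\Sigma'$ on $V$ with the supremum of expectation gaps over the rank-one forms $uu^T$, $u\in V$: these forms have unit Frobenius norm and lie in $W_1\cap\cP_2$, so the \emph{normalized} error bound of Theorem~\ref{thm:covLowDDeg2} transfers to them without loss, and the full $\|\Sigma-I\|_F$ (not a restricted version) correctly appears on the right-hand side.
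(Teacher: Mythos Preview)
Your proof is correct and follows essentially the same approach as the paper: set $W_1$ to be the $O(\dim(V)^2)$-dimensional space of degree-$2$ polynomials supported on $V$, take $W_2=\emptyset$ (so $\xi=0$), invoke Theorem~\ref{thm:covLowDDeg2}, and translate the uniform expectation-gap bound into a matrix-norm bound via the identification in Lemma~\ref{lem:poly-props}. The only cosmetic difference is that the paper cites Lemma~\ref{lem:poly-props} in one line, whereas you spell out the rank-one case $M_u=uu^T$ explicitly to read off the spectral norm; both are correct, and your version is arguably clearer.
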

\begin{proof}
Observe that the dimension of the space of degree-$2$ polynomials $W$ in $V$ is $O(\dim(V)^2)$.
Run the algorithm in Theorem \ref{thm:covLowDDeg2} with the same parameters as before, with $W_1 = W$ and $W_2 = \emptyset$ (so that we may take $\xi = 0$), and then the guarantee of that algorithm, along with Lemma \ref{lem:poly-props}, gives our desired guarantee.
\end{proof}

We now describe the algorithm for Theorem \ref{thm:covLowDDeg2}.
Essentially, we do the same thing as we did for low-dimensional learning 
in the unknown mean case: we take a constant net over $V \cap \cP_2$, l
earn the mean over every polynomial in the net, and then find a $\Sigma'$ 
which is close in each direction to the learned mean.
Since we will not attempt to optimize the constant factor here, 
will will use a naive LP-based approach to find a point which is close to optimal.
The formal pseudocode is given in Algorithm \ref{alg:lowdim-deg2}.

\begin{algorithm}[htb]
\begin{algorithmic}[1]
\Function{LowDimCovLearning}{$S,\eps, \xi, \tau, W_1, W_2$}
\State Generate a $1/2$-cover $\mathcal{C}$ for $W_1 \cap \cP_2$.
\State Let $\tau' = 2^{-|\mathcal{C}|} \tau$
\For{$p \in \mathcal{C}$}
\State Compute $m_p = \textsc{LearnMeanChiSquared}(S, p, \eps, \tau')$.
\State Generate a linear constraint $c_p (\Sigma')$: $\left| \E_{\normal (0, \Sigma')} [p(X)] - m_p \right| \leq  \|\Sigma - I\|_F/C  + O(\log(C)) \eps$.
\EndFor
\State Generate the convex constraint that $\left| \E_{\normal (0, \Sigma')} [p(X)] - \E_{\normal (0, \Sigma')} [p(X)] \right| \leq \xi$ for all $p \in W_2$.
\State Using a convex program, \Return any matrix $\Sigma'$ which obeys $c_p (\Sigma')$ for all $p \in \mathcal{C}$.
\EndFunction
\end{algorithmic}
\caption{Filter if there are many large eigenvalues of the covariance}
\label{alg:lowdim-deg2}
\end{algorithm}

Observe that every constraint for each polynomial in $W_1$ is indeed linear in $\Sigma'$, by Lemma \ref{lem:poly-props}.
Moreover, the constraint for $W_2$ has an explicit separation oracle, 
since it induces a norm, and for any $p \in W_2$, we may explicitly compute $\E_{\normal (0, \Sigma')} [p(X)] - \E_{\normal (0, \Sigma')} [p(X)] $.
Thus, we may use separating hyperplane techniques to solve this convex program in the claimed running time.

\begin{proof}[Proof of Theorem \ref{thm:covLowDDeg2}]
Let us condition on the event that \textsc{LearnMeanChSquared} succeeds for each $p \in \mathcal{C}$.
By a union bound, this occurs with probability at least $1 - \tau$.
Thus, in each $p \in \mathcal{C}$, we have that $| m_p - \E_{X \sim \normal (0, \Sigma)} [p(X)] | \leq \beta$, where $\beta = \|\Sigma - I\|_F/C  + O(\log(C)) \eps$.
Let $\Sigma'$ be the matrix we find.
By the triangle inequality, we then have that for every $p \in \mathcal{C}$, that $|\E_{\normal (0, \Sigma')} [p(X)] - \E_{\normal (0, \Sigma)} [p(X)] | \leq 2 \beta$.
Hence, by the usual net arguments, we know that for every $p \in V \cap \cP_2$,
\[
|\E_{\normal (0, \Sigma')} [p(X)] - \E_{\normal (0, \Sigma)} [p(X)] | \leq 4 \beta \; .
\]
Moreover, by triangle inequality, for every $p \in W_2$, we have $\left| \E_{\normal (0, \Sigma')} [p(X)] - \E_{\normal (0, \Sigma')} [p(X)] \right| \leq 2 \xi$.
The result then follows from the Pythagorean theorem.
\end{proof}


\section{Robustly Learning the Covariance in High-Dimensions}
\label{sec:covHighDim}
In this section, we show how to robustly estimate the covariance of a mean-zero 
Gaussian in high-dimensions up to error $O(\eps)$. We use our low-dimensional learning algorithm from the previous section as a crucial subroutine in what follows. 

Our main algorithmic contribution is as follows:
\begin{theorem}\label{thm:optCov}
Fix $\eps, \delta > 0$, and let $S_0 = (G_0, E_0)$ be an $\eps$-corrupted 
set of samples of size $n$ from $\normal (0, \Sigma)$, where $\| \Sigma - I \|_F \leq \xi$ where $\xi = O(\eps \log 1 / \eps)$, and where $n = \poly (d, 1 / \eps, \log 1 / \delta)$.
Suppose that $G_0$ is $(\eps, \delta)$-good with respect to $\normal (0, \Sigma)$.
Let $S \subseteq S_0$ be a set so that $\Delta (S, G_0) \leq \eps$.
Then, there exists an algorithm \textsc{ImproveCov} that given $S, \xi, \eps$, fails with probability at most $\poly(\eps, 1/d, \delta)$, 
and otherwise outputs one of two possible outcomes:
\begin{enumerate}[(i)]
\item
A matrix $\Shat$, so that $\| \Shat - \Sigma \|_F \leq \| \Sigma - I \|_F / 2$.
\item
A set $S' \subset S$ so that $\Delta (S', G_0) < \Delta (S, G_0)$.
\end{enumerate}
Moreover, $\textsc{ImproveCov}$ runs in time $\poly (d, (1 / \eps)^{O(\log^4 1/\eps)}, \log 1 / \delta)$.
\end{theorem}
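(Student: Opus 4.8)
The plan is to run the standard \emph{progress-or-filter} dichotomy, with the low-dimensional machinery of Section~\ref{sec:covLowD} as the new ingredient. Via Lemma~\ref{lem:frobBound} and the equivalence between covariance error and error on normalized degree-$2$ polynomials, certifying $\| \Shat - \Sigma \|_F \leq \tfrac12 \| \Sigma - I \|_F$ reduces to controlling $\left| \E_{\normal(0,\Sigma)}[p(X)] - \E_{\normal(0,\Shat)}[p(X)] \right|$ over all normalized $p \in \cP_2$. We can compute $\E_{\normal(0,\Shat)}[p(X)]$ exactly, and \textsc{LearnMeanChiSquared} gives, for any \emph{single} $p$, a robust estimate of $\E_{\normal(0,\Sigma)}[p(X)]$ with error $\| \Sigma - I \|_F / C + O(\log(C)\,\eps)$; taking $C$ a large constant makes this a genuine contraction of $\| \Sigma - I \|_F$ down to the unavoidable $O(\eps)$ floor, which is exactly a factor-$2$ improvement whenever $\| \Sigma - I \|_F = \omega(\eps)$. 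The catch is that there are infinitely many $p$, and we cannot afford to run \textsc{LearnMeanChiSquared} on a net of the full $O(d^2)$-dimensional space $\cP_2$.

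To localize the search I would pass to the fourth-moment picture: form the empirical operator on symmetric matrices induced by $\E_S[(XX^T) \otimes (XX^T)]$ and subtract the contribution predicted by a near-identity Gaussian, using the degree-$2$ and degree-$4$ goodness conditions on $G_0$ together with the a priori bound $\| \Sigma - I \|_F = O(\eps \log 1/\eps)$ to pin down the Gaussian moments up to the allowed slack. If this operator has no large eigenvalue, one shows the empirical covariance is already within $\tfrac12 \| \Sigma - I \|_F$ of $\Sigma$, and we output it. Otherwise there is an even degree-$2$ polynomial $p$ along which $\E_S[p(X)^2]$ substantially exceeds its value under $\normal(0,\Sigma)$. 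As in Theorem~\ref{thm:meanManyEig}, the key structural fact is that such ``bad'' polynomials are concentrated: their leading matrices can be taken supported on an $O(\log^2 1/\eps)$-dimensional subspace $V \subseteq \R^d$, since an inlier is exponentially unlikely to be many standard deviations out along many orthogonal directions simultaneously. This is precisely what keeps the running time at $\poly(d, (1/\eps)^{O(\log^4 1/\eps)})$: we only ever invoke \textsc{LearnCovLowDim}/\textsc{LearnMeanChiSquared} on the $\poly\log(1/\eps)$-dimensional space of quadratics supported on $V$.

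Given such a localized bad $p$, the filter is produced by the same mechanism as Algorithm~\ref{alg:filter-meanManyEigs}: normalize $p$ so its typical scale is $O(1)$, locate its mean to within $O(\eps)$ (via \textsc{LearnMeanChiSquared} or the goodness conditions), and choose an \emph{adaptive} threshold $T = \Theta(\log 1/\eps)$ such that $\Pr_S[\,|p(X) - \text{mean}| > T\,]$ exceeds the Gaussian tail probability --- controlled by Hanson--Wright, Lemma~\ref{lem:hanson-wright}, since $p$ is a quadratic form of bounded Frobenius and operator norm --- by more than the slack permitted by $(\eps,\delta)$-goodness of $G_0$. Discarding the points above threshold then removes strictly more corrupted points than good ones, and the exponential tail ensures we lose only an $O(\eps / \log 1/\eps)$-fraction of good points, so the resulting $S'$ satisfies $\Delta(S', G_0) < \Delta(S, G_0)$; this is the standard DKKLMS-style filter accounting. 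The stated failure probability comes from the randomized estimates inside \textsc{LearnMeanChiSquared} and a union bound over the net of quadratics on $V$.

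The step I expect to be the real obstacle --- and which needs the genuinely new ideas deferred to Sections~\ref{sec:covHighDim} and~\ref{sec:stitching} --- is that, unlike the unknown-mean case, we do not know the higher-order moments of $\normal(0,\Sigma)$: the fourth moments of quadratic forms depend on $\Sigma$ itself, so every concentration bound above must be bootstrapped from the goodness conditions and the weak prior $\| \Sigma - I \|_F = O(\eps \log 1/\eps)$ without the error growing by more than a constant per iteration. Concretely, after estimating $\Pi_V \Sigma \Pi_V$ accurately on the low-dimensional subspace and arguing the empirical covariance is accurate on $V^\perp$, one is left with the \emph{stitching} problem of recovering the off-diagonal block $A$ coupling $V$ and $V^\perp$; I would follow Section~\ref{sec:stitching}, which recasts estimating $A$ as a robust \emph{mean} estimation problem for a Gaussian with identity covariance and then invokes the results of Section~\ref{sec:mean}. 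Assembling the three blocks and checking that the combined Frobenius error is at most $\tfrac12 \| \Sigma - I \|_F$ via the Pythagorean-type decomposition completes the proof.
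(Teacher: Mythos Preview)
Your high-level architecture --- filter, then learn in low dimension, then stitch --- matches the paper, but the filter you propose for the fourth-moment step does not work, and the localization you claim is of the wrong kind.

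You cannot filter on a \emph{single} polynomial $p$ whose empirical second moment is large. If $\E_S[p^2(X)] - \E_{\normal(0,\Sigma)}[p^2(X)] \geq C_1\xi$, the corrupted points need only satisfy $\E_E[p^2(X)] = O(\xi/\eps) = O(\log 1/\eps)$, so they can sit at $|p(X)| = O(\sqrt{\log 1/\eps})$, where the good Hanson--Wright tail is already $\poly(\eps)$; no threshold $T$ separates them. The paper instead collects $k = O(\log^4 1/\eps)$ \emph{orthogonal} bad polynomials $p_1,\ldots,p_k$, forms the degree-$4$ statistic $Q = \sum_i (p_i^2)^{[4]}$, and shows $\|Q\|_2 = O(\sqrt{k})$ (Claim~\ref{claim:var-bound}) while the corrupted contribution is $\Omega(\xi k/\eps)$ (Claim~\ref{claim:bad-points}). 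The signal is now $\Omega(\sqrt{k}\log 1/\eps)$ standard deviations out, and since $Q$ is degree $4$ the relevant tail is hypercontractivity (Corollary~\ref{cor:hypercontractivity}), not Hanson--Wright. This is precisely the source of the $\log^4$ in the running time (Section~\ref{sec:quasibarrier}).

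Relatedly, your claim that the matrices of the bad polynomials are supported on an $O(\log^2 1/\eps)$-dimensional subspace $V \subseteq \R^d$ is not what holds. The degree-$4$ filter (Theorem~\ref{thm:many-eigenvalues}) localizes the bad $p_i$ to an $O(\log^4 1/\eps)$-dimensional subspace $U_1$ \emph{of the polynomial space} $\cP_2(W)$, not to quadratics supported on a low-dimensional subspace of $\R^d$; this is why the paper needs \textsc{LearnMeanPolyLowD} (Theorem~\ref{thm:covLowDDeg2}) and not just \textsc{LearnCovLowDim}. The genuine subspace $V \subseteq \R^d$ comes from a \emph{separate, prior} degree-$2$ filter on the eigenvalues of the empirical covariance (Theorem~\ref{thm:many-eigenvaluesDeg2}); that step is required so that on $W = V^\perp$ the degree-$2$ components $r_i = p_i^2 - (p_i^2)^{[4]}$ are under control (Claim~\ref{claim:ri-bound}), without which the $Q$-filter accounting collapses. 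So the paper runs \emph{two} filters in sequence, obtains $\Sigma_V$ via \textsc{LearnCovLowDim} on $V$, $\Sigma_W$ via \textsc{LearnMeanPolyLowD} on $(U_1,U_2)$ with the empirical covariance supplying $U_2$ (justified by Lemma~\ref{lem:covBoundedOffLarge}), and only then stitches.
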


By first applying the algorithm in \cite{DKKLMS} to produce an initial estimate for $\Sigma$, and then iterating the above algorithm polynomially many times, this immediately yields:
\begin{corollary}
\label{cor:cov-main}
Fix $\eps, \delta > 0$, and let $G_0$ be a set of $i.i.d.$ samples from $\normal (0, \Sigma)$, where  $n = \poly (d, 1/ \eps, \log 1 / \delta)$.
Let $S$ be so that $\Delta (S, G_0) \leq \eps$.
There is an universal constant $C$ and an algorithm which outputs a $\Shat$ so that with probability $1 - \delta$, we have $\| \Shat^{-1/2} \Sigma \Shat^{-1/2} - I \|_F \leq C \eps$.
In particular, this implies that
\[
\dtv \left( \normal (0, \Sigma), \normal (0, \Shat) \right) \leq 2 C \eps \; .
\]
\item
\end{corollary}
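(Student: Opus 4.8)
The plan is to bootstrap Theorem~\ref{thm:optCov} (\textsc{ImproveCov}) by iterating it in a sequence of rescaled coordinate systems. First I would invoke Lemma~\ref{lem:GoodSamplesOptCov}: for a suitable $n=\poly(d,1/\eps,\log 1/\delta)$, the clean set $G_0$ is $(\eps,\delta')$-good with respect to $\normal(0,\Sigma)$ with probability $1-\delta/2$, where $\delta'$ is a sufficiently small polynomial in $\eps,1/d,\delta$; condition on this. Note that for the raw corrupted set $S$ we have $\phi=0$ and $\psi\le\eps$, so $\Delta(S,G_0)\le\eps$. Thus we are in the hypothesis of Theorem~\ref{thm:optCov} except for its requirement that $\|\Sigma-I\|_F=O(\eps\log 1/\eps)$, which need not hold initially. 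To repair this, run the covariance algorithm of \cite{DKKLMS} on $S$ to obtain a matrix $\Shat_0$ whose $O(\eps\log1/\eps)$ total-variation guarantee translates (as in the discussion around Lemma~\ref{lem:frobBound}) into $\|\Shat_0^{-1/2}\Sigma\Shat_0^{-1/2}-I\|_F\le O(\eps\log 1/\eps)$.

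The core of the argument is to maintain an invertible linear map $B$, initialized to $B=\Shat_0^{-1/2}$, and to run everything on the transformed samples $\{BX_i\}$. The point is that an invertible linear map commutes with the corruption process and \emph{exactly} preserves everything we care about: $\{BX_i\}$ is an $\eps$-corrupted sample from $\normal(0,B\Sigma B^T)$ with clean set $BG_0$; $\Delta(BS',BG_0)=\Delta(S',G_0)$ for every $S'$ since $\Delta$ depends only on cardinalities; and since every even degree-$2$ or degree-$4$ polynomial $p$ satisfies $p(By)=q(y)$ for another such polynomial $q$ with $\E_{\normal(0,B\Sigma B^T)}[p^2]=\E_{\normal(0,\Sigma)}[q^2]$ (and likewise for the probability conditions), the set $BG_0$ is again $(\eps,\delta')$-good with respect to $\normal(0,B\Sigma B^T)$. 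Hence Theorem~\ref{thm:optCov} applies verbatim at every stage, as long as the current residual $\|B\Sigma B^T-I\|_F$ stays $O(\eps\log 1/\eps)$, which it does throughout.

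Now iterate \textsc{ImproveCov} on the current transformed data. If it returns a set $S'$ with strictly smaller $\Delta$, replace the sample set by $S'$ and repeat; since $S'\subsetneq S$ and points are never re-added, at most $n$ such filtering steps occur over the whole run. If instead it returns $\Shat'$ with $\|\Shat'-B\Sigma B^T\|_F\le\tfrac12\|B\Sigma B^T-I\|_F$, update $B\gets \Shat'^{-1/2}B$ (note $\Shat'$ is automatically positive definite since it is within $O(\eps\log1/\eps)$ of $I$). Writing $M=B_{\mathrm{old}}\Sigma B_{\mathrm{old}}^T$, the new residual $B\Sigma B^T-I=\Shat'^{-1/2}M\Shat'^{-1/2}-I$ is symmetric and similar to $\Shat'^{-1}(M-\Shat')$, hence has the same eigenvalues; since $\Shat'^{-1}$ is well-conditioned ($\|\Shat'^{-1}\|_{\mathrm{op}}=1+O(\eps\log1/\eps)$) and the sum of squared eigenvalue moduli of a matrix is at most its squared Frobenius norm, we get $\|B\Sigma B^T-I\|_F\le(1+o(1))\cdot\tfrac12\|M-I\|_F$. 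So the residual roughly halves on each estimate-producing step; starting from $O(\eps\log 1/\eps)$, after $O(\log\log 1/\eps)$ such steps it is at most $C\eps$, at which point we stop. In total the algorithm makes $\poly(d,1/\eps)$ calls to \textsc{ImproveCov}, so a union bound over these (each failing with probability $\poly(\eps,1/d,\delta')$) keeps the total failure probability below $\delta/2$, giving overall success probability $1-\delta$. Finally, output $\Shat=(B^TB)^{-1}$: since $\Shat^{-1/2}\Sigma\Shat^{-1/2}-I$ is symmetric and similar to $B\Sigma B^T-I$, it has the same eigenvalues, so $\|\Shat^{-1/2}\Sigma\Shat^{-1/2}-I\|_F\le\|B\Sigma B^T-I\|_F\le C\eps$, and the $\dtv$ bound follows from Lemma~\ref{lem:frobBound} applied to $\normal(0,B\Sigma B^T)$ and $\normal(0,I)$ together with the invariance of $\dtv$ under the invertible map $B^{-1}$.

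The only real work beyond invoking Theorem~\ref{thm:optCov} is this bookkeeping, and the two places that need care are exactly where I would expect the obstacle: (i) verifying that $(\eps,\delta')$-goodness \emph{and} the bound $\Delta\le\eps$ survive every change of coordinates with no loss in the parameters — this is why the structured, polynomial-based definition of goodness is essential rather than incidental — and (ii) checking that the residual-error metric $\|B\Sigma B^T-I\|_F$ composes multiplicatively across the sequence of rescalings so that the factor-$2$ improvements of \textsc{ImproveCov} actually chain, which is where the Schur-type eigenvalue inequality and the well-conditioning of all intermediate matrices come in.
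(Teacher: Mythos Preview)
Your proposal is correct and takes essentially the same approach as the paper, which simply states that the corollary follows ``by first applying the algorithm in \cite{DKKLMS} to produce an initial estimate for $\Sigma$, and then iterating the above algorithm polynomially many times.'' You have carefully filled in the bookkeeping the paper omits—the preservation of $(\eps,\delta)$-goodness and of $\Delta$ under invertible linear maps, the chaining of the factor-$1/2$ improvements via the similarity/Schur argument, and the union bound over the $\poly$-many calls—and all of these steps are sound.
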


\subsection{Technical Overview}


Our strategy for obtaining a high-dimensional estimate for the covariance based on solving low-dimensional subproblems will be substantially more challenging than it was for the unknown mean case. The natural approach is to take the $\poly \log (1 / \eps)$-dimensional subspace of degree-$2$ polynomials of largest empirical variance and construct a filter. However, this fails because, unlike in the mean case, we do not know the variance of these degree-$2$ polynomials to small error.
For the unknown mean case, because we assumed that we knew the covariance was the identity (or spectrally close to the identity), this was not an issue.
Now, the variance of our polynomials depends on the (unknown) covariance of the true Gaussian, which may be more than $O(\eps)$-far from our current estimate.
Indeed, it is not difficult to come up with counterexamples where there are many large eigenvalues of the empirical covariance matrix, but no filter can make progress.

We overcome this hurdle in several steps.
First, in Section \ref{sec:covFilterCovEig}, we show how to find a filter if there are many medium-sized eigenvalues of the empirical covariance matrix.
This will proceed roughly in the same way that the filter for the unknown mean does.
If no filter is created, then we know there are at most logarithmically large eigenvalues of the empirical covariance.
In the subspace $V \subseteq \R^d$ spanned by their eigenvectors, we can then learn the covariance to high accuracy using our low-dimensional estimator. 

Then, in Section \ref{sec:covFilterCovFourth}, we show that if we restrict to the orthogonal subspace, i.e., the subspace where the empirical covariance  matrix does not have large eigenvalues, we can indeed either produce a filter or improve our estimate of the covariance restricted to this subspace using our low-dimensional estimator. While the blueprint is similar to the filter for the unknown mean, the techniques are much more involved and subtle. 

Supposing we have not yet created a filter, we have now estimated the covariance on a poly-logarithmic dimensional subspace $V$, and on $V^\perp$.
This does not in general imply that we have learned the covariance in Frobenius norm.
In block form, if we write
$$
\Sigma = \left[ \begin{matrix} \Sigma_V & A^T \\ A & \Sigma_{V^\perp} \end{matrix}\right] \; ,
$$
where here $\R^d$ is written as $V \oplus V^\perp$, this implies we have learned $\Sigma_V$ and $\Sigma_{V^\perp}$ to high accuracy.
Thus, it remains to estimate the cross term $A$.

In Section \ref{sec:stitching}, we show, given a polylogarithmically sized subspace $V$, and a good estimate of the covariance matrix on $V$ and $V^\perp$, how to fill in the entire covariance matrix.
Roughly, we do this by randomly fixing directions in $V$, and performing rejection sampling based on the correlation in the direction in $V$, and showing that the problem reduces to one of robustly learning the mean of a Gaussian, which (conveniently) we have already solved.
These steps together yield our overall algorithm \textsc{ImproveCov}. Finally, in Section~\ref{sec:quasibarrier} we explain why there is a natural barrier that makes reducing the running time from quasi-polynomial to polynomial (in $1/\eps$) difficult. 

\subsection{Additional Preliminaries}
Here we give some additional preliminaries we will require in this Section.
\subsubsection{The Agnostic Tournament}
We also require the following classical result, which allows us to do agnostic hypothesis selection with corrupted samples 
(see e.g.,~\cite{DL:01, DDS12stoclong, DK14, DDS15}).
\begin{theorem}
Fix $\eps, \delta > 0$. 
Let $D_1, \ldots, D_k, D$ be a set of distributions where $\min_i \dtv (D_i, D) = \gamma$. Set $n = \Omega \left( \frac{\log k + \log 1 / \delta}{\eps^2} \right)$.
There is an algorithm \textsc{Tournament} which given oracles for evaluating the pdfs of $D_1, \ldots, D_k$ along with $n$ independent samples $X_1, \ldots X_n$ from $D$, outputs a $D_i$ so that $\dtv (D_i, D) \leq 3 \gamma + \eps$ with probability $1 - \delta$.
Moreover, the running time and number of oracle calls needed is at most $O(n^2 / \eps^2)$.
\end{theorem}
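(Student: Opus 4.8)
We need an agnostic hypothesis-selection (tournament) procedure: given $k$ distributions $D_1, \ldots, D_k$ (via pdf oracles), $n = \Omega((\log k + \log(1/\delta))/\eps^2)$ samples from an unknown $D$ with $\min_i \dtv(D_i, D) = \gamma$, output some $D_i$ with $\dtv(D_i, D) \le 3\gamma + \eps$ with probability $1-\delta$, in time $O(n^2/\eps^2)$.

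**Overall approach.** The plan is to run a Scheffé-type pairwise tournament. For each unordered pair $\{i,j\}$, consider the Scheffé set $A_{ij} = \{x : D_i(x) > D_j(x)\}$, on which $|D_i(A_{ij}) - D_j(A_{ij})| = \dtv(D_i, D_j)$. Estimate $\widehat{D}(A_{ij})$, the empirical mass the samples place in $A_{ij}$ (computable via the pdf oracles by evaluating $D_i, D_j$ on each sample point), and declare $D_i$ the winner of the pair if $|\widehat{D}(A_{ij}) - D_i(A_{ij})| \le |\widehat{D}(A_{ij}) - D_j(A_{ij})|$, i.e. the candidate closer to the empirical mass wins (breaking ties arbitrarily). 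Output any $D_i$ that wins (or ties) against every other $D_j$; if several qualify, output one minimizing the worst observed discrepancy, and if none exists, output an arbitrary $D_i$ — though we will show a good candidate always exists.

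**Key steps.** First, a uniform-convergence step: since each $A_{ij}$ is a fixed measurable set, Hoeffding plus a union bound over the $\binom{k}{2}$ pairs shows that with probability $\ge 1-\delta$, $|\widehat{D}(A_{ij}) - D(A_{ij})| \le \eps/4$ for all $i,j$ simultaneously, using $n = \Omega((\log k + \log(1/\delta))/\eps^2)$. Condition on this event. Second, let $i^\star$ achieve $\dtv(D_{i^\star}, D) = \gamma$. For any $j$, on $A_{i^\star j}$ we have $|D(A_{i^\star j}) - D_{i^\star}(A_{i^\star j})| \le \gamma$, hence $|\widehat{D}(A_{i^\star j}) - D_{i^\star}(A_{i^\star j})| \le \gamma + \eps/4$, so $D_{i^\star}$ wins or ties every pair — a Condorcet-type winner exists, so the algorithm outputs some $D_w$. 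Third, bound $\dtv(D_w, D)$: since $D_w$ beat or tied $D_{i^\star}$, on the set $A = A_{w, i^\star}$ we have $|\widehat D(A) - D_w(A)| \le |\widehat D(A) - D_{i^\star}(A)| \le \gamma + \eps/4$, so $|D(A) - D_w(A)| \le \gamma + \eps/2$, and also $\dtv(D_w, D_{i^\star}) = |D_w(A) - D_{i^\star}(A)| \le |D_w(A) - \widehat D(A)| + |\widehat D(A) - D_{i^\star}(A)| + \text{(nothing)} \le 2(\gamma + \eps/4) + \eps/4$; more carefully, $\dtv(D_w, D_{i^\star}) \le |D_w(A) - D(A)| + |D(A) - D_{i^\star}(A)| \le (\gamma+\eps/2) + \gamma$, wait — the cleanest route is just: $\dtv(D_w, D) \le \dtv(D_w, D_{i^\star}) + \dtv(D_{i^\star}, D)$, and bound $\dtv(D_w, D_{i^\star}) = |D_w(A) - D_{i^\star}(A)|$ directly using the two discrepancy bounds on $A$ to get $\dtv(D_w, D_{i^\star}) \le 2\gamma + \eps$, yielding $\dtv(D_w, D) \le 3\gamma + \eps$ after rescaling the internal $\eps$ by a constant. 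Finally, the running time: there are $O(k^2)$ pairs, each requiring $O(n)$ oracle evaluations to compute $\widehat{D}(A_{ij})$, and $k^2 \le O(n^2/\eps^2)$ in the relevant regime (or one states the bound as $O(k^2 n)$), matching the claim up to the standard bookkeeping.

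**Main obstacle.** There is no deep obstacle here — this is a classical argument (Devroye–Lugosi; also appearing in the cited works). The only mild subtlety is being careful with the constant factors so that the final bound is exactly $3\gamma + \eps$ rather than, say, $3\gamma + O(\eps)$: one runs the uniform-convergence step at accuracy $\eps/4$ (or $\eps/(c)$ for a suitable constant) so that the triangle-inequality chain closes with the stated constants. The other point worth stating cleanly is that we only need the pdf oracles to evaluate the Scheffé comparisons pointwise on the $n$ sample points, so we never need to integrate the densities — this is what keeps the procedure efficient and is why pdf-evaluation oracles (rather than, say, CDF or integration oracles) suffice.
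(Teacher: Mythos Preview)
The paper does not prove this theorem; it is stated as a classical result with citations to \cite{DL:01, DDS12stoclong, DK14, DDS15}. Your Scheff\'e-tournament approach is indeed the standard one from those references, and the skeleton is right, but two steps as written do not go through.

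First, the claim that $D_{i^\star}$ ``wins or ties every pair'' does not follow from $|\widehat{D}(A_{i^\star j}) - D_{i^\star}(A_{i^\star j})| \le \gamma + \eps/4$: to win you would need this quantity to be at most $|\widehat{D}(A_{i^\star j}) - D_j(A_{i^\star j})|$, and you have established no lower bound on the latter. (Scheff\'e pairwise comparisons can cycle, so a Condorcet winner need not exist.) The standard fix is to change the output rule: return the $D_w$ minimizing $\max_j |\widehat{D}(A_{wj}) - D_w(A_{wj})|$. Your displayed bound for $i^\star$ then shows this minimax value is at most $\gamma + \eps/4$, and your third-step computation (now applied with $j = i^\star$ and using only that $D_w$ achieves the min, not that it ``beat'' $i^\star$) gives $\dtv(D_w, D_{i^\star}) \le 2\gamma + \eps/2$, hence $\dtv(D_w, D) \le 3\gamma + \eps$.

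Second, you assert that ``we never need to integrate the densities,'' but your own winner rule explicitly uses $D_i(A_{ij})$ and $D_j(A_{ij})$, which are integrals of the candidate densities over the Scheff\'e set and are not computable from pointwise pdf evaluation alone. The cited references either assume these masses are available (which is reasonable when the $D_i$ are explicit parametric hypotheses, as in this paper's application to Gaussians) or estimate them from fresh samples of the $D_i$. You should state which assumption you are making rather than claim integration is unnecessary.
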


\begin{remark}
As a simple corollary of the agnostic tournament, observe that this allows us to do agnostic learning without knowing the precise error rate $\eps$.
Throughout the paper, we assume the algorithm knows $\eps$. 
However, if the algorithm is not given this information, and instead given an $\eta$ and asked to return something with error at most $O(\eps + \eta)$, we may simply grid over $\{ \eta, (1 + \gamma) \eta, (1 + \gamma)^2 \eta, \ldots, 1 \}$ (here $\gamma$ is some arbitrary constant that governs a tradeoff between runtime and accuracy), run our algorithm with $\eps$ set to each element in this set, and perform hypothesis selection via \textsc{Tournament}.
Then it is not hard to see that we are guaranteed to output something which has error at most $O(\eps + (1 + \gamma) \eta)$.
\end{remark}

\subsubsection{The Fourth Moment Tensor of a Gaussian}
As in \cite{DKKLMS}, it will be crucial for us to understand the behavior of the fourth moment tensor of a Gaussian.
Let $\otimes$ denote the Kronecker product on matrices.
We will make crucial use of the following definition:
\begin{definition}
For any matrix $M \in \R^{d \times d}$, let $M^\flat \in \R^{d^2}$ denote its canonical flattening into a vector in $\R^{d^2}$,
and for any vector $v \in \R^{d^2}$, let $v^\sharp$ denote the unique matrix $M \in \R^{d \times d}$ so that $M^\flat = v$.
\end{definition}

We will also require the following definition:
\begin{definition}
Let $\Ssym = \{M^\flat \in \R^{d^2}: \mbox{$M$ is symmetric}\}$.
\end{definition}

The following result was proven in \cite{DKKLMS}:
\begin{theorem}[Theorem 4.15 in \cite{DKKLMS}]
\label{thm:fourth-order}
Let $X \sim \normal (0, \Sigma)$. 
Let $M$ be the $d^2 \times d^2$ matrix given by $M = \E[ (X \otimes X) (X \otimes X)^T]$.
Then, as an operator on $\Ssym$, we have
\[ M = 2 \Sigma^{\otimes 2} +  \left(\Sigma^\flat \right) \left( \Sigma^\flat \right)^T \; .\]
\end{theorem}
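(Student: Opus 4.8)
The plan is to compute the entries of $M$ directly via Wick's formula and then recognize the resulting quartic form as the claimed operator, being careful that the identification only needs to hold after restricting to the symmetric subspace $\Ssym$.

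First I would observe that $(X \otimes X)(X \otimes X)^T$ is the rank-one $d^2 \times d^2$ matrix whose $((i,j),(k,l))$ entry is $X_i X_j X_k X_l$, so $M$ has $((i,j),(k,l))$ entry $\E_{X \sim \normal(0,\Sigma)}[X_i X_j X_k X_l]$. Applying Isserlis' theorem (Wick's formula) for centered Gaussian vectors gives $\E[X_i X_j X_k X_l] = \Sigma_{ij}\Sigma_{kl} + \Sigma_{ik}\Sigma_{jl} + \Sigma_{il}\Sigma_{jk}$. Hence $M = M_1 + M_2 + M_3$, where $M_1, M_2, M_3$ are the $d^2 \times d^2$ matrices with entries $\Sigma_{ij}\Sigma_{kl}$, $\Sigma_{ik}\Sigma_{jl}$, and $\Sigma_{il}\Sigma_{jk}$ respectively.

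Next I would identify each summand as an operator on $\R^{d^2}$. For any matrix $N$, a one-line index computation gives $M_1 N^\flat = \langle \Sigma^\flat, N^\flat \rangle\, \Sigma^\flat$, so $M_1 = (\Sigma^\flat)(\Sigma^\flat)^T$ exactly. Using the vectorization identity $(\Sigma \otimes \Sigma) N^\flat = (\Sigma N \Sigma^T)^\flat = (\Sigma N \Sigma)^\flat$ (the last equality since $\Sigma$ is symmetric), one checks directly that $M_2 = \Sigma^{\otimes 2}$ as operators on all of $\R^{d^2}$. For the third term, the same kind of computation yields $(M_3 N^\flat)_{ij} = (\Sigma N \Sigma)_{ji}$, i.e. $M_3 N^\flat = (\Sigma N^T \Sigma)^\flat$; thus $M_3$ differs from $M_2$ only by transposing its input. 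The crucial point is then that $\Ssym$ is exactly the subspace on which transposition acts as the identity, so $M_3$ and $M_2$ agree on $\Ssym$.

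Finally I would note that each of $M_1, M_2, M_3$ maps $\Ssym$ into itself ($\Sigma^\flat \in \Ssym$ because $\Sigma$ is symmetric, and $\Sigma N \Sigma$ is symmetric whenever $N$ is), so the restriction is well-defined and on $\Ssym$ we get $M = M_1 + M_2 + M_3 = (\Sigma^\flat)(\Sigma^\flat)^T + \Sigma^{\otimes 2} + \Sigma^{\otimes 2} = 2\Sigma^{\otimes 2} + (\Sigma^\flat)(\Sigma^\flat)^T$, as claimed. The only content beyond bookkeeping is the collapse of the three Wick terms to two via the fact that the swap of the two tensor factors is the identity on $\Ssym$; the rest is just keeping the flattening conventions straight, which I expect to be the main (but mild) source of friction.
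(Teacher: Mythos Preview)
Your proof is correct. The paper does not actually prove this statement: it is quoted as Theorem 4.15 of \cite{DKKLMS} and used as a black box, so there is no ``paper's own proof'' to compare against. Your argument via Isserlis' formula, splitting the three Wick pairings into $(\Sigma^\flat)(\Sigma^\flat)^T$, $\Sigma^{\otimes 2}$, and $\Sigma^{\otimes 2}\circ(\text{transpose})$, and then collapsing the last two on $\Ssym$, is exactly the standard route and is what one would find in the cited reference; the index computations you sketch all check out.
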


\subsubsection{Polynomials in Gaussian Space}
\label{sec:poly-gaussian}
Here we review some basic facts about polynomials under Gaussian measure, 
which will be crucial for our algorithm for learning Gaussians with unknown covariance.
We equip the set of polynomials over $\R^d$ with the Gaussian inner product, 
defined by $\langle f, g \rangle = \E_{X \sim \normal (0, I)} [f(X) g(X)]$, and we let $\| f \|_2^2 = \langle f, f \rangle$.

For any symmetric $M$ with $\| M \|_F = 1$, define the degree-$2$ polynomial $p(x) = \frac{1}{\sqrt{2}} (x^T M x - \tr (M))$.
We call $p$ the \emph{polynomial associated to $M$}.
Observe that $p$ is even (i.e., has no degree-$1$ terms).
We will use the following properties of such polynomials:
\begin{lemma}
\label{lem:poly-props}
Let $M$ be symmetric, so that  $\| M \|_F = 1$.
Let $p$ be its associated polynomial.
Then, we have:
\begin{enumerate}[(i)]
\item
$\E_{X \sim \normal (0, I)} [p(X)] = 0$.
\item
More generally, for any positive definite matrix $\Sigma$, we have $\E_{X \sim \normal (0, \Sigma)}[p(X)] =  \langle M, \Sigma - I \rangle$.
\item
$\Var_{X \sim \normal (0, I)} [p(X)] = \E_{X \sim \normal (0, I)} [p^2 (X)] = \langle p, p \rangle = 1$. 
\item
More generally, for any positive definite matrix $\Sigma$, we have
\[
\E_{X \sim \normal (0, \Sigma)} [p^2 (X)] = {M^\flat}^T \Sigma^{\otimes 2} M^\flat + \frac{1}{2} \left( \langle \Sigma - I, M^\flat \rangle \right)^2 \; .
\]
\end{enumerate}
\end{lemma}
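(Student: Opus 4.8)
The plan is to verify each of the four claimed identities in turn, using the spectral decomposition of $M$ and the fact that products of independent $\normal(0,1)$ squares have explicit low-order moments. Write $p(x) = \frac{1}{\sqrt 2}(x^T M x - \tr(M))$, and diagonalize $M = O^T D O$ with $O$ orthogonal and $D = \mathrm{diag}(a_1, \ldots, a_d)$; since $\|M\|_F = 1$ we have $\sum_i a_i^2 = 1$. For part (i), note $\E_{X \sim \normal(0,I)}[X^T M X] = \tr(M \cdot I) = \tr(M)$, so $\E[p(X)] = 0$ immediately. For part (ii), the same computation with $\E_{X \sim \normal(0,\Sigma)}[X^T M X] = \tr(M \Sigma)$ gives $\E[p(X)] = \frac{1}{\sqrt 2}(\tr(M\Sigma) - \tr(M)) = \frac{1}{\sqrt 2}\langle M, \Sigma - I\rangle$; one should double-check whether the intended statement carries the $\frac{1}{\sqrt 2}$ or whether the paper's convention absorbs it, but in either case it is a one-line trace identity. (I'd flag that the statement as written omits the $1/\sqrt 2$, so I would state it consistently with the chosen normalization and note the discrepancy.)

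For part (iii), substitute $Y = OX \sim \normal(0,I)$ so that $x^T M x - \tr(M) = \sum_i a_i(Y_i^2 - 1)$, a sum of independent centered random variables. Then $\Var[p(X)] = \frac{1}{2}\sum_i a_i^2 \Var(Y_i^2) = \frac{1}{2}\sum_i a_i^2 \cdot 2 = \sum_i a_i^2 = 1$, using $\Var(Y_i^2) = 2$ for a standard Gaussian. Since $\E[p(X)] = 0$ this also equals $\E[p^2(X)] = \langle p, p\rangle$.

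Part (iv) is the one requiring the most care, and it is where I expect the main (though still routine) work to lie. The cleanest route is to use Theorem~\ref{thm:fourth-order}: for $X \sim \normal(0,\Sigma)$, the operator $M_4 := \E[(X\otimes X)(X\otimes X)^T]$ acts on $\Ssym$ as $2\Sigma^{\otimes 2} + (\Sigma^\flat)(\Sigma^\flat)^T$. Writing $p(x) = \frac{1}{\sqrt 2}((x\otimes x)^T M^\flat - \tr(M))$ and expanding the square,
\begin{align*}
\E_{X \sim \normal(0,\Sigma)}[p^2(X)] &= \tfrac12\left( {M^\flat}^T \E[(X\otimes X)(X\otimes X)^T] M^\flat - 2\tr(M)\,{M^\flat}^T\E[X\otimes X] + \tr(M)^2 \right)\\
&= \tfrac12\left( {M^\flat}^T\!\left(2\Sigma^{\otimes 2} + (\Sigma^\flat)(\Sigma^\flat)^T\right)\!M^\flat - 2\tr(M)\langle M,\Sigma\rangle + \tr(M)^2\right),
\end{align*}
using $\E[X\otimes X] = \Sigma^\flat$ and ${M^\flat}^T \Sigma^\flat = \langle M, \Sigma\rangle$. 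Collecting terms, $\frac12(\langle M,\Sigma\rangle^2 - 2\tr(M)\langle M,\Sigma\rangle + \tr(M)^2) = \frac12(\langle M,\Sigma\rangle - \tr(M))^2 = \frac12\langle M, \Sigma - I\rangle^2$, which yields $\E[p^2(X)] = {M^\flat}^T\Sigma^{\otimes 2}M^\flat + \frac12\langle \Sigma - I, M^\flat\rangle^2$ exactly as claimed (identifying $\langle M, \Sigma - I\rangle$ with $\langle \Sigma - I, M^\flat\rangle$). The one subtlety to get right is that Theorem~\ref{thm:fourth-order} is stated as an identity of operators \emph{on $\Ssym$}, so one must check that $M^\flat \in \Ssym$ (true since $M$ is symmetric) before applying it; and one must be careful about the flattening conventions so that ${M^\flat}^T(X\otimes X)$ really does equal $X^T M X$. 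An alternative, fully elementary proof of (iv) diagonalizes as in (iii): with $Y \sim \normal(0,I)$ and $b_i$ the eigenvalues of $\Sigma^{1/2} M \Sigma^{1/2}$ we get $p(X) = \frac{1}{\sqrt 2}(\sum_i b_i Y_i^2 - \tr(M))$, compute $\E[p^2(X)]$ via $\E[Y_i^4] = 3$, $\E[Y_i^2 Y_j^2] = 1$, and then re-express $\sum b_i^2 = \|\Sigma^{1/2}M\Sigma^{1/2}\|_F^2 = {M^\flat}^T\Sigma^{\otimes 2}M^\flat$ and $\sum b_i = \tr(\Sigma M) = \langle M, \Sigma\rangle$; I would include whichever of these two is shorter given the lemmas already available, probably the operator-based one since Theorem~\ref{thm:fourth-order} is right there.
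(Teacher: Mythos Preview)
Your proof is correct and follows essentially the same route as the paper: parts (i)--(iii) are dismissed there as ``a straightforward calculation,'' and for (iv) the paper does exactly what you do---expand the square, rewrite $(X^T M X)^2$ as ${M^\flat}^T(X\otimes X)(X\otimes X)^T M^\flat$, apply Theorem~\ref{thm:fourth-order}, and complete the square $\langle M,\Sigma\rangle^2 - 2\tr(M)\langle M,\Sigma\rangle + \tr(M)^2 = \langle M,\Sigma - I\rangle^2$. Your observation that part (ii) as stated is missing a factor of $1/\sqrt{2}$ is correct and is simply an inconsistency in the paper's normalization.
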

\begin{proof}
The first three properties are a straightforward calculation.
We show the last one here.
By definition, we have
\begin{align*}
\E_{X \sim \normal (0, \Sigma)} [p^2 (X)] &= \frac{1}{2} \E_{X \sim \normal (0, \Sigma)} [(X^T M X - \tr (M))^2] \\
&= \frac{1}{2} \E_{X \sim \normal (0, \Sigma)} [(X^T M X)^2 - 2 (X^T M X) \tr (M) + \tr (M)^2 ] \\
&= \frac{1}{2} \left( \E_{X \sim \normal (0, \Sigma)} [{M^\flat}^T (X \otimes X) (X \otimes X)^T M^\flat ] - 2 \langle M, \Sigma \rangle \tr (M) + \tr (M)^2 \right) \\
&\stackrel{(a)}{=} \frac{1}{2} \left( {M^\flat}^T \left( 2 \Sigma^{\otimes 2} + \Sigma^\flat {\Sigma^\flat}^T \right) M^\flat - 2 \langle M, \Sigma \rangle \tr (M) + \tr (M)^2 \right) \\
&={M^\flat}^T \Sigma^{\otimes 2} M^\flat + \frac{1}{2} \left( \langle M^\flat, \Sigma \rangle^2  - 2 \langle M, \Sigma \rangle \tr (M) + \tr (M)^2\right) \\
&={M^\flat}^T \Sigma^{\otimes 2} M^\flat + \frac{1}{2} \left( \langle \Sigma - I, M^\flat \rangle \right)^2 \;, 
\end{align*}
as claimed, where (a) follows from Theorem \ref{thm:fourth-order}.
\end{proof}
Observe that Lemma~\ref{lem:poly-props}(iv) implies that if we take the top eigenvector of the $d^2\times d^2$ matrix
\[
\Sigma^{\otimes 2} + \frac{1}{2} \left( M^\flat \right) \left( M^\flat \right)^\top 
\]
on the linear subspace 
\[
V = \{ M^\flat: \mbox{M is a symmetric $d \times d$ matrix} \} \; ,
\] 
then the associated polynomial maximizes $\E_{X \sim \normal (0, \Sigma)} [p^2(X)]$, and so we can find these polynomials efficiently.
More generally, if we take any linear subspace of degree two polynomials with associated matrix subspace $V'$, so that $V' \subseteq V$, then the top eigenvector of the same matrix restricted to $V'$ allows us to find the polynomial in this subspace which maximizes $\E_{X \sim \normal (0, \Sigma)} [p^2(X)]$ efficiently.

We have the following tail bound for degree-$2$ polynomials in Gaussian space:
We will use $\Pi_V(x)$ and $\Pi_V(S)$ to denote projection to a subspace $V$, of a point $x$ and a set of points $S$, respectively.
We will also need the following hypercontractivity theorem for low-degree polynomials in Gaussian space:
\begin{theorem}[Hypercontractivity in Gaussian space, see e.g. \cite{ODonnell14}] \label{thm:hc}
Let $p: \R^d \to \R$ be a degree $m$ polynomial, and let $q \geq 2$ be even.
Then $\E_{X \sim \normal (0, I)} [p(X)^q]^{1/q} \leq (\sqrt{q - 1})^m \| p \|_2$.
\end{theorem}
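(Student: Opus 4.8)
The plan is to derive this from the classical hypercontractivity of the Ornstein--Uhlenbeck semigroup, combined with the elementary fact that the semigroup acts diagonally on the Hermite decomposition, shrinking the degree-$k$ component by a factor $\rho^k$.

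First, recall the Ornstein--Uhlenbeck operator $U_\rho$, acting on functions $f : \R^d \to \R$ by $U_\rho f(x) = \E_{Y \sim \normal(0,I)}[f(\rho x + \sqrt{1-\rho^2}\, Y)]$ for $0 \le \rho \le 1$, where all norms below are with respect to $\normal(0,I)$. The key black box I would invoke is Nelson's $(2,q)$-hypercontractive inequality: $\|U_\rho f\|_q \le \|f\|_2$ whenever $\rho \le 1/\sqrt{q-1}$. If a self-contained argument is desired, this can be proved by first establishing the two-point inequality on $\{-1,1\}$, tensorizing it to the hypercube $\{-1,1\}^N$, and then passing to the Gaussian limit via the central limit theorem applied to normalized sums of signs; alternatively it follows from the Gaussian logarithmic Sobolev inequality by the Gross--Weissler argument.

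Second, set $\rho = 1/\sqrt{q-1}$ and decompose $p = \sum_{k=0}^m p_k$, where $p_k$ is the projection of $p$ onto the span of Hermite polynomials of total degree exactly $k$ (well defined since $p$ has degree $m$). Since $U_\rho$ multiplies $p_k$ by $\rho^k$, the polynomial $g := \sum_{k=0}^m \rho^{-k} p_k$ satisfies $U_\rho g = \sum_{k=0}^m \rho^{-k}\rho^k p_k = p$. Applying Nelson's inequality to $g$ gives $\|p\|_q = \|U_\rho g\|_q \le \|g\|_2$. By orthogonality of distinct Hermite levels, $\|g\|_2^2 = \sum_{k=0}^m \rho^{-2k}\|p_k\|_2^2$, and since $\rho < 1$ and $k \le m$ we have $\rho^{-2k} \le \rho^{-2m}$ for each $k$, hence $\|g\|_2^2 \le \rho^{-2m}\sum_{k=0}^m \|p_k\|_2^2 = \rho^{-2m}\|p\|_2^2$. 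Therefore $\|p\|_q \le \rho^{-m}\|p\|_2 = (\sqrt{q-1})^m \|p\|_2$, as claimed.

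The only genuine content beyond bookkeeping is Nelson's semigroup hypercontractivity; the rest is the standard Hermite-rescaling trick. Accordingly, if one wants a fully self-contained proof rather than a citation, I expect the main obstacle to be establishing (or carefully importing) the $(2,q)$-hypercontractive bound for $U_\rho$ at the threshold $\rho = 1/\sqrt{q-1}$, and in particular handling the passage from the discrete two-point inequality to the Gaussian setting.
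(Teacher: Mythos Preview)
The paper does not prove this theorem; it simply cites it as a known result from \cite{ODonnell14}. Your argument is correct and is in fact the standard proof one finds in that reference: Nelson's $(2,q)$-hypercontractivity for the Ornstein--Uhlenbeck operator $U_\rho$ at $\rho = 1/\sqrt{q-1}$, combined with the diagonal action $U_\rho p_k = \rho^k p_k$ on Hermite levels and the rescaling trick $g = \sum_k \rho^{-k} p_k$.

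One cosmetic nit: you write ``since $\rho < 1$,'' but for $q=2$ one has $\rho = 1$; the inequality $\rho^{-2k} \le \rho^{-2m}$ still holds (with equality) in that case, so the conclusion is unaffected. Replace ``$\rho < 1$'' by ``$\rho \le 1$'' and the argument is clean.
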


We need the following definition:
\begin{definition}
Let $\cP_k$ denote the set of even degree-$k$ polynomials over $d$ variables 
satisfying $\Var_{X \sim \normal (0, 1)} [p (X)] = 1$.
Moreover, for any subspace $W \subseteq \R^d$, let $\cP_k (W)$ 
denote the set of even polynomials over $d$ variables 
which only depend on the coordinates in $W$.
\end{definition}

Then by the arguments above, we have that for any two matrices $\Sigma, \Shat$,
\[
\| \Sigma - \Shat \|_F = \sup_{p \in \cP_2} \left( \E_{X \sim \normal (0, \Sigma)} [p(X)] - \E_{X \sim \normal (0, \Shat)} [p(X)] \right) \; .
\]
In particular, by Lemma \ref{lem:frobBound}, this implies that when $\| \Sigma - I \|_2$ is small, 
then learning a Gaussian with unknown covariance in total variation distance is equivalent to learning the expectation of every even degree-$2$ polynomial.

Theorem~\ref{thm:hc} implies the following concentration for degree-$4$ (more generally, low-degree) polynomials of Gaussians:
\begin{corollary}
\label{cor:hypercontractivity}
Let $p$ be a degree-$4$ polynomial.
Then there is some $A, C \geq 0$ so that for all $t \geq C$, we have
\[
\Pr_{\normal (0, I)} [|p (X) - \E_{\normal (0, I)} [p(X)]| \geq t \| p \|_2 ] \leq \exp \left( -A t^{1/2}\right) \; .
\]
\end{corollary}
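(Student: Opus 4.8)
The plan is to run the standard ``moments imply tails'' argument, using the hypercontractivity bound of Theorem~\ref{thm:hc} to control all moments of $p(X)$ and then optimizing Markov's inequality over the choice of moment.

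First I would reduce to the mean-zero case. Set $q = p - \E_{\normal(0,I)}[p(X)]$, which is again a degree-$4$ polynomial, and note that $\|q\|_2^2 = \|p\|_2^2 - \left(\E_{\normal(0,I)}[p(X)]\right)^2 \le \|p\|_2^2$. Since $t\|p\|_2 \ge t\|q\|_2$, the event $\{|q(X)| \ge t\|p\|_2\}$ is contained in $\{|q(X)| \ge t\|q\|_2\}$, so it suffices to bound $\Pr_{\normal(0,I)}[|q(X)| \ge t\|q\|_2]$. (Nothing here uses that $p$ is even; Theorem~\ref{thm:hc} applies to any degree-$m$ polynomial, the only restriction being that the exponent be an even integer at least $2$.)

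Next, for any even integer $r \ge 2$, Theorem~\ref{thm:hc} with $m = 4$ gives $\E_{\normal(0,I)}[|q(X)|^r]^{1/r} \le (\sqrt{r-1})^4 \|q\|_2 = (r-1)^2\|q\|_2 \le r^2 \|q\|_2$. Applying Markov's inequality to the nonnegative random variable $|q(X)|^r$ then yields
\[
\Pr_{\normal(0,I)}\left[|q(X)| \ge t\|q\|_2\right] \;\le\; \frac{\E_{\normal(0,I)}[|q(X)|^r]}{t^r\|q\|_2^r} \;\le\; \left(\frac{r^2}{t}\right)^r .
\]

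Finally I would optimize over the (even) exponent $r$. Taking $r$ to be the largest even integer with $r \le \sqrt{t/e}$, we get $r^2/t \le 1/e$, so the right-hand side is at most $e^{-r}$; and once $t$ exceeds a suitable absolute constant $C$, rounding $\sqrt{t/e}$ down to an even integer costs only a factor of two, so $r \ge \tfrac{1}{2}\sqrt{t/e}$. This gives $\Pr_{\normal(0,I)}[|p(X) - \E[p(X)]| \ge t\|p\|_2] \le \exp(-A t^{1/2})$ for all $t \ge C$, with, e.g., $A = \tfrac{1}{2\sqrt e}$ and $C$ an absolute constant. I do not expect any genuine obstacle: the only points requiring care are the parity restriction on the exponent in the hypercontractivity inequality (which is exactly why the bound is asserted only for $t \ge C$, since the rounding argument needs $\sqrt{t/e}$ to be large) and the harmless passage from $\|p\|_2$ to $\|q\|_2$ in the reduction.
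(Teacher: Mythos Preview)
Your proposal is correct and follows essentially the same approach as the paper's own proof: both apply Theorem~\ref{thm:hc} to obtain the moment bound $\E[(p(X)-\E[p(X)])^r] \le (r-1)^{2r}\|p-\E[p]\|_2^r$, then invoke Markov's inequality and optimize over the (even) exponent $r$. The paper gives only a one-line sketch of this argument; your write-up supplies the missing details, including the reduction from $\|p\|_2$ to $\|q\|_2$ and the explicit choice of $r \approx \sqrt{t/e}$.
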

\begin{proof}
Hypercontractivity in particular implies the following moment bound: for all $q \geq 2$ even, we have
\[
\E_{\normal (0, I)} [(p (X) - \E_{\normal (0, I)} [p(X)])^q] \leq (q - 1)^{q m / 2} \| p (X) - \E_{\normal (0, I)} [p(X)] \|_2^q \; .
\]
By a typical moment argument, and optimizing the choice of $q$, this gives the desired bound.
\end{proof}

\paragraph{Hermite polynomials}
Hermite polynomials are what arise by Gram-Schmidt orthogonalization applied with respect to this inner product.
For a vector of non-negative integers $a = (a_1, \ldots, a_d)$, we let $H_a (x) : \R^d \to \R$ be the Hermite polynomial associated with multi-index $a$.
It is well-known that the degree of $H_a$ is $|a| = \sum_{i = 1}^d a_i$, and moreover, $\langle H_a, H_b \rangle = \delta_{a, b}$.
In particular, for any $r \geq 1$, the Hermite polynomials of degree at most $r$ form an orthonormal basis with respect to the Gaussian inner product for all polynomials with degree at most $r$.

Therefore, given any polynomial $p: \R^d \to \R$ with degree $r$, we may write it uniquely as
\[
p(x) = \sum_{|a| \leq r} c_a (p) H_a (x) \mbox{, where } c_a (p) = \langle p, H_a \rangle \; .
\]
We define the $k$th harmonic component of $p$ to be
\[
p^{[k]}(x) = \sum_{|a| = k} c_a (p) H_a (x) \; ,
\]
and we say $p$ is \emph{harmonic} of degree $k$ if it equals its $k$th part.

\subsection{Working with Many Large Eigenvalues of the Second and Fourth Moment}
\label{sec:covFilterCovEig}\label{sec:covFilterCovFourth}
As in the unknown mean case, we will need a filter to detect if there are many directions of the empirical covariance which have too large an eigenvalue.
Formally, we need:
\begin{theorem}
\label{thm:many-eigenvaluesDeg2}
Fix $\eps, \delta > 0$.
Assume $\| \Sigma - I \|_F \leq \xi$, where $\xi = O(\eps \log 1 / \eps)$.
Suppose that  $G_0$ is $(\eps, \delta)$-good with respect to $\normal (0, \Sigma)$.
Let $S$ be a set so that $\Delta (S, G_0) \leq \eps$.
Let $\Shat = \E_S [X X^T]$.
Then there is an algorithm \textsc{FilterCovManyDeg2Eig} and a universal constant $C$ such that the following guarantee holds:
\begin{enumerate}
\item
If $\Shat - I$ has more than $O(\log 1 / \eps)$ eigenvalues larger than $C \xi$, then the algorithm outputs a $S'$ so that $\Delta (S', G_0) < \Delta (S, G_0)$.
\item
Otherwise, the algorithm outputs ``OK'', and outputs an orthonormal basis $v_1, \ldots, v_k$ for the subspace $V$ of vectors spanned by all eigenvectors of $\Shat - I$ with eigenvalue larger than $C \xi$.
\end{enumerate}
\end{theorem}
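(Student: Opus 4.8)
\noindent The plan is to adapt the ``many large eigenvalues'' filter of the unknown-mean case (Theorem~\ref{thm:meanManyEig}) to the second moment. First I would diagonalize $\Shat-I$ and collect the eigenvectors whose eigenvalue exceeds $C\xi$. If there are at most $O(\log 1/\eps)$ of them, output ``OK'' together with an orthonormal basis of their span $V$; since $\|\E_{G_0}[XX^T]-I\|_2 = O(\xi)$ by goodness and $\|\Sigma - I\|_2 \le \xi$, large eigenvalues of $\Shat - I$ can only come from corruption, so this branch needs nothing further. Otherwise fix a subspace $V'\subseteq V$ of dimension exactly $k=\Theta(\log 1/\eps)$ spanned by $k$ of the large eigenvectors, and form the even degree-$2$ polynomial $p(x)=\|\Pi_{V'}(x)\|_2^2-k$, whose associated matrix is $\Pi_{V'}$ with $\|\Pi_{V'}\|_F=\sqrt k$ (since the mean is $0$ here, no analogue of the ``$\tilde\mu$'' step of the mean case is needed). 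The algorithm then searches for a threshold $T$; if found it returns $S'=\{x\in S:p(x)\le T\}$, otherwise it outputs ``OK'' and the basis of $V$.

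\noindent The three quantitative inputs about $p$ are: (i) under the true Gaussian, $\E_{\normal(0,\Sigma)}[p(X)]=\tr(\Pi_{V'}(\Sigma-I))=O(\sqrt k\,\xi)$ by Cauchy--Schwarz and $\|\Sigma-I\|_F\le\xi$, while by Lemma~\ref{lem:poly-props} its variance is $\Theta(k)$, and applying Hanson--Wright (Lemma~\ref{lem:hanson-wright}) to the matrix $\Sigma^{1/2}\Pi_{V'}\Sigma^{1/2}$ (Frobenius norm $O(\sqrt k)$, operator norm $O(1)$) gives $\Pr_{\normal(0,\Sigma)}[p(X)>t]\le\exp(-\Omega(\min(t^2/k,t)))$, hence $\exp(-\Omega(t))$ once $t\gtrsim k$; (ii) by $(\eps,\delta)$-goodness of $G_0$ (the even degree-$2$ conditions, after normalizing $p$ and shifting by a constant), $\E_{G_0}[p]$, $\E_{G_0}[p^2]$ and $\Pr_{G_0}[p>t]$ all lie within the stated tolerances of their $\normal(0,\Sigma)$ values, and in particular no good point has $p$-value larger than $O(d\log(|G_0|/\delta))$; (iii) on the sample, $\E_S[p(X)]=\tr(\Pi_{V'}(\Shat-I))=\sum_{i=1}^k\lambda_i>kC\xi$, each chosen eigenvalue exceeding $C\xi$. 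The crucial point is that, because $\|\Sigma-I\|_F$ is only assumed $\le\xi=O(\eps\log 1/\eps)$, this empirical mean $kC\xi=\Theta(\eps\log^2 1/\eps)$ is larger \emph{by a factor $\Theta(\log 1/\eps)$} than the error scale at which a single large eigenvalue, or the mass one absorbs by merely raising a threshold, would contribute --- this is exactly why $\Theta(\log 1/\eps)$ large eigenvalues are needed and a constant number would not suffice.

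\noindent The heart of the argument is the standard threshold dichotomy. Write $\E_S[p]$ as the contribution of the kept good points --- at most $\tfrac{|G_0|}{|S|}\E_{G_0}[p]$ plus $\tfrac{|L|}{|S|}k=O(\phi k)=O(\eps)$, using $p\ge -k$ to bound the lost points $L$ --- plus $\psi\,\E_{E'}[p]$. Combining (iii) with the upper bounds in (i)--(ii) forces $\tfrac1{|S|}\sum_{x\in E'}p(x)^+\ge\psi\,\E_{E'}[p]\ge\Omega(kC\xi)$: the corrupted points carry $p$-mass of order $kC\xi$. Now suppose no threshold is found: (a) no sample point has $p(x)>C_2 d\log(|S|/\delta)$, and (b) for every $T$ above a cutoff $T^\ast=\Theta(\log 1/\eps)\gtrsim k$ one has $\Pr_S[p(X)>T]\le\exp(-\Omega(T))+\eps^2/(d\log(|S|/\delta))$. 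Integrating the survival function, the corrupted $p$-mass above $T^\ast$ is $\le\int_{T^\ast}^{C_2 d\log(|S|/\delta)}\Pr_S[p>t]\,dt=O(\exp(-\Omega(T^\ast))+\eps^2)$, and below $T^\ast$ it is $\le T^\ast\psi=O(\eps\log 1/\eps)$ --- neither bound depends on $C$. For $C$ a large enough constant and $\eps$ small this contradicts $\Omega(kC\xi)=\Omega(C\,\eps\log^2 1/\eps)$, so some threshold exists. A type-(a) point violates the good-point norm bound, hence is certifiably corrupt, so discarding it lowers $|E'|$ without increasing $|L|$; for a type-(b) threshold, goodness bounds the fraction of \emph{good} points in $\{x\in S:p(x)>T\}$ by $\exp(-\Omega(T))+\eps^2/(d\log(|S|/\delta))$ while strictly more \emph{corrupted} points lie there, with enough multiplicative slack --- obtained, as in the mean case, by nudging $T$ up and using the $\exp(-\Omega(t))$ decay --- that, against the $\log(1/\phi)$ penalty in $\Delta$, $\psi$ drops by more than $\phi\log(1/\phi)$ rises. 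Either way $S'=\{x\in S:p(x)\le T\}$ has $\Delta(S',G_0)<\Delta(S,G_0)$.

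\noindent I expect the main obstacle to be pinning down the order of quantification in the last paragraph: choosing the cutoff $T^\ast$, the eigenvalue constant $C$, and the subspace dimension $k=\Theta(\log 1/\eps)$ so that simultaneously (1) $T^\ast\gtrsim k$ is large enough that the Hanson--Wright tail at $T^\ast$ is $\le\eps^{\Omega(1)}$ (which forces $k$ to sit below the cutoff); (2) the corruption signal $\Omega(kC\xi)$ still dominates \emph{every} clean term --- the $\Theta(\sqrt k)$ fluctuation of $p$, the additive goodness error $\eps^2/(d\log(|S|/\delta))$, the lost-point term $O(\phi k)$, and the below-threshold corrupted mass $\le T^\ast\psi$; and (3) removing the selected level set provably decreases $\Delta$, not merely $\psi$, which is where the $\log(1/\phi)$ penalty and the sub-exponential (rather than sub-Gaussian) tail of degree-$2$ polynomials interact most delicately. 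The cleanest route is to carry all constants symbolically and fix their quantification only at the end.
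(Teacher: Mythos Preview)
Your proposal is correct and takes essentially the same approach as the paper. The paper's proof (Appendix~\ref{sec:many-eigenvaluesDeg2}) defines the identical algorithm with $p(x)=\|\Pi_{V'}(x)\|_2^2-\dim(V')$ and establishes the same chain of claims (large $\E_S[p]$, small $\E_{G_0}[p]$, controlled $\phi\,\E_L[p]$, hence large $\psi\,\E_E[p]$, then threshold existence and $\Delta$-decrease via Hanson--Wright); the one small difference is that for the lost-point term the paper integrates the Hanson--Wright tail as in Claim~\ref{claim:meanLBound}, whereas you use the trivial pointwise bound $p\ge -k$ together with $\phi=O(\eps/\log(1/\eps))$, which is simpler and equally sufficient.
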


\noindent The filter developed here is almost identical to the one developed for unknown mean.
Thus, for conciseness we describe and prove the theorem in Appendix \ref{sec:many-eigenvaluesDeg2}.

We will also need a subroutine to enforce the condition that not only does the fourth moment tensor have spectral norm which is at most $O(\eps \log^2 1/ \eps)$ (restricted to a certain subspace of polynomials), but there can only be at most $O(\poly \log 1/ \eps)$ directions in which the eigenvalue is large.
However, the techniques here are a bit more complicated, for a number of reasons.
Intuitively, the main complication comes from the fact that we do not know what the fourth moment tensor looks like, whereas in the unknown mean case, we knew that the covariance was the identity by assumption. Our main result in this subsection is the following subroutine:
\begin{theorem}[Filtering when there are many large eigenvalues]
\label{thm:many-eigenvalues}
Fix $\eps, \delta > 0$.
Assume $\| \Sigma - I \|_F \leq \xi$, where $\xi = O(\eps \log 1 / \eps)$.
Let $C$ be the universal constant in \textsc{FilterCovManyDeg2Eig}.
Let $W \subseteq \R^d$ be a subspace, so that for all $v \in W$ with $\| v \|_2 = 1$, we have $v^T \E_S [XX^T] v \leq 1 + C \xi$.
Suppose that $G_0$ is $(\eps, \delta)$-good with respect to $\normal (0, \Sigma)$.
Let $S$ be a set so that $\Delta (S, G_0) \leq \eps$.
Let $k = O(\log^4 1/ \eps)$.
Then there is an algorithm \textsc{FilterCovManyDeg4Eig} and universal constants $C_1, C_2$ such that the following guarantee holds:
\begin{enumerate}
\item If there exist $p_1, \ldots, p_k \in \cP_2 (W)$ so that $\langle p_j, p_\ell \rangle = \delta_{j \ell}$ for all $j, \ell$, 
and so that $\E_S [p_j^2 (Y)] - 1 \geq C_1 \eps$ for all $j$, 
then the algorithm outputs an $S'$ so that $\Delta (S', G_0) < \Delta (S, G_0)$.
\item
Otherwise, the algorithm outputs ``OK'', and outputs an orthonormal basis $p_1, \ldots, p_{k'}$ for a subspace $V$ 
of degree-$2$ polynomials in $\cP_2 (W)$ with $k' \leq k$ so that for all $p \in V^\perp \cap \cP_2$, 
we have $\E_S [p^2 (X)] - 1 \leq C_2 \eps$.
\end{enumerate}
Moreover, $\textsc{FilterCovManyEig}$ runs in time $\poly (d, 1 / \eps, \log 1 / \delta)$.
\end{theorem}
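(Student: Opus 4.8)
The plan is to run the same ``spectral{} + {}filter'' template used for the degree-$2$ covariance filter (Theorem~\ref{thm:many-eigenvaluesDeg2}) and the unknown-mean filter (Theorem~\ref{thm:meanManyEig}), but one Hermite level higher, so that degree-$2$ polynomials play the role that linear functions played there and degree-$4$ polynomials play the role of quadratic forms. The key observation is that, by Lemma~\ref{lem:poly-props}(iv) (which, through Theorem~\ref{thm:fourth-order}, writes $\E_{\normal(0,\Sigma)}[p_M^2]$ as a quadratic form in $M^\flat$), the map $p_M \mapsto \E_S[p_M(X)^2]$ is an \emph{explicit} PSD quadratic form $\widehat Q$ on the space $U = \{M^\flat : M \text{ symmetric and supported on } W\}$, assembled from the empirical second and fourth moments of $S$ restricted to $W$. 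So \textsc{FilterCovManyDeg4Eig} diagonalizes $\widehat Q$ on $U$, lets $V$ be the span of the eigenvectors with eigenvalue above a threshold of the form $1 + \Theta(\eps)$, and: if $\dim V \le k$ it outputs ``OK'' together with an orthonormal basis of $V$ (viewed as polynomials in $\cP_2(W)$); since $\E_S[p^2] = \widehat Q(p^\flat)$ exactly, every $p \in V^\perp \cap \cP_2$ then has $\E_S[p^2] - 1 \le C_2 \eps$, which is the case-$2$ guarantee. If instead too many eigenvalues are large, it picks orthonormal $p_1,\dots,p_k \in V$ (so $\E_S[p_j^2] - 1 \ge C_1 \eps$ for each $j$), forms the degree-$4$ polynomial $q(x) = \sum_{j=1}^k p_j(x)^2$, and returns $S' = \{x \in S : q(x) \le T\}$ for a threshold $T$ chosen as below. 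The entire content of the theorem is in this second branch.

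The crucial structural input for the second branch is that $\|\Sigma - I\|_F \le \xi$ sharply limits how much the \emph{true} distribution can inflate $\sum_{j\le k} \E_{\normal(0,\Sigma)}[p_j^2]$ over orthonormal $p_j$: by Lemma~\ref{lem:poly-props}(iv), $\E_{\normal(0,\Sigma)}[p_j^2] - 1 = {p_j^\flat}^\top(\Sigma^{\otimes 2}-I)p_j^\flat + \tfrac12\langle p_j^\flat,(\Sigma-I)^\flat\rangle^2$, so summing over the orthonormal family and using that the $p_j^\flat$ form an orthonormal system gives a bound by $\tfrac12\|\Sigma-I\|_F^2$ plus the sum of the top $k$ positive eigenvalues of $\Sigma^{\otimes 2}-I$ restricted to symmetric matrices; since $\sum_i (\lambda_i-1)^2 \le \xi^2$, only very few eigenvalues of $\Sigma^{\otimes 2}-I$ can be appreciably large (a counting argument on the products $\lambda_i\lambda_j-1$), and this number is far below $k = O(\log^4 1/\eps)$. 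Combining this with the multiplicative degree-$2$ moment bound from the good-set conditions shows $\E_{G_0}[q] \le k(1+o(1)\cdot C_1\eps)$, while by hypothesis $\E_S[q] = \sum_j \E_S[p_j^2] \ge k(1+C_1\eps)$; after accounting for the $O(\eps/\log 1/\eps)$ fraction of lost good points this still forces $\tfrac{|E'|}{|S|}\E_{E'}[q]$ to be large, i.e. the corrupted points must collectively carry a $q$-mass that is much larger than what $\le \eps|S|$ points drawn from $\normal(0,\Sigma)$ could carry. This quantitative gap is exactly what $k$ being polylogarithmic in $1/\eps$ buys us.

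Given that gap, the threshold $T$ is produced as in the earlier filters: a dyadic pigeonhole over scales shows there is some $T$ at which $\Pr_S[q > T]$ exceeds $\Pr_{G_0}[q > T]$ by the required margin, and we bound $\Pr_{G_0}[q > T]$ using the good-set degree-$4$ tail condition together with degree-$4$ hypercontractivity (Corollary~\ref{cor:hypercontractivity}), whose stretched-exponential tail $\exp(-A t^{1/2})$ lets us push $T$ slightly higher so that only an $O(\eps/\log 1/\eps)$ fraction of good points is removed while an $\omega(\log 1/\eps)$-times larger fraction of corrupted points is removed. This makes $\psi$ drop by more than the logarithmic penalty incurred in $\phi\log(1/\phi)$, so $\Delta(S',G_0) < \Delta(S,G_0)$ as required; the runtime is $\poly(d,1/\eps,\log 1/\delta)$ since the algorithm only performs eigendecompositions of $d^2$-dimensional matrices and draws $\poly(1/\eps)$ fresh samples (as in \textsc{LearnMeanChiSquared}) to estimate the relevant Gaussian quantities. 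I also need the auxiliary reduction that if $\widehat Q$ has a \emph{single} enormous eigenvalue, the degree-$2$/Hanson--Wright type filter (Lemma~\ref{lem:hanson-wright}) handles it directly, so that in the main argument all eigenvalues of $\widehat Q$ may be assumed only mildly large.

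The main obstacle will be the structural lemma with the right constants and the bookkeeping around it: establishing that the number of orthonormal degree-$2$ polynomials with $\E_{\normal(0,\Sigma)}[p^2]-1$ above the chosen scale is genuinely $o(k)$ for every $\Sigma$ with $\|\Sigma-I\|_F \le \xi$, and propagating that bound through the good-set conditions and through the passage from $G_0$ to $G' = G_0 \setminus L$, which is delicate precisely because $q$ is an unbounded degree-$4$ polynomial so the contribution of the lost points cannot be controlled by a uniform bound and must itself be estimated via the degree-$4$ tail. A secondary subtlety is making the dichotomy clean: a polynomial with large empirical variance need not lie in the span of the top eigenvectors of $\widehat Q$, so the threshold for $V$ and the constants $C_1 < C_2$ must be separated by enough slack that ``$k$ orthonormal directions with $\E_S[p_j^2]-1 \ge C_1\eps$ exist'' really does imply ``$\dim V > k$'' (modulo the single-huge-eigenvalue reduction above).
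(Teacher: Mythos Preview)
Your high-level template matches the paper's, but there is a genuine gap: you filter on $q(x)=\sum_{j\le k}p_j(x)^2$, whereas the paper filters on $Q(x)=\sum_{j\le k}q_j(x)$ where $q_j=(p_j^2)^{[4]}$ is the degree-$4$ \emph{harmonic} part of $p_j^2$. This distinction is exactly the crux of the proof. The filter argument needs a variance bound on the test polynomial, and the paper proves the nontrivial Claim~\ref{claim:var-bound} that $\|Q\|_2=O(\sqrt{k})$ via a tensor computation on $\mathrm{Sym}(A_i\otimes A_i)$. Your $q$ does not enjoy this bound: take $p_j(x)=x_1x_j$ for $j=2,\dots,k{+}1$ (orthonormal in $\cP_2$); then $q(x)=x_1^2\sum_{j}x_j^2$ has $\Var_{\normal(0,I)}[q]=\Theta(k^2)$, i.e.\ $\|q\|_2=\Theta(k)$. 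With $\|q\|_2=\Theta(k)$ the hypercontractive tail (Corollary~\ref{cor:hypercontractivity}) forces a threshold $T=\Theta(k\log^2(1/\eps))$, so the lost-good-points term $\phi\cdot T=\Theta\!\big((\eps/\log(1/\eps))\cdot k\log^2(1/\eps)\big)$ is at least as large as the mean gap $\Theta(\eps k)$ (or $\Theta(\xi k)$) you are trying to exploit, and the $\Delta$-decreasing filter no longer closes.

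The paper resolves this by writing $p_j^2=q_j+r_j$ with $r_j$ of degree $\le 2$, and handling the two pieces separately. The degree-$\le 2$ parts $r_j$ correspond to matrices $2\sqrt{2}A_j^2$ of trace norm $O(1)$, so $\E_S[r_j]-\E_{\normal(0,\Sigma)}[r_j]$ is controlled directly by the hypothesis $v^\top\E_S[XX^\top]v\le 1+C\xi$ on $W$ (Claim~\ref{claim:ri-bound}); you never use this hypothesis, which is a tell that something is missing. After peeling off the $r_j$, the remaining $Q=\sum q_j$ carries essentially the full mean gap (Claim~\ref{claim:mu-bound}) \emph{and} has $\|Q\|_2=O(\sqrt{k})$, which is what lets the threshold sit at $T=\Theta(\sqrt{k}\log^2(1/\eps))$ and makes the lost-points term $O(\eps\sqrt{k}\log(1/\eps))\ll \Theta(\xi k)$. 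Your ``auxiliary reduction'' via a single-huge-eigenvalue Hanson--Wright filter does not help here: the obstruction in the $p_j=x_1x_j$ example is not one rogue direction in $\widehat Q$ but the aggregate degree-$2$ mass of $\sum r_j$ piling onto $x_1^2$, and that is exactly what the Hermite decomposition plus the $W$-assumption are designed to neutralize.
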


Roughly, we will show that if there are many polynomials with large empirical variance, 
this implies that there is a degree-four polynomial whose value is much larger than it could be if $w$ 
were the set of uniform weights over the uncorrupted points.
Moreover, we can explicitly construct this polynomial, 
and it has a certain low-rank structure which allows us to use the concentration bounds we have previously derived.

\begin{algorithm}[htb]
\begin{algorithmic}[1]
\Function{FilterCovManyEig}{$S, \eps, \xi, \delta, W$}
\State Let $\Shat = \E_S [X X^T]$
\State Let $C_1, C_2, C_3$ be some universal constants sufficiently large
\State Let $A$ be the constant in Corollary \ref{cor:hypercontractivity}
\State Let $B$ be the constant in Claim \ref{claim:var-bound}
\State Let $m = 0$
\State Let $k = O(\log^4 1/ \eps)$
\While{there exists $p \in \cP_2 (W)$ so that $p \in V^\perp$ and $\E_S [p^2(X)] - 1 > C_1 \xi$} \label{line:frob-check1}
	\State Let $V_{m + 1} = \mathrm{span} (V_m \cup p)$
	\State Let $m \gets m + 1$
\EndWhile
\State Let $p_1, \ldots, p_{m}$ be an orthonormal basis for $V_{m}$
\If{$m \geq k$}
	\State Let $q_i = {(p_i^2)}^{[4]}$  be the $4$th harmonic component of $p_i^2$
	\State Let $r_i = p_i^2 - q_i$ be the degree-$2$ component of $p_i^2$
	\State Let $Q(x) = \sum_{i = 1}^{k} q_i$
	\State Find a $T$ so that either:
	\begin{itemize}
		\item $T>C_3 d^2 \sqrt{k} \log(|S|)$ and $p(X)>T$ for at least one $x\in S'$, OR
		\item $T>4 A^2 C_2 B \sqrt{k} \log^2 (1/ \eps)$ and
		\[
		\Pr_{X\in_u S}[Q(X)>T] > \exp(- A (T / 4 B \sqrt{k})^{1/2}) + \eps^2/(d\log(|S| / \delta))^2 \; .
		\]
	\end{itemize}
	\State {\bf return} the set $S' = \{X \in S: Q(X) \leq T \} $
\Else
	\State {\bf return} ``OK'', and output $p_1, \ldots, p_{m}$
\EndIf

\EndFunction
\end{algorithmic}
\caption{Filter if there are many large eigenvalues of the fourth moment tensor}
\label{alg:filter-cov-manyEig}
\end{algorithm}


\subsection{Stitching Together Two Subspaces}
\label{sec:stitching}
This section is dedicated to giving an algorithm which allows us to fully reconstruct the covariance matrix given that we know it up to small error on a low-dimensional subspace $V$ and on $W = V^\perp$.

\begin{theorem}
\label{thm:stitching}
Let $1>\xi>\eta>\eps>0$, and let $\tau > 0$.  Let $\Sigma$ so that $\|\Sigma-I\|_F \leq \xi$. 
Suppose that $\R^d$ is written as $V\oplus W$ for orthogonal subspaces $V$ and $W$ with $\dim(V) = O(\log(1/\epsilon))$. Suppose furthermore that
$$
\Sigma = \left[ \begin{matrix} \Sigma_V & A^T \\ A & \Sigma_W \end{matrix}\right] \; ,
$$
with $\|\Sigma_V-I_V\|_F, \|\Sigma_W-I_W\|_F = O(\eta)$.
Let $S_0 = (G_0, E_0)$ be an $\eps$-corrupted set of samples from $\normal (0, \Sigma)$, and let $S \subseteq S_0$ with $\Delta (S, G) \leq O(\eps)$ of size $\poly (d, 1 / \eta, \log 1 / \delta)$.

Then, there exists a universal constant $C_5$ and an algorithm $\textsc{Stitching}$ that given $V,W,\xi,\eta,\eps, \tau$ and $S$ runs in polynomial time and with probability at least $1 - \tau$ returns a matrix $\Sigma_0$ with $\|\Sigma_0-\Sigma\|_F = C_5 \eta+O(\xi^2)$.
\end{theorem}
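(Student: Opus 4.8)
The plan is to reduce the recovery of the off-diagonal block $A$ to the robust mean estimation problem we have already solved (Theorem~\ref{thm:mean-main2}), and to handle the diagonal blocks trivially. Since $\|\Sigma_V - I_V\|_F,\|\Sigma_W - I_W\|_F = O(\eta)$, taking $I_V$ and $I_W$ (or cheap robust low-dimensional estimates via Corollary~\ref{cor:covLowD}) for the diagonal blocks of $\Sigma_0$ contributes only $O(\eta)$ to the final Frobenius error; so by the block decomposition $\|\Sigma_0 - \Sigma\|_F \le O(\eta) + \sqrt{2}\,\|\widehat A - A\|_F$, and the entire content is in estimating $A$.

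Fix a unit vector $v \in V$ (extended by zero to $\R^d$), and write $\sigma_v^2 = v^T\Sigma v = v^T\Sigma_V v = 1 + O(\eta)$. The key observation is that the joint law of $(\langle v, X\rangle, \Pi_W X)$ for $X \sim \normal(0,\Sigma)$ has $\mathrm{Cov}(\Pi_W X, \langle v, X\rangle) = Av$, so conditioning on $\langle v, X\rangle$ shifts the mean of $\Pi_W X$ in exactly the direction $Av$. Concretely, I would perform rejection sampling: given $S$, accept each point $X$ with probability $\exp(-(\langle v, X\rangle - m)^2/(2\sigma_0^2))$ for constants $m = \Theta(1)$, $\sigma_0 = \Theta(1)$, and on acceptance output $\Pi_W X$. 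Using the deterministic regularity conditions on $G_0$ one checks two things. First, the accepted good points are distributed exactly as the $W$-marginal of the reweighted Gaussian, which is $\normal(c\,Av,\ \Sigma_W + R)$ where $c = m/(\sigma_v^2 + \sigma_0^2) = \Theta(1)$ is a constant known up to relative error $O(\eta)$ (since $\sigma_v^2$ is), and $R$ is the rank-one Schur-complement correction, of spectral norm $O(\|A\|_2^2) = O(\xi^2)$; in particular the covariance is $I_W + O(\eta + \xi^2)$ spectrally. Second, because good points have $\langle v, X\rangle$ concentrated at scale $O(1)$, the fraction of good points accepted is $\Theta(1)$, while corrupted points contribute weight at most $1$ each, so the effective corruption rate among the accepted points is $O(\eps)$. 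Feeding the accepted $W$-coordinates into \textsc{RecoverMeanNoisy} with spectral-noise parameter $\chi = O(\eta + \xi^2)$ returns an estimate of $c\,Av$ with $\ell_2$-error $O(\eps + \eta + \xi^2)$; dividing by $c$, and accounting for the $O(\eta)$ relative error in $c$ (which costs $O(\eta)\|Av\|_2 = O(\eta\xi) = O(\xi^2)$), yields $\|\widehat{Av} - Av\|_2 = O(\eta + \xi^2)$.

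Applying this subroutine to each vector $v$ in an orthonormal basis of $V$ reconstructs $\widehat A$; since $\dim(V) = O(\log 1/\eps)$ is only logarithmic, the running time is polynomial in $d$, $1/\eps$, $1/\eta$, and $\log 1/\delta$, and the per-direction errors combine to $\|\widehat A - A\|_F = O(\eta + \xi^2)$ (if one is unwilling to absorb a $\sqrt{\log 1/\eps}$ factor one instead probes $\poly(d)$ random directions in $V$ and reconstructs $A$ by least squares). Assembling $\Sigma_0$ from these estimates and using the block Pythagorean identity gives $\|\Sigma_0 - \Sigma\|_F = C_5\eta + O(\xi^2)$, and a union bound over the invocations of the mean-estimation subroutine, each run with failure probability $\tau/\poly$, yields overall success probability $1 - \tau$.

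I expect the main obstacle to be the two facts in the rejection-sampling step: proving, using only the deterministic regularity conditions on $G_0$ rather than genuine randomness, that the reweighting neither distorts the law of the accepted good points away from the intended Gaussian nor inflates the corruption level by more than a constant factor, while simultaneously keeping the Schur-complement perturbation $R$ and the bias from imperfect knowledge of $\sigma_v^2$ at the level $O(\xi^2)$. This is what forces $\sigma_0$ and $m$ to be constant-order (so that $c$ stays bounded away from zero and the acceptance probability stays bounded below), and requires carefully tracking how the parameter $\Delta(S, G_0)$ transforms under the reweighting so that Theorem~\ref{thm:mean-main2} applies to the simulated instance.
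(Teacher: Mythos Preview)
Your high-level reduction---rejection sampling to turn the off-diagonal block $A$ into a robust mean-estimation problem on $W$---is exactly the idea the paper uses, and your observation that a \emph{one-dimensional} conditioning on $\langle v,X\rangle$ keeps the acceptance probability at $\Theta(1)$ (so the effective corruption stays $O(\eps)$ for every $v$) is correct and in fact sidesteps the corruption-concentration issue that the paper has to work around.

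There is, however, a genuine gap in the reconstruction step. With your scheme you obtain, for each \emph{unit} $v\in V$, an estimate of $Av$ with $\ell_2$ error $O(\eta+\xi^2)$. Assembling these over an orthonormal basis of $V$ gives
\[
\|\widehat A-A\|_F^2=\sum_{j=1}^{\dim V}\|\widehat{a}_j-Av_j\|_2^2\le \dim(V)\cdot O(\eta+\xi^2)^2,
\]
so $\|\widehat A-A\|_F=O(\sqrt{\log 1/\eps})\,(\eta+\xi^2)$, not $C_5\eta+O(\xi^2)$ with $C_5$ universal. Your parenthetical fix (probe many random directions and do least squares) does not remove this factor: the per-direction errors are driven by the same fixed adversarial points, so they can be aligned as a linear function of $v$ (e.g.\ $e_v\approx \delta\,Uv$ for a fixed $U$), and then least squares still returns $\widehat A-A\approx \delta U$ with $\|\delta U\|_F=\sqrt{\dim V}\,\delta$. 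This extra $\sqrt{\log 1/\eps}$ propagates through the \textsc{ImproveCov} recursion and stalls it at $\xi=\Theta(\eps\sqrt{\log 1/\eps}\cdot\log\log 1/\eps)$ rather than $O(\eps)$.

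The paper avoids this by conditioning on the \emph{full} $V$-projection: it accepts $x$ with probability $\exp(-\|x_V-v\|_2^2/2)$ for $v\sim\normal(0,2I_V)$, so the target mean is $Mv$ for a single fixed matrix $M\approx A/2$, and then recovers $M$ by $\ell_1$ regression $\min_B \E_v\|a_v-Bv\|_2$. The key identity $\E_{v\sim\normal(0,2I_V)}\|Uv\|_2=\Theta(\|U\|_F)$ (no dimension factor, because $v$ is Gaussian rather than on the unit sphere) then yields $\|B-M\|_F=O(\eta)$ directly. The price is that full-$V$ conditioning has acceptance rate $\poly(\eps)$, so corruption can pile up at particular $v$'s; the paper shows $\E_v[\eps_v]=O(\eps)$ and exploits the robustness of $\ell_1$ regression to outlier $v$'s. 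Either adopt this full-$V$ conditioning plus $\ell_1$ regression, or give a real argument that your one-dimensional errors cannot conspire linearly in $v$---as written, the proposal does not prove the stated bound.
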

In the latter, we will show the algorithm works when $\tau = 2/3$.
As usual the probability of success can be boosted by repeating it independently.\footnote{Observe the only randomness at this point is in the random choices made by the algorithm. 
Thus, one can just run this algorithm $O(\log 1 / \delta)$ times to obtain $\Sigma_0^{(1)}, \ldots, \Sigma_{0}^{(\ell)}$ and find any $ \Sigma_{0}^{(j)}$ which is $O(\eta + \xi^2)$ close to at least a $2/3 + o(1)$ fraction of the other outputs.} 
The basic idea of the proof is as follows. 
Since we already know good approximations to $\Sigma_V$ and $\Sigma_W$, it suffices to find an approximation to $A$.
In order to do this, we note that if we take a sample $x$ from $G$ conditioned on its projection to $V$ being some vector $v$, we find that the distribution over $W$ is a Gaussian with mean approximately $Av$. 
Running our algorithm for approximating the mean of a noisy-Gaussian, we can then compute the mapping $v \rightarrow Av$, which will allow us to compute $A$.

There are three main technical obstacles to this approach. The first is that we cannot condition on $x_V$ taking a particular value, 
as we will likely see no samples from $X$ with exactly that projection. Instead, what we will do is given samples from $X$ we will reject them with probabilities depending on their projections to $V$ in such a way to approximate the conditioning we require. 
The second obstacle is that the errors in $X$ may well be concentrated around some particular projection to $V$. 
Therefore, some of these conditional distributions may have a much larger percentage of errors than $\eps$. 
To circumvent this, we will show that by carefully choosing how we do our conditioning and by carefully picking 
the correct distribution over vectors $v$, that on average these errors are only $O(\eps)$. 
Finally, we need to be able to reconstruct $A$ from a collection of noisy approximations to $Av$. 
We show that this can be done by computing these approximations at a suitably large random set of $v$'s,
and finding the matrix $A$ that minimizes the average $\ell_2$ error between $Av$ and its approximation.

Our algorithm is given in Algorithm \ref{alg:stitching}:

\begin{algorithm}[htb]
\begin{algorithmic}[1]
\Function{Stitching}{$V, W, \delta, \eps, \tau, S$}
\State Given a vector $x$, let $x_V$ and $x_W$ be the projections onto $V$ and $W$, respectively. 
\parState {Let $C$ be a sufficiently large constant (where $C$ may depend on the constants in the big-$O$ terms in the guarantee that $\dim(V) = O(\log(1/\epsilon))$).}
\State Generate a set $V = \{ v_1, \ldots,  v_m \}$ of $(n / \eps)^C$ independent random samples from $\normal (0, 2 I_V)$. \label{SStep}
\For{$v\in V$} \label{fStep} 
	\State For each sample $x \in S$, add $x_W$ to a new set $T$ independently with probability 
	\[
	\exp(-\|x_V-v\|^2/2) \; .
	\]
	\parState {Treat $T$ as a collection of independent samples from a noisy Gaussian with covariance matrix $I_W+O(\eta)$.}
	\parState {Set $a_v$ equal to $0$ if $T$ did not contain enough samples for our algorithm or if $\|\tilde \mu\|_2>C \log(1/\eps)$.}
	\For{$\eps \in \{1,1/2,1/4,1/8,\ldots,\eta\}$}
		\State Let $\mutilde$ be the output of $\textsc{RecoverMeanNoisy} (T, \eps, (\eps/n)^{2C}, o(1), O(\eta))$.
	\EndFor
	\State Run \textsc{Tournament}	 with the output hypotheses.
	\State Set $a_v=\tilde  \mu$, where $\mutilde$ is the winning hypothesis.
\EndFor
\parState {Use linear programming to find the $\dim(W)\times \dim(V)$-matrix $B$ that minimizes the convex function\label{LPStep}
$
\E_{v\in_u S}[|a_v-Bv|].
$}
\State {\bf return}
$$
\Sigma_0 = \left[ \begin{matrix} I_V & 2B^T \\ 2B & I_W \end{matrix}\right].
$$

\EndFunction
\end{algorithmic}
\caption{Stitching the two subspaces together}
\label{alg:stitching}
\end{algorithm}

Before we prove Theorem \ref{thm:stitching}, we will need the following definition.
\begin{definition}
A function $f: \R^d \to R$ is a \emph{positive measure} if $f \geq 0$ and $\int f \leq 1$.
We write $| f |_1 = \int f$, and we say $f_1 \leq f_2$ if $f_1 (x) \leq f_2(x)$  pointwise.
\end{definition}

\begin{proof}[Proof of Theorem \ref{thm:stitching}]

Throughout this proof, let $G = \normal (0, \Sigma)$. It is clear that this algorithm has polynomial runtime and sample complexity. We have yet to show correctness. The first thing that we need to understand is the procedure of rejection sampling, where we reject $x$ except with probability $\exp(-\|x_V-v\|^2/2)$. Therefore, given a distribution $D$, we let the positive measure $D_v$ be what is obtained by sampling from $D$ and accepting a sample $x$ only with probability $\exp(-\|x_V-v\|^2/2)$. We need to understand the distributions $G_v$ and $X_v$.

Note that the pdf of $G$ is given by
$$
(2\pi)^{-n/2} \sqrt{\det(\Sigma^{-1})} \exp(-x\Sigma^{-1}x/2) dx.
$$
Therefore, the density of $G_v$ is
\begin{align*}
  & (2\pi)^{-n/2} \sqrt{\det(\Sigma^{-1})} \exp(-x\Sigma^{-1}x/2-\|x_V-v\|_2^2/2) dx\\
= & (2\pi)^{-n/2} \sqrt{\det(\Sigma^{-1})} \exp(-x(\Sigma^{-1}+I_V)x/2+x\cdot v-\|v\|_2^2/2) dx.
\end{align*}
Note that letting $\mu_v := (\Sigma^{-1}+I_V)^{-1}v=v/2+O(\delta \|v\|_2)$ that this equals
\begin{align*}
= & (2\pi)^{-n/2} \sqrt{\det(\Sigma^{-1})} \exp(-(x-\mu_v)(\Sigma^{-1}+I_V)(x-\mu_v)/2-\| v \|_2^2/2+\mu_v\cdot v/2) dx\\
= & (2\pi)^{-n/2} \sqrt{\det(\Sigma^{-1})} \exp(-(x-\mu_v)(\Sigma^{-1}+I_V)(x-\mu_v)/2-\| v \|_2^2(1+O(\delta))/4) dx.
\end{align*}
Note that this is a Gaussian with mean $\mu_v$ weighted by
$$
\sqrt{\frac{\det(\Sigma^{-1})}{\det(\Sigma^{-1}+I_V)}}\exp(-\| v \|_2^2(1+O(\delta))/4) = \Theta(2^{-\dim(V)/2}\exp(-\| v \|_2^2/4)) \;,
$$
so long as $\| v \|_2 \ll \delta^{-1/2}$. Therefore, if this condition holds, a random sample from $G$ is accepted by this procedure with probability $\Theta(2^{-\dim(V)}\exp(-\| v \|_2^2/4)$.

We also need to understand the fraction of samples from $X_v$ that are erroneous. 
In a slight abuse of notation, let $E$ also denote the distribution which is uniform over the points in $E$, and let $L$ be the distribution which is uniform over the points in $L$.
Therefore, we define
$$
\eps_v := 2^{n/2}\exp(\| v \|_2^2/4) (|E_v|_1 + |L_v|_1 \log(1/\eps)) \;,
$$
that is approximately the fraction of samples from $X_v$ that are errors (where subtractive errors are weighted more heavily).

Note that so long as $\| v \|_2<\sqrt{C\log(1/\eps)}$ that so long as $|L_v|_1 = o(1)$ that $T$ in Step \ref{fStep} will have sufficiently many samples with high probability. Furthermore, if this is the case, we have (assuming our mean estimation algorithm succeeds) that $\|\tilde \mu-\pi_W(\mu_v)\|_2 \leq O(\eps_v+\eta)$. 
Therefore, unless $\|\mu_v\|_2=\Omega(C\log(1/\eps))$ (which can only happen in $\| v \|_2=\Omega(C\log(1/\eps))$), 
we have with high probability that $\|a_v -\pi_W(\mu_v)\|_2 = O(\eps_v+\eta)$. 
In fact, we assume that this holds for all $v\in S$ with $\| v \|_2\ll(C\log(1/\eps))$.

In order to show that this is generally a good approximation, we need to know that $\eps_v$ is not too large on average. In particular, we show:

\begin{lemma}
If $v \sim N(0,2I_V)$, then $\E_v[\eps_v] = O(\eps)$.
\end{lemma}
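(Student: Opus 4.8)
The plan is to evaluate $\E_v[\eps_v]$ directly, by linearity of expectation together with a single Gaussian integral. Write $k=\dim(V)$ and recall that
$$
\eps_v = 2^{k/2}\,\exp(\|v\|_2^2/4)\,\Big(|E_v|_1 + |L_v|_1\log(1/\eps)\Big),
$$
where, for fixed $v$, $|E_v|_1 = \tfrac{1}{|S|}\sum_{x\in E}\exp(-\|x_V-v\|_2^2/2)$ and $|L_v|_1 = \tfrac{1}{|S|}\sum_{x\in L}\exp(-\|x_V-v\|_2^2/2)$ are the masses the rejection step retains from the added errors $E$ and from the lost good points $L$. The first step is the observation that, since $v\sim\normal(0,2I_V)$ has density $(4\pi)^{-k/2}\exp(-\|v\|_2^2/4)$ on $V$, the factor $\exp(\|v\|_2^2/4)$ in $\eps_v$ is exactly the reciprocal of this Gaussian weight, so that for every fixed $y\in V$,
$$
\E_{v\sim\normal(0,2I_V)}\!\left[\exp(\|v\|_2^2/4)\,\exp(-\|y-v\|_2^2/2)\right] = (4\pi)^{-k/2}\!\int_V \exp(-\|y-v\|_2^2/2)\,dv = (4\pi)^{-k/2}(2\pi)^{k/2} = 2^{-k/2},
$$
using only translation invariance of Lebesgue measure. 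The crucial feature is that this value is \emph{independent of} $y$; this is precisely why sampling $v\sim\normal(0,2I_V)$, rather than conditioning on a fixed projection, is what makes the \emph{averaged} corruption level small.

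Given this identity, the rest is bookkeeping. Applying it with $y=x_V$ for each $x\in E$ and each $x\in L$, and interchanging the finite sums with the expectation (everything is nonnegative, so this is justified), we get
$$
\E_v[\eps_v] = 2^{k/2}\cdot 2^{-k/2}\left(\frac{|E|}{|S|} + \frac{|L|}{|S|}\log(1/\eps)\right) = \psi(S,G) + \phi(S,G)\log(1/\eps).
$$
Now $\psi(S,G)\le\Delta(S,G)\le O(\eps)$ by hypothesis, and $\phi(S,G)\le O(\eps/\log(1/\eps))\le\eps$ for $\eps$ sufficiently small (this is the observation recorded just after \eqref{eq:Delta}), so $\phi(S,G)\log(1/\eps)\le\phi(S,G)\log(1/\phi(S,G))\le\Delta(S,G)\le O(\eps)$. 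Combining, $\E_v[\eps_v]\le O(\eps)$, as claimed.

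I do not expect a genuine obstacle; the two points needing care are both of a checking-the-constants nature. First, the normalization $2^{k/2}$ in the definition of $\eps_v$ must be exactly the reciprocal of the $2^{-k/2}$ produced by the Gaussian integral, which in turn is tied to the choice of variance $2$ for the sampling distribution of $v$ (and to the acceptance weight $\exp(-\|x_V-v\|_2^2/2)$); these are deliberately calibrated, and a discrepancy by a factor exponential in $\dim(V)$ would degrade the conclusion from $O(\eps)$ to $\poly(1/\eps)\cdot\eps$, which would be useless downstream. Second, the lost points $L$ enter $\eps_v$ with an extra $\log(1/\eps)$ weight — reflecting the logarithmic penalty for subtractive errors in the definition of $\Delta$ — and this is exactly absorbed by the bound $|L|\le O(\eps|S|/\log(1/\eps))$ that $\Delta(S,G)\le O(\eps)$ supplies for free.
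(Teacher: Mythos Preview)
Your proposal is correct and takes essentially the same approach as the paper: both compute the Gaussian integral $\int_V \exp(-\|y-v\|_2^2/2)\,dv = (2\pi)^{k/2}$ after the $\exp(\pm\|v\|_2^2/4)$ factors cancel against the density of $\normal(0,2I_V)$, then use linearity over the points in $E$ and $L$ to conclude $\E_v[\eps_v] = \psi + \phi\log(1/\eps) = O(\eps)$. Your version is in fact more explicit about the normalizations and about why the final quantity is $O(\eps)$ than the paper's rather terse computation.
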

\begin{proof}
Letting $F=E+\log(1/\epsilon)L$, we have that
\begin{align*}
\E_v[\eps_v] & = (2\pi)^{-\dim(V)/2} \int \exp(\| v \|_2^2/4)\exp(-\| v \|_2^2/4)dF_v dv\\
& = (2\pi)^{-\dim(V)/2} \int \exp(-\|x-v\|^2/2)dF(x)dv = \int dF  = O(\eps).
\end{align*}
\end{proof}

Note that $\pi_W(\mu_v) = Mv$, 
where $M=\pi_W(\Sigma^{-1}+I_V)^{-1}\pi_V^T$. Let $B$ be any $\dim(W)\times \dim(V)$ matrix. Note that
$$
\E_{v \sim N(0,2I_V)}[\|Bv-Mv\|_2^2] = 2\|B-M\|_F^2.
$$
Therefore, $\E_{v \sim N(0,2I_V)}[\|Bv-Mv\|_2] = O(\|B-M\|_F)$.
Since $\|Bv-Mv\|_2^2$ is a degree-$2$ polynomial in $v$, by Corollary 4 in \cite{Kane12},
$$\Pr_{v \sim N(0,2I_V)}\left[ \|Bv-Mv\|_2>\|B-M\|_F/2 \right] > 
\Pr_{v \sim N(0,2I_V)} \left[ \|Bv-Mv\|_2^2> \sqrt{\E_{v \sim N(0,2I_V)}[\|Bv-Mv\|_2^4]}/2 \right]>1/81.$$ 
Combining these we find that
$$
\E_{v \sim N(0,2I_V)}[\|Bv-Mv\|_2] = \Theta(\|B-M\|_F) \;.
$$

Next we show that our choice of $S$ derandomizes this result.
\begin{lemma}\label{S1Lem}
With high probability over the choice of $S$, we have that
$$
\E_{v\in_u S}[\|Bv-Mv\|_2] = \Theta(\|B-M\|_F).
$$
for all $\dim(W)\times \dim(V)$ matrices $B$.
\end{lemma}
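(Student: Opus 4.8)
The plan is to derandomize the distributional bound $\E_{v \sim \normal(0, 2I_V)}[\|Bv - Mv\|_2] = \Theta(\|B - M\|_F)$ just established, and the point that makes this feasible with only $|S| = (n/\eps)^C$ samples is the hypothesis $\dim(V) = O(\log(1/\eps))$. By positive homogeneity in $B - M$ it suffices to treat all $B$ with $\|B - M\|_F = 1$. Writing $P = (B - M)^{\top}(B - M)$, we have $\|Bv - Mv\|_2 = \sqrt{v^{\top} P v}$, so both the empirical functional $g_S(P) := \E_{v \in_u S}[\sqrt{v^{\top} P v}]$ and its population version $g(P) := \E_{v \sim \normal(0, 2I_V)}[\sqrt{v^{\top} P v}]$ depend on $B$ \emph{only} through the $\dim(V) \times \dim(V)$ PSD matrix $P$, which lies in the compact convex set $\mathcal{K} = \{ P \succeq 0 : \tr(P) = 1 \}$ of dimension $O(\dim(V)^2) = O(\log^2(1/\eps))$. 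Since we showed above (Cauchy--Schwarz for the upper bound, Corollary 4 of \cite{Kane12} for the lower bound) that $g(P) = \Theta(1)$ uniformly on $\mathcal{K}$, it suffices to prove that $\sup_{P \in \mathcal{K}} |g_S(P) - g(P)|$ is at most a sufficiently small absolute constant, with high probability over $S$.

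Fix $P \in \mathcal{K}$. Then $g_S(P)$ is the average of $|S|$ i.i.d.\ copies of $\sqrt{v^{\top} P v}$, each with mean $g(P)$. Conditioning on the high-probability event that $\|v_i\|_2 \leq R$ for all $v_i \in S$, where $R = O(\sqrt{\dim(V)\, \log |S|})$, every summand is bounded by $R$ (since $\|P\|_2 \leq \tr(P) = 1$), so Hoeffding's inequality, applied to the truncated variables $\min(\sqrt{v^{\top} P v}, R)$ which agree with $\sqrt{v^{\top} P v}$ off a negligible event, gives for any $\kappa > 0$
\[
\Pr_S\big[\, |g_S(P) - g(P)| > \kappa \,\big] \;\leq\; 2 \exp\!\left( - \Omega\!\left( \frac{|S|\, \kappa^2}{R^2} \right) \right) \;.
\]
Here $R^2 = O(\log(1/\eps)\, \log |S|)$ is polylogarithmic in $d$ and $1/\eps$, while $|S| = (n/\eps)^C$ is polynomial, so this probability is smaller than any fixed quasipolynomial-in-$(1/\eps)$ quantity.

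Now take a $\kappa'$-net $\mathcal{F} \subseteq \mathcal{K}$ in Frobenius norm, of size $|\mathcal{F}| = \exp(O(\log^2(1/\eps)\, \log(1/\kappa')))$. Union bounding the previous display over $\mathcal{F}$ shows that, with high probability over $S$, $|g_S(P') - g(P')| \leq \kappa$ for all $P' \in \mathcal{F}$ simultaneously. To pass from the net to all of $\mathcal{K}$, note that $g_S$ and $g$ are Hölder-$\tfrac12$ continuous in $P$: using $|\sqrt{a} - \sqrt{b}| \leq \sqrt{|a - b|}$ and $|v^{\top}(P - P')v| \leq \|P - P'\|_F \,\|v\|_2^2$, we get $|g_S(P) - g_S(P')| \leq \sqrt{\|P - P'\|_F}\, \E_{v \in_u S}[\|v\|_2] \leq \sqrt{\kappa'} \cdot O(\sqrt{\dim(V)})$ on the same high-probability event, and likewise $|g(P) - g(P')| \leq \sqrt{\kappa'} \cdot O(\sqrt{\dim(V)})$. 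Choosing $\kappa'$ a sufficiently small polynomial in $1/\eps$ makes these transfer errors $O(\sqrt{\kappa' \log(1/\eps)})$ smaller than $\kappa$, so $\sup_{P \in \mathcal{K}} |g_S(P) - g(P)| = O(\kappa)$; taking $\kappa$ small enough relative to the constants hidden in $g(P) = \Theta(1)$ yields $g_S(P) = \Theta(1)$ for every $P \in \mathcal{K}$, i.e.\ $\E_{v \in_u S}[\|Bv - Mv\|_2] = \Theta(\|B - M\|_F)$ for all $B$, as claimed.

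The step I expect to be the main obstacle — and the only part that is not a routine concentration argument — is precisely the dimension reduction in the first paragraph. A priori $B - M$ ranges over $\dim(W) \times \dim(V)$ matrices with $\dim(W)$ possibly as large as $d$, and a net over that space has size exponential in $d$, which a union bound cannot beat with only $\poly(d, 1/\eps)$ samples. The resolution is the observation that the functional depends on $B - M$ solely through the Gram matrix $(B - M)^{\top}(B - M) \in \mathcal{K}$, collapsing the effective parameter space to dimension $O(\log^2(1/\eps))$; once this is in hand, the remaining net-plus-concentration argument goes through because $|S|$ is polynomial while the effective dimension is only polylogarithmic in $1/\eps$.
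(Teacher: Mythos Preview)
Your proof is correct, and it takes a genuinely different route from the paper's.

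The paper simply nets over the full space of $\dim(W)\times\dim(V)$ matrices $U=B-M$ with $\|U\|_F=1$, obtaining a net of size $\exp(\poly(n/\eps))$ (since $\dim(W)\leq d\leq n$). For each fixed $U$ it shows $\E_{v\in_u S}[\|Uv\|_2^2]=O(1)$ and $\Pr_{v\in_u S}[\|Uv\|_2^2>1/4]\geq\Pr_{v\sim\normal}[\|Uv\|_2^2>1/4]-1/100$ each with failure probability $\exp(-\Omega(|S|))$, and then union bounds; since $|S|=(n/\eps)^C$ with $C$ chosen sufficiently large, this beats the net. Your approach instead observes that $\|Uv\|_2$ depends on $U$ only through the $\dim(V)\times\dim(V)$ Gram matrix $P=U^{\top}U$, reducing the effective parameter space to dimension $O(\log^2(1/\eps))$ before netting. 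This is more elegant and would allow a much smaller $|S|$, though the paper is not optimizing that quantity.

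One point in your self-assessment is inaccurate: you say a net over the full $\dim(W)\times\dim(V)$ space ``has size exponential in $d$, which a union bound cannot beat with only $\poly(d,1/\eps)$ samples.'' In fact it can, and the paper does exactly this. A $\rho$-net over that space has size $\exp(O(d\log(1/\eps)\log(1/\rho)))$, while the per-$U$ failure probability (via your own truncation-plus-Hoeffding argument, or the paper's degree-$2$-polynomial concentration) is $\exp(-|S|/\mathrm{polylog})$; for $|S|$ a sufficiently large polynomial in $d$ and $1/\eps$ the latter dominates. So your Gram-matrix reduction is a nice sharpening rather than a necessary step. The proof itself stands regardless.
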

\begin{proof}
This is equivalent to showing that for all matrices $U=B-M$ it holds
$$
\E_{v\in_u S}[\|Uv\|] = \Theta(\|U\|_F).
$$
By the standard scaling laws, it suffices to show this only for $U$ with $\|U\|_F=1$.

We also note that it suffices to show this only for $U$ in an $\eps$-net for all such matrices. 
This is because if $\|U-U'\|_F < \eps$ and if $U'$ satisfies the desired condition, then
$$
\E_{v\in_u S}[\|Uv\|_2] = \E_{v\in_u S}[\|U'v\|_2] +O(\E_{v\in_u S}[\|(U-U')v\|_2]) = \Theta(1) +O(\eps) \E_{v\in_u S}[\| v \|_2] \;,
$$
and with high probability $\E_{v\in_u S}[\| v \|_2] = O(\log(1/\eps)).$

Note that such nets exist with size $\exp(\poly(n/\eps))$. 
Therefore, it suffices to show that this condition holds for each such $U$ 
with probability $\exp(-(n/\eps)^{\Omega(C)})$.

As noted above, it suffices that
$$
\E_{v\in_u S}[\|Uv\|^2]=O(\E_{v\in_u S}[\|Uv\|^2]) = O(1)
$$
and that
$$
\Pr_{v\in_u S}[\|Uv\|^2>1/4] \geq \Pr_{v\in_u S}[\|Uv\|^2\geq 1/4] - 1/100.
$$
The first follows because if $S=\{v_1,\ldots,v_m\}$ then $\E_{v\in_u S}[\|Uv\|^2]$ is a degree-$2$ polynomial in the $v_i$ with mean $O(1)$ and variance $O(1/\sqrt{m})$, so by
standard concentration results, is $O(1)$ with $1-\exp(-\Omega(|S|))$ probability. The latter follows from standard concentration bounds. This completes the proof.
\end{proof}

We also note that the random choice of $S$ has another nice property:
\begin{lemma}\label{S2Lem}
With $S$ and $a_v$ as above, with high probability we have that
$$
\E_{v\in_u S}[\|a_v-Mv\|_2] = O(\eta) \;.
$$
\end{lemma}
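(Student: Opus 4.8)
The plan is to prove the bound in two stages: first establish the population-level estimate $\E_{v \sim \normal(0, 2I_V)}[\|a_v - Mv\|_2] = O(\eta)$, and then derandomize it to the fixed sample $S$ of size $(n/\eps)^C$ by a concentration argument. For the population bound we lean on the per-$v$ estimate already derived above: for every $v$ with $\|v\|_2$ not too large (so that the rejection-sampled set $T$ has enough samples and $\|\mu_v\|_2 = O(\log 1/\eps)$) on which the mean-estimation subroutine succeeds, we have $\|a_v - \pi_W(\mu_v)\|_2 = O(\eps_v + \eta)$, and recall that $\pi_W(\mu_v) = Mv$ with $M = \pi_W(\Sigma^{-1} + I_V)^{-1}\pi_V^T$.

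First I would bound $\E_{v \sim \normal(0, 2I_V)}[\|a_v - Mv\|_2]$ by splitting the $v$'s into a ``good'' set $\mathcal{G}$ --- those with $\|v\|_2$ below the threshold above and on which $\textsc{RecoverMeanNoisy}$ and $\textsc{Tournament}$ succeeded for the grid value of $\eps$ closest to $\eps_v$ --- and its complement. On $\mathcal{G}$ we have $\|a_v - Mv\|_2 = O(\eps_v + \eta)$, so this part of the expectation is at most $O(\E_v[\eps_v]) + O(\eta) = O(\eps) + O(\eta) = O(\eta)$, using the preceding lemma and $\eps < \eta$. For $v \notin \mathcal{G}$ we use the crude but always-valid bound $\|a_v - Mv\|_2 \le \|a_v\|_2 + \|Mv\|_2 \le C\log(1/\eps) + O(\|v\|_2)$, since the algorithm clamps $\|a_v\|_2 \le C\log(1/\eps)$ (or sets $a_v = 0$) and $M$ has $O(1)$ operator norm. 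Exclusion from $\mathcal{G}$ has two causes: either $\|v\|_2$ exceeds the threshold, which for $v \sim \normal(0, 2I_V)$ with $\dim(V) = O(\log 1/\eps)$ has probability $\poly(\eps)$ (a high power of $\eps$, for $C$ large) by a standard $\chi^2$ tail bound; or the subroutine fails, which by the failure parameter $(\eps/n)^{2C}$ fed to $\textsc{RecoverMeanNoisy}$, union-bounded over the $O(\log 1/\eta)$ grid points, has probability $\poly(\eps/n)$. In the first case Cauchy--Schwarz gives $\E_v[(C\log(1/\eps) + O(\|v\|_2))\,\mathbf{1}[\|v\|_2 \text{ large}]] \le \sqrt{\poly(\eps)}\cdot \poly(\log 1/\eps) = \poly(\eps)$; in the second case $O(\log 1/\eps)$ times $\poly(\eps/n)$ is $\poly(\eps)$; both are $O(\eta)$ since $\eps < \eta$. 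Hence $\E_{v \sim \normal(0, 2I_V)}[\|a_v - Mv\|_2] = O(\eta)$.

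Finally I would derandomize. The pairs $(v_i, a_{v_i})$ for $i = 1, \dots, |S|$ are i.i.d., so the values $Y_i := \|a_{v_i} - Mv_i\|_2$ are i.i.d.\ with mean $O(\eta)$; moreover, with probability $1 - \exp(-(n/\eps)^{\Omega(C)})$ every $v_i \in S$ has $\|v_i\|_2 = \poly\log(n/\eps)$, so on that event $Y_i \le \poly(n/\eps)$ deterministically. A Hoeffding (or Bernstein) bound over the $|S| = (n/\eps)^C$ terms then shows that $\E_{v \in_u S}[\|a_v - Mv\|_2] = \frac{1}{|S|}\sum_i Y_i$ is within $\eps$ of its mean except with probability $\exp(-(n/\eps)^{\Omega(C)})$, which gives $\E_{v \in_u S}[\|a_v - Mv\|_2] = O(\eta)$ with high probability, as claimed. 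The step requiring the most care is the bookkeeping of the exceptional $v$'s: on a subroutine failure or a large-norm sample $a_v$ may be essentially arbitrary up to the clamp at $C\log 1/\eps$, and it is precisely to absorb these rare events into the $O(\eta)$ budget that the failure probability must be taken as small as $(\eps/n)^{2C}$ and the net $S$ as large as $(n/\eps)^C$ --- everything else (the operator-norm bound on $M$, the $\chi^2$ tail, and the i.i.d.\ concentration) is routine.
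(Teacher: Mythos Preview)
Your proof is correct and follows essentially the same approach as the paper: bound the population mean $\E_{v}[\|a_v - Mv\|_2]$ by $O(\eta)$ using $\E_v[\eps_v] = O(\eps)$ on good $v$'s together with the clamp on $a_v$ for exceptional $v$'s, then concentrate over the i.i.d.\ sample $S = \{v_1,\dots,v_m\}$. The paper's version is terser---it simply asserts the mean is $O(\eta)$, bounds the variance of each term by $\E[\|a_v\|_2^2 + \|Mv\|_2^2] = O(\log^2(1/\eps))$, and invokes a Chernoff-type bound---whereas you spell out the exceptional cases more carefully and use a bounded-range Hoeffding argument after conditioning on all $\|v_i\|_2$ being polylogarithmic; these are interchangeable routes to the same conclusion.
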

\begin{proof}
Let $S=\{v_1,\ldots,v_m\}$. Then $\|f(v_i)-Mv_i\|_2$ are independent random variables with mean $\E[O(\eps_v+\eta)]=O(\eta)$ and variance at most
$$
\E[\|a_v\|_2^2+\|Mv\|_2^2] \leq O(\log^2(1/\eps))+\E[\| v \|_2^2] = O(\log^2(1/\eps)).
$$
The result follows by Chernoff bounds.
\end{proof}

Now by Lemmas \ref{S1Lem} and \ref{S2Lem} with high probability over the choice of $S$ in Step \ref{SStep}, and the $a_v$ in Step \ref{fStep} we have that for all $\dim(W)\times \dim(V)$-matrices $B$ that
$$
\E_{v\in_u S}[\|Bv-Mv\|_2] = \Theta(\|B-M\|_F),
$$
and have that
$$
\E_{v\in_u S}[\|a_v-Mv\|_2] = O(\eta).
$$
Combining these statements, we have that for all $\dim(W)\times \dim(V)$-matrices $B$ that
$$
\E_{v\in_u S}[\|a_v-B\|_2] = \Theta(\|B-M\|_F) + O(\eta).
$$

Note that by taking $B=M$, this quantity is $O(\eta)$. Therefore, the $B$ found in Step \ref{LPStep} satisfies
$$
O(\eps) = \E_{v\in_u S}[\|a_v-B\|_2] = \Theta(\|B-M\|_F) + O(\eta) \;,
$$
and therefore, $\|B-M\|_F = O(\eta)$.

The rest of the proof is a simple computation of the matrices involved. In particular, recall that
$$
\Sigma = \left[ \begin{matrix} \Sigma_V & A^T \\ A & \Sigma_W \end{matrix}\right] = \left[ \begin{matrix} I_V & A^T \\ A & I_W \end{matrix}\right]+O(\eta) \;,
$$
where the $O(\eta)$ denotes a matrix with Frobenius norm $O(\eta)$ and where $\|A\|_F = O(\delta)$. It is easy to see that
$$
\Sigma^{-1} = \left[ \begin{matrix} I_V & -A^T \\ -A & I_W \end{matrix}\right]+O(\eta+\delta^2).
$$
Therefore,
$$
\Sigma^{-1} +I_V = \left[ \begin{matrix} 2I_V & -A^T \\ -A & I_W \end{matrix}\right]+O(\eta+\delta^2).
$$
Hence,
$$
(\Sigma^{-1} +I_V)^{-1} = \left[ \begin{matrix} I_V/2 & A^T/2 \\ A/2 & I_W \end{matrix}\right]+O(\eta+\delta^2).
$$
Therefore, $M=A/2+O(\eta+\delta^2)$. Therefore, $A=2M+O(\eta+\delta^2) = 2B+O(\eta+\delta^2)$. And finally, we conclude that
$$
\Sigma =  \left[ \begin{matrix} I_V & 2B^T \\ 2B & I_W \end{matrix}\right]+O(\eta+\delta^2) = \Sigma_0+ O(\eta+\delta^2).
$$
\end{proof}

\subsection{The Full High-Dimensional Algorithm}
\label{sec:covFull}
We now show how to prove Theorem \ref{thm:optCov}, given the pieces we have.
We first show that given enough samples from $\normal (0, \S)$, the empirical data set without corruptions satisfies the regularity conditions in Section \ref{sec:cov-reg} with high probability.
For clarity of exposition, the proof of this lemma is deferred to Appendix \ref{sec:GoodSamplesOpt}.
\begin{lemma}\label{lem:GoodSamplesOptCov}
Fix $\eta, \delta >0$.
Let $X_1, \ldots, X_n$ be independent samples from $\normal (\mu, I)$, where $n = \poly (d, 1 / \eta, \log 1 / \delta)$.
Then, $S = \{X_1, \ldots, X_n\}$ is $(\eta, \delta)$-good with respect to $\normal (\mu, I)$ with probability at least $1 - \delta$.
\end{lemma}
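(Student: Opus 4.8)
The plan is to reduce everything to the standard case and verify each bullet of the $(\eta,\delta)$-good definition of Section~\ref{sec:cov-reg} as a uniform concentration statement over polynomial classes. First I would apply the whitening map $x \mapsto \Sigma^{-1/2}x$ (for the covariance version of the definition): an even degree-$k$ polynomial composed with an invertible linear map is again an even degree-$k$ polynomial, and this substitution carries $\normal(0,\Sigma)$ to $\normal(0,I)$ while preserving each of $\E[p(X)]$, $\E[p^2(X)]$, $\Var[p(X)]$, and $\Pr[p(X)\ge 0]$, as well as the bijection of the relevant polynomial classes; so it suffices to treat $\Sigma = I$. The first bullet, $\|X_i\|_2^2 = O(d\log(n/\delta))$ for every $i$, is then immediate from $\|X\|_2^2 \sim \chi^2_d$ concentration together with a union bound over the $n = \poly(d,1/\eta,\log 1/\delta)$ samples, with failure probability at most $\delta/4$.

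Second, for the degree-$2$ moment conditions, adding a constant to $p$ only enlarges the right-hand sides, so it suffices to consider the mean-zero polynomials $p$ associated to symmetric matrices $M$ with $\|M\|_F=1$ in the sense of Lemma~\ref{lem:poly-props}, for which $\E_{\normal(0,I)}[p]=0$ and $\E_{\normal(0,I)}[p^2]=1$. Then $\E_S[p(X)]$ is linear in $M^\flat$ and $\E_S[p^2(X)]$ is linear in the empirical fourth-moment tensor, so $\sup_{\|M\|_F=1}|\E_S[p(X)]|$ and $\sup_{\|M\|_F=1}|\E_S[p^2(X)]-1|$ are spectral-type norms of the deviations of the empirical moment tensors from their means. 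I would bound these by $\eta$ either through direct matrix/tensor concentration, or through an $\eta/\poly(d)$-net over the Frobenius sphere of symmetric matrices combined with a union bound over the net, using the Hanson-Wright tail (Lemma~\ref{lem:hanson-wright}) for the per-polynomial deviation of $\E_S[p(X)]$ and the degree-$4$ hypercontractive tail of Corollary~\ref{cor:hypercontractivity} for that of $\E_S[p^2(X)]$. The degree-$4$ moment bound $|\E_S[p(X)] - \E_{\normal(0,I)}[p(X)]| \le \eta\,\Var_{\normal(0,I)}[p(X)]^{1/2}$ is handled identically: normalize to $\Var_{\normal(0,I)}[p]=1$, take a net over the finite-dimensional space of even degree-$4$ polynomials, and union-bound the per-polynomial deviation via Corollary~\ref{cor:hypercontractivity}.

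Third, the three tail-probability bounds are uniform-convergence statements for the class of degree-$2$ (resp.\ degree-$4$) polynomial threshold functions $\{x : p(x)\ge 0\}$. Lifting $x$ to its monomials of degree $\le 2$ (resp.\ $\le 4$) realizes these sets as halfspaces in $\R^{O(d^2)}$ (resp.\ $\R^{O(d^4)}$), so the class has VC dimension $O(d^2)$ (resp.\ $O(d^4)$); the VC inequality gives $\sup_p |\Pr_S[p(X)\ge 0] - \Pr_{\normal(0,I)}[p(X)\ge 0]| \le O\big(\sqrt{(d^{k}\log n + \log(1/\delta))/n}\,\big)$ with failure probability $\delta/4$, and taking $n = \poly(d,1/\eta,\log 1/\delta)$ large enough makes this at most $\eta^2/(d\log(n/\delta))$ in the degree-$2$ case and $\eta^2/(2\log(1/\ve)(d\log(n/\delta))^2)$ in the degree-$4$ case, the extra $\log(1/\ve)$ and logarithmic factors being absorbed into the $\poly$. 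A final union bound over the $O(1)$ events finishes the proof. Essentially all of this can also be obtained by combining the analogues in \cite{DKKLMS} (as in Lemma~\ref{lem:GoodSamplesOpt}, which cites Lemmas 8.3 and 8.16 there) with the degree-$4$ facts above; the only mildly delicate point, and the step on which I would spend the most care, is the bookkeeping that simultaneously pins the exponents in $n = \poly(d,1/\eta,\log 1/\delta)$ and meets the strongest of the required thresholds.
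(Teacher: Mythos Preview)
Your proposal is correct and follows essentially the same approach as the paper: reduce to the identity-covariance case, handle the moment conditions by normalizing and union-bounding a hypercontractive tail over a net of the relevant polynomial space, and handle the tail-probability conditions by a uniform-convergence argument over polynomial threshold functions. The paper dispatches the boundedness and degree-$2$ items by citing Lemmas~8.3 and~8.16 of \cite{DKKLMS} (as you yourself note), and for the degree-$4$ tail bound it invokes the ``same net technique'' rather than your explicit VC-dimension argument, but these are cosmetic differences in packaging rather than substance.
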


Finally, we require the following guarantee, which states that if there is a degree-$2$ polynomial 
whose expectation under $S$ and the truth differs by a lot (equivalently, if the empirical covariance 
differs from the true covariance in Frobenius norm substantially), then it must also have very large variance under $S$.
\begin{lemma}
\label{lem:covBoundedOffLarge}
Fix $\eps, \delta > 0$.
Assume $\| \Sigma - I \|_F \leq \xi$, where $\xi = O(\eps \log 1 / \eps)$.
Suppose that $G_0$ is $( \eps, \delta)$-good with respect to $\normal (0, \Sigma)$, and
let $S \subseteq S_0$ be a set so that $\Delta (S, G_0) \leq \eps$.
There is some absolute constant $C_5$ so that if $p \in \cP_2$ is a polynomial so that $\left| \E_S [p (X)] - \E_{\normal (0,\Sigma)} [p(X)] \right| > C_5 \sqrt{\xi \eps} $, then $\E_S [p^2 (X)] - 1 > C_1 \xi$.
\end{lemma}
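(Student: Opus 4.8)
The plan is to relate the quantity $\E_S[p^2(X)] - 1$ to the deviation $|\E_S[p(X)] - \E_{\normal(0,\Sigma)}[p(X)]|$ by tracking how the corruptions and lost points affect these two statistics, using the $(\eps,\delta)$-goodness of $G_0$ throughout. Write $S = (G', E')$ with $G' \subseteq G_0$, $L = G_0 \setminus G'$, and recall $\psi = |E'|/|S|$, $\phi = |L|/|S|$, $\Delta(S,G_0) = \psi + \phi\log(1/\phi) \le \eps$, so that $\phi \le O(\eps/\log(1/\eps))$ and $\psi \le \eps$. First I would decompose $\E_S[p(X)] = \frac{|G'|}{|S|}\E_{G'}[p(X)] + \frac{|E'|}{|S|}\E_{E'}[p(X)]$, and likewise relate $\E_{G'}[p(X)]$ to $\E_{G_0}[p(X)]$ (hence to $\E_{\normal(0,\Sigma)}[p(X)]$ up to the $(\eps,\delta)$-goodness error $\gamma\eps\,\E_{\normal(0,\Sigma)}[p^2]^{1/2}$, which is negligible since $\|\Sigma - I\|_F \le \xi$ forces $\E_{\normal(0,\Sigma)}[p^2]^{1/2} = O(1)$ by Lemma~\ref{lem:poly-props}(iv)). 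The key point is that the only way $\E_S[p(X)]$ can differ from $\E_{\normal(0,\Sigma)}[p(X)]$ by more than $O(\sqrt{\xi\eps})$ is if either (a) the corrupted points $E'$ contribute a large average value of $p$, or (b) the lost points $L$ had systematically large (or small) values of $p$.

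The second step is to bound each of these contributions in terms of $\E_S[p^2(X)]$. For the corrupted points, Cauchy--Schwarz gives $\frac{|E'|}{|S|}|\E_{E'}[p(X)]| \le \sqrt{\psi}\cdot\sqrt{\frac{|E'|}{|S|}\E_{E'}[p^2(X)]} \le \sqrt{\psi}\cdot\sqrt{\E_S[p^2(X)]} \le \sqrt{\eps}\cdot\sqrt{\E_S[p^2(X)]}$. For the lost points, $\frac{|L|}{|S|}|\E_{G_0}[p(X)] - \E_{G'}[p(X)]|$ — here I would use that $\E_{G_0}[p^2(X)] \le (1+\gamma\eps)\E_{\normal(0,\Sigma)}[p^2(X)] = O(1)$ (again by goodness and Lemma~\ref{lem:poly-props}), so $\E_{G_0}[|p(X)|] = O(1)$ and the lost-point deviation is at most $O(\phi) = O(\eps/\log(1/\eps))$, which is $o(\sqrt{\xi\eps})$ since $\xi = \Theta(\eps\log(1/\eps))$ — wait, $\sqrt{\xi\eps} = \Theta(\eps\sqrt{\log(1/\eps)})$, and $\eps/\log(1/\eps) = o(\eps\sqrt{\log(1/\eps)})$, so indeed the lost-point contribution is lower order. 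Hence if $|\E_S[p(X)] - \E_{\normal(0,\Sigma)}[p(X)]| > C_5\sqrt{\xi\eps}$, essentially all of this deviation comes from the corrupted term, forcing $\sqrt{\eps\,\E_S[p^2(X)]} \gtrsim \sqrt{\xi\eps}$, i.e. $\E_S[p^2(X)] \gtrsim \xi$.

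The third step is to sharpen ``$\E_S[p^2(X)] \gtrsim \xi$'' into ``$\E_S[p^2(X)] - 1 > C_1\xi$''. Since $p \in \cP_2$ has $\E_{\normal(0,I)}[p^2] = 1$ and $\E_{\normal(0,\Sigma)}[p^2(X)] = 1 + O(\xi)$ by Lemma~\ref{lem:poly-props}(iv) together with $\|\Sigma - I\|_F \le \xi$, I need the crude bound $\E_S[p^2(X)] \gtrsim \xi$ to actually be of the form $1 + \Omega(\xi)$ rather than, say, $0.01$. The resolution is to redo the first two steps more carefully: decompose $\E_S[p^2(X)]$ itself as $\frac{|G'|}{|S|}\E_{G'}[p^2(X)] + \frac{|E'|}{|S|}\E_{E'}[p^2(X)]$, note $\E_{G'}[p^2(X)] \ge (1 - O(\phi))\E_{G_0}[p^2(X)] - (\text{small}) \ge 1 - O(\xi)$ (using goodness to get $\E_{G_0}[p^2]\ge (1-\gamma\eps)\E_{\normal(0,\Sigma)}[p^2] \ge 1 - O(\xi)$), so $\E_S[p^2(X)] \ge 1 - O(\eps) + \frac{|E'|}{|S|}\E_{E'}[p^2(X)]$, and then lower-bound the corrupted contribution: since the corrupted points are responsible for a deviation of $\Omega(\sqrt{\xi\eps})$ in the first moment with mass only $\psi \le \eps$, Cauchy--Schwarz in the reverse direction gives $\frac{|E'|}{|S|}\E_{E'}[p^2(X)] \ge \frac{(\frac{|E'|}{|S|}|\E_{E'}[p(X)]|)^2}{\psi} \gtrsim \frac{\xi\eps}{\eps} = \Omega(\xi)$. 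Adding these yields $\E_S[p^2(X)] \ge 1 + \Omega(\xi)$, and choosing $C_5$ large relative to $C_1$ (and all suppressed constants) closes the argument.

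The main obstacle I anticipate is the bookkeeping in the last step: making sure the $1 - O(\eps)$ ``loss'' from the good part is genuinely dominated by the $\Omega(\xi) = \Omega(\eps\log(1/\eps))$ gain from the corrupted part, which it is for small $\eps$, but one has to be scrupulous that no $O(\eps\log(1/\eps))$-sized error term was dropped from the good-part analysis (in particular the $\phi\log(1/\phi)$ penalty in $\Delta$ and the $\gamma\eps$ slack in goodness must both be checked to be $o(\xi)$ or absorbable). Everything else is a routine application of Cauchy--Schwarz, the goodness conditions, and Lemma~\ref{lem:poly-props}.
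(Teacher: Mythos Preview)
Your overall architecture matches the paper's: decompose $\E_S[p]$ and $\E_S[p^2]$ into good, lost, and corrupted parts; show the good and lost parts contribute negligibly; conclude the corrupted mass $E'$ must carry the first-moment deviation; then use Jensen/Cauchy--Schwarz on $E'$ to force a large second moment. The gap is in your treatment of the lost points $L$.

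Your claim that ``$\E_{G_0}[|p(X)|]=O(1)$, so the lost-point deviation is at most $O(\phi)$'' is not justified. The set $L$ may consist of exactly the $\phi$-fraction of $G_0$ with the largest values of $|p|$; since $p$ is degree~$2$ with sub-exponential tails, those points can have $|p|=\Theta(\log(1/\phi))$, so $\phi\,\E_L[|p|]=\Theta(\phi\log(1/\phi))$, not $O(\phi)$. If instead you try to use only $\E_{G_0}[p^2]=O(1)$ and Cauchy--Schwarz, you obtain
\[
\phi\,\E_L[|p|]\;\le\;\sqrt{\phi}\cdot\sqrt{\tfrac{|G_0|}{|S|}\E_{G_0}[p^2]}\;=\;O(\sqrt{\phi})\;=\;O\!\left(\sqrt{\eps/\log(1/\eps)}\right),
\]
which is \emph{much larger} than $\sqrt{\xi\eps}=\eps\sqrt{\log(1/\eps)}$, so the lost-point term swamps the hypothesis and the argument breaks. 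The same problem recurs in your third step: you need $\phi\,\E_L[p^2]\le c\,\xi$ for a small constant $c$, but Cauchy--Schwarz would require control of $\E_{G_0}[p^4]$ (a degree-$8$ moment not covered by goodness), and even then would only yield $O(\sqrt{\phi})$.

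The paper's fix is to invoke the \emph{tail-probability} clauses of $(\eps,\delta)$-goodness rather than moment bounds. One writes
\[
\phi\,\bigl|\E_L[p-\mu']\bigr|\;\le\;\int_0^{T}\phi\,dt\;+\;\int_T^{O(d\log(|S|/\delta))}\Pr_{G_0}\bigl[|p(X)-\mu'|>t\bigr]\,dt,
\]
then bounds $\Pr_{G_0}[\,\cdot\,]$ by $\Pr_{\normal(0,\Sigma)}[\,\cdot\,]+\text{(tiny)}$ via goodness, and applies Hanson--Wright to the Gaussian tail. Choosing $T\asymp\log(1/\eps)$ gives $O(\phi\log(1/\phi))+O(\eps)\le O(\eps)$, which \emph{is} $o(\sqrt{\xi\eps})$. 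The analogous computation for $p^2$ (now a degree-$4$ polynomial, using the degree-$4$ tail clause and hypercontractivity) gives $\phi\,\E_L[p^2-s]\le O(\phi\log^2(1/\eps))+o(\xi)$, which with $\phi\le\eps/\log(1/\eps)$ is a controllably small multiple of $\xi$. Once these two bounds are in place, the rest of your sketch goes through exactly as you wrote.
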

We defer the proof of this lemma to the Appendix.

We are now ready  to present the full algorithm as Algorithm \ref{alg:improveCov}.
\begin{algorithm}[htb]
\begin{algorithmic}[1]
\Function{ImproveCov}{$S, \xi, \eps, \delta$}
\State Let $C$ be the universal constant in \textsc{FilterCovManyDeg2Eig}
\State Let $\tau = \poly (\eps, 1 / d, \delta)$.
\State Run $\textsc{FilterCovManyDeg2Eig} (S, \eps, \xi)$
\If{\textsc{FilterCovManyDeg2Eig} outputs $S'$} \label{line:deg2check}
	\State \textbf{return} $S'$
\Else
	\State Let $V$ be the subspace returned by \textsc{FilterCovManyDeg2Eig}
	\State Let $W = V^\perp$.
	\State Run $\textsc{FilterCovManyDeg4Eig} (S, \eps, \xi, \delta, W)$
	\If{\textsc{FilterCovManyDeg4Eig} outputs $S'$} \label{line:deg4check}
		\State \textbf{return} $S'$
	\Else
		\State Let $U_1$ be the subspace of degree $2$ polynomials over $W$ it returns
		\State Let $U_2$ be the perpendicular subspace of degree $2$ polynomials over $W$
		\State Let $\Shat = \E_S [X X^T]$
		\State Let $\Sigma_V = \textsc{LearnCovLowDim} (S, \eps, \xi, \tau, V)$
		\State Let $\Sigma_W = \textsc{LearnMeanPolyLowD} (S, \eps, \xi, \tau, U_1, U_2, \Shat)$
		\State Take $\poly (n, 1/ \eps)$ fresh $\eps$-corrupted samples $S'$
		\State {\bf return} $\textsc{Stitching}(V, W, \Sigma_V, \Sigma_W, \xi, \eps, S')$.
	\EndIf
\EndIf
\EndFunction
\end{algorithmic}
\caption{Filter if there are many large eigenvalues of the covariance}
\label{alg:improveCov}
\end{algorithm}

\begin{proof}[Proof of Theorem \ref{thm:optCov}]
Condition on the events that neither \textsc{LearnCovLowDim} nor \textsc{Stitching} fail.
This happens with probability at least $\poly (\eps, 1 / d, \delta)$.
Observe that if we pass the ``if'' statement in Line \ref{line:deg2check}, 
then by the guarantee of \textsc{FilterCovManyDeg2Eig} this is indeed an $S'$ satisfying the desired properties.
Otherwise, by the guarantees of \textsc{FilterCovManyDeg2Eig}, we have that $W$ satisfies the conditions needed by \textsc{FilterCovManyDeg4Eig}.
Hence, if we pass the ``if'' statement in Line \ref{line:deg4check}, 
then the guarantee of \textsc{FilterCovManyDeg4Eig} this is indeed a $S'$ satisfying the desired properties.
Otherwise, by Lemma \ref{lem:covBoundedOffLarge}, we know that for all polynomials $p \in \cP_2$ over $W$ orthogonal to $U_1$, 
we have $|\E_{\normal (0, \Sigma)} - \E_S [X X^T] | \leq C_5 \sqrt{\xi \eps}$.
Thus, $\Sigma_W$ satisfies the conditions needed by \textsc{Stitching}.

By Corollary \ref{cor:covLowD}, we know that $\Sigma_V$ satisfies the conditions for \textsc{Stitching}, and so the correctness of the algorithm follows from Theorem \ref{thm:stitching}.
\end{proof}

\subsection{The Barrier at Quasi-Polynomial}
\label{sec:quasibarrier}

Here we explain why improving the running time from quasi-polynomial to polynomial in $1/\eps$ 
will likely be rather difficult. Recall that our strategy is to project the problem onto lower dimensional subproblems and stitch together the answer. We need the dimension of the subspace to be large enough that we can find a polynomial $Q$ that is itself the sum of squares of $k$ orthogonal degree two polynomials $p_i$ so that the value of $Q$ on the corrupted points is considerably larger than the value on the uncorrupted points. More precisely, if we let $S = (G, E)$ denote our corrupted set of samples then we want $\E_{E} [Q(X)]$ to be larger than $Q(X)$ for all but a $\poly (\eps)$ fraction of $X \in G$. We then remove all points $X \in S$ with large $Q(X)$ and by the properties of $Q$ we are guaranteed that we throw out mostly corrupted points. It turns out that the most aggressive we could be is removing points where $Q(X)$ is more than $\sqrt{k}$ standard deviations away from its expectation under the true Gaussian. But since $Q$ is a degree-four polynomial 
and we want $Q(X)$ to be smaller than our cutoff for all but a $\poly (\eps)$ fraction of $X \in G$, 
we are forced to choose $\sqrt{k} = \Omega (\log 1 / \eps)$, which means that we need to reduce to $k = \Omega (\log^2 1 / \eps)$ dimensional subproblems. 
Thus, if we solve low-dimensional subproblems in time exponential in the dimension, 
we naturally arrive at a quasi-polynomial running time. 
It seems that any approach for reducing the running time to polynomial would require fundamentally new ideas. 

\section{The General Algorithm}
\label{sec:full}
We now have all the tools to robustly learn the mean and covariance of an arbitrary high-dimensional Gaussian.
We first show how to reduce the problem of robustly learning the covariance of $\normal (\mu, \Sigma)$ to learning the covariance of $\normal (0, \Sigma)$, by at most doubling error, a trick previously used in \cite{DKKLMS} and \cite{LaiRV16}.
Given an $\eps$-corrupted set of samples  $X_1, \ldots, X_{2n}$ of size $2n$ from $\normal (\mu, \Sigma)$, we may let $Y_i = (X_i - X_{n + i}) / \sqrt{2}$.
Then we see that if $X_i$ and $X_{n + i}$ are uncorrupted, then $Y_i \sim \normal (0, \Sigma)$.
Moreover, at most $2 \eps n$ of the $Y_i$ can be corrupted, since there are at most $2 \eps n$ corrupted $X_i$.
Therefore, by doubling the error rate, we may assume that $\mu = 0$.
We may then apply the algorithm in Corollary \ref{cor:cov-main} to obtain a $\Shat$ so that with high probability, we have $\| \Shat^{-1/2} \Sigma \Shat^{-1/2} - I \|_F \leq O(\eps)$ with polynomially many samples, and in $\poly (d, (1 / \eps)^{O(\log^4 1/ \eps)})$ time.

We may then take an additional set of $\eps$-corrupted samples $\{ X_i', \ldots, X_n' \}$, and let $Y_i' = \Shat^{-1/2} X_i'$.
Then, by our guarantee on $\Shat$, we have that if $X_i'$ is uncorrupted, then $Y_i' \sim \normal (0, \Stilde)$ where $\| \Stilde - I \|_F \leq O(\eps)$.
We then run \textsc{RecoverMeanNoisy} with the $Y_i'$ to obtain a $\muhat$ so that $\| \muhat - \Shat^{-1/2} \mu \|_2 \leq O(\eps)$.
This guarantees that $\dtv (\normal (\muhat, \Stilde), \normal (\Shat^{-1/2} \mu, \Shat)) \leq O(\eps)$, which in turn implies that $\dtv (\normal (\widehat{\mu}, \Shat), \normal (\mu, \Sigma)) \leq O(\eps)$, as claimed.

Therefore, we have shown:
\begin{theorem}
Fix $\eps, \delta > 0$.
Given an $\eps$-corrupted set of samples $S$ from $\normal (\mu, \Sigma)$, where $n = \poly (d, 1/ \eps, \log 1 / \delta)$, there is an algorithm \textsc{RecoverGaussian} which takes as input $S, \eps, \delta$, and outputs a $\muhat, \Shat$ so that
\[
\dtv (\normal (\mu, \Sigma), \normal (\muhat, \Shat)) \leq O(\eps) \; .
\]
Moreover, the algorithm runs in time $\poly (d, (1 / \eps)^{O(\log^4 1/ \eps)}, \log 1 / \delta)$.
\end{theorem}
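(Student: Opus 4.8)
The plan is to assemble the two high-dimensional subroutines already developed — robust covariance estimation (Corollary~\ref{cor:cov-main}, via \textsc{ImproveCov}) and robust mean estimation in the presence of small spectral noise (Theorem~\ref{thm:mean-main2}, \textsc{RecoverMeanNoisy}) — through two reductions, each of which loses only a constant factor. Since every guarantee we invoke is of the form $O(\eps)$, these constant blow-ups are harmless.

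\textbf{Step 1 (reduce to mean zero).} Given a $2n$-point $\eps$-corrupted sample from $\normal(\mu,\Sigma)$, form the pairwise differences $Y_i=(X_i-X_{n+i})/\sqrt2$. If both $X_i$ and $X_{n+i}$ are uncorrupted then $Y_i\sim\normal(0,\Sigma)$, and at most $2\eps n$ of the $Y_i$ touch a corrupted point, so $\{Y_i\}$ is a $2\eps$-corrupted sample from $\normal(0,\Sigma)$; this is the symmetrization trick of \cite{DKKLMS,LaiRV16}. By Lemma~\ref{lem:GoodSamplesOptCov}, polynomially many samples suffice for the uncorrupted part to be good, so Corollary~\ref{cor:cov-main} produces (after a preliminary coarse estimate to meet its preconditions) a $\Shat$ with $\|\Shat^{-1/2}\Sigma\,\Shat^{-1/2}-I\|_F\le O(\eps)$ with high probability, in time $\poly(d,(1/\eps)^{O(\log^4 1/\eps)},\log 1/\delta)$.

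\textbf{Step 2 (whiten and estimate the mean).} Draw a fresh $\eps$-corrupted sample $X_1',\dots,X_n'$ and set $Y_i'=\Shat^{-1/2}X_i'$; for uncorrupted $X_i'$ we have $Y_i'\sim\normal(\Shat^{-1/2}\mu,\Stilde)$ with $\Stilde=\Shat^{-1/2}\Sigma\,\Shat^{-1/2}$, and Step~1 gives $\|\Stilde-I\|_2\le\|\Stilde-I\|_F\le O(\eps)$. After a coarse centering so that the residual mean has norm $O(\eps\log 1/\eps)$, invoke Theorem~\ref{thm:mean-main2} with $\chi=O(\eps)$ to obtain a $\widetilde\mu$ with $\|\widetilde\mu-\Shat^{-1/2}\mu\|_2\le O(\eps)$ in time $\poly(d,1/\eps,\log 1/\delta)$. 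Output $(\widehat\mu,\Shat)$ with $\widehat\mu:=\Shat^{1/2}\widetilde\mu$.

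\textbf{Step 3 (convert to total variation).} Total variation distance is invariant under the invertible affine map $x\mapsto\Shat^{-1/2}x$, so
\[
\dtv\!\big(\normal(\mu,\Sigma),\normal(\widehat\mu,\Shat)\big)=\dtv\!\big(\normal(\Shat^{-1/2}\mu,\Stilde),\normal(\widetilde\mu,I)\big)\le \dtv\!\big(\normal(\mathbf{0},\Stilde),\normal(\mathbf{0},I)\big)+\dtv\!\big(\normal(\Shat^{-1/2}\mu,I),\normal(\widetilde\mu,I)\big),
\]
using the triangle inequality and translation-invariance in the covariance term. The first term is $O(\eps)$ by Lemma~\ref{lem:frobBound} (since $\|\Stilde-I\|_F\le O(\eps)\le O(\eps\log 1/\eps)$), and the second is $O(\eps)$ by Lemma~\ref{lem:mubound} (since $\|\widetilde\mu-\Shat^{-1/2}\mu\|_2\le O(\eps)$). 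The running time is dominated by \textsc{ImproveCov}, i.e.\ $\poly(d,(1/\eps)^{O(\log^4 1/\eps)},\log 1/\delta)$, and the sample complexity is polynomial.

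\textbf{Where the difficulty lies.} The genuinely hard work is already finished: it lives in Theorem~\ref{thm:optCov}/Corollary~\ref{cor:cov-main} and in the noise-tolerant mean estimator of Theorem~\ref{thm:mean-main2}. The only subtlety remaining at this stage is the bookkeeping of the composition — specifically, that the whitening in Step~2 must leave a covariance close to $I$ in \emph{Frobenius} norm (not merely spectral norm) so that Lemma~\ref{lem:frobBound} yields an $O(\eps)$ bound, and that the mean must be estimated by the \emph{spectrally-noise-tolerant} routine so that the residual $O(\eps)$ error in $\Stilde$ does not contaminate the mean estimate. Both requirements are precisely what Corollary~\ref{cor:cov-main} and Theorem~\ref{thm:mean-main2} were designed to supply, so no new ideas are needed here.
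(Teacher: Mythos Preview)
Your proposal is correct and follows essentially the same route as the paper's own argument in Section~\ref{sec:full}: symmetrize to kill the mean, run the covariance estimator of Corollary~\ref{cor:cov-main}, whiten a fresh sample, and finish with \textsc{RecoverMeanNoisy}. If anything, your write-up is a bit more careful than the paper's---you make explicit the coarse centering needed to satisfy the $\|\mu\|_2\le O(\eps\log 1/\eps)$ hypothesis of Theorem~\ref{thm:mean-main2}, you define $\widehat\mu=\Shat^{1/2}\widetilde\mu$ explicitly, and your Step~3 spells out the affine-invariance and triangle-inequality bookkeeping that the paper compresses into one sentence.
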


\bibliographystyle{alpha}
\bibliography{allrefs}
\appendix
\section{Lower Bounds on Agnostic Learning}
\label{sec:app-lb}
In this section, we prove information theoretic lower bounds for robust distribution learning.
In particular, we will prove that no algorithm (efficient or inefficient) can learn a general distribution from an $\ve$-corrupted set of samples to total variation distance less than $\frac{\ve}{1 - \ve}$.
We note that our lower bound applies in more general settings than we consider in this paper in couple of ways:
\begin{enumerate} \item Our construction is univariate;
\item Our construction works for \emph{any} pair of distributions which are $\frac{\ve}{1-\ve}$-close, and not just for Gaussian distributions; and
\item Our construction holds for Huber's $\ve$-contamination model, which is weaker than the noise model studied in this paper.
\end{enumerate}

In Huber's $\ve$-contamination model, data is drawn from a mixture distribution $(1-\ve)P + \ve Q$, where $P$ is some distribution that we wish to estimate and $Q$ is arbitrary (in particular, it might depend on $P$).
We will show that any two distributions which are $\frac{\ve}{1-\ve}$-close can be made indistinguishable under this contamination model.
\begin{lemma}
Let $p_1$ and $p_2$ be two distributions such that $\dtv(p_1,p_2) = \frac{\ve}{1-\ve}$.
Then there exist Huber $\ve$-contaminations $p_{1'}$ and $p_{2'}$ of $p_1$ and $p_2$ respectively, such that $p_{1'} = p_{2'}$.
Therefore, no algorithm can learn a distribution $p$ up to accuracy $< \frac{\ve}{1-\ve}$ in Huber's $\ve$-contamination model.
\end{lemma}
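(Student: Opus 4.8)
The plan is to build the common contamination explicitly from the Jordan--Hahn decomposition of the signed measure $p_1-p_2$. Write $p_1-p_2 = \sigma_+ - \sigma_-$, where $\sigma_+,\sigma_-$ are nonnegative measures with disjoint supports; by the very definition of total variation distance, each of $\sigma_+,\sigma_-$ has total mass exactly $\dtv(p_1,p_2)=\frac{\ve}{1-\ve}$. The key point is that scaling by the factor $\frac{1-\ve}{\ve}$ turns each of these into a genuine probability distribution. So I would set
\[
Q_1 = \frac{1-\ve}{\ve}\,\sigma_-\,,\qquad Q_2 = \frac{1-\ve}{\ve}\,\sigma_+\,,
\]
both of which are nonnegative and integrate to $1$. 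It is precisely the hypothesis $\dtv(p_1,p_2)=\frac{\ve}{1-\ve}$ that makes this scaling land exactly on probability distributions: if $\dtv(p_1,p_2)$ were strictly smaller one would have slack to absorb, and if it were strictly larger the construction would be impossible --- this is what pins down the threshold.

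Next I would define $p_{1'} := (1-\ve)p_1 + \ve Q_1$ and $p_{2'} := (1-\ve)p_2 + \ve Q_2$ and check they coincide. Since $\ve Q_1 = (1-\ve)\sigma_-$ and $\ve Q_2 = (1-\ve)\sigma_+$, we get
\[
p_{1'} - p_{2'} = (1-\ve)\big((p_1-p_2) + \sigma_- - \sigma_+\big) = (1-\ve)\big((\sigma_+-\sigma_-) + \sigma_- - \sigma_+\big) = 0\,.
\]
Concretely, both equal $(1-\ve)\,\max(p_1,p_2)$ taken pointwise, a nonnegative function of total mass $(1-\ve)\big(1+\tfrac{\ve}{1-\ve}\big)=1$, so everything in sight is a bona fide probability distribution. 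Thus $p_{1'}$ is a Huber $\ve$-contamination of $p_1$, $p_{2'}$ is a Huber $\ve$-contamination of $p_2$, and $p_{1'}=p_{2'}$.

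Finally, the lower bound follows by the standard two-point method: an algorithm fed i.i.d.\ samples from the common distribution $p_{1'}=p_{2'}$ produces an output $\hat p$ whose law is identical whether the ground truth is $p_1$ or $p_2$, so the two hypotheses are information-theoretically indistinguishable; since $\dtv(\hat p,p_1)+\dtv(\hat p,p_2)\ge \dtv(p_1,p_2)$ pointwise, the algorithm's error must be of order $\dtv(p_1,p_2)=\frac{\ve}{1-\ve}$ on one of the two instances, which cannot be avoided by any (efficient or inefficient) learner. There is essentially no hard step here: the only thing requiring care is verifying that $Q_1,Q_2$ (hence $p_{1'},p_{2'}$) are legitimate probability distributions, which is exactly where the value $\frac{\ve}{1-\ve}$ enters; and, if one wants the construction for \emph{every} pair at distance $\le \frac{\ve}{1-\ve}$ rather than exactly at the threshold, handling the slack by adding a common component to both $Q_1$ and $Q_2$. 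The univariate and non-Gaussian generality claimed in the statement is immediate, since the construction never used anything about $p_1,p_2$ beyond their total variation distance.
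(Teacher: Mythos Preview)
Your proof is correct and follows essentially the same approach as the paper: the paper writes $p_i = (1-\tfrac{\ve}{1-\ve})p_c + \tfrac{\ve}{1-\ve}q_i$ with $p_c$ the normalized pointwise minimum and $q_1,q_2$ the normalized positive/negative parts of $p_1-p_2$, then contaminates $p_1$ with $q_2$ and $p_2$ with $q_1$ --- exactly your $Q_1,Q_2$ in different notation. Your additional observation that the common contamination equals $(1-\ve)\max(p_1,p_2)$ pointwise, and your explicit two-point argument for the lower bound, are nice touches the paper leaves implicit.
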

\begin{proof}
Since $\dtv(p_1, p_2) = \frac{\ve}{1-\ve}$, we can write
\begin{align*}
p_1 &= \left(1 - \frac{\ve}{1 - \ve}\right)p_c + \frac{\ve}{1-\ve}q_1, \\
p_2 &= \left(1 - \frac{\ve}{1 - \ve}\right)p_c + \frac{\ve}{1-\ve}q_2,
\end{align*}
where $p_c, q_1, q_2$ are distributions.
Let $p_{1'}$ be the Huber $\ve$-contamination of $p_1$ with $q_2$,
    $p_{2'}$ be the Huber $\ve$-contamination of $p_2$ with $q_1$.
In other words, 
\begin{align*}
p_{1'} &= \left(1 - 2\ve\right)p_c + \ve q_1 + \ve q_2, \\
p_{2'} &= \left(1 - 2\ve\right)p_c + \ve q_2 + \ve q_1, \\
\end{align*}
and thus $p_{1'} = p_{2'}$ as desired.
\end{proof}

We note that a similar lower bound holds when $p_1$ and $p_2$ are required to be Gaussians, but weakened by a factor of $2$.
This is due to the geometry of the space of Gaussian distributions, and we note that in the case where the variance is known, this is achieved by the median.

\begin{lemma}
\label{lem:gaussian-huber-LB}
No algorithm can output an estimate for the mean of a unit-variance Gaussian at accuracy $< \left(\sqrt{\frac{\pi}{2}} - o(1)\right)\ve$ with probability $> 1/2$ in Huber's $\ve$-contamination model.
Consequently, by Lemma~\ref{lem:mubound}, no algorithm can learn a Gaussian to total variation distance $< \left(\frac12 - o(1)\right)\ve$ in Huber's $\ve$-contamination model.
\end{lemma}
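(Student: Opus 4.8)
The plan is to run a two-point (Le Cam style) indistinguishability argument on top of the preceding lemma, which says that any two distributions at total variation distance exactly $\frac{\ve}{1-\ve}$ admit Huber $\ve$-contaminations that coincide. First I would choose two means $\mu_1, \mu_2 \in \R$ with $\mu_1 \neq \mu_2$ such that $\dtv(\normal(\mu_1,1), \normal(\mu_2,1)) = \frac{\ve}{1-\ve}$. Such a choice exists by the intermediate value theorem, since the total variation distance between two unit-variance Gaussians is continuous in $|\mu_1 - \mu_2|$, vanishes at $0$, and tends to $1$; and by Lemma~\ref{lem:mubound} this pins down the separation as $|\mu_1 - \mu_2| = \sqrt{2\pi}\cdot\frac{\ve}{1-\ve}\cdot(1+o(1)) = (\sqrt{2\pi} + o(1))\ve$.

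Next I would apply the preceding lemma to $p_1 = \normal(\mu_1,1)$ and $p_2 = \normal(\mu_2,1)$ to obtain Huber $\ve$-contaminations $p_{1'}$ of $p_1$ and $p_{2'}$ of $p_2$ with $p_{1'} = p_{2'}$. Then any (possibly randomized) estimator, given i.i.d.\ samples from this common distribution, produces exactly the same output distribution whether the ground truth is $\mu_1$ or $\mu_2$. Set $r\ve = (\sqrt{\pi/2} - o(1))\ve$ and let $\mathrm{succ}_i$ be the event that the output lies within $r\ve$ of $\mu_i$. Since $2 r\ve = (\sqrt{2\pi} - o(1))\ve < |\mu_1 - \mu_2|$, the balls $B(\mu_1, r\ve)$ and $B(\mu_2, r\ve)$ are disjoint, so $\mathrm{succ}_1$ and $\mathrm{succ}_2$ are disjoint subsets of the common output space; hence $\Pr[\mathrm{succ}_1] + \Pr[\mathrm{succ}_2] \le 1$, so at least one of the two probabilities is $\le 1/2$, contradicting the assumed success probability $> 1/2$. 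This establishes the first claim.

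For the stated consequence, suppose toward a contradiction that some algorithm learns $\normal(\mu,1)$ to total variation distance $< (\tfrac12 - o(1))\ve$ with probability $> 1/2$ in Huber's model. Since the only unknown is $\mu$, we may take the output to be $\normal(\widehat\mu,1)$, and Lemma~\ref{lem:mubound} converts a total variation error of $(\tfrac12 - o(1))\ve$ into a mean error of $\sqrt{2\pi}(\tfrac12 - o(1))\ve = (\sqrt{\pi/2} - o(1))\ve$, contradicting the first part.

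The hard part is purely the bookkeeping of the $o(1)$ terms: one must check that the strict inequality $2 r\ve < |\mu_1 - \mu_2|$ really holds for $r = \sqrt{\pi/2} - o(1)$, given that the $o(1)$ error in Lemma~\ref{lem:mubound} could a priori have either sign and that $\frac{\ve}{1-\ve}$ differs from $\ve$ by a further $1 + o(1)$ factor as $\ve \to 0$. Everything else is the standard statistical-indistinguishability argument, and no concentration or algorithmic input is needed.
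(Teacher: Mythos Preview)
Your argument is correct and arrives at the same bound, but it proceeds differently from the paper. The paper does not invoke the preceding general indistinguishability lemma at all; instead it builds the common contaminated distribution by hand, taking $p_1=\normal(-\alpha,1)$, $p_2=\normal(\alpha,1)$ and setting $f(x)=\max\{p_1(x),p_2(x)\}/\eta$, then computing the normalizer $\eta = 1+\sqrt{2/\pi}\,\alpha - O(\alpha^3)$ explicitly via the error function to extract the constraint $\alpha \le (\sqrt{\pi/2}-o(1))\ve$. Your route is more modular: you use the general lemma to get indistinguishability for free once the two Gaussians sit at TV distance $\ve/(1-\ve)$, and you outsource the arithmetic to Lemma~\ref{lem:mubound} to read off $|\mu_1-\mu_2|=(\sqrt{2\pi}+o(1))\ve$, then finish with a clean Le~Cam two-point argument. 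The paper's version is self-contained and makes the contaminating distribution concrete; yours is shorter, reuses existing lemmas, and makes transparent that the only content specific to Gaussians is the TV-to-mean conversion of Lemma~\ref{lem:mubound}. One small bonus of your route: since $\dtv(\normal(\mu_1,1),\normal(\mu_2,1)) \le \tfrac{1}{\sqrt{2\pi}}|\mu_1-\mu_2|$ (the $o(1)$ in Lemma~\ref{lem:mubound} is nonpositive), forcing the TV to equal $\ve/(1-\ve)>\ve$ actually gives $|\mu_1-\mu_2| > \sqrt{2\pi}\,\ve$ strictly, so your disjoint-balls step goes through without any delicate $o(1)$ bookkeeping.
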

\begin{proof}
We will consider the distributions $p_1 = \mathcal{N}(-\a, 1)$ and $p_2 = \mathcal{N}(\a, 1)$, where $\a$ is to be specified later.
We note that if $p_1$ and $p_2$ can be $\ve$-corrupted into the same distribution, then the best estimate for the mean is to output $0$ (by symmetry), and the result holds.

We will show that $p_1$ can be $\ve$-corrupted into a distribution $f$, where $f(x) = \frac{\max\left\{p_1(x), p_2(x)\right\}}{\eta}$ and $\eta$ is a normalizing constant (the case of $p_2$ follows similarly).
In other words, $f$ can be written as $(1-\ve)p_1 + \ve q_1$ for some distribution $q_1$.
Since $q_1$ is a distribution, it is non-negative, and thus we require that $(1 - \ve)p_1(x) \leq f(x)$ for all $x \in \mathbb{R}$.
Note that $f(x) = \frac{\max\left\{p_1(x), p_2(x)\right\}}{\eta} \geq \frac{p_1(x)}{\eta}$.

We can compute $\eta$ as follows:
\begin{align*}
\eta &= \int_{-\infty}^{\infty} \max \left\{ p_1(x), p_2(x)\right\} dx = 2 \int_{0}^{\infty} p_2(x) dx \\
&= 2\left(\frac{1}{2} + \frac{1}{2}\erf\left(\frac{\a}{\sqrt{2}}\right)\right) = 1 +  \erf\left(\frac{\a}{\sqrt{2}}\right) = 1 +  \sqrt{\frac{2}{\p}}\a - O(\a^3)
\end{align*}

Thus, $$f(x) \geq \frac{p_1(x)}{\eta} = \frac{p_1(x)}{1 + \sqrt{\frac{2}{\p}}\a - O(\a^3)} \geq \left(1 - \sqrt{\frac{2}{\p}}\a  + O(\a^2)\right)p_1(x).$$
Again, since we require that $(1 - \ve)p_1(x) \leq f(x)$, it suffices that $1 - \ve \leq 1 - \sqrt{\frac{2}{\p}}\a + O(\a^2)$ and thus that $\sqrt{\frac{2}{\p}}\a - O(\a^2) \leq \ve$.
This corresponds to $\a \leq \left(\sqrt{\frac{\p}{2}} - o(1)\right) \ve$, and the proof is complete.
\end{proof}

Observe that this lower bound can be strengthened by a factor of two in the subtractive adversary model, where the adversary is able to both remove $\ve N$ samples and add $\ve N$ samples.
In this infinite sample regime, this means that a distribution can be corrupted to any distribution which is $\ve$-far in total variation distance.
By noting that $\mathcal{N}(-\a, 1)$ and $\mathcal{N}(\a, 1)$ are both $\ve$-far from $\mathcal{N}(0,1)$ for $\a = \left(\sqrt{2\p} + o(1)\right)\ve$, we obtain the following lemma.
Note that this lower bound is also achieved by the median in this model.

\begin{lemma}
No algorithm can output an estimate for the mean of a unit-variance Gaussian at accuracy $< \left(\sqrt{2\pi} + o(1)\right)\ve$ with probability $> 1/2$ in the subtractive adversary model.
Consequently, by Lemma~\ref{lem:mubound}, no algorithm can learn a Gaussian to total variation distance $< \ve$ in the subtractive adversary model.
\end{lemma}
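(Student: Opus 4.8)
The plan is to reproduce the two‑point indistinguishability argument of Lemma~\ref{lem:gaussian-huber-LB}, but to exploit the extra power of the subtractive adversary, which widens the separation from $\sqrt{\pi/2}$ up to $\sqrt{2\pi}$. The first step is to record the basic structural fact about this model: in the infinite‑sample regime, an adversary who may delete an $\ve$‑fraction of the samples and insert an arbitrary $\ve$‑fraction can transform a distribution $P$ into \emph{any} distribution $Q$ with $\dtv(P,Q)\le\ve$. Indeed, writing $P=(1-\ve')r+\ve' q_P$ and $Q=(1-\ve')r+\ve' q_Q$ for the common overlap measure $r$, where $\ve'=\dtv(P,Q)\le\ve$ and $q_P,q_Q$ are supported on the regions where each density dominates, one simply deletes the mass coming from $q_P$ and inserts mass distributed as $q_Q$.

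Second, I would instantiate this with the pair $p_1=\mathcal{N}(-\a,1)$ and $p_2=\mathcal{N}(\a,1)$, choosing $\a$ as large as possible subject to both being within total variation distance $\ve$ of $\mathcal{N}(0,1)$. By Lemma~\ref{lem:mubound} we have $\dtv(\mathcal{N}(0,1),\mathcal{N}(\pm\a,1))=\bigl(\tfrac{1}{\sqrt{2\pi}}+o(1)\bigr)\a$, so this permits $\a=(\sqrt{2\pi}+o(1))\ve$ (the $o(1)$ absorbing the lower‑order term of Lemma~\ref{lem:mubound}). By the first step, a subtractive adversary can corrupt samples from either $p_1$ or $p_2$ into samples that are distributed exactly like samples from $\mathcal{N}(0,1)$.

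Third, I conclude by the standard symmetry argument. Any (possibly randomized) estimator's output $\hat\mu$ is a function of samples drawn from $\mathcal{N}(0,1)$, hence has the same law regardless of whether the true mean is $-\a$ or $+\a$. Achieving accuracy below $\a$ on the instance $\mu=-\a$ forces $\hat\mu\in(-2\a,0)$, while achieving it on $\mu=+\a$ forces $\hat\mu\in(0,2\a)$; these events are disjoint, so both cannot have probability exceeding $1/2$. Thus on at least one of the two instances the estimator errs by at least $\a=(\sqrt{2\pi}+o(1))\ve$ with probability at least $1/2$, which is the first claim. For the total variation consequence, substitute $\|\hat\mu-\mu\|_2=\a$ back into Lemma~\ref{lem:mubound} to get $\dtv(\mathcal{N}(\hat\mu,1),\mathcal{N}(\mu,1))=(1+o(1))\ve$, so no algorithm can learn the Gaussian to total variation distance below $(1-o(1))\ve$. (Matching upper bound: in this model the median of the $\mathcal{N}(0,1)$‑corrupted samples is displaced by at most $\a$, so the median attains the bound.)

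I do not expect a genuine obstacle here — the argument is essentially a restatement of the construction already sketched before the lemma. The one place that requires a little care, and which is the source of the $o(1)$ in the statement, is the passage from the clean population picture in the first step to finitely many samples: one must check that the deletions and insertions needed to map $N$ i.i.d.\ draws from $\mathcal{N}(\pm\a,1)$ to the law of $N$ i.i.d.\ draws from $\mathcal{N}(0,1)$ can be carried out using only $(\ve+o(1))N$ edits except with negligible probability. This is a routine Chernoff bound: the number of sample points falling in the region where the two densities differ concentrates around $\ve' N\le\ve N$, so the required number of edits is $(\ve+o(1))N$ with overwhelming probability, and the slack can be folded into the $o(1)$.
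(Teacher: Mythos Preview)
Your proposal is correct and follows exactly the approach the paper uses: the paper's own argument is the short paragraph preceding the lemma, which observes that in this model a distribution can be corrupted to any distribution within total variation $\ve$, and then takes $p_1=\mathcal{N}(-\a,1)$, $p_2=\mathcal{N}(\a,1)$ with $\a=(\sqrt{2\pi}+o(1))\ve$, both corruptible to $\mathcal{N}(0,1)$. You have simply fleshed out the details (the explicit overlap decomposition, the two-point symmetry argument, and the finite-sample Chernoff remark) that the paper leaves implicit.
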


Finally, we conclude by sketching a lower bound for mean estimation of sub-Gaussian distributions.
\begin{lemma}
No algorithm can output an estimate for the mean of a sub-Gaussian distribution at accuracy $o(\ve \log^{1/2}(1/\ve))$ with probability $> 1/2$ in the Huber's $\ve$-contamination model.
\end{lemma}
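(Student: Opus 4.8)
The plan is to follow the template of Lemma~\ref{lem:gaussian-huber-LB}: exhibit two sub-Gaussian distributions $P_1, P_2$ whose means are $\Omega(\ve\sqrt{\log 1/\ve})$ apart but which are at total variation distance exactly $\frac{\ve}{1-\ve}$, and then invoke the first lemma of this section (any two distributions at TV distance $\frac{\ve}{1-\ve}$ admit a common Huber $\ve$-contamination) to conclude that no estimator fed the contaminated samples can tell which of the two means is the true one. Concretely, I would set $\beta=\frac{\ve}{1-\ve}$ and $t=c\sqrt{\log(1/\ve)}$ for a small universal constant $c<\sqrt2$, and define
\[
P_1=(1-\beta)\,\normal(0,1)+\beta\,\delta_{t},\qquad P_2=(1-\beta)\,\normal(0,1)+\beta\,\delta_{-t},
\]
where $\delta_x$ is the point mass at $x$. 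Since $P_1$ and $P_2$ share the component $(1-\beta)\normal(0,1)$ and differ only in the placement of the remaining $\beta$ mass, $\dtv(P_1,P_2)=\beta=\frac{\ve}{1-\ve}$, while $\E_{P_1}[X]=\beta t$ and $\E_{P_2}[X]=-\beta t$, so the two means differ by $2\beta t=\Theta(\ve\sqrt{\log 1/\ve})$.

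Next I would verify that $P_1$ (and symmetrically $P_2$) is sub-Gaussian with a variance proxy that is a universal constant, independent of $\ve$. The cleanest route is the moment generating function: using $\lambda t\le\tfrac{\lambda^2}{2}+\tfrac{t^2}{2}$,
\[
\E_{P_1}[e^{\lambda X}]=(1-\beta)e^{\lambda^2/2}+\beta e^{\lambda t}\le e^{\lambda^2/2}\bigl(1-\beta+\beta e^{t^2/2}\bigr)\le e^{\lambda^2/2}\bigl(1+\beta e^{t^2/2}\bigr),
\]
and since $t^2=c^2\log(1/\ve)$ we have $\beta e^{t^2/2}=\Theta(\ve^{1-c^2/2})$, which is at most $1$ for $c<\sqrt2$ and $\ve$ small. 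Combined with the fact that $\E_{P_1}[X]=\beta t=o(1)$, this shows the centered MGF of $P_1$ is bounded by $e^{O(\lambda^2)}$ for all $\lambda$, i.e. $P_1$ has variance proxy $O(1)$. (Equivalently one can verify the sub-Gaussian tail bound directly: the only added tail mass is $\beta$ at radius $t$, and $\beta\le e^{-t^2/2}$ for $c\le\sqrt2$, which is exactly the standard Gaussian tail at radius $t$.)

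To finish, I would invoke the indistinguishability lemma to get Huber $\ve$-contaminations $R_1$ of $P_1$ and $R_2$ of $P_2$ with $R_1=R_2=:R$. An algorithm fed samples from $R$ cannot decide whether the underlying sub-Gaussian has mean $+\beta t$ or $-\beta t$, so for any (possibly randomized) estimator $\hat\mu$ we have $\max\{\Pr[|\hat\mu-\beta t|\ge\beta t],\ \Pr[|\hat\mu+\beta t|\ge\beta t]\}\ge 1/2$; hence on at least one of $P_1,P_2$ the estimator incurs error at least $\beta t=\Omega(\ve\sqrt{\log 1/\ve})$ with probability at least $1/2$, and no estimator can achieve error $o(\ve\sqrt{\log 1/\ve})$ with probability $>1/2$ on all sub-Gaussian inputs.

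The only real subtlety — and the step I would be most careful with — is the three-way tension in the construction: the TV distance must stay at $\frac{\ve}{1-\ve}$ (forcing the displaced mass to be $\Theta(\ve)$), the mean gap wants the displacement $t$ as large as possible, and preserving a universal sub-Gaussian constant caps $t$ at $O(\sqrt{\log 1/\ve})$. The point is that this cap is attained with equality, which is precisely why the $\sqrt{\log 1/\ve}$ factor is the right answer and matches the upper bound of~\cite{DiakonikolasKKLMS17}; everything else is a routine computation in the style of Lemma~\ref{lem:gaussian-huber-LB}.
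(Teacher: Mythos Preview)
Your proof is correct and uses a genuinely different construction from the paper's. The paper builds $p_1$ and $p_2$ by \emph{truncating} one tail of $\mathcal{N}(0,1)$ at height $c_1\sqrt{\log(1/\ve)}$ and renormalizing; both truncated distributions can then be Huber $\ve$-contaminated back to $\mathcal{N}(0,1)$ itself, and the mean shift arises from the mass removed in the tail. You instead \emph{add} a $\beta=\ve/(1-\ve)$ point mass at $\pm c\sqrt{\log(1/\ve)}$ on top of a $(1-\beta)$-scaled Gaussian and invoke the general indistinguishability lemma from earlier in the section.

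Your route is arguably cleaner: the TV distance and the means are read off immediately, and the sub-Gaussian check reduces to the single inequality $\beta\le e^{-t^2/2}$ (i.e.\ $c\le\sqrt{2}$). The paper's construction has the mild aesthetic advantage that the hard distributions are absolutely continuous and the common contamination is the standard Gaussian itself, but verifying the mean shift and the sub-Gaussian constant for the truncated-and-rescaled density requires a tail integral that you avoid entirely. One small caveat: your MGF bound $\E_{P_1}[e^{\lambda X}]\le 2e^{\lambda^2/2}$ does not by itself give the \emph{strict} centered-MGF definition of sub-Gaussianity (the leading constant must be $1$ at $\lambda=0$); but, as you correctly indicate, passing through the tail bound $\Pr[|X|>s]\le O(e^{-s^2/2})$ and the standard equivalence of sub-Gaussian norms recovers an $O(1)$ variance proxy, which is all the lemma requires.
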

\begin{proof}
We start with the distribution $q = \mathcal{N}(0,1)$.
We construct $p_1$ by truncating the right tail of $q$ at the point $x_r = c_1\log^{1/2} (1/\ve)$ for some constant $c_1$, and rescaling the rest of the distribution appropriately.
Observe that, for an appropriate choice of $c_1$:
\begin{itemize}
\item $p_1$ is sub-Gaussian with a constant of $O\left(\frac{1}{1-\ve}\right)$;
\item The mean of $p_1$ is $-c_2\ve \log^{1/2} (1/\ve)$ for some constant $c_2$;
\item $p_1$ can be corrupted in Huber's $\ve$-contamination model to be $q$.
\end{itemize}
We can similarly consider $p_2$, which is constructed by truncating the left tail of $q$ at the point $x_l = -c_1\log^{1/2} (1/\ve)$.
Since $p_1$ and $p_2$ are indistinguishable when they are both $\ve$-corrupted to $q$, and the mean of all three distributions are separated by $\geq c_2 \ve \log^{1/2} (1/\ve)$, the lemma follows.
\end{proof}

\section{Omitted Proofs from Section \ref{sec:prelims}}
\subsection{Proof of Lemma \ref{lem:mubound}}
\begin{proof}
Observe that by rotational and translational invariance, it suffices to consider the problem when $\mu_1 = - \eps e_1 / 2$ and $\mu_2 = \eps e_1 / 2$, where $e_1$ is the first standard basis vector.
By the decomposability of TV distance, we have that the TV distance can in fact be written as a 1 dimensional integral:
\[
\dtv\left(\normal(\m_1, I), \normal(\m_2, I)\right) = \frac12 \cdot \frac{1}{\sqrt{2 \pi}} \int_{-\infty}^\infty \left| e^{- (x - \eps / 2)^2 / 2} - e^{- (x + \eps / 2)^2 / 2} \right| dx \; .
\]
The value of the function $f(x) = e^{- (x - \eps / 2)^2 / 2} - e^{- (x + \eps / 2)^2 / 2}$ is negative when $x < 0$ and positive when $x > 0$, hence this integral becomes
\begin{align*}
\dtv\left(\normal(\m_1, I), \normal(\m_2, I)\right) &= \frac{1}{\sqrt{2 \pi}} \int_{0}^\infty e^{- (x - \eps / 2)^2 / 2} - e^{- (x + \eps / 2)^2 / 2} dx \\
&= F(\eps / 2) - F(- \eps / 2) \; ,
\end{align*}
where $F(x) = \frac{1}{\sqrt{2 \pi}}\int_{-\infty}^x e^{-t^2 / 2} dt$ is the CDF of the standard normal Gaussian.
By Taylor's theorem, and since $F'' (x)$ is bounded when $x \in [-1, 1]$, we have
\begin{align*}
F(\eps / 2) - F(- \eps / 2) &= F' (-\eps / 2) \eps + O(\eps^3 ) \\
&= \frac{1}{\sqrt{2 \pi}} e^{-(\eps / 2)^2 / 2} \eps + O(\eps^3) \\
&= \left( \frac{1}{\sqrt{2 \pi}} + o(1) \right) \eps \; ,
\end{align*}
which proves the claim.
\end{proof}

\section{Omitted Proofs from Section \ref{sec:mean}}
\label{app:meanManyEig}

\subsection{Proof of Theorem \ref{thm:meanManyEig}}

First, we require the following several claims about $p(x)$ under various distributions, including the true Gaussian and from choosing points uniformly at random from the sets of interest.
Note that throughout this section we will use Lemma~\ref{lowDimLearningLem}, which implies that $\|\Pi_{V'} \m - \tilde \m\|_2 < 2\ve$ when the parameters $\a, \g$ are chosen appropriately.

\begin{claim}
\label{claim:poly-mean}
$\E_{\normal (\m, I)}[p(X)] = \|\Pi_{V'} \m - \tilde \m\|_2^2$.
\end{claim}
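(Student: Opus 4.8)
This is a routine bias–variance computation, and I expect no real obstacle. The plan is to expand $p$ in an orthonormal basis of $V'$ and use that a standard Gaussian projected onto a $k$-dimensional subspace has expected squared distance $k$ from its mean.

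Concretely, write $k = \dim(V')$ and fix an orthonormal basis $e_1, \dots, e_k$ of $V'$. Since $\tilde\mu$ is an estimate of $\Pi_{V'}(\mu)$, we may (and do) assume $\tilde\mu \in V'$, so that $\Pi_{V'}(x) - \tilde\mu = \sum_{i=1}^k (\langle e_i, x\rangle - \langle e_i, \tilde\mu\rangle)\, e_i$ and hence
\[
\|\Pi_{V'}(x) - \tilde\mu\|_2^2 = \sum_{i=1}^k \left(\langle e_i, x\rangle - \langle e_i, \tilde\mu\rangle\right)^2 .
\]
Now take $X \sim \normal(\mu, I)$. Then the coordinates $\langle e_i, X\rangle$ are independent with $\langle e_i, X\rangle \sim \normal(\langle e_i, \mu\rangle, 1)$, so by the standard decomposition of the second moment into variance plus squared bias,
\[
\E_{\normal(\mu, I)}\!\left[\left(\langle e_i, X\rangle - \langle e_i, \tilde\mu\rangle\right)^2\right] = 1 + \left(\langle e_i, \mu - \tilde\mu\rangle\right)^2 .
\]
Summing over $i$ and using $\sum_{i=1}^k \langle e_i, \mu - \tilde\mu\rangle^2 = \|\Pi_{V'}(\mu - \tilde\mu)\|_2^2 = \|\Pi_{V'}\mu - \tilde\mu\|_2^2$ (again because $\tilde\mu \in V'$) gives
\[
\E_{\normal(\mu, I)}\!\left[\|\Pi_{V'}(X) - \tilde\mu\|_2^2\right] = k + \|\Pi_{V'}\mu - \tilde\mu\|_2^2 .
\]
Subtracting $\dim(V') = k$, which is exactly the definition of $p$, yields $\E_{\normal(\mu,I)}[p(X)] = \|\Pi_{V'}\mu - \tilde\mu\|_2^2$, as claimed.

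The only thing to be careful about is the assumption $\tilde\mu \in V'$, which is harmless: \textsc{LearnMeanLowD} returns an estimate living in the relevant subspace, and in any case replacing $\tilde\mu$ by $\Pi_{V'}(\tilde\mu)$ changes neither $p$ restricted to inputs of interest nor the quantity $\|\Pi_{V'}\mu - \tilde\mu\|_2^2$ in the way that matters for the subsequent filter argument. No quantitative estimates or concentration bounds are needed here; the statement is an exact identity about the Gaussian $\normal(\mu, I)$.
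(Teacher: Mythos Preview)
Your proposal is correct and is essentially identical to the paper's own proof: both expand $\|\Pi_{V'}(x)-\tilde\mu\|_2^2$ in an orthonormal basis of $V'$ and apply the variance-plus-squared-bias identity coordinatewise. Your explicit remark that $\tilde\mu\in V'$ (which the paper leaves implicit) is a nice clarification but not a departure in method.
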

\begin{proof}
Letting $v_1, \dots, v_{C_1 \b \log (1/\ve)}$ be an orthonormal basis of $V'$, we have 
$$\E_{\normal (\m, I)}[p(X)] 
= \sum_{i=1}^{C_1 \b \log(1/\ve)} \E_{\normal(v_i^T\m, 1)}[\langle v_i, X - \tilde \m \rangle^2 - 1] 
= \sum_{i=1}^{C_1 \b \log(1/\ve)} (v_i^T (\m - \tilde \m))^2  = \|\Pi_{V'} \m - \tilde \m\|_2^2.$$
\end{proof}

\begin{claim}
\label{claim:poly-tail-mean}
For some absolute constant $c_0$ and all $t \geq \|\Pi_{V'} \m - \tilde \m\|_2^2$, 
$$\Pr_{\normal (\m, I)}(p(X) > t) \leq 2\exp\left(-c_0 \cdot \min \left(\frac{(t-\|\Pi_{V'} \m - \tilde \m\|_2^2)^2}{C_1 \b \log(1/\ve)}, t-\|\Pi_{V'} \m - \tilde \m\|_2^2\right)\right).$$
\end{claim}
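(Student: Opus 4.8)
The plan is to exploit the very explicit structure of $p$: on $V'$ it is a shifted sum of squares of $\dim(V')$ independent univariate Gaussians, so after diagonalizing the projection onto $V'$ we land precisely in the setting of the Hanson--Wright inequality (Lemma~\ref{lem:hanson-wright}).

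\textbf{Step 1 (reduction to a noncentral $\chi^2$).} Fix an orthonormal basis $v_1,\dots,v_k$ of $V'$, where $k=\dim(V')=C_1\b\log(1/\ve)$, and as in the proof of Claim~\ref{claim:poly-mean} write $\mutilde$ in this basis so that $p(X)=\sum_{i=1}^k \langle v_i,X-\mutilde\rangle^2 - k$. For $X\sim\normal(\m,I)$ the variables $W_i:=\langle v_i,X-\m\rangle$ are i.i.d.\ $\normal(0,1)$, so with $a_i:=\langle v_i,\m-\mutilde\rangle$ and $\Delta_0:=\sum_i a_i^2=\|\Pi_{V'}\m-\mutilde\|_2^2$ we get
$$p(X)=\sum_{i=1}^k (W_i+a_i)^2-k=\sum_{i=1}^k(W_i^2-1)\;+\;2\sum_{i=1}^k a_iW_i\;+\;\Delta_0 .$$
By Claim~\ref{claim:poly-mean}, $\Delta_0=\E_{\normal(\m,I)}[p(X)]$, and by Lemma~\ref{lowDimLearningLem} (with $\a,\g$ chosen so that $\|\Pi_{V'}\m-\mutilde\|_2<2\ve$, as noted at the start of this appendix) we have $\Delta_0<4\ve^2$, so $\Delta_0$ is negligible next to $k=C_1\b\log(1/\ve)$.

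\textbf{Step 2 (concentration of the two pieces).} Apply Lemma~\ref{lem:hanson-wright} to the standard Gaussian vector $W=(W_1,\dots,W_k)$ with $A=I_k$ (so $\|A\|_F^2=k$, $\|A\|_2=1$): for every $s\ge 0$,
$$\Pr\!\left[\Big|\sum_{i=1}^k(W_i^2-1)\Big|> s\right]\le 2\exp\!\left(-c_0\min\!\left(\tfrac{s^2}{k},\,s\right)\right).$$
For the linear term, $2\sum_i a_iW_i\sim\normal(0,4\Delta_0)$, hence $\Pr[\,|2\sum_i a_iW_i|>s\,]\le 2\exp(-s^2/(8\Delta_0))\le 2\exp(-s^2/(32\ve^2))$; since $k\gg\ve^2$, this tail is dominated by the quadratic one whenever $s$ exceeds a fixed absolute constant.

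\textbf{Step 3 (combine).} For $t\ge\Delta_0$ put $s=t-\Delta_0\ge 0$. If $s$ is below a fixed absolute constant, the claimed right-hand side already exceeds $1$ and there is nothing to prove; otherwise a union bound over the two pieces at level $s/2$ gives
$$\Pr_{\normal(\m,I)}\big[p(X)>t\big]=\Pr\big[p(X)-\Delta_0>s\big]\le 2\exp\!\left(-c_0'\min\!\left(\tfrac{(t-\Delta_0)^2}{k},\,t-\Delta_0\right)\right),$$
after renaming the absolute constant, which is exactly the claim with $k=C_1\b\log(1/\ve)$ and $\Delta_0=\|\Pi_{V'}\m-\mutilde\|_2^2$.

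\textbf{Main obstacle.} The only genuine subtlety is the cross term: Hanson--Wright as stated handles just the quadratic part $\sum_i(W_i^2-1)$, so one must either absorb the linear term $2\sum_i a_iW_i$ separately (as above, crucially using $\|\Pi_{V'}\m-\mutilde\|_2=O(\ve)$ from the low-dimensional learning guarantee so that $\Delta_0\ll k$), or instead invoke a noncentral $\chi^2$ tail bound in the spirit of Laurent--Massart, which the cited \cite{LM00} also provides. The remaining work---tracking the absolute constants $c_0\to c_0'$ through the union bound and disposing of the ``$s$ small'' regime---is routine.
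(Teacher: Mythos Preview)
Your proof is correct and follows essentially the same approach as the paper: re-center the polynomial via Claim~\ref{claim:poly-mean} and invoke Hanson--Wright with $A$ the projection onto $V'$ (so $\|A\|_2\le 1$ and $\|A\|_F^2=C_1\b\log(1/\ve)$). The paper's one-line proof simply asserts this, whereas you are more explicit about the linear cross-term $2\sum_i a_iW_i$ that arises because $X$ is not centered at $\mutilde$; since Lemma~\ref{lem:hanson-wright} as stated is for $X\sim\normal(0,I)$, your decomposition into a centered quadratic piece plus a linear piece, together with the separate Gaussian tail bound on the latter (using $\Delta_0=O(\ve^2)\ll k$), is the honest way to carry out what the paper intends.
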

\begin{proof}
This follows from Lemma \ref{lem:hanson-wright}, after re-centering the polynomial using Claim \ref{claim:poly-mean} and noting that the spectral norm and squared Frobenius norm of the corresponding $A$ matrix are at most $1$ and $C_1 \b \log (1/\ve)$, respectively.
\end{proof}

\begin{claim}
\label{claim:meanSBound}
$\E_S[p(X)] \geq  C_1 \ve \log (1/\ve)$.
\end{claim}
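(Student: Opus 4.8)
The plan is a direct calculation: I would reduce $\E_S[p(X)]$ to a sum of one-dimensional empirical variances along an orthonormal basis of $V'$, each term being large simply by the way $V$ (hence $V'$) was defined. No concentration or goodness is needed here; this is a purely deterministic consequence of the case hypothesis $\dim(V) \geq C_1\b\log(1/\ve)$.

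First I would fix an orthonormal basis $v_1,\dots,v_k$ of $V'$, where $k = \dim(V') = C_1\b\log(1/\ve)$, and rewrite the quadratic as $p(x) = \|\Pi_{V'}(x)-\tilde\mu\|_2^2 - \dim(V') = \sum_{i=1}^k\big(\langle v_i, x-\tilde\mu\rangle^2 - 1\big)$, so that taking empirical expectations gives
\[
\E_S[p(X)] = \sum_{i=1}^k\Big(\E_S[\langle v_i, X-\tilde\mu\rangle^2]-1\Big).
\]
Next, expanding each term around the empirical mean $\muhat = \E_S[X]$ yields $\E_S[\langle v_i,X-\tilde\mu\rangle^2] = \Var_S[\langle v_i,X\rangle] + \langle v_i,\muhat-\tilde\mu\rangle^2 \geq \Var_S[\langle v_i,X\rangle]$, where I simply drop the nonnegative cross term. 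Since $\Shat$ is the empirical \emph{covariance} of $S$, we have $\Var_S[\langle v_i,X\rangle] = v_i^T\Shat v_i$, and because $v_i\in V'\subseteq V$ and $V$ is the span of the eigenvectors of $\Shat-I$ with eigenvalue exceeding $\frac1\b\ve$, every unit vector $v\in V$ satisfies $v^T(\Shat-I)v > \frac1\b\ve$ (expand $v$ in the eigenbasis; the lower bound survives the convex combination of eigenvalues). Hence $\Var_S[\langle v_i,X\rangle]-1 > \frac1\b\ve$ for each $i$.

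Summing over $i=1,\dots,k$ then gives
\[
\E_S[p(X)] \geq \sum_{i=1}^k\big(\Var_S[\langle v_i,X\rangle]-1\big) > k\cdot\frac{\ve}{\b} = C_1\b\log(1/\ve)\cdot\frac{\ve}{\b} = C_1\ve\log(1/\ve),
\]
which is exactly the claim. There is no real obstacle here; the only points that require a moment's care are (i) using that $\Shat$ denotes the empirical covariance, not the raw second-moment matrix, so that $v_i^T\Shat v_i$ equals $\Var_S[\langle v_i,X\rangle]$ on the nose, and (ii) the (routine) step that the eigenvalue lower bound for the eigenvectors of $\Shat-I$ transfers to every unit vector in their span. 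If $C_1\b\log(1/\ve)$ is not an integer one takes $\dim(V') = \lceil C_1\b\log(1/\ve)\rceil$ (possible since $\dim(V)\geq C_1\b\log(1/\ve)$ up to the implicit constant), which only helps.
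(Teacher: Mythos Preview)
Your proposal is correct and follows essentially the same approach as the paper's proof. The paper writes the argument more compactly as
\[
\E_S[p(X)] = \E_S[\|\Pi_{V'}X - \tilde\mu\|_2^2] - \dim(V') \geq \E_S[\|\Pi_{V'}X - \Pi_{V'}\hat\mu\|_2^2] - \dim(V') \geq \Big(1+\tfrac{1}{\b}\ve\Big)\dim(V') - \dim(V'),
\]
which is exactly your steps (drop the nonnegative recentering term, then use the eigenvalue lower bound on $V'$) carried out at the level of the projection rather than coordinate-by-coordinate along an orthonormal basis.
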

\begin{proof}
Recall that $\hat \m$ is the empirical mean of the point set.
\begin{align*}
\E_S[p(X)] 
&= \E_S[\|\P_{V'}X - \tilde \m\|_2^2] - \dim(V') \\
&\geq \E_S[\|\P_{V'}X - \P_{V'} \hat \m\|_2^2] - \dim(V') \\
&\geq \left(1 + \frac{1}{\b} \ve\right) \dim(V') - \dim(V')  =  C_1 \ve\log (1/\ve) \; .
\end{align*}
\end{proof}

\begin{claim}
\label{claim:meanG0Bound}
$\E_{G_0}[p(X)] \leq \|\m - \tilde \m\|_2^2 + O(\g)\ve$.
\end{claim}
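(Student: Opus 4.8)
The plan is to show $\big|\E_{G_0}[p(X)] - \E_{\normal(\mu,I)}[p(X)]\big| = O(\gamma)\eps$ and then invoke Claim~\ref{claim:poly-mean}, which gives $\E_{\normal(\mu,I)}[p(X)] = \|\Pi_{V'}\mu - \tilde\mu\|_2^2 \le \|\mu - \tilde\mu\|_2^2$ (the last inequality since $\mu - \tilde\mu$ splits orthogonally as $(\Pi_{V'}\mu - \tilde\mu) + \Pi_{(V')^\perp}\mu$). The one genuine subtlety — and the step I expect to be the crux — is that $p(x) = \|\Pi_{V'}(x)-\tilde\mu\|_2^2 - \dim(V')$ is \emph{not} an even polynomial (it carries a linear term), so the even-degree-$2$ regularity bounds do not apply to it directly; worse, even the even quadratic part of $p$ cannot be controlled via the moment bound $|\E_{G_0}[q] - \E_{\normal(\mu,I)}[q]| \le \gamma\eps\,\E_{\normal(\mu,I)}[q^2]^{1/2}$, because here $\E_{\normal(\mu,I)}[q^2]^{1/2} = \Theta(\sqrt{\dim(V')}) = \Theta(\sqrt{\beta\log(1/\eps)})$, which would lose the extra $\sqrt{\beta\log(1/\eps)}$ factor we cannot afford. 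The fix is to route the quadratic part through the sharper \emph{spectral}-norm covariance condition together with the first-moment condition.

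Concretely, I would write $p(x) = q(x) + \ell(x)$, where $q(x) = \|\Pi_{V'}(x)\|_2^2 + \|\tilde\mu\|_2^2 - \dim(V')$ is an even degree-$2$ polynomial and $\ell(x) = -2\langle x, \Pi_{V'}\tilde\mu\rangle$ is linear. For the quadratic part, note that for any distribution $D$ one has $\E_D[\|\Pi_{V'}X\|_2^2] = \tr\!\big(\Pi_{V'}\,\E_D[XX^T]\big)$ and $\E_D[XX^T] = \mathrm{Cov}_D[X] + \E_D[X]\E_D[X]^T$. Taking $D = G_0$ and $D = \normal(\mu,I)$, and writing $\mathrm{Cov}_{G_0}[X] = I + M$ with $\|M\|_2 \le \gamma\eps/d$ (covariance condition) and $\E_{G_0}[X] = \mu + \nu$ with $\|\nu\|_2 \le \gamma\eps$ (mean condition), the "clean" pieces $I + \mu\mu^T$ cancel and the difference of the traces is $\tr\!\big(\Pi_{V'}(M + \mu\nu^T + \nu\mu^T + \nu\nu^T)\big)$. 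Here $|\tr(\Pi_{V'}M)| \le \dim(V')\|M\|_2 \le \dim(V')\cdot(\gamma\eps/d) \le \gamma\eps$ since $\dim(V') \le d$, while each of the remaining three traces is $O\!\big(\|\nu\|_2(\|\mu\|_2 + \|\nu\|_2)\big) = O(\gamma\eps\cdot\eps\log(1/\eps)) = o(\gamma\eps)$, using $\|\mu\|_2 = O(\eps\log(1/\eps))$. Hence $|\E_{G_0}[q(X)] - \E_{\normal(\mu,I)}[q(X)]| = O(\gamma)\eps$.

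For the linear part, $\E_{G_0}[\ell(X)] - \E_{\normal(\mu,I)}[\ell(X)] = -2\langle \E_{G_0}[X] - \mu,\,\Pi_{V'}\tilde\mu\rangle = -2\langle\nu, \Pi_{V'}\tilde\mu\rangle$, whose absolute value is at most $2\|\nu\|_2\|\tilde\mu\|_2 = O(\gamma\eps)\cdot O(\eps\log(1/\eps)) = o(\gamma\eps)$, where I bound $\|\tilde\mu\|_2 \le \|\Pi_{V'}\mu\|_2 + \|\Pi_{V'}\mu - \tilde\mu\|_2 \le \|\mu\|_2 + \tfrac{\sqrt\pi + O(\gamma)}{1-\alpha}\eps = O(\eps\log(1/\eps))$ by the accuracy guarantee of \textsc{LearnMeanLowD} (Theorem~\ref{lowDimLearningLem}). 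Adding the two estimates gives $|\E_{G_0}[p(X)] - \E_{\normal(\mu,I)}[p(X)]| \le O(\gamma)\eps$, and combining with Claim~\ref{claim:poly-mean} yields $\E_{G_0}[p(X)] \le \|\Pi_{V'}\mu - \tilde\mu\|_2^2 + O(\gamma)\eps \le \|\mu - \tilde\mu\|_2^2 + O(\gamma)\eps$, as claimed. Everything after the observation that one must use the spectral covariance bound (which contributes $\dim(V')/d \le 1$ copies of $\gamma\eps$, rather than $\sqrt{\dim(V')}$ copies) for the quadratic part is routine bookkeeping.
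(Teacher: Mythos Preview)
Your proof is correct and takes a genuinely different route from the paper. The paper argues via a tail-CDF integral: it writes $\E_{G_0}[p(X)] \le \int_0^{O(d\log(|G|/\delta))}\Pr_{G_0}[p(X)\ge t]\,dt$, swaps $G_0$ for $\normal(\mu,I)$ using the degree-$2$ tail-probability goodness condition (the per-$t$ error $\gamma\eps/(d\log(|G|/\delta))$ integrates over the truncated range to $O(\gamma)\eps$), and then invokes Claim~\ref{claim:poly-mean}. Your argument instead splits $p$ into its even-quadratic and linear pieces and controls each via the \emph{moment} goodness conditions---the spectral covariance bound $\|\mathrm{Cov}_{G_0}[X]-I\|_2\le \gamma\eps/d$ for the quadratic part, and the mean bound $\|\E_{G_0}[X]-\mu\|_2\le\gamma\eps$ for the linear part.

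The key virtue of your approach is the observation you flag: routing the quadratic part through the $L^2$-polynomial moment bound would cost a factor $\sqrt{\dim(V')}$, whereas the spectral covariance condition contributes only $\dim(V')/d\le 1$ copies of $\gamma\eps$. This is a cleaner quantitative accounting. It also sidesteps a subtle point in the paper's last step: the integral $\int_0^\infty\Pr_{\normal(\mu,I)}[p(X)\ge t]\,dt$ equals $\E_{\normal(\mu,I)}[p(X)_+]$, which is $\Theta(\sqrt{\dim(V')})$ rather than $\|\Pi_{V'}\mu-\tilde\mu\|_2^2$, so the paper's final inequality as written is loose and requires additional justification that your argument does not need. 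The paper's approach, on the other hand, is shorter and uses only one goodness condition.
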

\begin{proof}
\begin{align*}
\E_{G_0}[p(X)] &\leq \int_{0}^\infty \Pr_{G_0}[p(X) \geq t] dt \\
&\leq \int_{0}^{O(d \log (d/\ve\d))} \Pr_{G_0}[p(X) \geq t] dt \\
&\leq \int_{0}^{O(d \log (d/\ve\d))} \Pr_{\normal (\m, I)}[p(X) \geq t] dt + O(\g)\ve \\
&\leq \|\m - \tilde \m\|_2^2 + O(\g)\ve
\end{align*}
The inequalities follow from $(\g\ve,\d)$-goodness and Claim \ref{claim:poly-mean}.
\end{proof}

\begin{claim}
\label{claim:meanLBound}
$\phi(S, G_0)\E_{L}[p(X)] = (C_1\b + O(\gamma) + o(1) ) \ve $.
\end{claim}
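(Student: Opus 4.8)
The plan is to show that the set $L = G_0\setminus G$ of ``lost'' good points contributes only $O(C_1\b)\ve$ to the empirical mean of $p$ over $S$ (up to lower-order $\g$- and $o(1)$-terms), where throughout $p(x)=\|\Pi_{V'}(x)-\tilde\m\|_2^2-\dim(V')$. The starting point is the identity
\[
\phi(S,G_0)\,\E_L[p(X)] \;=\; \frac{|L|}{|S|}\cdot\frac1{|L|}\sum_{x\in L}p(x) \;=\; \frac1{|S|}\sum_{x\in L}p(x)\,,
\]
combined with two consequences of the hypothesis $\Delta(S,G_0)\le\ve$: writing $\phi=\phi(S,G_0)$, we have $\phi\log(1/\phi)\le\ve$ (hence $\phi\le\ve$ and $\phi\log(1/\ve)\le\ve$) and $|G_0|/|S| = 1+o(1)$. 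The lower bound is immediate, since $p(x)\ge -\dim(V')$ pointwise gives $\phi\,\E_L[p(X)]\ge -\phi\,\dim(V') = -C_1\b\,\phi\log(1/\ve)\ge -C_1\b\,\ve$; the content is the matching upper bound.

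For the upper bound I would fix a truncation level $T_1=\Theta(\dim(V'))=\Theta(C_1\b\log(1/\ve))$ and split $\sum_{x\in L}p(x)$ according to whether $p(x)\le T_1$. The low part is at most $|L|\,T_1=\phi|S|\cdot\Theta(C_1\b\log(1/\ve))\le O(C_1\b)\,\ve\,|S|$, using $\phi\log(1/\ve)\le\ve$. For the high part, since $L\subseteq G_0$,
\[
\sum_{x\in L:\ p(x)>T_1}p(x)\ \le\ \sum_{x\in G_0:\ p(x)>T_1}p(x)\ =\ |G_0|\,\E_{G_0}\!\big[\,p(X)\,\mathbf 1[p(X)>T_1]\,\big]\,,
\]
and I would bound $\E_{G_0}[p(X)\mathbf 1[p(X)>T_1]]$ exactly as in the proof of Claim~\ref{claim:meanG0Bound}: the first regularity condition forces $p(x)=O(d\log(|G_0|/\d))$ for every $x\in G_0$, so in a layer-cake expansion only $t\in[T_1,O(d\log(|G_0|/\d))]$ matters, and on that range $(\g\ve,\d)$-goodness gives $\Pr_{G_0}[p(X)>t]\le\Pr_{\normal(\m,I)}[p(X)>t]+O(\g\ve/(d\log(|G_0|/\d)))$; the goodness error integrates to $O(\g\ve)$. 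The main term is controlled directly by Claim~\ref{claim:poly-tail-mean}, with the recentering value $m=\|\Pi_{V'}\m-\tilde\m\|_2^2 = O(\ve^2)$ (Claim~\ref{claim:poly-mean} together with $\|\Pi_{V'}\m-\tilde\m\|_2<2\ve$ from Lemma~\ref{lowDimLearningLem}): evaluating the tail at $t=T_1=\Theta(C_1\b\log(1/\ve))$ and integrating from $T_1$ shows that both $\Pr_{\normal(\m,I)}[p(X)>T_1]$ and $\int_{T_1}^\infty\Pr_{\normal(\m,I)}[p(X)>t]\,dt$ are $\ve^{\Omega(C_1\b)}$, which is $o(\ve)$ once the constant in $T_1$ (equivalently $C_1$) is taken large enough. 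Hence $\E_{G_0}[p(X)\mathbf 1[p(X)>T_1]]=O(\g\ve)+o(\ve)$, so this part contributes $O(\g\ve)+o(\ve)$ to $\phi\,\E_L[p(X)]$ after dividing by $|S|$ and using $|G_0|/|S|=1+o(1)$. Combining the two parts gives $\phi(S,G_0)\,\E_L[p(X)]\le O(C_1\b)\,\ve+O(\g\ve)+o(\ve)=(C_1\b+O(\g)+o(1))\ve$.

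The step I expect to require the most care is the transfer of the degree-two tail bound from $G_0$ to $\normal(\m,I)$: the regularity conditions are stated for \emph{even} degree-two polynomials, whereas $p(x)=\|\Pi_{V'}(x)-\tilde\m\|_2^2-\dim(V')$ carries a genuine linear term $-2\langle\Pi_{V'}(x),\tilde\m\rangle$, so one must either decompose $p$ into its even degree-two part and its affine part and invoke the two matching clauses of $(\g\ve,\d)$-goodness, or simply reuse the estimate on $\Pr_{G_0}[p(X)\ge t]$ already employed in Claim~\ref{claim:meanG0Bound}. The remaining issue is pure bookkeeping: choosing the truncation $T_1$ a large enough constant multiple of $\dim(V')$ that the Gaussian tail beyond $T_1$ is $o(\ve)$ (this is where ``$C_1$ sufficiently large'' is used), while keeping $T_1$ small enough that $|L|\,T_1/|S|\le O(C_1\b)\,\ve$, which is exactly what the bound $\phi\log(1/\phi)\le\ve$ coming from $\Delta(S,G_0)\le\ve$ provides.
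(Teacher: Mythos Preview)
Your proposal is correct and follows essentially the same approach as the paper: both split at the threshold $T_1=C_1\b\log(1/\ve)$, bound the low part by $\phi\,T_1\le C_1\b\,\ve$ via $\phi\log(1/\phi)\le\ve$, and bound the high part by passing from $L$ to $G_0$ to $\normal(\m,I)$ using goodness and then the Hanson-Wright tail bound of Claim~\ref{claim:poly-tail-mean}. The paper writes this slightly more tersely as a layer-cake integral on $\phi\,\E_L[p(X)]$ (glossing over the negativity of $p$, which only helps for the upper bound), but the logic is identical.
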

\begin{proof}
\begin{align}
\phi(S, G_0)\E_{L}[p(X)] 
&= \int_{0}^\infty \phi(S, G_0) \Pr_L[p(X) > t] dt \nonumber \\
&= \int_{0}^{O(d \log(|G|/\d))} \phi(S, G_0) \Pr_L[p(X) > t] dt \label{eq:exp-L-mean1}\\
&= \int_{0}^{C_1\b\log(1/\ve) + 4\ve^2} \phi(S, G_0) \Pr_L[p(X) > t] dt \nonumber \\
&~~~~~+ \int_{C_1\b\log(1/\ve) + 4\ve^2}^{O(d \log(|G|/\d))} \phi(S, G_0) \Pr_L[p(X) > t] dt \nonumber \\
&\leq (C_1\b\log(1/\ve) + 4\ve^2)\phi(S, G_0) + 2\int_{C_1\b\log(1/\ve) + 4\ve^2}^{O(d \log(|G|/\d))}  \Pr_{G_0}[p(X) > t] dt \nonumber \\
&\leq (C_1\b\log(1/\ve) + 4\ve^2)\phi(S, G_0) \nonumber \\
&~~~~~+ 2\int_{C_1\b\log(1/\ve) + 4\ve^2}^{O(d \log(|G|/\d))}  \Pr_{\normal (\m, I)}[p(X) > t]  + \frac{\gamma \ve}{d \log(|G|/\d)} dt \label{eq:exp-L-mean2} \\
&\leq (C_1\b\log(1/\ve) + 4\ve^2)\phi(S, G_0) + O(\gamma \ve) + 8 \ve^{c_0 C_1\b}\label{eq:exp-L-mean3} \\
&\leq(C_1\b + O(\gamma) + o(1) ) \ve. \label{eq:exp-L-mean4}
\end{align}
(\ref{eq:exp-L-mean1}) and (\ref{eq:exp-L-mean2}) follow from $G_0$ being $(\gamma \ve, \d)$-good, (\ref{eq:exp-L-mean3}) is from Claim \ref{claim:poly-tail-mean}, and (\ref{eq:exp-L-mean4}) is because 
\[
\phi(S, G_0) \log (1/\phi(S,G_0)) \leq (1 + o(1))\ve \; .
\]
\end{proof}
This gives:

\begin{claim}
\label{claim:meanEBound}
$\psi \E_E [p(X)] \geq C_1 \eps \log 1/ \eps - (C_1\b + 1 + O(\gamma) + o(1) ) \ve$.
\end{claim}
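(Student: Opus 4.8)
The plan is to express $\psi\,\E_E[p(X)]$ exactly in terms of the three quantities already controlled by Claims~\ref{claim:meanSBound}, \ref{claim:meanG0Bound}, and~\ref{claim:meanLBound}, and then lower bound each piece. Since $S$ decomposes as the surviving good points $G' = G_0 \setminus L$ together with the corrupted points $E$, I would sum $p$ over $S$ and use $\sum_{x \in G'} p(x) = \sum_{x \in G_0} p(x) - \sum_{x \in L} p(x)$ to obtain the identity
\[
\E_S[p(X)] = \frac{|G_0|}{|S|}\,\E_{G_0}[p(X)] - \phi\,\E_L[p(X)] + \psi\,\E_E[p(X)] \; ,
\]
where $\phi = \phi(S,G_0) = |L|/|S|$ and $\psi = \psi(S,G_0) = |E|/|S|$, and hence $\psi\,\E_E[p(X)] = \E_S[p(X)] - \frac{|G_0|}{|S|}\,\E_{G_0}[p(X)] + \phi\,\E_L[p(X)]$.

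Next I would lower bound the three terms on the right. The first is handled directly by Claim~\ref{claim:meanSBound}, which gives $\E_S[p(X)] \geq C_1\ve\log(1/\ve)$. For the second, note $|G_0|/|S| = 1 - \psi + \phi = 1 + o(1)$, and that Claim~\ref{claim:meanG0Bound} together with the estimate $\|\Pi_{V'}\m - \tilde\m\|_2 < 2\ve$ from Lemma~\ref{lowDimLearningLem} gives $\E_{G_0}[p(X)] \leq \|\Pi_{V'}\m - \tilde\m\|_2^2 + O(\g)\ve \leq (1 + O(\g) + o(1))\ve$; since a negative value of $\E_{G_0}[p(X)]$ only makes $-\frac{|G_0|}{|S|}\E_{G_0}[p(X)]$ larger, in all cases $-\frac{|G_0|}{|S|}\,\E_{G_0}[p(X)] \geq -(1 + O(\g) + o(1))\ve$. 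For the third term, I would use only the trivial pointwise bound $p(x) \geq -\dim(V')$, giving $\phi\,\E_L[p(X)] \geq -\phi\,\dim(V')$; since $\Delta(S, G_0) \leq \ve$ forces $\phi\log(1/\phi) \leq \ve$, and (by the remark following~(\ref{eq:Delta})) $\phi \leq O(\ve/\log(1/\ve)) \leq \ve$, we get $\phi\log(1/\ve) \leq \phi\log(1/\phi) \leq \ve$, so $\phi\,\dim(V') = \phi\cdot C_1\b\log(1/\ve) \leq C_1\b\ve$, i.e. $\phi\,\E_L[p(X)] \geq -C_1\b\ve$.

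Summing the three bounds yields $\psi\,\E_E[p(X)] \geq C_1\ve\log(1/\ve) - (C_1\b + 1 + O(\g) + o(1))\ve$, which is exactly the claim. The only step that needs genuine care is the control of $\phi\,\dim(V')$: because $\dim(V') = C_1\b\log(1/\ve)$ is logarithmically large, a naive bound $\phi \le O(\ve)$ is useless, and one must exploit the logarithmic penalty built into the definition of $\Delta$ to guarantee $\phi\log(1/\ve) = O(\ve)$. Everything else is routine bookkeeping with the earlier claims.
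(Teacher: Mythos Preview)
Your proposal is correct and follows the same decomposition the paper has in mind: write $\psi\,\E_E[p(X)] = \E_S[p(X)] - \tfrac{|G_0|}{|S|}\E_{G_0}[p(X)] + \phi\,\E_L[p(X)]$ and bound each piece via Claims~\ref{claim:meanSBound} and~\ref{claim:meanG0Bound}.

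The one genuine difference is the treatment of the $L$ term. The paper simply cites Claim~\ref{claim:meanLBound}, but that claim (as proved) is an \emph{upper} bound on $\phi\,\E_L[p(X)]$, whereas the identity above requires a \emph{lower} bound. You instead use the trivial pointwise inequality $p(x)\ge -\dim(V')$ together with $\phi\log(1/\ve)\le\phi\log(1/\phi)\le\ve$ to get $\phi\,\E_L[p(X)]\ge -C_1\b\ve$. This is the cleaner and logically complete way to finish the argument; it also explains transparently where the $C_1\b$ in the final constant comes from. (Your remark that this step genuinely needs the logarithmic penalty in $\Delta$ is exactly right.) A small cosmetic point: your bound $\E_{G_0}[p(X)]\le (1+O(\g)+o(1))\ve$ is looser than necessary, since $\|\Pi_{V'}\mu-\tilde\mu\|_2^2\le 4\ve^2=o(\ve)$; the ``$1$'' is not needed here, but including it does no harm and matches the stated claim.
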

\begin{proof}
This immediately follows from Claims \ref{claim:meanSBound}, \ref{claim:meanG0Bound}, and \ref{claim:meanLBound}.
\end{proof}

We now show that in the case that there are many large eigenvalues, there will be a $T$ satisfying the conditions of the filter.

\begin{claim}
\label{claim:meanTExists}
Suppose $\dim (V) \geq C_1 \b \log (1 / \eps)$.
Then there is a $T$ satisfying the conditions in the algorithm.
\end{claim}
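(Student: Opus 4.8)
The plan is to argue by contradiction: assume no $T$ satisfies condition (a) or condition (b) in the algorithm, and deduce an upper bound on $\psi\,\E_E[p(X)]$ that contradicts Claim~\ref{claim:meanEBound} (which is exactly where the hypothesis $\dim(V)\ge C_1\b\log(1/\eps)$ enters, since it guarantees both that the subspace $V'$ exists and that $\Var_S[\langle v,X\rangle]\ge 1+\eps/\b$ for $v\in V'$).

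First I would record the consequences of the two failures. If (a) fails for every $T>C_2 d\log(|S|/\delta)$, then no point of $S$ has $p$-value exceeding $T_1:=C_2 d\log(|S|/\delta)$, i.e.\ $p(x)\le T_1$ for all $x\in S$. If (b) fails for every $T>T_0:=2C_3\log(1/\eps)/c_0$, then $\Pr_S[p(X)>T]\le \exp(-c_0 T/(2C_3))+\gamma\eps/(d\log(|S|/\delta))$ for all $T\ge T_0$.

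Next I would estimate $\psi\,\E_E[p(X)]$ from above. Since $\E_E[p(X)]\le \E_E[\max\{p(X),0\}]=\int_0^\infty \Pr_E[p(X)>t]\,dt$, and since $E\subseteq S$ gives $\psi\,\Pr_E[p(X)>t]\le \Pr_S[p(X)>t]$ for every $t$, I split the integral at $T_0$ and $T_1$. On $[0,T_0]$ I bound $\psi\,\Pr_E[p(X)>t]\le\psi\le\eps$, which contributes at most $\eps T_0=(2C_3/c_0)\,\eps\log(1/\eps)$. On $[T_0,T_1]$ I use the tail bound above: the exponential term integrates (over all of $[T_0,\infty)$) to at most $(2C_3/c_0)\exp(-c_0T_0/(2C_3))=(2C_3/c_0)\,\eps$, and the flat term $\gamma\eps/(d\log(|S|/\delta))$, integrated over an interval of length at most $T_1$, contributes at most $C_2\gamma\eps$. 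On $[T_1,\infty)$ the integrand vanishes since $p\le T_1$ on $S$. Hence $\psi\,\E_E[p(X)]\le (2C_3/c_0)\,\eps\log(1/\eps)+O((1+\gamma)\eps)$.

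Finally I would compare with Claim~\ref{claim:meanEBound}, namely $\psi\,\E_E[p(X)]\ge C_1\eps\log(1/\eps)-(C_1\b+1+O(\gamma)+o(1))\eps$. Rearranging the two bounds, a contradiction follows as long as $(C_1-2C_3/c_0)\log(1/\eps)>C_1\b+O(1+\gamma)$, which holds once $C_1$ is taken large compared to $C_3/c_0$ and $\eps$ is below a threshold (depending on $\b$). The only nontrivial part of this argument is exactly this constant bookkeeping: one must choose $C_1$ to dominate $C_3/c_0$ while checking that $C_2$ and $\gamma$ enter only through terms that are $o(\eps\log(1/\eps))$. This is precisely why the algorithm fixes $C_1,C_2,C_3$ to be ``sufficiently large constants.''
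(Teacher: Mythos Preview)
Your proof is correct and follows essentially the same contradiction argument as the paper: assume no valid $T$ exists, use the failure of (a) to cap the integral at $T_1=C_2 d\log(|S|/\delta)$, use the failure of (b) to bound the tail on $[T_0,T_1]$, and obtain an upper bound on $\psi\,\E_E[p(X)]$ that contradicts Claim~\ref{claim:meanEBound}. The only cosmetic difference is that your exponential integral evaluates to $(2C_3/c_0)\eps$ rather than the paper's stated $O(\eps^2)$ (which appears to be a minor slip there); either way the term is $o(\eps\log(1/\eps))$ and the contradiction goes through once $C_1$ is chosen large relative to $C_3/c_0$.
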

\begin{proof}
Suppose not.
Then, we have
\begin{align*}
\psi \E_{E} [p(X)] &\leq \int_0^{2C_3 \log (1 / \eps) /c_0 } \psi \Pr_E [p(X) \geq t] dt + \int_{2 C_3 \log (1 / \eps)/c_0 }^\infty \psi \Pr_E [p(X) \geq t] dt \\
&\leq 2(C_3 / c_0) \psi \log 1 / \eps + \int_{2 (C_3/c_0) \log (1 / \eps)}^{C_2 d \log |S| / \delta} \psi \Pr_E [p(X) \geq t] dt \\
&\leq 2 (C_3 / c_0) \psi \log 1 / \eps + \int_{2 (C_3 / c_0) \log (1 / \eps) }^{C_2 d \log |S| / \delta} \Pr_S [p(X) \geq t] dt \\
&\leq (2 C_3 / c_0) \eps \log 1 / \ve +  \int_{2 (C_3 / c_0) \log (1 / \eps) }^{C_2 d \log |S| / \delta} \Pr_S [p(X) \geq t] dt \; .
\end{align*}
By applying the contradiction assumption, we have
\begin{align*}
\int_{2 (C_3 / c_0) \log (1 / \eps) }^{C_2 d \log |S| / \delta} \Pr_S [p(X) \geq t] dt  &\leq \int_{2 (C_3/c_0) \log (1 / \eps)}^{C_2 d \log |S| / \delta} \exp\left( -\frac{c_0 T}{2 C_3} \right) + \frac{\g\eps}{d \log |S| / \delta} dt  \\
&\leq \int_{2(C_3 / c_0) \log (1 / \eps) }^{C_2 d \log |S| / \delta}  \exp\left( -\frac{c_0 T}{2 C_3} \right) dt + C_2 \g\eps \\
&\leq \int_{2 (C_3 / c_0) \log (1 / \eps) }^{\infty} \exp\left( -\frac{c_0 T}{2 C_3} \right) dt + C_2 \g\eps \\
&\leq (C_3 / c_0) \int_{2 \log (1 / \eps)}^{\infty} \exp \left( - t \right) dt + C_2 \g\eps \\
&\leq (C_3 / c_0) \cdot O(\eps^2) + C_2 \g\eps \; ,
\end{align*}
and thus, we have
\begin{align*}
\psi \E_{E} [p(X)] &\ll C_1 \eps \log 1/ \eps - (C_1\b + 1 + O(\gamma) + o(1) ) \ve\; , 
\end{align*}
which contradicts Claim \ref{claim:meanEBound}.
\end{proof}

It now suffices to prove that if we construct a filter, then the invariant that $\Delta$ decreases is preserved.
Formally, we show:
\begin{claim}
\label{claim:meanDeltaBound}
Suppose $\dim (V) \geq C_1 \b \log (1 / \eps)$.
Let $S'$ be the set of points we return. 
Then $\Delta (S', G) < \Delta (S, G)$.
\end{claim}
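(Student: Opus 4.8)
The plan is to show that the filter we construct removes many more corrupted points than it does good points, which translates directly into a decrease in $\Delta$. Recall $\Delta(S', G) = \psi(S', G) + \phi(S', G) \log \frac{1}{\phi(S', G)}$, so it suffices to bound both the number of good points lost and argue that the number of corrupted points removed dominates. Concretely, write $\phi = \phi(S, G)$, $\psi = \psi(S, G)$, and let $S' = \{x \in S : p(x) \le T\}$. The good points in $S$ removed are those $x \in G'$ (i.e., good points surviving in $S$) with $p(x) > T$, and the corrupted points removed are those in $E'$ with $p(x) > T$. So I want to show $|\{x \in E' : p(x) > T\}| \gg |\{x \in G' : p(x) > T\}| \cdot \log(1/\phi)/\text{(something)}$, or more precisely that the net effect on $\psi + \phi \log(1/\phi)$ is strictly negative.

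First I would handle the two cases of how $T$ was chosen. In case (a), where $T > C_2 d\log(|S|/\delta)$ and some $x \in S$ has $p(x) > T$: by the good-set condition, no good point $x \in G$ has $\|\Pi_{V'}(x) - \mu\|_2^2$ large, so $p(x) = \|\Pi_{V'}(x) - \tilde\mu\|_2^2 - \dim(V') \le O(d\log(|G|/\delta)) < T$ for all good points (using $\|\Pi_{V'}\mu - \tilde\mu\| = O(\eps)$ and the tail condition on $p$). Hence in case (a) \emph{no} good points are removed while at least one corrupted point is, so $\phi(S', G) \le \phi(S, G)$ and $\psi(S', G) < \psi(S, G)$, giving $\Delta(S', G) < \Delta(S, G)$. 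In case (b), I would use the third bullet of the good-set condition (the $\Pr_G[p(X) \ge 0]$-type tail bound, applied to the polynomial $p(x) - T$): the fraction of good points with $p(x) > T$ is at most $\Pr_{\normal(\mu,I)}[p(X) > T] + \frac{\gamma\eps}{d\log(|G|/\delta)}$, and by the Hanson-Wright/Claim~\ref{claim:poly-tail-mean} tail bound, $\Pr_{\normal(\mu,I)}[p(X) > T] \le \exp(-c_0 T /(2C_3))$ for $T > 2C_3 \log(1/\eps)/c_0$ large enough (using that $\|\Pi_{V'}\mu - \tilde\mu\|_2^2 = O(\eps^2)$ is negligible compared to $T$). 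So the number of good points removed is at most $(\exp(-c_0T/(2C_3)) + \frac{\gamma\eps}{d\log(|G|/\delta)}) |S|$, whereas by the defining property of $T$ in case (b), the total number of points in $S$ with $p(x) > T$ is at least $(\exp(-c_0T/(2C_3)) + \frac{\gamma\eps}{d\log(|G|/\delta)}) |S|$, so at least half (indeed, the bulk) of the removed points are corrupted — more carefully, the corrupted points removed, $|\{x\in E': p(x) > T\}|$, is at least the total removed minus the good removed, hence at least (total removed)/$2$ up to the $\gamma\eps$ slack.

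The bookkeeping step is then to combine these: let $\Delta \psi$ and $\Delta \phi$ denote the changes in the (unnormalized, divided by $|S'|$) quantities. Since $|S'|$ changes by at most a $1 - o(1)$ factor, I would work with counts and renormalize at the end. Removing $n_E$ corrupted points decreases $\psi \cdot |S|$ by $n_E$; removing $n_G$ good points increases $\phi \cdot |S|$ by $n_G$, and since $\phi \le O(\eps/\log(1/\eps)) \le $ small, the function $t \mapsto t\log(1/t)$ is increasing there, so the increase in $\phi\log(1/\phi)$ is at most $\frac{n_G}{|S|} \log(1/\phi) \cdot (1 + o(1)) \le \frac{n_G}{|S|} \cdot O(\log(1/\eps))$. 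The net change in $\Delta$ is thus at most $-\frac{n_E}{|S|} + O(\log(1/\eps)) \frac{n_G}{|S|} + (\text{renormalization corrections})$. The renormalization corrections are $O(\Delta \cdot \frac{n_E + n_G}{|S|})$, which is lower order. So I need $n_E > C \log(1/\eps) \, n_G$ for a suitable constant; the main obstacle — and the place the constants $C_1, C_2, C_3$ get pinned down — is establishing exactly this. This is where I would invoke Claim~\ref{claim:meanEBound}: we know $\psi \E_E[p(X)] \ge C_1 \eps\log(1/\eps) - (C_1\beta + 1 + O(\gamma) + o(1))\eps$, which is $\ge (C_1/2)\eps\log(1/\eps)$ for $C_1$ large, so the corrupted points carry a lot of ``mass'' of $p$ above threshold — combined with the fact that the good points' contribution above $T$ is exponentially small (by the tail bound), this forces $n_E$ to be substantially larger than $n_G$, after integrating the tail of $p$ over $E'$ versus over $G'$ against the cutoff $T$. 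Carefully tracking the integral $\int_T^\infty \Pr[p(X) > t]\,dt$ for the empirical, good, and bad distributions — using Claim~\ref{claim:meanSBound}, \ref{claim:meanG0Bound}, \ref{claim:meanLBound} as in the proof of Claim~\ref{claim:meanTExists} — yields the needed ratio and completes the proof.
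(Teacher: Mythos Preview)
Your handling of case~(a) is fine and matches the paper. The problem is in case~(b): you bound the fraction of good points with $p(x)>T$ by
\[
\Pr_{\normal(\mu,I)}[p(X)>T]+\frac{\gamma\eps}{d\log(|G|/\delta)}\ \le\ \exp\!\bigl(-c_0T/(2C_3)\bigr)+\frac{\gamma\eps}{d\log(|G|/\delta)},
\]
which is exactly the filter threshold. So all you can conclude is that the number of corrupted points removed is \emph{nonnegative}, i.e.\ ``at least half'' of the removed points are bad. You then say you need $n_E>C\log(1/\eps)\,n_G$ and try to recover this from Claim~\ref{claim:meanEBound} and an integral argument, but that claim was already consumed in proving \emph{existence} of $T$ (Claim~\ref{claim:meanTExists}); it gives you aggregate mass of $p$ on $E$, not a pointwise count above the specific threshold $T$. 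Your final paragraph is hand-waving and does not close this gap.

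What you are missing is that the threshold was deliberately set with a factor-of-two slack in the exponent. The actual Hanson-Wright bound from Claim~\ref{claim:poly-tail-mean} (using $\|\Pi_{V'}\mu-\tilde\mu\|_2^2=O(\eps^2)\ll 1$ and $T>2C_3\log(1/\eps)/c_0$) gives, for suitable $C_1,C_3$,
\[
\Pr_{\normal(\mu,I)}[p(X)>T]\ \le\ \exp\!\bigl(-c_0T/C_3\bigr)\ =\ \Bigl(\exp\!\bigl(-c_0T/(2C_3)\bigr)\Bigr)^{2}\ \le\ \eps\cdot\exp\!\bigl(-c_0T/(2C_3)\bigr),
\]
since $\exp(-c_0T/(2C_3))\le\eps$ on the range of $T$. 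This squaring is the whole point: it makes the good-point count above $T$ smaller than the threshold by a factor of $\eps\ll 1/\log(1/\eps)$, which immediately yields $\log(1/\eps)\cdot n_G\le n_E$ and hence $\Delta(S',G)<\Delta(S,G)$. Without this sharper exponent your argument cannot produce the required $\log(1/\eps)$ ratio, and the appeal to Claims~\ref{claim:meanSBound}--\ref{claim:meanEBound} does not substitute for it.
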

\begin{proof}
Let $T$ be the threshold we pick.
If  $T>C_2 d \log(|S| / \delta)$ then the invariant is satisfied since we remove no good points, by $(\gamma \eps, \delta)$-goodness.
It suffices to show that in the other case, we remove $\log (1 / \eps)$ times many more bad points than good points.
By definition we remove at least 
\[
|S| \cdot \left( \exp\left( -\frac{c_0 T}{2 C_3} \right) + \frac{\g\ve}{d \log |S| / \delta} \right) \; .
\]
points.
On the other hand, by $(\gamma \eps, \delta)$-goodness, Claim \ref{claim:poly-tail-mean}, we know that we throw away at most
\begin{align*}
&|G| \cdot \Pr_{G} [p(X) > T] \leq |G_0| \cdot \Pr_{G_0} [p(X) > T]  \\
&~~~~\leq |G_0| \left( \exp \left( -c_0 \cdot \min \left(\frac{(T-\|\m - \tilde \m\|_2^2)^2}{C_1 \b \log(1/\ve)}, (T-\|\m - \tilde \m\|_2^2) \right) \right) + \frac{\g\ve}{2 \log (1 / \eps) (d  \log(|S| / \delta))} \right) \; .
\end{align*}
By our choice of $T$, and since $\| \mu - \mutilde \|_2 \leq O(\delta) \ll 1$, the first term is upper bounded by 
\begin{align*}
\exp \left( -c_0 \cdot \min \left(\frac{(T-\|\m - \tilde \m\|_2^2)^2}{C_1 \b \log(1/\ve)}, (T-\|\m - \tilde \m\|_2^2) \right) \right) & \leq \exp \left( -c_0 \cdot \min \left( \frac{(C_3 - 1)^2}{C_1\b} T, (C_3 - 1) T) \right) \right)
\end{align*}
and so 
\begin{align*}
\exp \left( -c_0 \cdot \min \left( \frac{(C_3 - 1)^2}{C_1\b} T, (C_3 - 1) T) \right) \right) &\leq \exp\left( -\frac{c_0 T}{C_3} \right) \\
&= \exp\left( -\frac{c_0 T}{2 C_3} \right)^2 \\
&\leq \exp\left( -\frac{c_0 T}{2 C_3} \right) \cdot \eps \\
&\ll \frac{1}{\log^2 1 / \eps} \exp\left( -\frac{c_0 T}{2 C_3} \right) \; .
\end{align*}
Since we have maintained this invariant so far, in particular, we have thrown away more bad points than good points, and so $|G_0| \leq (1 + \eps) |S|$.

Therefore, we have $\log (1 / \eps) \cdot |G| \cdot \Pr_{G} [p(X) > T]  \leq |S| \cdot \left( \exp\left( -\frac{c_0 T}{2 C_3} \right) + \frac{\g\ve}{d \log |S| / \delta} \right)$, and hence, the invariant is satisfied.
\end{proof}

Claims \ref{claim:meanTExists} and \ref{claim:meanDeltaBound} together imply the correctness of Theorem \ref{thm:meanManyEig}.

\subsection{Proof of Lemma \ref{errorCovLem}}

We require the following basic fact of good sets (see, e.g., Fact 8.6 in \cite{DKKLMS}):
\begin{fact}\label{conClaim}
Let $v \in \R^d$ be any unit vector. 
Let $G$ be $(\eta,\d)$-good with respect to $\normal (\mu, I)$.
Then for any $T>0$,
\[
\Pr_{G}(|v\cdot(X-\mu)| > T ) \leq 2 \exp(-T^2/2)+\eta/(d\log(d /\eta \delta)) \; ,
\]
and
\[
\Pr_{\normal (\m,I)}(|v\cdot(X-\mu)| > T ) \leq 2 \exp(-T^2/2).
\]
\end{fact}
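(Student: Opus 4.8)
The plan is to establish the two displayed inequalities separately; each follows in a few lines from a standard Gaussian tail estimate together with one clause of the definition of an $(\eta,\delta)$-good set. For the Gaussian inequality, note that since $\|v\|_2=1$ and $X\sim\normal(\mu,I)$, the scalar random variable $Z:=v\cdot(X-\mu)$ has mean $0$ and variance $v^{T}Iv=1$, so $Z\sim\normal(0,1)$. The one-sided sub-Gaussian tail bound follows in one line from the Chernoff/moment-generating-function method: for $T\geq0$, $\Pr[Z>T]\leq\inf_{\lambda>0}e^{-\lambda T}\E[e^{\lambda Z}]=\inf_{\lambda>0}e^{\lambda^{2}/2-\lambda T}=e^{-T^{2}/2}$. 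Applying this to $Z$ and to $-Z$ and taking a union bound gives $\Pr_{\normal(\mu,I)}[|v\cdot(X-\mu)|>T]\leq 2e^{-T^{2}/2}$, which is the second inequality.

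For the first inequality I would transfer from the empirical distribution on $G$ to $\normal(\mu,I)$ using the affine-function clause of goodness (the denominator $d\log(d/\eta\delta)$ in the claimed bound is exactly the one appearing there). The event $\{|v\cdot(X-\mu)|>T\}$ is the disjoint union of the two half-space events $\{v\cdot(X-\mu)>T\}$ and $\{v\cdot(X-\mu)<-T\}$, and each of these is contained in $\{L(X)\geq0\}$ for an affine function $L$ --- namely $L_{+}(x)=v\cdot(x-\mu)-T$ and $L_{-}(x)=-v\cdot(x-\mu)-T$; the inclusion rather than equality is just to absorb the boundary and is harmless since $\Pr_{G}[L(X)>0]\leq\Pr_{G}[L(X)\geq0]$. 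The affine-function clause of $(\eta,\delta)$-goodness then gives $\Pr_{G}[L_{\pm}(X)\geq0]\leq\Pr_{\normal(\mu,I)}[L_{\pm}(X)\geq0]+\eta/(d\log(d/\eta\delta))$. Summing these two bounds and noting that $\Pr_{\normal(\mu,I)}[L_{+}(X)\geq0]+\Pr_{\normal(\mu,I)}[L_{-}(X)\geq0]=\Pr_{\normal(\mu,I)}[|v\cdot(X-\mu)|\geq T]\leq 2e^{-T^{2}/2}$ (by the first paragraph, using continuity of the Gaussian) yields $\Pr_{G}[|v\cdot(X-\mu)|>T]\leq 2e^{-T^{2}/2}+2\eta/(d\log(d/\eta\delta))$.

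There is no genuinely hard step here --- the statement is a routine consequence of the good-set definition --- and the only point requiring a little care is the constant in the additive term: the two-half-space argument above produces $2\eta/(d\log(d/\eta\delta))$ rather than the stated $\eta/(d\log(d/\eta\delta))$. This extra factor of $2$ is immaterial, since $\eta$ is a free parameter that enters every subsequent bound only inside an $O(\cdot)$ and one invokes the good-set guarantee with $\eta$ set to whatever value is convenient; if the exact constant is wanted one can instead quote the slightly tighter form of this estimate from \cite{DKKLMS} (Fact 8.6). The write-up should therefore be only a few lines.
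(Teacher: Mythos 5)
Your proof is correct, and it is the standard argument: the paper does not prove this fact at all, but simply cites it as Fact 8.6 of \cite{DKKLMS}, so your short derivation (the Chernoff bound for the one-dimensional Gaussian tail, plus one application of the affine-function clause of $(\eta,\delta)$-goodness to each of the two half-space events $\{v\cdot(x-\mu)>T\}$ and $\{-v\cdot(x-\mu)>T\}$) is exactly the intended filling-in. The only discrepancy is the factor $2$ on the additive term, which you correctly flag as immaterial: in the one place the fact is invoked (the proof of Lemma~\ref{errorCovLem}) the additive term only ever enters through big-$O$ bounds, so $2\eta/(d\log(d/\eta\delta))$ serves just as well as $\eta/(d\log(d/\eta\delta))$.
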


\begin{proof}
Let $\langle v, \mu-\muhat \rangle = R>\frac{\ve}{\b^{1/2}}$.
Observe that by direct calculation, we have $\E_{S}[\langle v, X-\mu\rangle^2] - \E_{S}[\langle v, X-\muhat \rangle^2] \leq 3 R^2$.
Hence, it suffices to show that $\E_{S}[\langle v, X-\mu\rangle^2] \geq 1+\Omega(R^2/\ve) - \left(\gamma + O\left(\frac{1}{\b}\right)\right)\ve.$

We write $S= (G_0 \setminus L, E)$ where $|E|=\psi |S|$ and $|L|=\phi|S|$. 
First we consider the expectation of $v\cdot X$ over $X$ in $S$. 
This is
\[
\frac{\mu_0-\phi \mu_L+\psi\mu_E}{1-\phi+\psi} \; .
\]
where $\mu_0$ is the mean over $G_0$, $\mu_L$ the mean over $L$ and $\mu_E$ the mean over $E$. 
By $(\gamma \eps, \d)$-goodness, we have that $\left| \langle v, \mu_0 - \mu \rangle \right| \leq  \g \ve$.
We also have that 
\begin{align*}
\left| \langle v, \mu_L - \mu \rangle \right| &\leq \int_0^\infty \Pr_{L} \left[ |\langle v, X - \mu \rangle| \geq t \right] dt \\
&= \int_0^{O(\sqrt{d \log (d / \ve \delta)})} \Pr_{L} \left[ |\langle v, X - \mu \rangle| \geq t \right] dt
\end{align*} 
Since we have
\begin{equation}
\label{eq:L-bound}
\Pr_L \left[ | \langle v, (X - \mu) \rangle| > t \right] \leq \min \left( 1, \frac{|G_0|}{|L|} \Pr_{G_0} \left[ |\langle v, X - \mu \rangle | > t \right] \right) \; ,
\end{equation}
by Fact \ref{conClaim} and the calculations done in the proof of Corollary 8.8 in \cite{DKKLMS}, we have that $| \langle v, \mu_L - \mu \rangle | \leq O(\log |S| / |L|) \leq O(\log 1 / \phi)$.
Since by assumption we have $\langle v, \mu-\muhat \rangle = R > \frac{\ve}{\b^{1/2}}$, this implies that $\langle v, \mu_E-\mu \rangle = \Omega(R/\epsilon)$.
In particular, this implies that
\begin{align*}
\E_E \left[ \langle v, X - \mu \rangle^2 \right] &\geq \E_E \left[ \langle v, X - \mu \rangle \right]^2 \\
&\geq \Omega \left( \frac{R^2}{\ve^2} \right) \; .
\end{align*}

Next we consider the expectation of $\langle v, X-\mu \rangle^2$. 
By $(\gamma \eps, \d)$-goodness, we have $\E_{G_0} [\langle v, X - \mu \rangle^2] = 1+O(\ve)$. 
We also have that 
\begin{align*}
\phi \E_L\left[ \langle v, X - \mu \rangle^2 \right] &\leq \int_0^\infty \phi \Pr_{L} \left[ |\langle v, X - \mu \rangle|^2 \geq t \right]  dt \\
&\leq \frac{5}{\b}\log(1/\ve)\phi + \int_{(5/\b)\log(1/\ve)}^{O(d \log{d/\ve\d})} \phi \Pr_{L} \left[ |\langle v, X - \mu \rangle|^2 \geq t \right]  dt \\
&\leq \frac{5}{\b}\log(1/\ve)\phi + \int_{(5/\b)\log(1/\ve)}^{O(d \log{d/\ve\d})}  \Pr_{G_0} \left[ |\langle v, X - \mu \rangle|^2 \geq t \right]  dt \\
&= \frac{5}{\b}\log(1/\ve)\phi + \int_{(5/\b)\log(1/\ve)}^{O(d \log{d/\ve\d})}  \Pr_{\normal (\m,I)} \left[ |\langle v, X - \mu \rangle|^2 \geq t \right] dt + \g \ve  \\
&= \left(O\left(\frac{1}{\b}\right)+   \g + o(1)\right) \ve.
\end{align*} 

Since
$$
\E_{S}[\langle v,  (X-\mu)\rangle^2] =\frac{\E_{G_0}[\langle v, (X-\mu)\rangle^2]-\phi \E_{L}[\langle v, (X-\mu)\rangle^2]+\psi \E_{E}[\langle v, (X-\mu)\rangle^2]}{1-\phi+\psi},
$$
this is at least $1+\Omega(R^2/\ve) - \left(\gamma + O\left(\frac{1}{\b}\right)\right)\ve$.

\end{proof}

\section{Omitted Proofs from Section \ref{sec:covHighDim}}

\subsection{Proof of Theorem \ref{thm:many-eigenvaluesDeg2}}
\label{sec:many-eigenvaluesDeg2}
Our algorithm works as follows, just as for Algorithm \ref{alg:filter-meanManyEigs}.
It finds all large eigenvalues of $\Shat - I $, and if there are too many, produces an explicit degree-$2$ polynomial which, as we will argue, produces a valid filter.
The formal pseudocode for our algorithm is in Algorithm \ref{alg:filter-covManyDeg2Eigs}.

\begin{algorithm}[htb]
\begin{algorithmic}[1]
\Function{FilterCovManyDeg2Eig}{$S, \eps, \xi, \delta$}
\State Let  $\Shat$ be the empirical second moment of $S$, respectively.
\parState {Let $V$ be the subspace of $\R^d$ spanned by eigenvectors of $\Shat-I$ with eigenvalue more than $C \xi$.}
\If{$\dim(V) \geq C_1\log(1/\ve)$}
\State Let $V'$ be a subspace of $V$ of dimension $C_1\log(1/\ve)$.
\State Let $p(x)$ be the quadratic polynomial
$$
p(x) = \|\Pi_{V'}(x) \|_2^2 - \dim(V').
$$
\State Find a value $T>0$ so that either: 
\begin{itemize}
\item $T>C_2 d\log(|S| / \delta)$ and $p(x)>T$ for at least one $x\in S$, or 
\item $T>2 C_3\log(1/\ve) / c_0$ and $\Pr_{S}(p(x)>T) > \exp(-c_0 T/(2 C_3)) + \ve/(d\log(|S| / \delta)).$
\end{itemize}
\State \textbf{return}
\[
S' = \{x\in S: p(x)\leq T\} \; .
\]
\Else
\State {\bf return} an orthonormal basis for $V$.
\EndIf
\EndFunction
\end{algorithmic}
\caption{Filter if there are many large eigenvalues of the covariance}
\label{alg:filter-covManyDeg2Eigs}
\end{algorithm}

The proofs of the following claims are identical to the proofs of Claim \ref{claim:meanSBound}-\ref{claim:meanEBound}, by applying the corresponding property of $(\gamma \eps, \delta)$-goodness for this setting, and so we omit them.

\begin{claim}
\label{claim:covDeg2SBound}
$\E_S[p(x)] \geq C \xi.$
\end{claim}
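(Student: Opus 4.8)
The plan is to follow the proof of Claim~\ref{claim:meanSBound} essentially verbatim, with one simplification: since in the unknown-covariance setting the true Gaussian has mean zero, the matrix that controls everything is $\Shat=\E_S[XX^T]$, the empirical second moment of $S$, rather than a centered empirical covariance. First I would rewrite the quantity of interest as a trace. Since $p(x)=\|\Pi_{V'}(x)\|_2^2-\dim(V')$, linearity of expectation gives
\[
\E_S[p(X)] \;=\; \E_S\!\left[\tr\!\left(\Pi_{V'}XX^T\Pi_{V'}\right)\right]-\dim(V') \;=\; \tr\!\left(\Pi_{V'}\Shat\,\Pi_{V'}\right)-\dim(V')\;.
\]

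Next I would invoke the defining property of $V$: it is the span of the eigenvectors of $\Shat-I$ with eigenvalue larger than $C\xi$, so every unit vector $v\in V$ satisfies $v^T(\Shat-I)v>C\xi$, i.e.\ $v^T\Shat\,v>1+C\xi$. Taking any orthonormal basis $u_1,\dots,u_{\dim(V')}$ of $V'$ (which is contained in $V$ by construction) and summing this bound over the basis yields $\tr(\Pi_{V'}\Shat\,\Pi_{V'})=\sum_j u_j^T\Shat\,u_j>(1+C\xi)\dim(V')$. Plugging this in,
\[
\E_S[p(X)] \;>\; (1+C\xi)\dim(V')-\dim(V') \;=\; C\xi\,\dim(V') \;\geq\; C\xi\;,
\]
where the last step uses that $\dim(V')=C_1\log(1/\eps)\geq 1$, since we are in the branch $\dim(V)\geq C_1\log(1/\eps)$.

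I do not expect any real obstacle here; the only minor point worth noting is that $V'$ is chosen as an arbitrary $\dim(V')$-dimensional subspace of $V$, not necessarily spanned by eigenvectors of $\Shat$. This is harmless because the inequality $v^T\Shat\,v>1+C\xi$ holds for \emph{every} unit vector of $V$, so the trace lower bound goes through for any subspace of $V$; alternatively one could simply take $V'$ to be the span of $\dim(V')$ of the large eigenvectors, in which case $\tr(\Pi_{V'}\Shat\,\Pi_{V'})$ is literally a sum of $\dim(V')$ eigenvalues of $\Shat$, each exceeding $1+C\xi$.
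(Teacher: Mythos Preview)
Your proposal is correct and is precisely what the paper intends: it explicitly says the proof is ``identical to the proof of Claim~\ref{claim:meanSBound}'' and omits it, and you have written out that argument adapted to the zero-mean setting (where the recentering step $\E_S[\|\Pi_{V'}X-\tilde\mu\|_2^2]\geq\E_S[\|\Pi_{V'}X-\Pi_{V'}\hat\mu\|_2^2]$ disappears, leaving exactly your trace computation). Your remark that the Rayleigh-quotient bound $v^T(\Shat-I)v>C\xi$ holds for every unit $v\in V$, not just eigenvectors, is the right justification for handling an arbitrary subspace $V'\subseteq V$.
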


\begin{claim}
\label{claim:covDeg2G0Bound}
$\E_{G_0}[p(X)] \leq \frac{\eps}{(d\log 1 / \eps)^2}$.
\end{claim}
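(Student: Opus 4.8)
The plan is to mimic, step for step, the proof of Claim~\ref{claim:meanG0Bound}, which is the unknown-mean analogue of this statement. First I would use that $\E[Y] \le \int_0^\infty \Pr[Y \ge t]\,dt$ for any real random variable $Y$, so that $\E_{G_0}[p(X)] \le \int_0^\infty \Pr_{G_0}[p(X) \ge t]\,dt$, where $p(x) = \|\Pi_{V'}(x)\|_2^2 - \dim(V')$. Since $p$ is an even degree-$2$ polynomial, so is $p(x) - t$ for every fixed $t$, so the degree-$2$ part of $(\eps,\delta)$-goodness of $G_0$ applies to each such shift.

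Next I would truncate the integral. For $x \in G_0$ the first goodness condition gives $x^\top \Sigma^{-1} x = O(d \log(|G_0|/\delta))$, and since $\|\Sigma\|_2 \le 1 + \|\Sigma - I\|_2 \le 1 + \|\Sigma-I\|_F \le 1 + \xi \le 2$ we get $\|\Pi_{V'}(x)\|_2^2 \le \|x\|_2^2 \le 2\, x^\top \Sigma^{-1} x = O(d\log(|G_0|/\delta))$; hence $p(X) \le R := O(d\log(|G_0|/\delta))$ holds surely under $G_0$, and the tail of the integral over $(R,\infty)$ vanishes. On $[0,R]$ I would apply the degree-$2$ tail bound from $(\eps,\delta)$-goodness to $p(x) - t$, obtaining $\Pr_{G_0}[p(X) \ge t] \le \Pr_{\normal(0,\Sigma)}[p(X) \ge t] + \frac{\eps^2}{d\log(|G_0|/\delta)}$; integrating the error term over $[0,R]$ contributes only $O(\eps^2)$ up to the $O(1)$ ratio of the two logarithmic factors, which is absorbed into the claimed right-hand side.

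It then remains to bound $\int_0^\infty \Pr_{\normal(0,\Sigma)}[p(X) \ge t]\,dt = \E_{\normal(0,\Sigma)}[\max(p(X),0)]$. Here I would compute directly (as in Lemma~\ref{lem:poly-props}(ii) / Claim~\ref{claim:poly-mean}) that $\E_{\normal(0,\Sigma)}[p(X)] = \tr(\Pi_{V'}\Sigma\Pi_{V'}) - \dim(V') = \langle \Pi_{V'}, \Sigma - I\rangle \le \|\Pi_{V'}\|_F\,\|\Sigma - I\|_F = \sqrt{\dim(V')}\,\xi$, and then control the discrepancy $\E[\max(p,0)] - \E[p] = \E[\max(-p,0)]$ via the Hanson--Wright inequality (Lemma~\ref{lem:hanson-wright}): the matrix $\Pi_{V'}\Sigma\Pi_{V'}$ has operator norm at most $2$ and squared Frobenius norm $O(\dim(V')) = O(\log(1/\eps))$, so the lower tail of $p$ decays exponentially and contributes a negligible amount. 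Assembling these pieces gives the bound on $\E_{G_0}[p(X)]$.

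The hard part, and the place where one must be careful, is the bookkeeping of the error terms — matching up the $d$, $\log(1/\eps)$, and $\log(|G_0|/\delta)$ factors so that everything collapses to the stated quantity. In particular, unlike the unknown-mean case — where the corresponding ``main term'' $\|\Pi_{V'}(\m) - \tilde{\m}\|_2^2$ is automatically $O(\eps^2)$ because $\tilde{\m}$ is already a good low-dimensional estimate of $\Pi_{V'}(\m)$ — here the main term $\langle\Pi_{V'}, \Sigma - I\rangle$ is not a priori negligible, and it is precisely the choice of eigenvalue threshold $C\xi$ together with the bound $\dim(V') = O(\log(1/\eps))$ that keeps it (and hence the gap between $\E_S[p]$ from Claim~\ref{claim:covDeg2SBound} and $\E_{G_0}[p]$ here) under control; getting this accounting tight is the delicate step, everything else being a routine integration against the goodness conditions.
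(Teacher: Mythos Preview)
Your approach mirrors exactly what the paper intends (it says this proof is ``identical'' to Claim~\ref{claim:meanG0Bound} and omits it), but step~6 contains a genuine gap --- and the same gap already sits in the paper's own proof of Claim~\ref{claim:meanG0Bound}. The lower-tail correction $\E_{\normal(0,\Sigma)}[\max(-p(X),0)]$ is \emph{not} negligible: Hanson--Wright gives $\Pr[p(X)\le -t]\le\exp\bigl(-c\min(t^2/\dim(V'),\,t)\bigr)$, and integrating this over $t\in[0,\infty)$ yields $\Theta(\sqrt{\dim(V')})=\Theta(\sqrt{\log(1/\eps)})$, since under $\normal(0,I)$ the variable $p(X)$ is a centered $\chi^2_{\dim(V')}$ with standard deviation $\sqrt{2\dim(V')}$. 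Hence the tail-integration route only delivers $\E_{G_0}[p(X)]\le\Theta(\sqrt{\log(1/\eps)})+O(\eps^2)$, which is useless. The inequality $\E[Y]\le\int_0^\infty\Pr[Y\ge t]\,dt$ discards the entire negative part of $Y$ and is simply too lossy here; the last line of the paper's proof of Claim~\ref{claim:meanG0Bound} (which identifies $\int_0^\infty\Pr_{\normal(\mu,I)}[p\ge t]\,dt$ with $\E_{\normal(\mu,I)}[p]=\|\Pi_{V'}\mu-\tilde\mu\|_2^2$) is already false for exactly this reason.

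The clean fix is to bypass tail integration and invoke the \emph{first} degree-$2$ goodness condition directly:
\[
\bigl|\E_{G_0}[p(X)]-\E_{\normal(0,\Sigma)}[p(X)]\bigr|\;\le\;\eps\,\E_{\normal(0,\Sigma)}[p^2(X)]^{1/2}\;=\;O\bigl(\eps\sqrt{\dim(V')}\bigr),
\]
together with your own computation $\E_{\normal(0,\Sigma)}[p(X)]=\langle\Pi_{V'},\Sigma-I\rangle\le\sqrt{\dim(V')}\,\xi$. This gives $\E_{G_0}[p(X)]=O\bigl(\xi\sqrt{\log(1/\eps)}\bigr)$. That is not the printed right-hand side $\eps/(d\log(1/\eps))^2$ --- which appears to be a typo --- but it \emph{is} what the downstream argument actually needs, namely something much smaller than $\E_S[p(X)]\ge C C_1\,\xi\log(1/\eps)$ so that Claim~\ref{claim:covDeg2EBound} goes through. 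Your instinct in the last paragraph that the main term $\langle\Pi_{V'},\Sigma-I\rangle$ was the delicate piece was correct; the real pitfall is that tail integration piles a spurious $\Theta(\sqrt{\log(1/\eps)})$ on top of it.
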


\begin{claim}
\label{claim:covDeg2LBound}
$\phi(S, G_0)\E_{L}[p(X)] = (C + O(\gamma) + o(1) ) \eps $.
\end{claim}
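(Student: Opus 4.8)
The plan is to follow the proof of Claim~\ref{claim:meanLBound} essentially verbatim, with the degree-$2$ polynomial $p(x) = \|\Pi_{V'}(x)\|_2^2 - \dim(V')$ of Algorithm~\ref{alg:filter-covManyDeg2Eigs} in place of the recentered polynomial used for the unknown-mean filter, and with $\dim(V') = C_1\log(1/\eps)$. Under $\normal(0,I)$ the quantity $\|\Pi_{V'}(X)\|_2^2$ is a sum of $\dim(V')$ independent $\chi^2_1$'s, so $\E_{\normal(0,I)}[p(X)] = 0$, while under $\normal(0,\Sigma)$ we have $\E_{\normal(0,\Sigma)}[p(X)] = \Tr(\Pi_{V'}(\Sigma - I)) = \langle \Pi_{V'},\Sigma - I\rangle$, which by Cauchy--Schwarz is $O(\sqrt{\dim(V')}\,\xi) = o(\dim(V'))$ since $\xi = O(\eps\log 1/\eps)$. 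First I would write
\[
\phi(S,G_0)\,\E_L[p(X)] \;\le\; \int_0^{\infty} \phi(S,G_0)\,\Pr_L[p(X) > t]\,dt \;,
\]
discarding the (negative) contribution of $p$ on $\{p < 0\}$, and then truncate the integral at $O(d\log(|G|/\delta))$: since $L \subseteq G_0$ and every $x \in G_0$ satisfies $x^T\Sigma^{-1}x = O(d\log(|G|/\delta))$, and $\|\Sigma - I\|_F$ is small, we get $p(x) = O(d\log(|G|/\delta))$ for all $x \in L$.

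Next I would split the integral at the threshold $t_0 := \dim(V') + O(\sqrt{\dim(V')}\,\xi) = (C_1 + o(1))\log(1/\eps)$. On $[0,t_0]$ bound $\Pr_L[\cdot] \le 1$, so this piece is at most $t_0\,\phi(S,G_0)$; since $\Delta(S,G_0) \le \eps$ and $\psi(S,G_0) \ge 0$ imply $\phi(S,G_0)\log(1/\phi(S,G_0)) \le (1+o(1))\eps$, this is $(C_1 + o(1))\eps$. (This is exactly where the $\log(1/\eps)$ factor in $\dim(V')$ is absorbed, so the leading constant in the claim is $C = C_1$ rather than $C_1\log(1/\eps)$.) On $[t_0, O(d\log(|G|/\delta))]$, from $\#\{x \in L : p(x) > t\} \le \#\{x \in G_0 : p(x) > t\}$ and $|G_0| \le 2|S|$ (which holds since $\phi(S,G_0) = O(\eps/\log(1/\eps))$) we get $\phi(S,G_0)\Pr_L[p(X) > t] \le 2\,\Pr_{G_0}[p(X) > t]$; applying the degree-$2$ polynomial goodness condition to the shifted even polynomial $p - t$ bounds this by $2\,\Pr_{\normal(0,\Sigma)}[p(X) > t] + \frac{2\gamma\eps}{d\log(|G|/\delta)}$, and the goodness error term integrates to $O(\gamma\eps)$ over this interval of length $O(d\log(|G|/\delta))$.

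Finally, for the Gaussian tail $\int_{t_0}^{\infty}\Pr_{\normal(0,\Sigma)}[p(X) > t]\,dt$ I would apply Hanson--Wright (Lemma~\ref{lem:hanson-wright}), exactly as in Claim~\ref{claim:poly-tail-mean}: the matrix $A$ realizing $p$ has $\|A\|_2 = O(1)$ and $\|A\|_F^2 = O(\dim(V'))$, so for $t \ge t_0$, $\Pr_{\normal(0,\Sigma)}[p(X) > t] \le 2\exp\bigl(-c_0\min\bigl(\tfrac{(t - o(\dim V'))^2}{\dim V'},\, t - o(\dim V')\bigr)\bigr)$; at $t = t_0 = (1+o(1))C_1\log(1/\eps)$ this is $\eps^{\Omega(C_1)}$, and the tail integral is $O(\eps^{\Omega(C_1)}) = o(\eps)$ for $C_1$ a sufficiently large constant. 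Summing the three pieces gives $\phi(S,G_0)\E_L[p(X)] \le (C_1 + O(\gamma) + o(1))\eps$, which is the claim (the ``$=$'' in the statement being read as an upper bound, as in Claim~\ref{claim:meanLBound}). There is no genuine obstacle here — the only point requiring care is the calibration of $t_0$: it must be small enough that $t_0\,\phi(S,G_0) = O(\eps)$ (which needs the sharp bound $\phi(S,G_0)\log(1/\phi(S,G_0)) \le (1+o(1))\eps$ coming from $\Delta(S,G_0)\le\eps$) yet large enough that $\Pr_{\normal(0,\Sigma)}[p(X) > t_0]$ is polynomially small in $\eps$; because $\E_{\normal(0,\Sigma)}[p(X)] = o(\dim(V'))$, the choice $t_0 \approx \dim(V')$ meets both requirements simultaneously, and the remaining estimates are routine.
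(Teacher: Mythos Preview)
Your proof is correct and follows exactly the approach the paper intends: the paper explicitly defers to the proof of Claim~\ref{claim:meanLBound}, and your adaptation (with $p(x) = \|\Pi_{V'}(x)\|_2^2 - \dim(V')$, the split at $t_0 \approx \dim(V') = C_1\log(1/\eps)$, and Hanson--Wright applied to $A = \Sigma^{1/2}\Pi_{V'}\Sigma^{1/2}$ with $\|A\|_2 = O(1)$, $\|A\|_F^2 = O(\dim V')$) is precisely the right translation. Your observation that the leading constant is really $C_1$ rather than $C$ is correct---this, together with the vestigial $O(\gamma)$ (in this section $G_0$ is $(\eps,\delta)$-good, so the goodness error term is $\eps^2/(d\log(|G|/\delta))$ and integrates to $O(\eps^2) = o(\eps)$), are notational slips in the paper's statement that you have handled appropriately.
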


\begin{claim}
\label{claim:covDeg2EBound}
$\psi \E_E [p(X)] \geq C \xi - (C_1 + 1 + O(\gamma) + o(1) ) \eps$.
\end{claim}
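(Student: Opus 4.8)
The plan is to obtain Claim~\ref{claim:covDeg2EBound} as an immediate bookkeeping consequence of Claims~\ref{claim:covDeg2SBound}, \ref{claim:covDeg2G0Bound}, and~\ref{claim:covDeg2LBound}, in direct analogy with the unknown-mean statement Claim~\ref{claim:meanEBound}. First I would write $S$ as the disjoint union $(G_0\setminus L)\cup E$ and record the resulting decomposition
\[
\E_S[p(X)] \;=\; \tfrac{|G_0|}{|S|}\,\E_{G_0}[p(X)] \;-\; \phi(S,G_0)\,\E_L[p(X)] \;+\; \psi(S,G_0)\,\E_E[p(X)] \; ,
\]
and then solve this identity for $\psi(S,G_0)\,\E_E[p(X)]$, so that it suffices to upper-bound $\tfrac{|G_0|}{|S|}\E_{G_0}[p(X)]$ and to lower-bound both $\E_S[p(X)]$ and $\phi(S,G_0)\E_L[p(X)]$.

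For the first two pieces, Claim~\ref{claim:covDeg2SBound} already gives $\E_S[p(X)]\ge C\xi$, and since $\Delta(S,G_0)\le\eps$ forces $|G_0|/|S|=1+O(\eps)$, Claim~\ref{claim:covDeg2G0Bound} gives $\tfrac{|G_0|}{|S|}\E_{G_0}[p(X)]=o(\eps)$. The only point requiring a small extra argument is the lower bound on $\phi(S,G_0)\E_L[p(X)]$: Claim~\ref{claim:covDeg2LBound} bounds this quantity from \emph{above} (which is exactly what one needs to certify that $E$ carries a lot of mass, but not what is needed here). Instead I would use the pointwise inequality $p(x)=\|\Pi_{V'}(x)\|_2^2-\dim(V')\ge-\dim(V')=-C_1\log(1/\eps)$, hence $\phi(S,G_0)\E_L[p(X)]\ge-\phi(S,G_0)\dim(V')$, together with the fact that $\Delta(S,G_0)\le\eps$ implies $\phi(S,G_0)=O(\eps/\log(1/\eps))$ (as observed after \eqref{eq:Delta}). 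This yields $\phi(S,G_0)\E_L[p(X)]\ge-(C_1+o(1))\eps$, and assembling the three estimates gives $\psi(S,G_0)\E_E[p(X)]\ge C\xi-(C_1+1+O(\gamma)+o(1))\eps$.

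I do not expect any genuine obstacle in this step: all the analytic content sits in Claims~\ref{claim:covDeg2SBound}--\ref{claim:covDeg2LBound}, whose proofs are essentially verbatim copies of the unknown-mean arguments with the $(\eps,\delta)$-goodness conditions for unknown covariance in place of those for unknown mean. The one subtlety worth flagging is the reversed direction of Claim~\ref{claim:covDeg2LBound} relative to what is used here, which is why the $C_1\eps$ loss is charged to the trivial lower bound on $p$ times the $\Delta$-budget on $\phi$ rather than to that claim directly.
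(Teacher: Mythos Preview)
Your approach is the same as the paper's---the paper's proof of the corresponding mean-case statement (Claim~\ref{claim:meanEBound}) literally says ``This immediately follows from Claims~\ref{claim:meanSBound}, \ref{claim:meanG0Bound}, and~\ref{claim:meanLBound},'' and Claim~\ref{claim:covDeg2EBound} is declared identical. You carry out exactly this combination via the decomposition $\E_S[p]=\tfrac{|G_0|}{|S|}\E_{G_0}[p]-\phi\,\E_L[p]+\psi\,\E_E[p]$.

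Your extra observation about the direction of Claim~\ref{claim:covDeg2LBound} is a genuine refinement over the paper's terse write-up. The proof of the $L$-bound (cf.\ Claim~\ref{claim:meanLBound}) begins with $\phi\,\E_L[p(X)]=\int_0^\infty \phi\,\Pr_L[p(X)>t]\,dt$, which is only valid as an inequality $\le$ since $p$ is not nonnegative; thus that claim as proved only yields an \emph{upper} bound on $\phi\,\E_L[p(X)]$, whereas the decomposition needs a \emph{lower} bound. You correctly supply the missing direction from the pointwise inequality $p(x)\ge-\dim(V')=-C_1\log(1/\eps)$ together with $\phi\le(1+o(1))\eps/\log(1/\eps)$ from $\Delta(S,G_0)\le\eps$, giving $\phi\,\E_L[p(X)]\ge-(C_1+o(1))\eps$. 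This is presumably what the paper has in mind when it writes the $L$-bound with an ``$=$'' sign, but you have made the reasoning explicit and closed the apparent gap.
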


These claims imply just as before that there is a $T$ with the desired properties.
Again, the proof is identical.
Formally:
\begin{claim}
\label{claim:covDeg2TExists}
Suppose $\dim (V) \geq C \log (1 / \eps)$.
Then there is a $T$ satisfying the conditions in the algorithm.
\end{claim}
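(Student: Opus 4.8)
The plan is to reproduce the proof of Claim~\ref{claim:meanTExists} almost verbatim, with the quantitative facts about $p$ replaced by their unknown-covariance analogues (Claims~\ref{claim:covDeg2SBound}--\ref{claim:covDeg2EBound}, together with the Hanson--Wright tail bound for $p(x)=\|\Pi_{V'}(x)\|_2^2-\dim(V')$). One argues by contradiction: suppose no $T>0$ satisfies either alternative (a) or (b) in \textsc{FilterCovManyDeg2Eig}. The goal is then to upper bound $\psi\,\E_E[p(X)]$ and contradict the lower bound of Claim~\ref{claim:covDeg2EBound}.

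Concretely, I would write $\psi\,\E_E[p(X)] = \int_0^\infty \psi\,\Pr_{X\sim E}[p(X)>t]\,dt$ and split the range of integration at $t_0 = 2C_3\log(1/\eps)/c_0$ and $t_1 = C_2 d\log(|S|/\delta)$. On $[0,t_0]$, the crude bound $\psi\,\Pr_E[p(X)>t]\le\psi\le\eps$ contributes $O(\eps\log(1/\eps))$. On $[t_0,t_1]$, the failure of alternative (b) for every such $t$ gives $\Pr_S[p(X)>t]\le\exp(-c_0 t/(2C_3))+\eps/(d\log(|S|/\delta))$; combining this with $\Pr_S[p(X)>t]\ge\psi\,\Pr_E[p(X)>t]$, integrating the exponential from $t_0$ yields a lower-order $O(\eps)$ term and the additive term contributes at most $C_2\eps$. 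Finally, the failure of alternative (a) forces $p(x)\le t_1$ for every $x\in S$, hence the integrand vanishes on $[t_1,\infty)$. Altogether $\psi\,\E_E[p(X)]\le O(\eps\log(1/\eps))$ with a leading constant controlled by $C_2,C_3,1/c_0$, and this contradicts Claim~\ref{claim:covDeg2EBound} once the eigenvalue-threshold constant $C$ is taken sufficiently large relative to $C_1,C_2,C_3$ --- exactly as $C_1$ is taken large relative to $C_3$ in the unknown-mean case.

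The only genuinely new input relative to the mean case is the tail estimate for this particular $p$. For $p(x)=\|\Pi_{V'}(x)\|_2^2-\dim(V')$, the associated matrix in Lemma~\ref{lem:hanson-wright} is $A=\Pi_{V'}$, with $\|A\|_2=1$ and $\|A\|_F^2=\dim(V')=C_1\log(1/\eps)$, so after recentering by $\E_{\normal(0,I)}[p(X)]=0$ one gets $\Pr_{\normal(0,I)}[p(X)>t]\le 2\exp(-c_0\min(t^2/(C_1\log(1/\eps)),\,t))$. This is precisely the tail, transferred to $G_0$ via $(\eps,\delta)$-goodness, that underlies Claims~\ref{claim:covDeg2G0Bound} and~\ref{claim:covDeg2LBound} and that makes $\exp(-c_0 t/(2C_3))$ the right cutoff in alternative (b). I expect the main obstacle to be purely bookkeeping: checking that $C,C_1,C_2,C_3$ can be chosen simultaneously so that the contradiction above closes and so that the companion claim that $\Delta$ strictly decreases when this filter fires (the covariance analogue of Claim~\ref{claim:meanDeltaBound}, needed to complete Theorem~\ref{thm:many-eigenvaluesDeg2}) also goes through. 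No new probabilistic idea beyond the unknown-mean analysis is needed.
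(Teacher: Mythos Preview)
Your proposal is correct and takes essentially the same approach as the paper, which explicitly says ``Again, the proof is identical'' and defers to the argument of Claim~\ref{claim:meanTExists}. Your outline---assume no valid $T$, split the integral for $\psi\,\E_E[p(X)]$ at $2C_3\log(1/\eps)/c_0$ and $C_2 d\log(|S|/\delta)$, bound each piece using the failure of (a) and (b), and contradict Claim~\ref{claim:covDeg2EBound} by choosing $C$ large---matches exactly. One cosmetic remark: the Hanson--Wright tail you quote should, in this setting, be taken under $\normal(0,\Sigma)$ rather than $\normal(0,I)$ (so the relevant matrix is $\Sigma^{1/2}\Pi_{V'}\Sigma^{1/2}$, with $\|\cdot\|_2\le 1+\xi$ and $\|\cdot\|_F^2\le(1+\xi)^2\dim(V')$), but this only feeds into Claims~\ref{claim:covDeg2G0Bound}--\ref{claim:covDeg2LBound} and the companion $\Delta$-decrease claim, not into the present argument, and the change is absorbed into the constants exactly as you anticipate.
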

Finally, we show the invariant that $\Delta$ decreases.
This is almost identical to the proof of Claim \ref{claim:meanDeltaBound}, however, we need to slightly change our application of the Hanson-Wright inequality.
Formally, we show:
\begin{claim}
\label{claim:covDeg2DeltaBound}
Suppose $\dim (V) \geq C_1 \log 1 / \eps$.
Let $S'$ be the set of points we return. 
Then $\Delta (S', G) < \Delta (S, G)$.
\end{claim}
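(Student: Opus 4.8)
The plan is to follow the template of Claim~\ref{claim:meanDeltaBound} verbatim, the only genuine change being how the Hanson--Wright inequality is invoked — here we must control the tail of $p$ under the true Gaussian $\normal(0,\Sigma)$ rather than under $\normal(\mu,I)$. We split into the two cases according to which of the two conditions on $T$ in \textsc{FilterCovManyDeg2Eig} triggered the return.

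In case (a), $T>C_2 d\log(|S|/\delta)$, I would first observe that by $(\eps,\delta)$-goodness every $x\in G_0$ satisfies $x^T\Sigma^{-1}x = O(d\log(|G_0|/\delta))$, and since $\|\Sigma-I\|_F\le\xi$ is small this yields $p(x)=\|\Pi_{V'}(x)\|_2^2-\dim(V')\le\|x\|_2^2 = O(d\log(|G_0|/\delta)) < T$ for $C_2$ large enough. Hence $S'$ removes only corrupted points, and it removes at least one (the point witnessing $p(x)>T$). A direct computation — the same one underlying Claim~\ref{claim:meanDeltaBound} — shows that removing at least one corrupted point and zero good points strictly decreases $\Delta$: the drop in $\psi$ strictly dominates the (tiny) increase in the $\phi\log(1/\phi)$ term, using $\phi(S,G_0)=O(\eps/\log 1/\eps)$ and $\psi(S,G_0)\le\eps$.

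Case (b) is the substance. Here $T>2C_3\log(1/\eps)/c_0$ and $\Pr_S(p(X)>T) > \exp(-c_0T/(2C_3)) + \eps/(d\log(|S|/\delta))$, and as in the mean case it suffices to show the filter discards at least $\log(1/\eps)$ times as many corrupted points as good points, since this is precisely the condition calibrated into the $\log(1/\phi)$ weighting of $\Delta$ that makes the potential decrease. The number of points removed is $|S|\Pr_S(p(X)>T)\ge |S|(\exp(-c_0T/(2C_3))+\eps/(d\log(|S|/\delta)))$. The number of good points removed is at most $|G_0|\Pr_{G_0}(p(X)>T)$, and since $p(x)-T$ is an even degree-two polynomial, $(\eps,\delta)$-goodness gives $\Pr_{G_0}(p(X)\ge T)\le \Pr_{\normal(0,\Sigma)}(p(X)\ge T) + \eps^2/(d\log(|G_0|/\delta))$. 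I then apply Lemma~\ref{lem:hanson-wright} to $\Pi_{V'}X$ for $X\sim\normal(0,\Sigma)$: expressing $\|\Pi_{V'}(x)\|_2^2$ as a quadratic form in a standard Gaussian conjugates $\Pi_{V'}$ by $\Sigma^{1/2}$, and because $\|\Sigma-I\|_F\le\xi$ the resulting matrix has squared Frobenius norm $O(\dim(V'))=O(\log 1/\eps)$, spectral norm at most $1+\xi\le 2$, and mean $|\langle\Pi_{V'},\Sigma-I\rangle|=O(\sqrt{\log 1/\eps}\,\xi)\ll 1$; therefore $\Pr_{\normal(0,\Sigma)}(p(X)>T)\le 2\exp(-c_0\min(T^2/O(\log 1/\eps),\,T/2))$, which for $C_1,C_3$ chosen appropriately and $T>2C_3\log(1/\eps)/c_0$ is at most $\exp(-c_0T/C_3)=(\exp(-c_0T/(2C_3)))^2\le \eps\exp(-c_0T/(2C_3))\ll \exp(-c_0T/(2C_3))/\log^2(1/\eps)$. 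Combining these bounds, and using $|G_0|\le(1+\eps)|S|$ (valid since the invariant has held so far, i.e. we have removed at least as many bad as good points), we get $\log(1/\eps)\cdot(\text{good removed}) \le (\text{total removed})$, which is exactly the desired domination.

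The main obstacle, and the only place this differs from Claim~\ref{claim:meanDeltaBound}, is the bound on $\Pr_{\normal(0,\Sigma)}(p(X)>T)$: one must run Hanson--Wright for the projection $\Pi_{V'}$ after the reshaping by $\Sigma^{1/2}$ and carefully verify that the $O(\xi)$ perturbations to both the Frobenius norm and the mean of the form do not break the inequality $\exp(-c_0\min(T^2/O(\log 1/\eps),T/2)) \le \exp(-c_0T/C_3)$ in the regime $T>2C_3\log(1/\eps)/c_0$, which forces the constraint $C_3^2 \gtrsim c_0 C_1$ among the constants. The rest — the case-(a) calculation, the reduction ``removes $\log(1/\eps)$ times more bad than good $\Rightarrow$ $\Delta$ decreases'', and the slack between the goodness deviation $\eps^2/(d\log(|G_0|/\delta))$ and the filter's additive term $\eps/(d\log(|S|/\delta))$ — is routine bookkeeping that mirrors the mean case.
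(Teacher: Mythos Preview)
Your proposal is correct and follows essentially the same approach as the paper: both split on which threshold condition triggers, dispose of case (a) by the boundedness part of $(\eps,\delta)$-goodness, and in case (b) reduce to showing a $\log(1/\eps)$-factor domination of bad over good removed points by applying Hanson--Wright to the quadratic form $\Sigma^{1/2}\Pi_{V'}\Sigma^{1/2}$ and squaring $\exp(-c_0T/(2C_3))$. If anything your write-up is cleaner than the paper's, which contains some copy-paste artifacts from the mean case (stray references to $\|\mu-\tilde\mu\|$) and a circular ``proceeds identically to the proof of Claim~\ref{claim:covDeg2DeltaBound}''; your explicit tracking of the mean offset $\tr(\Pi_{V'}(\Sigma-I))=O(\sqrt{\log 1/\eps}\,\xi)$, the constraint $C_3^2\gtrsim c_0C_1$, and the slack between the $\eps^2$ goodness error and the $\eps$ filter tolerance are exactly the details the paper leaves implicit.
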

\begin{proof}
Let $T$ be the threshold we pick.
If  $T>C_2 d \log(|S| / \delta)$ then the invariant is satisfied since we remove no good points, by $(\gamma \eps, \delta)$-goodness.
It suffices to show that in the other case, we remove $\log 1 / \eps$ times many more bad points than good points.
By definition we remove at least 
\[
|S| \cdot \left( \exp\left( -\frac{c_0 T}{2 C_3} \right) + \frac{\ve}{d \log |S| / \delta} \right) 
\]
points.
On the other hand, observe that if $v_1, \ldots, v_k$ is an orthonormal basis for $V$, we have 
\begin{align*}
\| \Pi_V x \|_2^2 &= \sum \langle v_i, X \rangle^2 \\
&= y \Sigma^{1/2} \left( \sum v_i v_i^T \right) \Sigma^{1/2} X \; ,
\end{align*}
so that if $X \sim \normal (0, \Sigma)$, we have $Y \sim \normal (0, I)$.
Let $M = \Sigma^{1/2} \left( \sum v_i v_i^T \right) \Sigma^{1/2}$.
We have that $\| M \|_F \leq \sum_{i = 1}^n \| \Sigma \|_2 \leq (1 + \xi) k$, and $\| M \|_2 \leq \| \Sigma \|_2 \leq (1 + \xi)$.
Since 
\begin{align*}
\left| \E_{\normal (0, \Sigma)} [ \| \Pi_V x \|_2^2 ] - k \right| &= \left| \sum_{i = 1}^k v_i^T (\Sigma - I) v_i \right| \\
&\leq  \sum_{i = 1}^k \| \Sigma - I \|_2 \\
&\leq (1 + \xi) k \; ,
\end{align*}
we have by Hanson-Wright that
\[
\Pr_{\normal (0, \Sigma)} \left[ \| \Pi_V x \|_2^2 - \dim (V) > T \right] \leq \exp \left( -c_0 \cdot \min \left(\frac{(T- (1 + \xi)k )^2}{C_1 \log(1/\ve)}, (T-\|\m - (1 + \xi) k\|_2^2) \right) \right) \; ,
\]
so by our choice of $T$, we have
\begin{align*}
\exp \left( -c_0 \cdot \min \left(\frac{(T-\|\m - \tilde \m\|_2^2)^2}{C_1 \log(1/\ve)}, (T-\|\m - \tilde \m\|_2^2) \right) \right) & \leq \exp \left( -c_0 \cdot \min \left( \frac{(C_3 - 1)^2}{C_1} T, (C_3 - 1) T) \right) \right) \; .
\end{align*}
The remaining proof now proceeds identically to the proof of Claim \ref{claim:covDeg2DeltaBound}.
%
\end{proof}

\subsection{Proof of Theorem \ref{thm:many-eigenvalues}}
Clearly, the only non-trivial condition to certify for Theorem \ref{thm:many-eigenvalues} is that if we are in Case (1), the returned set satisfies the desired properties.

Our proof will roughly follow the same structure as the proof of Theorem \ref{thm:meanManyEig}.
We will first show that the empirical average of the polynomial $Q$ can only be large because of the contribution of the points in $E$ (Claim \ref{claim:bad-points}).
We will then show that this implies that there exists a threshold $T$ which the algorithm will find in this case (Claim \ref{claim:covFindT}).
Finally, we will show that for any such $T$ we find, the returned set of points will indeed satisfy $\Delta (S', G_0) < \Delta (S, G_0)$, which implies the correctness of the algorithm (Claim \ref{claim:covDeltaBound}).

We first show the following claim:
\begin{claim}
\label{claim:ri-bound}
$\E_{S} [r_i] - \E_{\normal (0, \Sigma)} [r_i] \leq (4 C + 1) \xi$
\end{claim}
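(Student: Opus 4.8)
Let me understand what's being claimed. We have orthonormal degree-2 polynomials $p_1, \ldots, p_k \in \cP_2(W)$, and $q_i = (p_i^2)^{[4]}$ is the degree-4 harmonic component, $r_i = p_i^2 - q_i$ is the degree-$\le 2$ part. We want to bound $\E_S[r_i] - \E_{\normal(0,\Sigma)}[r_i] \le (4C+1)\xi$.

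The key observation: $r_i = p_i^2 - q_i$ where $q_i$ is the projection of $p_i^2$ onto the degree-4 harmonic subspace, so $r_i$ is a degree-$\le 2$ polynomial. Since $p_i$ is even (degree-2, no linear term), $p_i^2$ has only even-degree terms (degrees 0, 2, 4), so $r_i$ has degree 0 and degree 2 parts only. We can write $r_i = a_i + s_i$ where $a_i$ is a constant and $s_i$ is harmonic of degree 2 (a centered degree-2 polynomial). The constant contributes nothing to the difference $\E_S[r_i] - \E_{\normal(0,\Sigma)}[r_i]$. So we need to bound $\E_S[s_i] - \E_{\normal(0,\Sigma)}[s_i]$ where $s_i$ is an even degree-2 polynomial with $\E_{\normal(0,I)}[s_i] = 0$.

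**The plan.** First I would decompose $r_i$ into its harmonic pieces and reduce to controlling a single normalized degree-2 polynomial $s_i$. I need to estimate $\|s_i\|_2$: since $\langle p_i, p_j\rangle = \delta_{ij}$ and hypercontractivity (Theorem \ref{thm:hc}) controls $\|p_i^2\|_2 \le (\sqrt{3})^2 \cdot 1 = 3$ (fourth moment of a unit-norm degree-2 polynomial is bounded), and $s_i$ is an orthogonal projection of $p_i^2$, we get $\|s_i\|_2 = O(1)$. Then I would invoke the second bullet of $(\eps,\delta)$-goodness for even degree-2 polynomials, namely $|\E_G[p(X)] - \E_{\normal(0,\Sigma)}[p(X)]| \le \eta \E_{\normal(0,\Sigma)}[p^2(X)]^{1/2}$, together with the bookkeeping that $\Delta(S, G_0) \le \eps$ means $S$ differs from $G_0$ by adding $\psi|S|$ bad points and removing $\phi|S|$ good points. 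The contribution of lost points $L$ needs the integral-of-tail argument (as in Claim \ref{claim:meanLBound}): we split $\phi \E_L[s_i] = \int_0^\infty \phi \Pr_L[s_i > t]\,dt$, truncate at the range where good points are bounded, use $\Pr_L \le \frac{|G_0|}{|L|}\Pr_{G_0}$ and Hanson-Wright / goodness tail bounds, and conclude $\phi\E_L[s_i] = O(\gamma)\eps + \phi\log(1/\phi) \cdot O(1) = O(\eps)$. The contribution of $E$ is handled because $s_i \le T$ has already been filtered — wait, actually $r_i$ is not the filtered polynomial; here we need to be more careful, since this claim is a preliminary step before the filter, so the bound on $\psi\E_E[s_i]$ is not available. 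Instead I think the claim is purely about $G_0$ versus the truth, or rather it follows because $|r_i(x)|$ is uniformly bounded on the good set via the $x^T\Sigma^{-1}x = O(d\log|G|/\delta)$ condition, but that gives a large bound.

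**Reconsidering — the main obstacle.** The subtle point is that this claim bounds $\E_S[r_i]$, and $S$ contains corrupted points, so a priori $\E_E[r_i]$ could be huge. I believe the resolution is that $r_i$ has degree $\le 2$ and the true degree-2 moments are already controlled: the subspace $V$ was constructed so that $v^T\E_S[XX^T]v \le 1 + C\xi$ for all unit $v \in W$, i.e., the empirical second moment restricted to $W$ has no large eigenvalues. Since $r_i$ is an even degree-2 polynomial supported on $W$, we have $\E_S[r_i] = \langle M_i, \E_S[XX^T] - I\rangle$ for the associated matrix $M_i$ (using that $\E_{\normal(0,I)}[r_i]$ is the constant part), so $|\E_S[r_i] - \E_{\normal(0,\Sigma)}[r_i]| \le \|M_i\|_{S^1} \cdot \|\Pi_W(\E_S[XX^T] - I)\Pi_W\|_2 + |\E_{\normal(0,\Sigma)}[r_i] - \text{const}|$; with $\|M_i\|_F = \|s_i\|_2 = O(1)$, and possibly $\|M_i\|_{S^1} = O(\sqrt{k})$ in the worst case — but restricted to $W$ the eigenvalues of $\E_S[XX^T] - I$ are at most $C\xi$ in absolute value, so the bound becomes $\|M_i\|_{S^1}\cdot C\xi$. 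Hmm, that introduces a $\sqrt k$ factor. So the main obstacle is getting the $\|M_i\|$ bound tight: I expect one actually controls the nuclear norm $\|M_i\|_{S^1}$ directly using that $M_i$ is built from $p_i^2$ whose associated matrix has rank structure $\propto (\text{rank-1 matrix of } p_i)$, giving $\|M_i\|_{S^1} = O(1)$, not $O(\sqrt k)$. I would verify this via the explicit Hermite-component formula for $(x^TNx - \tr N)^2$, whose degree-2 harmonic part has associated matrix $4 N^2 - 4(\tr N)N$ up to constants, with Frobenius and nuclear norm both $O(\|N\|_F^2) = O(1)$. Combining this $O(1)$ nuclear-norm bound with $\|\Pi_W(\E_S[XX^T]-I)\Pi_W\|_2 \le C\xi$ and with $|\E_{\normal(0,\Sigma)}[r_i] - c_i| = |\langle M_i, \Sigma - I\rangle| \le \|M_i\|_{S^1}\|\Sigma - I\|_2 \le O(\xi)$ then yields the stated $(4C+1)\xi$ after tracking constants. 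I expect the constant-chasing to be the only remaining routine-but-fiddly part.
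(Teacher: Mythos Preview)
Your overall plan is right: reduce to the degree-$2$ harmonic part $s_i$ of $r_i$, observe it corresponds to some symmetric matrix $M_i$ supported on $W$, and then pair $M_i$ against $\E_S[XX^T]-I$ using the eigenvalue assumption on $W$. This is exactly the paper's route. But two concrete points in your execution do not go through.

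First, your formula for $M_i$ is wrong. If $p_i(x)=\tfrac{1}{\sqrt2}(x^TA_ix-\tr A_i)$ with $\|A_i\|_F=1$, then a direct computation (e.g.\ $\E[\nabla_a\nabla_b\,p_i^2]$) gives $M_i=2\sqrt2\,A_i^2$, not $4A_i^2-4(\tr A_i)A_i$. The $(\tr A_i)A_i$ term you wrote simply does not appear once you account for the $-\tr A_i$ shift in $p_i$. This matters: with your formula the nuclear norm satisfies only $\|M_i\|_{S^1}\le 4+4|\tr A_i|\,\|A_i\|_{S^1}$, and since $|\tr A_i|$ can be as large as $\sqrt d$, you do not get the $O(1)$ bound you need. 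With the correct $M_i=2\sqrt2\,A_i^2$ you get $\|M_i\|_{S^1}=\tr M_i=2\sqrt2\,\|A_i\|_F^2=2\sqrt2<4$.

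Second, you invoke a two-sided bound $\|\Pi_W(\E_S[XX^T]-I)\Pi_W\|_2\le C\xi$, but the hypothesis of Theorem~\ref{thm:many-eigenvalues} only gives the one-sided bound $v^T\E_S[XX^T]v\le 1+C\xi$ for $v\in W$; nothing rules out small eigenvalues of $\hat\Sigma$ on $W$ without additional argument. The paper sidesteps this entirely by exploiting that $M_i=2\sqrt2\,A_i^2\succeq 0$: writing $r_i(x)=\sum_j\alpha_j\langle v_j,x\rangle^2+C_0$ with all $\alpha_j\ge 0$ and $\sum_j\alpha_j\le 4$, the one-sided bound on $v_j^T\hat\Sigma v_j$ is exactly what is needed to upper-bound $\E_S[r_i]$, and then $\|\Sigma-I\|_F\le\xi$ handles the $\E_{\normal(0,\Sigma)}[r_i]$ side. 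So the positivity of $A_i^2$ is doing double duty---it both gives the small trace-norm bound and makes the one-sided spectral hypothesis sufficient. Once you correct the matrix formula, this falls out immediately.
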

\begin{proof}
Let us suppose that $p_i$ corresponds to the matrix $A_i$, given by $(A_i)_{a,b}=\nabla_a\nabla_b p_i/\sqrt{2!}$, so that the $A_i$ are orthonormal with respect to the Frobenius norm.
Then the constant harmonic part of $p_i^2$ corresponds to $\| A_i \|_F^2 \leq 1$.
The degree-$2$ harmonic part of $p_i^2$ corresponds to the matrix $2 \sqrt{2} A_i^2$.
This is because if we let $B_i$ be the matrix corresponding to $p_i^2$, we have
\begin{align*}
(B_i)_{a, b} &= \frac{1}{\sqrt{2}} \E_{X \sim \normal (0, I)} \left[ \nabla_a \nabla_b p_i^2 (X) \right] \\
&= \frac{2}{\sqrt{2}} \E_{X \sim \normal (0, I)} \left[ \nabla_b \left( p_i (X) \nabla_a p_i (X)  \right) \right] \\
&= \frac{2}{\sqrt{2}} \E_{X \sim \normal (0, I)} \left[ \nabla_a p_i(X) \nabla_b p_i (X) + p_i (X) \nabla_{a} \nabla_{b} p_i (X)  \right] \\
&= \frac{2}{\sqrt{2}} \E_{X \sim \normal (0, I)} \left[ \nabla_a p_i(X) \nabla_b p_i (X) \right] \\
&= 2 \sqrt{2} \langle (A_i)_a , (A_i)_b \rangle \ ,
\end{align*}
where the second to last line follows since $ \nabla_{a} \nabla_{b} p_i (X)$ is a constant, and the last line follows from explicit computation.
In particular, this implies that the non-constant component of $r_i$ corresponds to matrix with trace norm at most $2 \sqrt{2} \leq 4$.
Therefore, $r_i$ can be written as $r_i (x) = \sum \alpha_i \langle v_i, x \rangle^2 + C_0$ for some constant $C_0$, where $\sum |\alpha_i| \leq 4$.
Thus, by our assumption, we have $\E_S [r_i] \leq 1 + C \xi$.
The claim then follows since $\Sigma$ and $I$ are differ in Frobenius norm by at most $\xi$.
\end{proof}
\begin{claim}
\label{claim:mu-bound}$\E_S [Q(X)] - \E_{X \sim \normal (0, \Sigma)} [Q(X)] \geq (C_1 - 6) \xi k$
\end{claim}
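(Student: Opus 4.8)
The plan is to combine the lower bound from Claim~\ref{claim:ri-bound} on the degree-two parts $r_i$ with a lower bound on the empirical average of the degree-four parts $q_i$, which we obtain from the hypothesis of Case~(1). Recall $p_i^2 = q_i + r_i$ and $Q = \sum_{i=1}^k q_i$. We are told in Case~(1) that $\E_S[p_i^2(X)] - 1 \geq C_1 \eps$ for all $i$ (more precisely, via the constant $\xi$ in the algorithm we have $\E_S[p_i^2(X)] - 1 \geq C_1 \xi$, since $\xi = O(\eps \log 1/\eps) \geq \eps$). Summing over the $k$ polynomials gives $\sum_i \E_S[p_i^2(X)] - k \geq C_1 \xi k$, i.e. $\E_S[Q(X)] + \sum_i \E_S[r_i(X)] - k \geq C_1 \xi k$.

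Next I would control $\sum_i \E_S[r_i(X)]$ from above. By Claim~\ref{claim:ri-bound}, $\E_S[r_i(X)] - \E_{\normal(0,\Sigma)}[r_i(X)] \leq (4C+1)\xi$ for each $i$. Since $p_i$ has unit Gaussian norm and $q_i$ is its (Hermite) degree-four harmonic component, $\E_{\normal(0,I)}[r_i] = \E_{\normal(0,I)}[p_i^2] - \E_{\normal(0,I)}[q_i] = 1 - 0 = 1$ (the degree-four harmonic part is mean-zero); and since $\|\Sigma - I\|_F \leq \xi$, the value $\E_{\normal(0,\Sigma)}[r_i]$ differs from $\E_{\normal(0,I)}[r_i] = 1$ by at most $O(\xi)$ using Lemma~\ref{lem:poly-props} and the fact (established in Claim~\ref{claim:ri-bound}) that the non-constant part of $r_i$ has trace norm at most $4$. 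Hence $\E_S[r_i(X)] \leq 1 + O(\xi)$; being careful with the constant, one gets $\sum_i \E_S[r_i(X)] \leq k + 6\xi k$ for a suitable absorbing of the universal constants into the ``$6$''. Substituting into the displayed inequality from the previous paragraph yields $\E_S[Q(X)] \geq C_1 \xi k - 6\xi k + k - k = (C_1 - 6)\xi k$. Finally, since $Q$ is a sum of squares and $\E_{\normal(0,\Sigma)}[q_i] = 0$ up to $O(\xi)$ corrections of the same type (each $q_i$ is a mean-zero harmonic degree-four polynomial under $\normal(0,I)$, perturbed by $O(\xi)$ under $\normal(0,\Sigma)$), we have $\E_{\normal(0,\Sigma)}[Q(X)] = O(\xi k)$, which again can be absorbed so that $\E_S[Q(X)] - \E_{\normal(0,\Sigma)}[Q(X)] \geq (C_1 - 6)\xi k$, as claimed.

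The main obstacle is bookkeeping the universal constants cleanly: the statement wants the explicit bound $(C_1 - 6)\xi k$, so I must make sure that the $O(\xi)$ error terms coming from (a) the Frobenius-distance perturbation between $\normal(0,I)$ and $\normal(0,\Sigma)$, (b) the trace-norm bound of $4$ on the non-constant part of $r_i$ feeding through Claim~\ref{claim:ri-bound}, and (c) the analogous perturbation bound for $\E_{\normal(0,\Sigma)}[Q]$, all telescope into a single additive ``$6$'' (or smaller) when divided through by $\xi k$. Concretely, I would fix the convention that $C_1$ is chosen much larger than all other constants, state precisely that $\E_S[r_i] - \E_{\normal(0,\Sigma)}[r_i] \le (4C+1)\xi$ (Claim~\ref{claim:ri-bound}) and $|\E_{\normal(0,\Sigma)}[r_i] - 1| \le 5\xi$ and $|\E_{\normal(0,\Sigma)}[q_i]| \le 5\xi$, and then the arithmetic $C_1 - (4C+1) - 5 - 5 \ge C_1 - 6$ works provided we additionally fold the $4C+1$ into $C_1$'s slack (equivalently, redefine what ``$C_1$ sufficiently large'' means relative to $C$). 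Everything else is a direct summation over $i = 1, \ldots, k$ and requires no new ideas beyond Claim~\ref{claim:ri-bound} and Lemma~\ref{lem:poly-props}.
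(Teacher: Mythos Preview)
Your proposal is correct and follows essentially the same approach as the paper: both rely on the decomposition $p_i^2 = q_i + r_i$, the hypothesis $\E_S[p_i^2] - 1 \geq C_1\xi$, Claim~\ref{claim:ri-bound} to control the $r_i$ contribution, and a Frobenius perturbation bound (via Lemma~\ref{lem:poly-props}) to pass between $\normal(0,I)$ and $\normal(0,\Sigma)$. The only organizational difference is that the paper works with the difference $\E_S[\,\cdot\,] - \E_{\normal(0,\Sigma)}[\,\cdot\,]$ from the outset (first for $p_i^2$, then subtracting the $r_i$ bound to get $q_i$), whereas you first lower-bound $\E_S[Q]$ and then separately upper-bound $\E_{\normal(0,\Sigma)}[Q]$; this is a cosmetic rearrangement of the same arithmetic. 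One small slip: you write ``since $Q$ is a sum of squares,'' but $Q = \sum_i q_i$ is a sum of degree-four harmonic parts, not squares---your subsequent reasoning (that each $q_i$ has mean zero under $\normal(0,I)$) is nonetheless correct, so this does not affect the argument.
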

\begin{proof}
Observe that since $\| \Sigma - I \|_F \leq \xi$, in particular we have $\| \Sigma^{\otimes 2} - I^{\otimes 2} \|_2 \leq \xi$, and hence by Lemma \ref{lem:poly-props} we have that $\left| \E_{\normal (0, \Sigma)} [p^2 (X)] - \E_{X \sim \normal (0, I)} [p^2 (X)] \right| \leq 2 \xi$ for all $p \in \cP_2$.
Hence, in particular, we have
\[
\E_S [p_i^2 (X)] - \E_{\normal (0, \Sigma)} [p_i^2 (X)] \geq (C_1 - 2) \xi \; .
\]
By Claim \ref{claim:ri-bound}, we have
\[
\E_S \left[ \sum_{i = 1}^k r_i \right] - \E_{\normal (0, \Sigma)} \left[ \sum_{i = 1}^k r_i \right] \leq (4C + 1) k \; .
\]
In particular, this implies that
\[
\E_S \left[ Q(X) \right] - \E_{\normal (0, I)} \left[Q(X) \right] \geq (C_1 - 4 C - 5) \xi k \; ,
\]
as claimed.
\end{proof}

We now show:
\begin{claim}
\label{claim:var-bound}
There is some universal constant $B$ so that
\[
\E_{X \sim \normal (0, \Sigma)} [Q^2(X)]^{1/2} \leq B \sqrt{k} \; .
\]
\end{claim}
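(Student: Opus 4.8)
The plan is to compute $\E_{\normal(0,I)}[Q^2]$ precisely enough to see that it is $O(k)$, and then transfer the bound to $\normal(0,\Sigma)$ using only that $\|\Sigma-I\|_F\le\xi$ is tiny. Since $Q=\sum_{i=1}^k q_i$ with each $q_i=(p_i^2)^{[4]}$ harmonic of degree $4$, and harmonic components of distinct degrees are orthogonal under the Gaussian inner product, we have $\E_{\normal(0,I)}[Q^2]=\sum_{i,j}\langle q_i,q_j\rangle$. Writing (as in the proof of Claim~\ref{claim:ri-bound}) $p_i(x)=x^TM_ix-\tr(M_i)$ with the $M_i$ symmetric and orthonormal in Frobenius norm up to the fixed scaling used there, the degree-$4$ harmonic part of $p_i^2$ depends only on the leading form $(x^TM_ix)^2$, and expanding in the Hermite basis shows that $\langle q_i,q_j\rangle$ is a fixed universal linear combination of $\langle M_i,M_j\rangle^2$ and $\tr\big((M_iM_j)^2\big)$. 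Crucially there is no $\tr(M_i^2M_j^2)$ term: such a contraction pairs two indices belonging to the same copy of $M_i$, which is exactly what the harmonic projection annihilates. (This is the one-degree-higher analogue of Theorem~\ref{thm:fourth-order} and Lemma~\ref{lem:poly-props}.)

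Next I would bound the two surviving sums. First, $\sum_{i,j}\langle M_i,M_j\rangle^2=\sum_i\|M_i\|_F^4=O(k)$ by orthonormality. Second, set $T_{abce}=\sum_i (M_i)_{ab}(M_i)_{ce}$, i.e.\ $T$ is the $4$-tensor $\sum_i M_i^{\otimes 2}$; then $\sum_{i,j}\tr\big((M_iM_j)^2\big)=\sum_{a,b,c,e}T_{abce}T_{bcea}$, and since $(a,b,c,e)\mapsto(b,c,e,a)$ is a bijection, Cauchy--Schwarz gives $\sum_{i,j}\tr\big((M_iM_j)^2\big)\le\|T\|_F^2=\sum_{i,j}\langle M_i,M_j\rangle^2=O(k)$. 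Combining, $\E_{\normal(0,I)}[Q^2]=O(k)$.

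To transfer to $\normal(0,\Sigma)$, write $\E_{\normal(0,\Sigma)}[Q^2]=\E_{\normal(0,I)}[Q^2]+\sum_{i,j}\big(\E_{\normal(0,\Sigma)}[q_iq_j]-\E_{\normal(0,I)}[q_iq_j]\big)$. Each $q_iq_j$ is a polynomial of degree $8$, and since $\|\Sigma-I\|_F\le\xi$, interpolating along $\Sigma_t=(1-t)I+t\Sigma$ and using that all norms on degree-$8$ polynomials are comparable (hypercontractivity, Theorem~\ref{thm:hc}) yields $|\E_{\normal(0,\Sigma)}[q_iq_j]-\E_{\normal(0,I)}[q_iq_j]|=O(\xi)\,\|q_iq_j\|_{L^2(\normal(0,I))}$. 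Moreover $\|q_iq_j\|_{L^2(\normal(0,I))}\le\E_{\normal(0,I)}[q_i^4]^{1/4}\E_{\normal(0,I)}[q_j^4]^{1/4}=O(1)$, again by Theorem~\ref{thm:hc} together with $\|q_i\|_2\le\|p_i^2\|_2=O(1)$. Summing over the $k^2$ pairs, the total correction is $O(\xi k^2)=O(\eps\log^9(1/\eps))=o(1)$, so $\E_{\normal(0,\Sigma)}[Q^2]\le O(k)+o(1)=O(k)$, which is the claim with a universal constant $B$.

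The main obstacle is the first step: correctly identifying which matrix contractions survive in the harmonic inner product $\langle q_i,q_j\rangle$. If the term $\tr(M_i^2M_j^2)$ appeared, then $\sum_{i,j}\tr(M_i^2M_j^2)=\big\|\sum_i M_i^2\big\|_F^2$ could be as large as $\Theta(k^2)$ (take, e.g., $M_i\propto e_1e_{i+1}^T+e_{i+1}e_1^T$), and the bound would be false; it is precisely the harmonic structure of the $q_i$ (vanishing of all lower-degree parts) that kills this contraction, which is the reason the filter uses $Q=\sum_i q_i$ rather than $\sum_i p_i^2$. Everything after that is routine tensor Cauchy--Schwarz and a standard moment-transfer estimate.
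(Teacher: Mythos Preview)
Your proposal is correct and arrives at the same conclusion, but both of your main steps differ from the paper's.

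For the bound on $\|Q\|_{L^2(\normal(0,I))}^2$, the paper also reduces to the $4$-tensor $T=\sqrt{6}\sum_i\mathrm{Sym}(A_i\otimes A_i)$, but bounds $\|T\|_2$ by duality: for any $4$-tensor $V$ with $\|V\|_2\le 1$, it writes $\mathrm{Sym}(V)=\sum_j\lambda_j B_j\otimes B_j$ (spectral decomposition as a $d^2\times d^2$ symmetric matrix), sets $c_j=\sum_i\langle A_i,B_j\rangle^2$, and observes $c_j\le 1$, $\sum_j c_j=k$, so $\langle V,T\rangle=\sqrt{6}\sum_j\lambda_j c_j\le\sqrt{6k}$ by Cauchy--Schwarz. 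Your route---expanding $\langle q_i,q_j\rangle$ into the two contractions $\langle M_i,M_j\rangle^2$ and $\tr((M_iM_j)^2)$, then bounding $\sum_{i,j}\tr((M_iM_j)^2)=\sum_{abce}T_{abce}T_{bcea}\le\|T\|_F^2=k$ via the cyclic-shift Cauchy--Schwarz---is more hands-on and arguably more transparent about why the harmonic projection is essential (your remark that a surviving $\tr(M_i^2M_j^2)$ term would blow the bound up to $\Theta(k^2)$ is exactly the point). The paper's argument never has to name the individual contractions.

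For the transfer from $\normal(0,I)$ to $\normal(0,\Sigma)$, the paper takes a rather different path: it invokes Carbery--Wright anti-concentration to show that the quartiles of $Q^2$ under any Gaussian are $\Theta$ of its mean, and then uses $\dtv(\normal(0,\Sigma),\normal(0,I))=o(1)$ to sandwich quartiles under $\Sigma$ between quartiles under $I$. This loses only a universal constant. Your interpolation-plus-hypercontractivity argument is also valid (the key identity $\tfrac{d}{dt}\E_{\Sigma_t}[f]=\tfrac12\langle\Sigma-I,\E_{\Sigma_t}[\nabla^2 f]\rangle$ together with $\|\E_{\Sigma_t}[\nabla^2 f]\|_F\le O(1)\,\|f\|_{L^2(\Sigma_t)}$ does the job), but it incurs an additive $O(\xi k^2)$; that happens to be $o(1)$ here because $k=\mathrm{polylog}(1/\eps)$, though the paper's quantile trick would survive even if $k$ were polynomial in the dimension.
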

Before we prove this, we need the following lemma:
\begin{lemma}
For any degree-$4$ polynomial $p$, and any $\Sigma$, if we let $F(y, p, \Sigma)$ denote the $y$th percentile of $p$ under $\Sigma$, then we have $F(1/4, p^2, \Sigma) , F(3/4, p^2 ,\Sigma)=\Theta(\E_{X \sim \normal (0, \Sigma)} [p^2(X)])$.
\end{lemma}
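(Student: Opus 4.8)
The plan is to prove the two one-sided bounds $F(3/4,p^2,\Sigma)=O(\E_{X\sim\normal(0,\Sigma)}[p^2(X)])$ and $F(1/4,p^2,\Sigma)=\Omega(\E_{X\sim\normal(0,\Sigma)}[p^2(X)])$; since the percentile is monotone in the level, $F(1/4,p^2,\Sigma)\le F(3/4,p^2,\Sigma)$, so these two bounds together force both the $1/4$- and $3/4$-percentiles of $p^2$ to lie within a constant factor of $\E[p^2(X)]$, which is exactly the claim. (The degenerate cases where $p$ is identically zero or constant are trivial, so I will assume $\E[p^2(X)]>0$, in which case $p^2(X)$ has a continuous distribution and the percentiles are well defined.)

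The upper bound is immediate from Markov's inequality: $\Pr_{X\sim\normal(0,\Sigma)}[p^2(X)>4\E[p^2(X)]]\le 1/4$, hence $\Pr[p^2(X)\le 4\E[p^2(X)]]\ge 3/4$, so $F(3/4,p^2,\Sigma)\le 4\E[p^2(X)]$, and therefore also $F(1/4,p^2,\Sigma)\le 4\E[p^2(X)]$ by monotonicity.

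For the lower bound I would first reduce to the standard Gaussian: writing $X=\Sigma^{1/2}Y$ with $Y\sim\normal(0,I)$, the function $\tilde p(y)=p(\Sigma^{1/2}y)$ is a polynomial of degree at most $4$ with $\E_{Y\sim\normal(0,I)}[\tilde p(Y)^2]=\E_{X\sim\normal(0,\Sigma)}[p(X)^2]$, so it suffices to prove the lower bound for $\normal(0,I)$. Then I would invoke the Carbery--Wright anti-concentration inequality: there is an absolute constant $c$ such that every degree-$m$ polynomial $q$ satisfies $\Pr_{Y\sim\normal(0,I)}[\,|q(Y)|\le t\,\E[q^2]^{1/2}\,]\le c\,m\,t^{1/m}$ for all $t>0$. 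Applying this with $m=4$ and a small enough absolute constant $t$ so that $4ct^{1/4}<1/4$ gives $\Pr[p^2(X)\le t^2\E[p^2(X)]]<1/4$, i.e. $F(1/4,p^2,\Sigma)\ge t^2\E[p^2(X)]=\Omega(\E[p^2(X)])$, and hence also $F(3/4,p^2,\Sigma)=\Omega(\E[p^2(X)])$.

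The hard part is this lower-bound direction: it genuinely requires polynomial anti-concentration and cannot be obtained from the moment/hypercontractivity estimates already in the excerpt (Theorem \ref{thm:hc}, Corollary \ref{cor:hypercontractivity}, or the Hanson--Wright bound Lemma \ref{lem:hanson-wright}), since the best that a Paley--Zygmund argument starting from $\E[p^4]\lesssim \E[p^2]^2$ yields is $\Pr[p^2>\theta\E[p^2]]\ge\Omega(1)$ with a constant far short of the $3/4$ we need. So the Carbery--Wright inequality, applied to the degree-$4$ polynomial $p$, is the essential ingredient, and this is the one place where the degree-$4$ hypothesis is actually used.
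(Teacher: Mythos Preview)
Your proof is correct and follows essentially the same route as the paper: the upper bound on $F(3/4,p^2,\Sigma)$ comes from Markov's inequality, and the lower bound on $F(1/4,p^2,\Sigma)$ comes from the Carbery--Wright anti-concentration inequality (the paper cites \cite{CW:01} and obtains the identical $O(\eps^{1/8})$ bound). The only cosmetic difference is that you explicitly reduce to $\normal(0,I)$ via $X=\Sigma^{1/2}Y$ before invoking Carbery--Wright, whereas the paper applies it directly to the log-concave measure $\normal(0,\Sigma)$; this is immaterial.
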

\begin{proof}
Let $\mu' = \E_{X \sim \normal (0, \Sigma)} [p^2(X)]$.
First, we note that $\Pr(p^2(X) > 4\mu')\leq 1/4,$ so $F(3/4, p^2 ,\Sigma)\leq 4 \mu'$. On the other hand, by known anti-concentration bounds~\cite{CW:01}, 
we have that $\Pr(p^2(X) \leq \eps \mu') = \Pr(|p(X)| \leq \sqrt{\eps \mu'}) = O(\eps^{1/8})$. So, for $\eps$ a sufficiently small constant, $\Pr(p^2(X) \leq \eps \mu') < 1/4$, and therefore, $F(1/4, p^2, \Sigma) \geq \eps \mu'$. Since $\eps \mu' \leq F(1/4, p^2, \Sigma) \leq F(3/4, p^2, \Sigma)\leq 4 \mu'$, this completes our proof.
\end{proof}
\begin{proof}[Proof of Claim \ref{claim:var-bound}]
Since $\dtv (\normal (0, \Sigma), \normal (0, I)) = O(\eps \log 1 / \eps) = o(1)$, we have that $F(3/4, Q^2, \Sigma) \leq F(1/4, Q^2, \Sigma)$, so by the above lemma, we have that
\[
\E_{X \sim \normal (0, \Sigma)} [Q^2(X)]^{1/2} \leq O( \E_{X \sim \normal (0, I)} [Q^2(X)]^{1/2}) \; .
\]
Hence, it suffices to bound $\| Q \|_2^2$.

Let us again suppose that $p_i$ corresponds to the matrix $A_i$, given by $(A_i)_{a,b}=\nabla_a\nabla_b p_i/\sqrt{2!}$. Note that the $A_i$ are symmetric matrices that form an orthonormal set. We note that $q_i$ is the harmonic degree-$4$ polynomial corresponding to the rank-$4$ tensor
\begin{align*}
T_{a,b,c,d} & = \nabla_a \nabla_b \nabla_c \nabla_d q_i/\sqrt{24} \\
& = \nabla_a \nabla_b \nabla_c\nabla_d (p_i^2)/\sqrt{24} \\
& = (2(\nabla_a \nabla_b p_i)(\nabla_c \nabla_d p_i) + 2(\nabla_a \nabla_c p_i)(\nabla_b \nabla_d p_i) + 2(\nabla_a \nabla_d p_i)(\nabla_b \nabla_c p_i))/\sqrt{24}\\
& = (4 (A_i)_{a,b}(A_i)_{c,d} + 4 (A_i)_{a,c} (A_i)_{b,d} + 4 (A_i)_{a,d} (A_i)_{b,c})/\sqrt{24}\\
& = \sqrt{6} \mathrm{Sym}(A_i\otimes A_i).
\end{align*}
By linearity, $Q$ corresponds to the rank-$4$ tensor $T=\sqrt{6}\sum_{i=1}^k \mathrm{Sym}(A_i\otimes A_i)$. It thus suffices to show that $\|T\|_2 = O(k)$, where here $\| \cdot \|_2$ denotes the square root of the sum of the squares of the entries of the tensor. 
In order to show this, we note that $\|T\|_2 = \sup_{\|V\|_2=1} \langle V,T \rangle.$ Therefore, it suffices to show that for all $4$-tensors $V$ with $\|V\|_2 \leq 1$, that $\langle V,T \rangle = O(\sqrt{k})$. We note that
\begin{align*}
\langle V,T \rangle & = \sqrt{6}\left\langle V,\sum_{i=1}^k \mathrm{Sym}(A_i) \otimes A_i \right\rangle  = \sqrt{6}\left\langle \mathrm{Sym}(V),\sum_{i=1}^kA_i \otimes A_i \right\rangle.
\end{align*}
Note that $\mathrm{Sym}(V)$ is a symmetric $4$-tensor of $\ell_2$ norm at most $1$. Thinking of $\mathrm{Sym}(V)$ as a symmetric matrix over $2$-Tensors, 
we can write it as $\mathrm{Sym}(V) = \sum_{j=1}^{n^2} \lambda_j B_j \otimes B_j$, 
where the $B_j$'s are an orthonormal basis for the set of $2$-tensors and $\sum_{j=1}^{n^2} \lambda_j^2 = \|\mathrm{Sym}(V)\|_2^2 \leq 1$. 
Then, if $c_{i,j}:= \langle A_i, B_j\rangle$, and $c_j=\sum_{i=1}^k c_{i,j}^2$, we have that
\begin{align*}
\langle V,T \rangle & = \sqrt{6}\sum_{j=1}^{n^2} \lambda_j \sum_{i=1}^k c_{i,j}^2 = \sqrt{6}\sum_{j=1}^{n^2} \lambda_j c_j.
\end{align*}
Since the $A_i$ are orthonormal, $c_j\leq 1$ for all $j$. Furthermore, 
$$\sum_{j=1}^{n^2} c_j = \sum_{i=1}^k \sum_{j=1}^{n^2} c_{i,j}^2 = \sum_{i=1}^k \|A_i\|_F^2 = k.$$
Therefore,
$$
\langle V,T \rangle = \sqrt{6}\sum_{j=1}^{n^2} \lambda_j c_j \leq \sqrt{6} \sqrt{\sum_{j=1}^{n^2} \lambda_j^2} \sqrt{\sum_{j=1}^{n^2} c_j^2} \leq \sqrt{6\cdot 1 \cdot k} = O(\sqrt{k}) \;.
$$
\end{proof}
Let $\mu' =  \E_{X \sim \normal (0, \Sigma)} [Q(X)]$.
We now show that since $G_0$ is $ \eps$-good, then almost all of the difference in Claim \ref{claim:mu-bound} must be because of the points in $E$.
\begin{claim}
\label{claim:bad-points}
$\E_E [Q(Y)] - \mu' \geq \frac{C_1 - 6}{2} \cdot \frac{\xi k}{\phi}$
\end{claim}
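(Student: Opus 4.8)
The plan is to decompose $\E_S[Q(X)]$ along the partition $S = (G_0 \setminus L, E)$ and show that the contributions of $G_0$ and of the lost set $L$ are both small relative to $\xi k / \phi$, so that the gap established in Claim \ref{claim:mu-bound} is forced onto $E$. Recall $\E_S[Q(X)] = \frac{1}{1-\phi+\psi}\left( \E_{G_0}[Q(X)] - \phi\, \E_L[Q(X)] + \psi\, \E_E[Q(X)] \right)$, where $\phi = \phi(S,G_0)$ and $\psi = \psi(S,G_0)$, and by the $\Delta(S,G_0) \leq \eps$ hypothesis we have $\phi \leq O(\eps/\log 1/\eps)$ and $\psi \leq \eps$. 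Since $\Delta(S,G_0) \le \eps$ implies $\phi \log(1/\phi) \le \eps$, the denominator is $1 + o(1)$.

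First I would bound $\E_{G_0}[Q(X)]$. Since $G_0$ is $(\eps,\delta)$-good, the degree-$4$ polynomial bound in the definition of a good set gives $\bigl|\E_{G_0}[Q(X)] - \E_{\normal(0,\Sigma)}[Q(X)]\bigr| \leq \eps\, \Var_{\normal(0,\Sigma)}[Q(X)]^{1/2} \le \eps\, \E_{\normal(0,\Sigma)}[Q^2(X)]^{1/2} = O(\eps \sqrt{k})$ by Claim \ref{claim:var-bound}. So $\E_{G_0}[Q(X)] \leq \mu' + O(\eps\sqrt{k})$, which is $\mu' + o(\xi k/\phi)$ since $\xi = \Theta(\eps\log 1/\eps)$, $k = \Theta(\log^4 1/\eps)$ and $\phi \le O(\eps/\log 1/\eps)$; indeed $\xi k / \phi = \Omega(\log^6(1/\eps))$ while $\eps\sqrt k = o(1)$.

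Second, and this is the main obstacle, I would bound $\phi\, \E_L[Q(X)]$. Here $L = G_0 \setminus G'$ is an arbitrary (adversarially chosen) subset of the good set, so I cannot control $\E_L[Q(X)]$ directly; instead, exactly as in Claim \ref{claim:meanLBound}, I would write $\phi\, \E_L[Q(X)] \le \int_0^\infty \phi \Pr_L[Q(X) > t]\,dt$, truncate the integral at the deterministic upper bound $O(d^2\log^2(|S|/\delta))$ on $Q$ over $G_0$ (which follows from the $x^T\Sigma^{-1}x = O(d\log(|S|/\delta))$ condition, since $Q$ is a degree-$4$ polynomial with coefficient tensor of norm $O(\sqrt k)$), and split the integral at a threshold of order $\sqrt k \log^2(1/\eps)$. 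Below the threshold the contribution is at most $O(\sqrt k \log^2(1/\eps))\cdot \phi$; above the threshold I would use $\phi \Pr_L[Q(X) > t] \le \Pr_{G_0}[Q(X)>t]$, then the degree-$4$ tail bound built into the good-set definition together with the hypercontractive tail of $Q$ under $\normal(0,\Sigma)$ from Corollary \ref{cor:hypercontractivity} and Claim \ref{claim:var-bound}, to get a contribution that is $O(\eps^2)$ plus a $\g\eps/(d\log\ldots)$-type term integrated over a $\mathrm{poly}(d)$ range, hence $o(\xi k/\phi)$. The key point is that $\phi\log(1/\phi) \le \eps$ forces $\phi\cdot\sqrt k\log^2(1/\eps) = O(\eps \sqrt k \log^2(1/\eps))$, still negligible against $\xi k/\phi$.

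Combining: from Claim \ref{claim:mu-bound} we have $\E_S[Q(X)] \ge \mu' + (C_1 - 6)\xi k$ (absorbing the $(C_1-4C-5)$ into a relabeled constant, or directly using the claim as stated). Plugging the two bounds above into the decomposition, $\psi\,\E_E[Q(X)] \ge (1-\phi+\psi)\bigl(\mu' + (C_1-6)\xi k\bigr) - \E_{G_0}[Q(X)] + \phi\E_L[Q(X)] - \psi\mu' \ge (C_1-6)\xi k - o(\xi k)$, and therefore $\E_E[Q(Y)] - \mu' \ge \frac{(C_1-6)\xi k - o(\xi k)}{\psi} \ge \frac{C_1-6}{2}\cdot\frac{\xi k}{\phi}$ for $\eps$ small enough, using $\psi \le \eps$ and $\phi \le \eps$, so that $1/\psi$ and $1/\phi$ are comparable up to the slack absorbed by the factor of $2$. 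The only subtlety to watch is matching the exact statement's use of $\phi$ versus $\psi$ in the denominator; since both are at most $\eps$, the bound $\E_E[Q(Y)] - \mu' \geq \frac{C_1-6}{2}\cdot\frac{\xi k}{\phi}$ follows from $\psi \E_E[Q(Y)] \ge \tfrac{C_1-6}{2}\xi k$ together with $\psi \le \phi$ when $E$ is nonempty (and if $\psi < \phi$ fails this can only help).
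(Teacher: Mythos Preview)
Your overall strategy is correct and matches the paper's: bound the contribution of $G_0$ via the degree-$4$ goodness condition and Claim~\ref{claim:var-bound}, bound $\phi\,\E_L[Q(X)]$ by the split-integral argument using hypercontractivity, and push the remainder of the gap from Claim~\ref{claim:mu-bound} onto $E$. The paper does exactly this, phrasing the sufficient condition as ``$\E_G[Q(X)]-\mu'\le \tfrac{C_1-6}{2}\xi k$'' and then bounding $\E_{G_0}$ and $\phi\,\E_L$ separately just as you do.

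The one genuine problem is your last paragraph, where you try to pass from a $1/\psi$ bound to the $1/\phi$ bound in the stated claim by asserting $\psi\le\phi$. This is false in general: at the start $\phi=0$ and $\psi=\eps$, and throughout the algorithm $\phi=O(\eps/\log(1/\eps))$ while $\psi$ can remain $\Theta(\eps)$, so typically $\psi>\phi$ and your inequality goes the wrong way. The resolution is not in your argument but in the paper: the ``$\phi$'' in the claim statement is a typo for ``$\psi$''. Indeed, the paper's own proof of Claim~\ref{claim:bad-points} establishes $\E_G[Q(X)]-\mu'\le \tfrac{C_1-6}{2}\xi k$, which via the decomposition $\E_S=(1-\psi)\E_G+\psi\E_E$ yields $\E_E[Q(X)]-\mu'\ge \tfrac{C_1-6}{2}\cdot\tfrac{\xi k}{\psi}$, and the downstream use in Claim~\ref{claim:covFindT} explicitly invokes the bound with $\psi$ in the denominator. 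So drop the final $\psi\le\phi$ maneuver; your argument already proves the intended (and correct) inequality with $\psi$.
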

\begin{proof}
It suffices to show that 
\begin{equation}
\label{eq:good-small}
\E_G [Q(X)] - \mu' \leq \frac{C_1 - 6}{2} \xi k \;,
\end{equation}
that is, the good points do not contribute much to the difference.
By $(\eps, \delta)$-goodness of $G_0$ and Claim \ref{claim:var-bound} we have $\E_0 [Q(Y)] - \mu' \leq O \left( \eps \sqrt{k} \right)$.
Moreover, we have
\begin{align*}
\phi (\E_L [Q(X)] - \E_{X \sim \normal (0, \Sigma)} [Q(X)]) &\leq \int_0^\infty \phi \Pr_L \left[ Q(X) - \mu' \geq t \right] dt \\
&\stackrel{(a)}{=} \int_0^{O(\sqrt{k} d^2 \log |S| / \delta)} \phi \Pr_L \left[ Q(X) - \mu' \geq t \right] dt \\
&\leq \phi \cdot O(\sqrt{k} \log^2 1 / \xi) +  \int_{O(\sqrt{k} \log^2 1 / \xi)}^{O(\sqrt{k} d^2 \log |S| / \delta)} \phi \Pr_L \left[ Q(X) - \mu' \geq t \right] dt \\
&\leq \phi \cdot O(\sqrt{k} \log^2 1 / \xi) + \int_{O(\sqrt{k} \log^2 1 / \xi)}^{O(\sqrt{k} d^2 \log |S| / \delta)} \Pr_{G_0} \left[ Q(X) - \mu' \geq t \right] dt \\
&\stackrel{(b)}{\leq} \phi \cdot O(\sqrt{k} \log^2 1 / \xi) + \int_{O(\sqrt{k} \log^2 1 / \xi)}^{O(\sqrt{k} d^2 \log |S| / \delta)} \Pr_{\normal (0, \Sigma)} [Q(X) - \mu' \geq t] dt + O(\xi) \\
&\stackrel{(c)}{\leq} O(\xi \sqrt{k}) + \exp (- \Omega( \log 1 / \xi)) \\
&\ll 5 \xi k \;,
\end{align*}
where (a) follows from the boundedness condition of $( \eps, \delta)$-goodness, (b) follows from the last condition of goodness, and (c) follows from hypercontractivity.
This shows (\ref{eq:good-small}), which completes the proof.
\end{proof}
\noindent We now show that this implies that there must be a $T$ satisfying the conditions in Algorithm \ref{alg:filter-cov-manyEig}.
\begin{claim}
\label{claim:covFindT}
If $\dim (V_m) \geq k$, then Algorithm \ref{alg:filter-cov-manyEig} returns a $T$ satisfying the conditions in the algorithm.
\end{claim}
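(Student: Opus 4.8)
The plan is to argue by contradiction, following the template of Claim~\ref{claim:meanTExists} and its degree-$2$ covariance analogue Claim~\ref{claim:covDeg2TExists}. Suppose that \emph{no} threshold $T$ meets either alternative listed in Algorithm~\ref{alg:filter-cov-manyEig}. Failure of the first alternative forces $Q(x)\le T_1 := C_3 d^2\sqrt{k}\log(|S|)$ for every $x\in S$ (otherwise a $T$ just below $\max_{x\in S}Q(x)$ would work). Failure of the second forces, for every $T> T_0 := 4A^2 C_2 B\sqrt{k}\log^2(1/\eps)$, the tail bound $\Pr_S[Q(X)>T]\le \exp\!\left(-A\,(T/(4B\sqrt{k}))^{1/2}\right)+\eps^2/(d\log(|S|/\delta))^2$. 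I would then use these two facts to show that the corrupted contribution $\psi\,\E_E[Q(X)]$ to $\E_S[Q]$ (via the decomposition $\E_S[Q]=(1+\phi-\psi)\,\E_{G_0}[Q]-\phi\,\E_L[Q]+\psi\,\E_E[Q]$ behind Claim~\ref{claim:bad-points}) is $o(\xi k)$, which contradicts the lower bound $\psi\,\E_E[Q(X)]=\Omega(\xi k)$ supplied (in equivalent form) by Claim~\ref{claim:bad-points}.

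The core computation is bounding $\psi\,\E_E[Q(X)]=\int_0^\infty \psi\,\Pr_E[Q(X)>t]\,dt$, which I would split at $T_0$ and $T_1$. On $[0,T_0]$ use $\psi\,\Pr_E[Q(X)>t]\le\psi\le\eps$, for a contribution of at most $\eps T_0 = O(\eps\sqrt{k}\log^2(1/\eps))$. On $[T_0,T_1]$ use $\psi\,\Pr_E[Q(X)>t]\le\Pr_S[Q(X)>t]$ together with the assumed tail bound: after the substitution $s=(t/(4B\sqrt{k}))^{1/2}$ the stretched-exponential piece integrates to $O(\sqrt{k}\log(1/\eps))\,\eps^{A^2\sqrt{C_2}}\ll\xi k$ for $C_2$ a large enough absolute constant, while the flat piece $\eps^2/(d\log(|S|/\delta))^2$ integrates over an interval of length at most $T_1$ to $O(\sqrt{k}\,\eps^2/\log(|S|/\delta))\ll\xi k$. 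On $[T_1,\infty)$ the integrand vanishes since $\max_{x\in S}Q(x)\le T_1$. Here Claim~\ref{claim:var-bound} plays two roles: together with Corollary~\ref{cor:hypercontractivity} applied to the degree-$4$ polynomial $Q/\|Q\|_2$ it is what makes the stretched-exponential tail of $Q$ under $\normal(0,\Sigma)$ transfer to $S$ via $(\eps,\delta)$-goodness, and its bound $\|Q\|_2=O(\sqrt{k})$ is what fixes the scale $T_0\asymp\sqrt{k}\log^2(1/\eps)$ at which cutting first becomes possible.

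Putting the three ranges together gives $\psi\,\E_E[Q(X)]\le O(\eps\sqrt{k}\log^2(1/\eps))+o(\xi k)$ (the term $\psi|\mu'|$ being negligible). Since $k=O(\log^4(1/\eps))$ we have $\sqrt{k}=O(\log^2(1/\eps))$, so $\eps\sqrt{k}\log^2(1/\eps)=O(\eps\log^4(1/\eps))$, whereas $\xi k = \Theta(\eps\log(1/\eps))\cdot\Theta(\log^4(1/\eps))=\Theta(\eps\log^5(1/\eps))$; hence $\psi\,\E_E[Q(X)]=o(\xi k)$. With the constant $C_1$ inside Claim~\ref{claim:bad-points} taken large relative to $A,B,C_2$, this contradicts Claim~\ref{claim:bad-points}, which proves Claim~\ref{claim:covFindT}.

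The part I expect to be most delicate is the bookkeeping in the split integral, and in particular confirming that the ``small $t$'' contribution $\eps T_0$ is genuinely subdominant to $\xi k$. This is where the size of $k$ matters: one needs $\eps\sqrt{k}\log^2(1/\eps)\ll\xi k\asymp\eps k\log(1/\eps)$, i.e.\ $\sqrt{k}\gg\log(1/\eps)$, so $k=\Omega(\log^2(1/\eps))$ is necessary, exactly the obstruction discussed in Section~\ref{sec:quasibarrier}; the choice $k=\Theta(\log^4(1/\eps))$ leaves comfortable slack. A secondary point to verify is that the fourth-moment goodness error term $\eps^2/(d\log(|S|/\delta))^2$, once integrated up to $T_1=\Theta(d^2\sqrt{k}\log|S|)$, costs only a factor $O(\sqrt{k}\,\eps^2/\log|S|)$, which is comfortably $o(\xi k)$, and to confirm (mirroring Claim~\ref{claim:covDeg2TExists}) that taking $T$ just above $T_0$ in the second alternative does not inadvertently satisfy the first alternative's size constraint when $\dim(V_m)\ge k$.
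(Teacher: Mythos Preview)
Your proposal is correct and follows essentially the same argument as the paper: assume no valid $T$ exists, use the failure of both alternatives to bound $\psi\,\E_E[Q(X)]$ via a split integral at $T_0$ and $T_1$ (with $\psi\,\Pr_E\le\Pr_S$ on the middle range and the assumed tail bound there), and contradict the lower bound from Claim~\ref{claim:bad-points}. Your version is in fact a bit more explicit than the paper's in tracking that the dominant ``small-$t$'' contribution $O(\eps\sqrt{k}\log^2(1/\eps))$ is $o(\xi k)$ precisely because $k=\Theta(\log^4(1/\eps))$, which is the key quantitative point.
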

\begin{proof}
Suppose not.
By the assumption that $\| I - \Sigma \|_F \leq \xi$, we have $|\mu'| = | \E_{X \sim \normal (0, \Sigma)} [Q(X)] | \leq 2 \xi \| Q \|_2 \leq \sum_{i = 1}^k \| p_i \|_2 + \| r_i \|_2 \leq 5 \xi k$.
Thus, we have 
\[
\E_E [Q(X)] \geq \left( \frac{C_1 - 6}{2 \psi} - 5 \right) \xi k \gg \left( \frac{C_1 - 6}{3 \psi} \right) \xi k \;.
\]
But since we assume there is no $T$ satisfying the conditions in Algorithm \ref{alg:filter-cov-manyEig}, we have
\begin{align*}
\psi \E_{X \in E} [p(X)] &\leq \psi \int_0^\infty \Pr_{E} [p(X) > t] dt \\
&\leq \psi \cdot 4 B \sqrt{k} \log^2 (1 / \xi) + \int_{4 B \sqrt{k} \log^2 1/ \xi}^{O(d^2 \sqrt{k} \log (|S| / \delta))} \psi \Pr_{E} [p(X) > t] dt \\
&\leq  4 B \psi \cdot \sqrt{k} \log^2 (1 / \xi) + \int_{4 B \sqrt{k} \log^2 1/ \xi}^{O(d^2 \sqrt{k} \log (|S| / \delta))} \Pr_{S} [p(X) > t] dt \\
&\leq 4 C_2 B \psi \cdot \sqrt{k} \log^2 (1 / \xi)  + \int_{4 A^2 C_2 B \sqrt{k} \log^2 1/ \xi}^{O(d^2 \sqrt{k} \log (|S| / \delta))} \left( \exp \left( - A \left( \frac{t}{4 B \sqrt{k}} \right)^{1/2} \right) + \frac{\eps^2}{d^2 \log |G| / \delta} \right) dt \\
&\leq 4 A^2 C_2 B \psi  \sqrt{k} \log^2 (1 / \xi) + 4 B \sqrt{k} \int_{A^2 \log^2 1/ \xi}^\infty  \exp (- \Omega (t^{1/2})) dt + \widetilde{O}(\eps^2) \\
&\leq O(\sqrt{k} \xi) + \widetilde{O}(\eps^2)  \ll 10 \xi k \; ,
\end{align*}
which is a contradiction, for our choice of $C_1, C_2, C_3$, and since we chose $k = O(\log^4 1 / \eps)$.
\end{proof}
The final thing we must verify is that the number of good points we remove is much smaller than the number of bad points we remove.
Formally, we show:
\begin{claim}
\label{claim:covDeltaBound}
If $\dim (V_m) \geq k$, then Algorithm \ref{alg:filter-cov-manyEig} returns a $S'$ satisfying $\Delta (S', G_0) < \Delta (S, G_0)$.
\end{claim}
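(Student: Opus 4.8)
The plan is to follow the same two-case structure used in the proofs of Claims~\ref{claim:meanDeltaBound} and~\ref{claim:covDeg2DeltaBound}, splitting according to which of the two bullets in Algorithm~\ref{alg:filter-cov-manyEig} produced the threshold $T$. Write $S = (G, E)$, let $g$ denote the number of points of $G_0$ removed (equivalently, the number of good points of $S$ with $Q(X) > T$) and $b$ the number of points of $E$ removed. We must show $\Delta(S', G_0) < \Delta(S, G_0)$, and exactly as in those earlier proofs this reduces, through the elementary computation with the definition~(\ref{eq:Delta}) of $\Delta$, to showing that either $g = 0$ and $b \ge 1$, or $b \gg \log(1/\eps)\, g$ --- the logarithmic factor accounting for the extra penalty $\phi \log(1/\phi)$ incurred by discarding good points, using $\phi(S,G_0) = O(\eps/\log(1/\eps))$ so that $\log(1/\phi(S,G_0)) = \Theta(\log(1/\eps))$.

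First I would handle the easy case, where $T$ comes from the first bullet, so $T > C_3 d^2 \sqrt{k}\,\log|S|$ and some $x \in S$ has $Q(x) > T$. Here the boundedness part of $(\eps,\delta)$-goodness gives $x^{T}\Sigma^{-1}x = O(d\log(|G|/\delta))$ for every $x \in G_0$, hence $\|x\|_2^2 = O(d\log(|G|/\delta))$ since $\|\Sigma\|_2 = 1 + O(\xi)$; combined with $\|Q\|_2 = O(\sqrt k)$ and the fact that $Q$ is a harmonic degree-$4$ polynomial, a crude Hermite/Cauchy--Schwarz estimate yields $|Q(x)| = O(\sqrt k\,(d\log(|G|/\delta))^2) < T$ for the stated threshold. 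Thus $g = 0$, while the witness $x$ (which therefore lies in $E$) is removed, so $b \ge 1$, and $\Delta$ strictly decreases.

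The substantive case is when $T$ comes from the second bullet, so that $\Pr_S[Q(X) > T] > \exp(-A(T/(4B\sqrt k))^{1/2}) + \eps^2/(d\log(|S|/\delta))^2$ and hence at least $|S|\,\big(\exp(-A(T/(4B\sqrt k))^{1/2}) + \eps^2/(d\log(|S|/\delta))^2\big)$ points are removed in total. To bound $g$ I would first note (as in the proof of Claim~\ref{claim:covFindT}) that $|\mu'| = |\E_{\normal(0,\Sigma)}[Q(X)]| \le 5\xi k \ll T$, since $k = O(\log^4(1/\eps))$ and $T \ge 4A^2 C_2 B \sqrt k\,\log^2(1/\eps)$; thus $T - \mu' \ge T/2$. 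Applying Corollary~\ref{cor:hypercontractivity} to $Q(\Sigma^{1/2}\cdot)$ together with Claim~\ref{claim:var-bound}, which controls $\E_{\normal(0,\Sigma)}[Q^2(X)]^{1/2} \le B\sqrt k$ using precisely the low-rank tensor structure of $Q$, we obtain $\Pr_{\normal(0,\Sigma)}[Q(X) > T] \le \exp(-A(T/(2B\sqrt k))^{1/2})$. The role of the factor $4B\sqrt k$ rather than $2B\sqrt k$ in the algorithm, and of taking $C_2$ large, is that $\exp(-A(T/(2B\sqrt k))^{1/2}) \le \eps^2\exp(-A(T/(4B\sqrt k))^{1/2}) \ll \frac{1}{\log^2(1/\eps)}\exp(-A(T/(4B\sqrt k))^{1/2})$ uniformly for $T \ge 4A^2 C_2 B\sqrt k\,\log^2(1/\eps)$. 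Now the degree-$4$ tail condition of $(\eps,\delta)$-goodness (applied to $Q - T$) gives
\[
g \;\le\; |G_0|\,\Pr_{G_0}[Q(X) > T] \;\le\; |G_0|\Big(\Pr_{\normal(0,\Sigma)}[Q(X) > T] + \tfrac{\eps^2}{2\log(1/\eps)\,(d\log(|G|/\delta))^2}\Big),
\]
and since $|G_0| = (1 + O(\eps))|S|$, each term on the right is smaller by a factor $\gg \log(1/\eps)$ than the corresponding term in the lower bound on the total number of points removed. Consequently $b$ is at least the total removed minus $g$, which exceeds $\tfrac12|S|\big(\exp(-A(T/(4B\sqrt k))^{1/2}) + \eps^2/(d\log(|S|/\delta))^2\big) \gg \log(1/\eps)\,g$, and the elementary $\Delta$ computation completes the proof.

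The main obstacle is this last quantitative comparison: one needs the Gaussian tail $\Pr_{\normal(0,\Sigma)}[Q(X) > T]$ of the explicit degree-$4$ polynomial $Q$ to beat the threshold term $\exp(-A(T/(4B\sqrt k))^{1/2})$ by a $\poly\log(1/\eps)$ factor, uniformly in $T$ above the cutoff. This is exactly where the low-rank structure $Q = \sqrt6\sum_i \mathrm{Sym}(A_i \otimes A_i)$ enters --- a generic degree-$4$ polynomial of the same $\|\cdot\|_2$ would have too heavy a tail --- and where the constant in the cutoff $T \ge 4A^2 C_2 B\sqrt k\,\log^2(1/\eps)$ and the slack between $4B\sqrt k$ and $2B\sqrt k$ must be calibrated; making all these constants (including $C_1,C_2,C_3$ from Claim~\ref{claim:covFindT}) mutually consistent is the delicate bookkeeping, but it is routine given Claim~\ref{claim:var-bound} and Corollary~\ref{cor:hypercontractivity}.
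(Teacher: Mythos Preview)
Your proposal is correct and follows essentially the same argument as the paper's proof: the same two-case split on the threshold $T$, the same use of the boundedness condition of $(\eps,\delta)$-goodness in the first case, and in the second case the same combination of Corollary~\ref{cor:hypercontractivity} with Claim~\ref{claim:var-bound} and the degree-$4$ tail condition of goodness, followed by the same ``squaring'' comparison between $\exp(-A(T/(cB\sqrt k))^{1/2})$ for two different values of $c$ to gain the required $\log(1/\eps)$ factor. Your handling of the mean shift $\mu'$ via $T-\mu'\ge T/2$ is slightly more explicit than the paper's (which absorbs it into the constants), and your constants are $2B$ versus the paper's $B$, but this only affects the choice of $C_2$ and is otherwise identical in structure.
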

\begin{proof}
Let $T$ be the threshold we pick.
If  $T>C_3 d^2 \sqrt{k} \log(|S|)$ then the invariant is satisfied since we remove no good points, by $( \eps, \delta)$-goodness.
It suffices to show that we remove $\log 1 / \eps$ times many more bad points than good points.
Otherwise, by definition we remove a total of 
\[
|S| \cdot \left( \exp \left( - A \left( \frac{T}{2B \sqrt{k}} \right)^{1/2} \right) + \frac{\eps^2}{d^2 \log^2 |S| / \delta} \right)
\]
points.
On the other hand, by $( \eps, \delta)$-goodness, hypercontractivity, and Claim \ref{claim:var-bound}, we know that the total number of points 
we throw away is at most
\begin{align*}
|G| \cdot \Pr_{G} [Q(X) > T] &\leq |G_0| \cdot \Pr_{G_0} [Q(X) > T]  \\
&\leq |G_0| \left( \exp \left( -A \left(\frac{T}{ \| Q \|_2} \right)^{-1/2} \right) + \frac{\eps^2}{2 \log (1 / \eps) (d \log(|G| / \delta))^2} \right) \\
&\leq |G_0| \left( \exp \left( -A \left( \frac{T}{B \sqrt{k}} \right)^{1/2} \right) + \frac{\eps^2}{2 \log (1 / \eps) (d \log(|G| / \delta))^2} \right) \; .
\end{align*}
Since we have maintained this invariant so far, in particular, we have thrown away more bad points than good points, and so $|G_0| \leq (1 + \eps) |S|$.
Moreover, since $T \geq 4 B \sqrt{k} \log^2 1 / \eps$, we have
\begin{align*}
\exp \left( -A \left( \frac{T}{B \sqrt{k}} \right)^{1/2} \right) &= \exp \left( -A \left( \frac{T}{4 B \sqrt{k}} \right)^{1/2} \right)^2 \\
&\leq \exp \left( -A \left( \frac{T}{4 B \sqrt{k}} \right)^{1/2} \right) \cdot \exp \left( -A \left( \frac{4 B \sqrt{k} \log^2 1 / \eps}{4 B \sqrt{k}} \right)^{1/2} \right) \\
&\leq  \exp \left( -A \left( \frac{T}{4 B \sqrt{k}} \right)^{1/2} \right)  \cdot \eps^{-1} \\
&\leq  \exp \left( -A \left( \frac{T}{4 B \sqrt{k}} \right)^{1/2} \right) \cdot \frac{1}{\log 1 / \eps} \; .
\end{align*}
Therefore, we have $\log (1 / \eps) \cdot |G| \cdot \Pr_{G} [Q(X) > T]  \leq |S| \cdot \left( \exp \left( - A \left( \frac{T}{4B \sqrt{k}} \right)^{1/2} \right) + \frac{\eps^2}{d^2 \log |G| / \delta} \right)$, and hence, the invariant is satisfied.
\end{proof}
\noindent Claims \ref{claim:covFindT} and \ref{claim:covDeltaBound} together prove the theorem.

\subsection{Proof of Lemma \ref{lem:GoodSamplesOpt}}
\label{sec:GoodSamplesOpt}
By Lemma 8.16 in \cite{DKKLMS}, the first two items hold together with probability $1 - O(\delta)$ after taking $O(\poly (d, 1 /\eta, \log 1 / \delta))$ samples.
Thus, it suffices to show that the last property holds with probability $1 - O(\delta)$ given $O(\poly (d, 1 /\eta, \log 1 / \delta))$ samples.
We may clearly WLOG take $\Sigma = I$.
Moreover, these properties are clearly invariant under scaling, and hence, it suffices to prove them for degree-four polynomials with $\Var_{\normal (0, I)} [p(X)] = 1$.
For any fixed degree-$4$ polynomial $p$, by hypercontractivity, since $P(X) = \frac{1}{n} \sum_{i = 1}^n p(X_i)$ has variance $\Var [P(X)] = \frac{1}{n}$, we have that
\[
\Pr \left[ \left| \frac{1}{n} \sum_{i = 1}^n p(X_i) - \E_{\normal (0, I)} [p(X)] \right| \geq t \right] \leq \exp \left( - A t^{1/2} n^{1/4} \right) \; .
\]
In particular, if we take 
\[
n = \Omega \left( \frac{\log^4 1 / \delta}{\eta^2} \right) \; ,
\]
then 
\[
\Pr \left[ \left| \frac{1}{n} \sum_{i = 1}^n p(X_i) - \E_{\normal (0, I)} [p(X)] \right| \geq \eta \right] \leq O(\delta) \; .
\]
Since there is a $1/3$-net over all degree-$4$ polynomials with unit variance of size $(1/3)^{O(d^4)}$, by union bounding, we obtain that if we take $\Omega \left( \frac{d^4 \log^{1 / \delta}}{\eta^2} \right)$ samples, then the second to last property holds with probability $1 - \delta$.

Finally, the same net technique used to prove Lemma 8.16, along with hypercontractivity, may be used to show the last property holds when given $\poly (d, \log 1 /\eta, \log 1 / \delta)$ samples.

\subsection{Proof of Lemma \ref{lem:covBoundedOffLarge}}
\begin{proof}
Let $\mu' = \E_{\normal (0, \Sigma)} [p(X)]$, $s = \E_{\normal (0, \Sigma)} [p^2 (X)]$, and let $ \E_S [p (X)] - \E_{\normal (0,\Sigma)} [p(X)] = K$.
By $( \eps, \delta)$-goodness and Lemma \ref{lem:poly-props}, we have
\begin{equation}
\label{eq:covG0Bound2}
\left| \E_{G_0} [p(X)] - \mu' \right| \leq \eps  \E_{\normal (0, \Sigma)} [p^2(X)]^{1/2} \leq 2 \eps \; .
\end{equation}
Moreover, for some appropriate choice of $\beta_1$, we have
\begin{align*}
\left| \phi \E_{L} [p(X)] - \phi \mu' \right| &\leq \int_0^\infty \phi \Pr_L \left[ \left| p(X) - \mu \right| \geq t \right] dt \\
&\leq \beta_1 \phi \log 1 / \eps + \int_{\beta_1 \log 1 / \eps}^{O(d \log |S| / \delta)} \phi \Pr_L \left[ \left| p(X) - \mu \right| \geq t \right] dt \\
&\stackrel{(a)}{\leq} \beta_1 \phi \log 1 / \eps + \frac{|G_0|}{|S|} \int_{\beta_1 \log 1 / \eps}^{O(d \log |S| / \delta)} \Pr_{G_0} \left[ \left| p(X) - \mu \right| \geq t \right] dt \\
&\stackrel{(b)}{\leq} \beta_1 \phi \log 1 / \eps + \frac{|G_0|}{|S|} \int_{\beta_1 \log 1 / \eps}^{O(d \log |S| / \delta)} \Pr_{\normal (0, \Sigma)} \left[ \left| p(X) - \mu \right| \geq t \right] dt + O(\eps) \\
&\stackrel{(c)}{\leq} \beta_1 \phi \log 1 / \eps + \eps^{O(\beta_1)} + O( \eps) \\
&= O(\eps)  \numberthis \label{eq:covLBound2} \; ,
\end{align*}
where (a), (b), follows from the definition of goodness, and (c) follows from standard Guassian concentration bounds.
In particular, (\ref{eq:covG0Bound2}) and (\ref{eq:covLBound2}) together imply that
\[
\frac{2K}{\psi} \geq \left| \E_{E} [p(X) - \mu'] \right| \geq \frac{K}{2 \psi} \; ,
\]
so this implies
\[
\E_{E} [(p(X) - \mu')^2 ] \geq \frac{K^2}{4 \psi^2} \; .
\]
Expanding this yields that
\begin{align*}
\E_{E} [(p(X) - \mu')^2 ] &= \E_E [p^2 (X)] - 2 \mu' \E_E [p(X)] + (\mu')^2 \; ,
\end{align*}
so since $|\mu'| \leq \| \Sigma - I \|_F \leq \xi$, we have
\begin{equation}
\label{eq:covEBound3}
\E_E [p^2 (X)] \geq \frac{K^2}{4 \psi^2} - \frac{2K}{\psi} + O(\xi^2) \; .
\end{equation}

On the other hand, by Lemma \ref{lem:poly-props}, we have
\begin{align*}
|s - 1| &\leq \| \Sigma^{\otimes 2} - I^{\otimes 2} \|_2 + \frac{1}{2} \| \Sigma - I \|_F^2  \\
&\leq \xi + O(\xi^2) \; . \numberthis \label{eq:covSBound}
\end{align*}
Also, by $(\eps, \delta)$-goodness and Lemma \ref{lem:poly-props}, we have
\begin{equation}
\label{eq:covG0Bound3}
\left| \E_{G_0} [p^2 (X)] - s \right| \leq \eps (1 + O(\xi)) \; ,
\end{equation}
and, for $A$ as in Corollary \ref{cor:hypercontractivity}, and $\beta$ appropriately chosen,
\begin{align*}
\left| \phi (\E_{L} [p^2 (X)] - s) \right| &\leq \int_0^\infty \phi \Pr_L [| p^2 (X) - s | \geq t] dt \\
&\leq \beta A \phi \log^2 1 / \eps + \int_{2A \log^2 1 / \eps}^{O((d \log |S| / \delta)^2)} \phi \Pr_L [| p^2 (X) - s | \geq t] dt \\
&\leq \beta A \phi \log^2 1 / \eps +  \frac{|G_0|}{|S|} \int_{\beta A \log^2 1 / \eps}^{O((d \log |S| / \delta)^2)} \Pr_{G_0} [| p^2 (X) - s | \geq t] dt \\
&\stackrel{(a)}{\leq} \beta A \phi \log^2 1 / \eps +  \frac{|G_0|}{|S|} \int_{\beta A \log^2 1 / \eps}^{O((d \log |S| / \delta)^2)} \Pr_{\normal(0, \Sigma)} [| p^2 (X) - s | \geq t] dt + O( \eps^2) \\
&\stackrel{(b)}{\leq} \beta A \phi \log^2 1 / \eps + \int_{\beta A \log^2 {1/\eps}}^\infty e^{-A t^{1/2} / 2} dt + O(\eps^2) \\
&\ll 2 \beta A \xi + O(\eps) + O( \eps^2) \; , \numberthis \label{eq:covLBound3}
\end{align*}
where (a) follows from goodness, and (b) follows from Corollary \ref{cor:hypercontractivity} and since $\| \Sigma - I \|_F = o(1)$.
Thus, (\ref{eq:covEBound3}), (\ref{eq:covSBound}), (\ref{eq:covG0Bound3}), and (\ref{eq:covLBound3}) together imply that
\[
\E_S [p^2 (X)] - 1 \geq 1 - (\xi + O(\xi^2)) - O( \eps) - 2 \beta A \xi O(\eps) + \frac{K^2}{4 \psi} - 2K + O(\xi^2) \; ,
\]
and so since $K \geq \Omega \left( \sqrt{\eps \xi} \right)$, this gives the desired bound.
\end{proof}

\end{document}